\documentclass[11pt,a4paper]{article}

\def\Anonymity{0}
\def\IddoSys{1}
\def\CommentsOn{1}

\usepackage{comment}   
\excludecomment{comment}

\usepackage[T1]{fontenc}
\usepackage{lmodern}
\usepackage{microtype}

\usepackage{xcolor}
\usepackage{mathtools}
\usepackage{amssymb}
\usepackage{amsthm}
\usepackage{tabularx}
\usepackage[normalem]{ulem}
\usepackage{thmtools}
\usepackage{enumitem}
\usepackage{array}
\usepackage{tcolorbox}
\tcbuselibrary{skins,breakable}

\usepackage[pagebackref=true,colorlinks=true]{hyperref}
\usepackage[top=1in,bottom=1in,left=0.92in,right=0.92in]{geometry}

\ifnum\IddoSys=0
\usepackage{zref-clever} 
\zcsetup{cap,nameinlink,noabbrev}
\newcommand{\cref}{\zcref}
\newcommand{\Cref}{\zcref}
\zcRefTypeSetup{claim}{
    Name-sg = Claim ,
    name-sg = claim ,
    Name-pl = Claims ,
    name-pl = claims ,
}
\else
\usepackage[capitalize,nameinlink,noabbrev]{cleveref}
\fi

\usepackage[inline,final]{showlabels}

\definecolor{airforceblue}{rgb}{0.36, 0.54, 0.76}
\definecolor{darkcerulean}{rgb}{0.03, 0.27, 0.49}
\definecolor{darkelectricblue}{rgb}{0.33, 0.41, 0.77}
\hypersetup{
  linkcolor  = airforceblue,
  urlcolor   = blue,
  citecolor  = darkelectricblue,
  pdfstartview=FitH
}

\theoremstyle{plain}
\newtheorem{theorem}{Theorem}[section]
\newtheorem{lemma}[theorem]{Lemma}
\newtheorem{proposition}[theorem]{Proposition}
\newtheorem{corollary}[theorem]{Corollary}

\theoremstyle{definition}
\newtheorem{definition}[theorem]{Definition}
\newtheorem{remark}[theorem]{Remark}
\newtheorem{claim}[theorem]{Claim}

\newcommand{\probP}{\text{I\kern-0.15em P}}

\DeclareMathOperator{\depth}{depth} 
\DeclareMathOperator{\sdeg}{sdeg} 
\DeclareMathOperator{\coeff}{coeff} 
\DeclareMathOperator{\pow}{pow} 
\DeclareMathOperator{\sgn}{sgn} 
\DeclareMathOperator{\relrk}{relrk} 
\DeclareMathOperator{\poly}{poly} 
\DeclareMathOperator{\Rank}{Rank} 
\DeclareMathOperator{\perm}{perm} 

\newcommand{\IPS}{\ensuremath{\mathsf{IPS}}}

\newcommand{\PC}{\ensuremath{\mathsf{PC}}}
\newcommand{\PCR}{\ensuremath{\mathsf{PCR}}}

\newcommand{\IMM}{\ensuremath{\mathrm{IMM}}} 

\newcommand{\calL}{\mathcal{L}} 
\newcommand{\calC}{\mathcal{C}}
\newcommand{\x}{\mathbf{x}}
\newcommand{\Z}{\mathbb{Z}}

\newcommand{\F}{\mathbb{F}}
\newcommand{\calP}{\mathcal{P}}
\newcommand{\calN}{\mathcal{N}}

\newcommand{\calT}{\mathcal{T}} 
\newcommand{\out}{\mathsf{out}}

\newcommand{\Qbb}{\mathbb{Q}}
\newcommand{\Rbb}{\mathbb{R}}

\newcommand{\SLP}{\ensuremath{\mathsf{SLP}}}
\newcommand{\AUB}{\mathsf{AUB}} 
\newcommand{\RankP}{\mathsf{RankP}} 
\newcommand{\IRankP}{\mathsf{IRankP}} 
\newcommand{\IRankPE}{\mathsf{IRankPE}} 
\newcommand{\TRankP}{\mathsf{TRankP}} 
\newcommand{\PHP}{\mathsf{PHP}} 

\newcommand{\ACZ}{\ensuremath{\mathsf{AC}^0}}

\newcommand{\NCTwo}{\ensuremath{\mathsf{NC}^2}}

\newcommand{\VAC}{\ensuremath{\mathsf{VAC}}}
\newcommand{\VP}{\ensuremath{\mathsf{VP}}}
\newcommand{\VNP}{\ensuremath{\mathsf{VNP}}}
\newcommand{\TCzero}{\ensuremath{\mathsf{TC}^0}}
\newcommand{\FNCTwo}{\ensuremath{\mathsf{FNC}^2}}
\newcommand{\FACZ}{\ensuremath{\mathsf{FAC}^0}}
\newcommand{\FTCzero}{\ensuremath{\mathsf{FTC}^0}}

\newcommand{\VTCzero}{\ensuremath{\mathsf{VTC}^0}}
\newcommand{\VNCTwo}{\ensuremath{\mathsf{VNC}^2}}

\newcommand{\Vzero}{\ensuremath{\mathsf{V}^0}}
\newcommand{\IDeltazero}{\ensuremath{\mathsf{I\Delta}^0}}

\newcommand{\COMP}{\text{-}\mathbf{COMP}} 
\newcommand{\IND}{\text{-}\mathbf{IND}} 
\newcommand{\MIN}{\text{-}\mathbf{MIN}} 
\newcommand{\MAX}{\text{-}\mathbf{MAX}} 
\newcommand{\CoeffT}{\ensuremath{\mathbf{Coeff}}} 
\newcommand{\SMCoeffT}{\ensuremath{\mathbf{SMCoeff}}} 
\newcommand{\seq}{\ensuremath{\mathbf{seq}}} 
\newcommand{\row}{\ensuremath{\mathbf{Row}}} 
\newcommand{\bit}{\ensuremath{\mathbf{bit}}} 
\newcommand{\rank}{\ensuremath{\mathbf{rank}}} 
\newcommand{\relrank}{\ensuremath{\mathbf{relrank}}} 
\newcommand{\WordP}{\ensuremath{\mathbf{WordP}}} 
\newcommand{\binaryT}{\ensuremath{\mathbf{binary}}} 
\newcommand{\Prefix}{\ensuremath{\mathbf{Prefix}}} 
\newcommand{\unbiased}{\ensuremath{\mathbf{unbiased}}} 
\newcommand{\setmultilinear}{\ensuremath{\mathbf{set\textbf{-}multilinear}}}
\newcommand{\Size}{\ensuremath{\mathbf{Size}}}
\newcommand{\Depth}{\ensuremath{\mathbf{Depth}}}
\newcommand{\ProdDepth}{\ensuremath{\mathbf{ProdDepth}}} 
\newcommand{\PermT}{\ensuremath{\mathbf{Perm}}} 
\newcommand{\IMMT}{\ensuremath{\mathbf{IMM}}} 
\newcommand{\Circuit}{\ensuremath{\mathbf{Circuit}}}
\newcommand{\Partition}{\ensuremath{\mathbf{Partition}}}
\newcommand{\SetMultCircuit}{\ensuremath{\mathbf{SetMultCircuit}}}
\newcommand{\ProjCirc}{\ensuremath{\mathbf{ProjCirc}}}
\newcommand{\SetMultOutput}{\ensuremath{\mathbf{SetMultOutput}}}

\newcommand{\lb}{\ensuremath{\mathsf{AlgCircuit}}} 
\newcommand{\Type}{\ensuremath{\mathsf{Type}}}
\newcommand{\InputType}{\ensuremath{\mathsf{InputType}}}
\newcommand{\InputVar}{\ensuremath{\mathsf{InputVar}}}
\newcommand{\InputNode}{\ensuremath{\mathsf{InputNode}}}

\newcommand{\Krajicek}{Kraj\'{\i}\v{c}ek}
\newcommand{\Jerabek}{Je\v{r}\'{a}bek}

\tcbset{tableborder/.style={
    width=\columnwidth,
    boxrule=0.6pt,
    arc=2pt,
    left=3pt,right=3pt,top=3pt,bottom=3pt,
    colback=white
},}

\newcommand{\commentout}[1]{}

\ifnum\CommentsOn=0 
\newcommand{\MakeComment}[2]{}
\else 
\newcommand{\MakeComment}[2]{\textcolor{green}{\footnotesize{#1: #2}}}
\fi

\vfuzz \hfuzz
\providecommand{\Description}[1]{}

\begin{document}

\title{\textbf{Meta-Mathematics of Algebraic Complexity}\thanks{%
A preliminary version of this work appeared in the proceedings of ACM-IEEE LICS 2026~\cite{GGLST26}.
The work of Michal Garl\'{i}k, Svyatoslav Gryaznov, Jiaqi Lu, and Iddo Tzameret was supported by the European Research Council (ERC) under the European Union's Horizon 2020 research and innovation programme (grant agreement No 101002742, \mbox{\textit{EPRICOT}} project). The work of Jiaqi Lu, Rahul Santhanam, and Iddo Tzameret was also supported by the Engineering and Physical Sciences Research Council (EPSRC) under grant EP/Z534158/1.
Emails (in author order): \href{mailto:michal.garlik@gmail.com}{\texttt{michal.garlik@gmail.com}};
\href{mailto:svyatoslav.i.gryaznov@gmail.com}{\texttt{svyatoslav.i.gryaznov@gmail.com}};
\href{mailto:jl13023@ic.ac.uk}{\texttt{jl13023@ic.ac.uk}};
\href{mailto:rahul.santhanam@cs.ox.ac.uk}{\texttt{rahul.santhanam@cs.ox.ac.uk}};
\href{mailto:i.tzameret@imperial.ac.uk}{\texttt{i.tzameret@imperial.ac.uk}}.}}

\ifnum\Anonymity=0

\author{
 Michal Garl\'{i}k
 \smallskip
 \\
 \small Imperial College London
 \and
 Svyatoslav Gryaznov
 \smallskip
 \\
 \small Imperial College London
 \and 
 Jiaqi Lu
 \smallskip
 \\
 \small Imperial College London
 \and 
 Rahul Santhanam
 \smallskip
 \\
 \small University of Oxford
 \and
 Iddo Tzameret
 \smallskip 
 \\ 
  \small Imperial College London
 }

\date{} 
\else
\author{}
\date{}
\fi

\maketitle





\begin{abstract}                        
    We initiate the study of the meta-mathematics of algebraic circuit lower bounds, aiming
    both to gain insight into the methods sufficient and necessary to prove algebraic circuit
    lower bounds, and to contribute to the study of bounded arithmetic as a logical
    foundation for complexity lower bounds. 
    We demonstrate that while algebraic circuit lower bounds are hard for somewhat weak proof systems such as Polynomial Calculus Resolution (\PCR), contemporary lower bounds are efficiently provable in proof systems and bounded arithmetic theories corresponding to \NCTwo, such as \VNCTwo\ and the corresponding class of propositional Frege proofs of quasipolynomial-size. 
    Moreover, going below $\VNCTwo$ into algebraic constant-depth reasoning is likely insufficient to efficiently prove already constant-depth algebraic circuit lower bounds.
    Under an algebraic analogue of the \Krajicek{}--Razborov conjecture, our results suggest that proving superpolynomial lower bounds for general algebraic circuits will require stronger reasoning than that used in the rank-based proofs studied here.
    Specifically, we show the following.
    
    \begin{description}[labelwidth=1pt,itemindent=0pt,labelindent=0pt,leftmargin=10pt]
        \item[$\NCTwo$-reasoning and rank method] 
        
        Algebraic circuit lower bounds are often proved via the ``rank method'', with recent prominent applications including the constant-depth lower bounds of Limaye, Srinivasan and Tavenas~\cite{LST25} and Forbes~\cite{For24}. We show that these rank-based arguments can be formalised in the bounded arithmetic theory $\VNCTwo$, which captures reasoning with \NCTwo\ concepts. This complements the work of Tzameret and Cook~\cite{TC21}, who formalised structural \emph{upper} bounds in this theory, and provides a unified framework for studying barriers to current algebraic complexity methods, complementing barriers studied by Efremenko, Garg, Makam, Oliveira, and Wigderson~\cite{EGOW18,GMOW19}.

        \item[Sparsity algebraic reasoning]
        
        We show that Polynomial Calculus Resolution (\PCR) cannot efficiently prove superpolynomial algebraic circuit lower bounds for any family of polynomials. Moreover, \PCR\ cannot efficiently prove exponential constant-depth circuit lower bounds for any family of polynomials.
        
        \item[Constant-depth algebraic reasoning]
        We introduce the Tensor Rank Principle
        and demonstrate it is hard for PCR.
        We show that if this principle is hard against constant-depth Ideal Proof System (\IPS), then constant-depth  \IPS\ cannot efficiently prove constant-depth algebraic circuit lower bounds.  
    \end{description}

\end{abstract}                          
\clearpage 

\section{Introduction}

The meta-mathematics of computational complexity asks questions about the nature and limits of efficient computation, with a central focus on \emph{provability}: which proof systems or axiom systems are necessary or sufficient to establish hardness results, such as circuit-size lower bounds and complexity-class separations.

Beyond the formal provability of complexity lower bounds, one can also take a somewhat less formal approach by asking which ``proof methods'' can yield lower bounds, rather than in which formal proof system a given lower bound is provable. We call this direction \emph{the barriers approach}, in contrast to the \emph{meta-mathematical approach}.

In the \emph{Boolean} setting, there has been extensive work on the barriers approach: developing tools for analysing which methods can yield strong Boolean circuit-size lower bounds, and which are ruled out by general obstacles. Classic examples include the \textit{relativization} barrier of Baker, Gill and Solovay~\cite{BGS75} and the \textit{natural proofs} barrier of Razborov and Rudich~\cite{RR94}, both of which rule out broad families of standard approaches to proving superpolynomial lower bounds for $\mathsf{NP}$.

These barriers have been highly instructive and influential, as they rule out large classes of lower-bound attempts. However, they can sometimes have an ad-hoc flavour, in that it is not always unambiguous whether a given lower-bound argument falls within their scope. For example, it is unclear what the relativization barrier says about low-depth circuit classes; similarly, in the natural proofs framework it can be difficult to identify the underlying ``natural property'', and to separate this identification from the task of proving that an explicit hard function satisfies the property.

In contrast, the meta-mathematical approach to complexity lower bounds goes through \textit{formal logic} and specifically proof theory and independence results (namely, results showing that a theorem, and its negation, cannot be proved in a given formal theory of mathematics). One advantage of this approach is that barrier results that are formalised as logical independence results are more robust since they are closed under basic logical operations. For example, the natural-proofs barrier implies that there are no natural proofs of certain strong lower bounds. However, this does not exclude the following possibility: there is a natural proof of a statement $A$, and from $A$ one can derive, by a non-natural argument, a strong lower bound $B$. Thus, the barrier does not rule out proving $B$ in this way. Hence, a natural argument could be helpful in proving lower bounds despite the barrier, as long as it is only part of the proof. By contrast, logical barriers are formulated inside a fixed theory or proof system, and therefore rule out not only direct proofs of $B$, but also proofs that derive $B$ through intermediate statements such as $A$, provided that the implication $A \to B$ is itself provable in the theory.

When thinking of formal theories of interest in computational complexity theory, the natural choice is \textit{Bounded Arithmetic} theories, which are weak fragments of Peano Arithmetic with their axioms and definable functions usually restricted to a prescribed complexity class.  
Razborov~\cite{Razb95,Razb95a} initiated a general
investigation of the provability of Boolean circuit lower bounds in bounded arithmetic and
propositional proof complexity, and proved strong unconditional limitations for restricted proof
systems (e.g., the absence of efficient small-width DNF Resolution proofs of \emph{any}
superpolynomial Boolean circuit lower bound~\cite{Razb15-annals}; see also~\cite{Razborov98,Raz04-JACM} for preceding $\PC$ degree 
and resolution size lower bounds, respectively). More recent developments in this direction appear in works by Chen, Li and Oliveira~\cite{LO23, CLO26, Oli25}.

A central conjectural framework in
this direction is due to \Krajicek{} and Razborov~\cite{Krajicek04c,Razb96,Razb16}:%
\footnote{Razborov conjectured in~\cite{Razb15-annals} that Frege cannot efficiently prove superpolynomial circuit lower bounds for any Boolean 
function. More specifically,~\cite[Conjecture 1]{Razb15-annals} with suitable parameters for the 
underlying combinatorial designs implies under some hardness assumptions that Frege cannot efficiently prove that $\mathsf{SAT}\notin\mathsf{P}/{\poly}$. 
Further conjectures about the impossibility of \emph{Extended} Frege to efficiently prove circuit lower bounds have been circulated in the proof complexity literature and discussions  (cf.~\cite{Razb-2016-talk,Razb-2021-talk,Kra11}).}
\begin{tcolorbox}[tableborder]
\textbf{\Krajicek{}--Razborov Conjecture:}
Extended Frege cannot prove in polynomial size \emph{any} superpolynomial Boolean circuit lower bound (for any Boolean function).
\end{tcolorbox}

\medskip 
Similar to the developments described above, the study of barriers against proving \emph{algebraic} circuit lower bounds was also considered in the past, although it remains less advanced than its Boolean counterpart. Two prominent frameworks for barriers have emerged. The \emph{algebraic natural proofs}
framework by Forbes, Shpilka, and Volk, and independently Grochow, Kumar, Saks, and Saraf~\cite{FSV18,GKSS17} develops algebraic analogues of natural properties. It is known that many\footnote{One example of an algebraic circuit lower bound technique that is not natural in the sense of algebraic natural proofs is the substitution method (see~\cite[Chapter 6]{BCS97}). Another is any lower bound technique that applies to tensor rank, but not border rank, since natural proofs necessarily prove lower bounds on border complexity. We thank the anonymous reviewer for pointing this out.} currently established algebraic lower bounds naturalise in this sense~\cite{FSV18}. However, despite interesting recent work~\cite{CKRST20, KRST22} it remains
open whether there is a compelling algebraic hardness assumption ruling out algebraic natural proofs
against general algebraic circuits,\footnote{The non-existence of algebraic natural properties against general algebraic circuits follows from the existence of \emph{succinct} hitting set generators against algebraic circuits, but this is not an overly standard algebraic hardness assumption in any sense, and indeed is easily seen to be equivalent to the non-existence of natural properties.} analogous to the cryptographic evidence that natural properties against general Boolean circuits do not exist~\cite{RR94}. 
A different line of work studies barriers for \emph{rank-based} approaches. Efremenko, Garg, Oliveira, and Wigderson~\cite{EGOW18} establish unconditional limitations on matrix-rank methods for proving tensor- and Waring-rank lower bounds. Garg, Makam, Oliveira, and Wigderson~\cite{GMOW19} extend these barriers to certain methods based on higher-order tensor rank and Waring rank.
 The limitations shown in this direction are \emph{unconditional}, and apply to a broad class of techniques. On the other hand, these barriers apply only to tensor- and Waring-rank.


While barriers in algebraic circuit complexity have been investigated, its formal meta-mathematics remains unexplored, and in this work we initiate such a general study of ``frontier'' algebraic circuit lower bounds. The guiding question is:

\begin{tcolorbox}[tableborder]
\itshape
What is the proof-theoretic (or propositional proof-complexity) strength needed to formalise and prove frontier lower bounds in
algebraic complexity?
\end{tcolorbox}

A natural test case is the family of rank-based lower bounds for constant-depth algebraic circuits
computing the Permanent over finite fields, proved by Limaye--Srinivasan--Tavenas and extended by
Forbes~\cite{LST25,For24}. These results are among the strongest unconditional lower bounds currently
known for algebraic circuits, and their proofs use a robust and conceptually clean method (relative
rank and set-multilinearization), making them particularly suitable for our analysis.

\subsection{The rank method and constant-depth algebraic lower bounds}

Following the terminology of Garg~\textit{et~al.}~\cite[Definition 1.8]{GMOW19},
the essential feature of the rank method (which~\cite{GMOW19} calls ``matrix-rank method'' to distinguish it from objects of higher-order than matrices, e.g., tensors) is the use of a linear map
$    \phi \colon V \to \operatorname{Mat}_{p,q}$
and the associated subadditive measure
$    f \mapsto \operatorname{rank}(\phi(f)).
$
In the circuit lower-bound setting, such a measure yields a lower bound once it is
shown to be small for every polynomial computed by small circuits in the model
under consideration, and large for the target polynomial.

In particular, the arguments in Limaye, Srinivasan and Tavenas, and Forbes~\cite{LST25,For24} proceed by associating to a polynomial $f$ a
``relative rank'' measure (defined with respect to an appropriate partition of variables and a
set-multilinear structure), and then proving two complementary statements:
(i) the Permanent has large relative rank, and
(ii) every polynomial computed by a small constant-depth algebraic circuit has small relative rank.
The second part requires turning a circuit into an explicit representation of the relevant coefficient
vector (after suitable multilinearization), and then bounding the rank of this vector by structural properties of low-depth circuits.


From a proof-theoretic viewpoint, this raises concrete formalisation tasks:
one must define the rank measure inside a weak theory, prove its basic properties, and formalise the
circuit-to-coefficient-vector transformation (together with the bounds needed in the rank argument).
A priori, it is not obvious how much uniform reasoning is required, because the argument mixes
algebraic manipulations, combinatorial counting bounds, and circuit-structured constructions.

\subsection{Our results}

In this work we study the efficient provability of algebraic
circuit lower bounds. We consider this question in propositional and algebraic
proof systems, as well as in bounded arithmetic. The latter are first-order theories that correspond to certain computational complexity classes and serve as a uniform counterpart for propositional proof systems (see~\cite{CN10,Kra95}). We discuss the consequences of our work in \Cref{sec:discussion}.

    
\paragraph*{Summary of results} We shall demonstrate that while algebraic circuit lower bounds are hard for somewhat weak proof systems such as Polynomial Calculus Resolution (\PCR), they are provable (efficiently) in proof systems and theories corresponding to \NCTwo, such as the bounded arithmetic theory \VNCTwo\ and the corresponding propositional proof system $\NCTwo$-Frege.
Our results suggest a complementary limitation. We show that the particular rank-based constant-depth lower-bound arguments of Limaye, Srinivasan and Tavenas, and Forbes~\cite{LST25,For24} can be formalised in the bounded arithmetic theory $\VNCTwo$.

This establishes a conditional limitation on the ability of rank-based arguments to establish general algebraic circuit lower bounds. In particular, suppose  an algebraic analogue of the \Krajicek{}--Razborov conjecture holds, asserting that $\VNCTwo$ cannot prove superpolynomial lower bounds against general algebraic circuits computing polynomials of polynomial degree. Then extensions of these rank-based lower bound arguments, whose correctness remains provable in $\VNCTwo$, cannot establish such super-polynomial algebraic circuit lower bounds.

We then show that ``algebraic constant-depth'' reasoning is plausibly insufficient to efficiently prove constant-depth algebraic circuit lower bounds. More precisely, we show that if a property of tensor rank is hard for algebraic proofs operating with constant-depth circuits (formally, constant-depth IPS), then so are constant-depth algebraic circuit lower bounds.

This picture---depicted in \Cref{tab:three-level-picture}---aligns with the working assumption that rank arguments, and rank-based lower bounds specifically, sit proof-theoretically (or proof complexity-wise) in reasoning systems that operate with concepts in \NCTwo\ (or more precisely, the complexity class $\mathsf{DET}$, that is the \ACZ-closure of the integer determinant; see~\cite{TC21} for a discussion and references).

\begin{table}[t]
\centering
\small
\begingroup
\renewcommand{\arraystretch}{1.12}
\setlength{\tabcolsep}{9pt}
\begin{tabularx}{\linewidth}{@{} >{\raggedright\arraybackslash}p{0.27\linewidth} >{\raggedright\arraybackslash}X @{}}
\noalign{\hrule height 0.7pt\vskip 5pt}
\textbf{Theory or proof system} & \textbf{Algebraic circuit lower bounds} \\
\noalign{\vskip 4pt\hrule height 0.4pt\vskip 5pt}
\textbf{\NCTwo\ reasoning}\par\smallskip \VNCTwo\ and \NCTwo-Frege &
\textbf{Efficiently provable.} Rank-based constant-depth lower bounds for the Permanent over finite fields are provable in \VNCTwo, giving polynomial-size \NCTwo-Frege and quasipolynomial-size Frege proofs. \\
\noalign{\vskip 0.8em}
\textbf{Constant-depth IPS}\par\smallskip Over $\F_2$ &
\textbf{Conditionally hard.} Constant-depth algebraic circuit lower bounds require superpolynomial-size proofs, assuming exponential-size constant-depth IPS lower bounds for the tensor rank principle. \\
\noalign{\vskip 0.8em}
\textbf{\PCR}\par\smallskip Over $\F_2$ &
\textbf{Unconditionally hard.} Superpolynomial algebraic circuit lower bounds require superpolynomial-size \PCR\ proofs. \\
\noalign{\vskip 5pt\hrule height 0.7pt}
\end{tabularx}
\endgroup
\caption{Three levels of reasoning and the status of algebraic circuit lower bounds.}\label{tab:three-level-picture}
\end{table}

\subsubsection{\texorpdfstring{$\VNCTwo$}{VNC2} and \texorpdfstring{$\NCTwo$}{NC2}-Frege suffice}

Our first contribution is to show that the entire rank-based lower bound argument against constant-depth algebraic circuits of Limaye \emph{et al.} and Forbes~\cite{LST25, For24} can be
carried out in the bounded arithmetic theory $\VNCTwo$.
Using the standard translation between bounded arithmetic and propositional proofs (see, e.g.,
Cook--Nguyen~\cite{CN10}), we obtain the corresponding propositional upper bound.

\begin{theorem}[Informal; \Cref{theorem: Provability of superpolynomial lower bound for the Permanent polynomial against constant-depth algebraic circuits} and \Cref{theorem: NC^2-Frege upper bound}]\label{thm:intro:mainthm-VNCTwo}
There are polynomial-size propositional $\NCTwo$-Frege proofs of the superpolynomial lower bound for
the Permanent polynomial against constant-depth algebraic circuits over finite fields.
Moreover, this lower bound is provable in the theory $\VNCTwo$.
\end{theorem}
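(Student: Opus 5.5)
The plan is to formalise the Limaye--Srinivasan--Tavenas/Forbes argument in $\VNCTwo$ and then apply the Cook--Nguyen propositional translation to obtain the $\NCTwo$-Frege proofs. The lower bound will be stated as a $\Sigma^B_0$-style formula whose parameters are a circuit (coded as a string) of size $s$ and depth $\Delta$, and whose conclusion is that the circuit does not compute $\perm_n$ whenever $s$ is below the superpolynomial bound $n^{\Omega(\log n)^{c_\Delta}}$. Because the bound is superpolynomial, the precise form of the statement matters. We will encode the circuit as a string variable of size $s$, with $n$ given in unary, and quantify only over circuits of the stated size. The translation $\|\cdot\|$ then yields a family of propositional formulas of size $\poly(s,n)$, and the $\VNCTwo$ proof translates to polynomial-size $\NCTwo$-Frege proofs.

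Inside $\VNCTwo$ the argument follows four steps.
\begin{enumerate}
\item \emph{Rank.} Over a finite field, matrix rank is computable in $\NCTwo$ (for example via Mulmuley's or Berkowitz's algorithm), and its basic properties can be proved in $\VNCTwo$. These properties are subadditivity, $\rank(AB)\le\min(\rank A,\rank B)$, the tensor-product rule, and the fact that rank is at least the size of any nonsingular minor. We will take these from the Tzameret--Cook formalisation of linear algebra in $\VNCTwo$. From them we define relative rank $\relrk$ with respect to a word/partition $w$ and derive its subadditivity and submultiplicativity under set-multilinear products.
\item \emph{Homogenisation and set-multilinearisation.} Given a depth-$\Delta$ circuit $C$ of size $s$ computing $\perm_n$, we form the set-multilinear projection $\IMM$ or a word polynomial. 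We then construct, by an $\NCTwo$ (indeed $\ACZ$-uniform, polynomial-time-in-$s^{O(d)}$) function, a set-multilinear circuit of depth $O(\Delta)$ and size $d^{O(d)}\,\poly(s)$. We must prove in the theory that it computes the projected polynomial, using a coefficient-extraction predicate $\CoeffT$ defined by $\Sigma^B_0$-induction over circuit layers.
\item \emph{Upper bound on the relative rank of small-depth set-multilinear circuits.} This goes by induction on depth, using the combinatorial lemma that products of many low-degree factors have small relative rank when $w$ is drawn from a biased/unbiased distribution. The probabilistic choice of $w$ will be replaced by an explicit counting argument over words, or by a fixed explicit word as in Forbes' version. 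This choice is important because $\VNCTwo$ cannot easily handle random choices.
\item \emph{Lower bound on the relative rank of the hard polynomial.} We exhibit an explicit full-rank submatrix, for example a permutation-like identity block, which gives full relative rank.
\end{enumerate}
Comparing the bounds from steps 3 and 4 yields a contradiction.

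The main obstacle I expect is step 3 together with its interaction with step 2. The depth-$\Delta$ induction involves exponentially large (in $\log s$) quantities and sums over partitions. Counting arguments of this kind require bounds on the number of words with a given imbalance; such bounds are natural with $\VTCzero$ counting, which is available in $\VNCTwo$. The coefficient matrices themselves have dimension $n^{O(\log n)}$, which is quasipolynomial rather than polynomial in the input length. To handle this, I would first try to keep all matrices indexed by monomials of a restricted degree $d=O(\log n)$ and work over the parameter $s$. Since the statement is only meaningful for $s$ at least the matrix dimension, the size of each coefficient matrix stays polynomial in the length of the encoded circuit, and the rank computations remain in $\NCTwo$ relative to that length. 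Finally, each structural claim, such as "the set-multilinearised circuit computes the projection" or "relrk is submultiplicative", will be proved by $\Sigma^B_0(\NCTwo)$-induction along the circuit's layers. This is available in $\VNCTwo$, so the whole proof stays in the theory.
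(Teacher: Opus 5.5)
\textbf{Main gap: the objects in your argument are not polynomially bounded.} The architecture you describe is the one the paper uses: rank via the Tzameret--Cook/Ken--Kuroda formalisation, relative rank, set-multilinearisation, a depth induction for the rank upper bound, full rank of the hard polynomial, then the Cook--Nguyen translation. The problem is how you make everything polynomially bounded. You assert that ``the statement is only meaningful for $s$ at least the matrix dimension,'' so the coefficient matrices are polynomial in the length of the encoded circuit. This is false.

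For a word $w\in\{\lfloor k/\sqrt2\rfloor,-k\}^d$ the relative-rank matrices have $2^{L(w)}=2^{\Theta(kd)}$ entries. The argument only yields the lower bound $2^{\Theta(k\,d^{1/(2^\Delta-1)}/\Delta)}$. So at every circuit size where the argument proves anything, these matrices (and the set-multilinear coefficient vectors output by $\CoeffT$) are superpolynomially larger than the circuit. Conversely, once $s\gtrsim d\,2^{kd}$ the word polynomial, which has at most $2^{kd}$ monomials, has depth-$2$ circuits of size $s$, and the argument says nothing. There is no regime where both of your requirements hold.

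The hypothesis ``$C$ computes $\perm_n$'' has the same problem. You do not say how it is expressed. The natural way, and the paper's, is to compare coefficient vectors, which needs $\binom{n^2+n}{n}\ge n^n$ coordinates. No $\NCTwo$ (indeed no known deterministic polynomial-time) identity test for constant-depth circuits is available to replace this. So your $\poly(s,n)$-size family of tautologies with $\VNCTwo$ proofs is not obtained this way.

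\textbf{How the paper resolves it.} The paper decouples the size of the tautology from the size of the circuit. It fixes a global parameter $N$ bounding every coefficient-vector length, and sets $n=\lfloor\log_2N\rfloor$, $d=\lfloor c\log_2n\rfloor$, $k=10d$, and permanent dimension $m(N)=2+d2^k$ with $20c<1$. It then quantifies only over circuits of size $<s(m(N))$ for a fixed polynomial $s$. With these choices:
\begin{itemize}
\item $2^{L(w)}\le 2^{kd}\le N$;
\item all truncated coefficient vectors have $2^{O(\log^2 n)}\le N$ coordinates;
\item $\binom{m^2+m}{m}\le N$, so ``computes $\perm_{m(N)}$'' can be written as $\CoeffT(C)=\PermT(m(N),N)$ under the added hypothesis $\SetMultOutput$.
\end{itemize}
The resulting $\NCTwo$-Frege proofs are polynomial in $N$, a truth-table-like length exponential in the number of variables. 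What is proved is that \emph{polynomial-size} constant-depth circuits fail, not your quantitative bound with $\poly(s)$-size formulas.

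\textbf{Smaller points.}
\begin{itemize}
\item LST already works with an arbitrary $k$-unbiased word, so there is no randomness to remove. The real content of your step~3 is the Diophantine estimate, which the paper formalises via the irrationality of $\sqrt2$.
\item The irrational normaliser $2^{L(w)/2}$ must be handled; the paper uses \Jerabek's $\TCzero$ approximations of $\exp$ and $\ln$.
\item Forbes' contribution is not a choice of word. It is the field-independent set-multilinearisation via the Binet--Minc identity. You need it over small finite fields, because LST's homogenisation/set-multilinearisation step requires characteristic $0$ or $>d$.
\end{itemize}
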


To establish this, we provide $\VNCTwo$-formal definitions and proofs for the key components of the
rank-method framework in the parameter regime used by~\cite{LST25,For24}.
In particular, we formalise:
\begin{itemize}
\item the definition of the relevant relative-rank measure and its basic algebraic properties;
\item the fact that the specific ``word polynomials'' used in the argument have high rank;
\item a uniform procedure that, given a constant-depth algebraic circuit of bounded syntactic-degree, computes the corresponding coefficient vector needed for the rank analysis;
\item the set-multilinearization step in~\cite{For24}: every constant-depth circuit can be converted into a set-multilinear circuit with only a linear increase in depth and a moderate size blow-up;
\item the final rank upper bound for constant-depth set-multilinear circuits, which implies a circuit-size lower bound for the specific ``word polynomials''.
\end{itemize}

Conceptually, the theorem supports the informal correspondence between rank-based lower bounds and
$\NCTwo$-level reasoning: the proof of~\cite{LST25,For24} does not require the strength of powerful systems
such as Extended Frege.



\cref{thm:intro:mainthm-VNCTwo} gives an upper bound for the proof complexity of the circuit lower bounds shown in~\cite{LST25, For24}. We believe that the upper bound is close to being tight (with respect to the proof system), which is consistent with the constant-depth algebraic analogues of the \Krajicek{}--Razborov conjecture.

%
%


\paragraph*{Proof overview for \Cref{thm:intro:mainthm-VNCTwo}}
%
%
We shall work with \emph{first-order} theories of bounded arithmetic, which are translatable to polynomial-size propositional proofs. Specifically, we work with the theory \VNCTwo\ as developed in Cook and Nguyen~\cite{CN10} and used in Tzameret and Cook~\cite{TC21} to formalise basic linear algebra in a seemingly minimal theory. Any formula of low logical complexity (formally, $\forall\Sigma^B_0$-formulas) that has a proof in \VNCTwo\ translates naturally (by instantiating the universal variables in it by constants) to a family of propositional tautologies with polynomial-size propositional proofs in \NCTwo-Frege. 

The formalisation of the lower bound proceeds along the lines 
of~\cite{LST25} (which works over fields of large characteristic), except where we use Forbes' idea for finite fields~\cite{For24}.
Specifically, this means that we use a direct set-multilinearization step using the Binet--Minc~\cite{minc1979evaluation} identities. 
We note that for simplicity of presentation, the upper bound in \NCTwo-Frege we demonstrate is not precisely for $\lb(\perm_n,s,\ell,\Delta)$ as is used in \Cref{theorem:intro:super-poly lower bound for AUP}, but a slightly different formulation (though it can be made to match precisely our original propositional formula).

    There are three main challenges in the formalisation of the lower bounds in the theory, as follows.

    \subparagraph*{Irrational numbers:} We need to encode and reason about irrational numbers and functions on reals, such as $\exp$ and $\ln$, as in~\cite{LST25}, while the theory cannot do this directly. However, \Jerabek{}~\cite{Jev23} has formalised these functions in the theory $\VTCzero$, which is contained in $\VNCTwo$. Moreover, \Jerabek{} proved that $\VTCzero$ can prove some properties about these functions, such as $\exp(z+w) = \exp(z) \exp(w)$ and $\exp(\log z) = z$, and these properties are sufficient for our formalisation.

    \subparagraph*{Diophantine approximation:}
    We need to use the following Diophantine approximation from~\cite{LST25}: for $a,b\in\Z$ with $(a,b)\neq(0,0)$,
    \(
        |a\alpha-b| \geq \frac{1}{4|a\alpha|+2}
    \),
    where $\alpha=1/\sqrt{2}$. We show that to use this approximation, it is sufficient to prove the following statement in the theory:
    \(
        x^2=2 \ \land\ (a,b)\neq(0,0) \ \implies\ ax\neq b
    \),
    which in itself expresses the irrationality of $x$. This statement can be proved in the theory by showing that the greatest common divisor of $a$ and $b$ is definable in $\VNCTwo$ using the number-maximisation axiom~\cite{HJ18-Mathoverflow}.

    \subparagraph*{Properties of relative rank:}
    The lower bound in~\cite{LST25} is a rank argument and hence uses several properties of matrix rank.
    Mulmuley~\cite{mulmuley1986fast} showed how to compute the rank of a matrix by using the determinant. Following the work of Tzameret and Cook~\cite{TC21} on linear algebra in \VNCTwo, Ken and Kuroda~\cite{KK23} used Mulmuley's algorithm to define in the theory the rank function. Using their rank function, the required properties of the rank function can be proved in $\VNCTwo$.

\subsubsection{Polynomial Calculus Resolution lower bounds}

%

%

We study the hardness of the statements $\lb(f,s,\ell)$ stating that a polynomial $f$ has an algebraic circuit of size $s$ and syntactic-degree $\ell$. We also study analogous statements $\lb(f,s,\ell,\Delta)$, where the circuits are restricted to depth $\Delta$. Similar formulas were considered for Boolean circuits~\cite{Razborov98,Raz04-JACM,RazbPM04,Krajicek04c,Razb15-annals,GGRT25,AR23} in the case of Resolution, Polynomial Calculus, Polynomial Calculus Resolution and Sum-of-Squares. 

We define $\lb(f,s,\ell)$ in a similar fashion to the lower bound statement from Razborov~\cite{Razb15-annals}. We prove its hardness against $\PCR$ over the two-element field by reducing (the negation of) the \emph{rank principle} to $\lb(f,s,\ell)$. Informally, the (weak) rank principle states that an $m \times m$ matrix $A$ cannot be represented as a product of an $m \times n$ matrix and an $n \times m$ matrix when $m>n$ and the rank of $A$ exceeds $n$.
Garl\'{\i}k, Gryaznov, Ren and Tzameret~\cite{GGRT25} studied the complexity of different encodings of the rank principle, including its \emph{iterated} variant, which is the variant we use in the reduction. This allows us to show that $\PCR$ cannot efficiently prove lower bounds on the algebraic circuit complexity of an $n$-variable polynomial $f$ when $s \ge {(\ell n)}^{\Omega(1)}$.

Below we use $\PCR_{\F_2}$ to denote \PCR\ refutations over $\F_2$.

    \begin{theorem}[\PCR\ cannot efficiently prove algebraic circuit lower bounds; \Cref{theorem: super-poly lower bound for AUP}]\label{theorem:intro:super-poly lower bound for AUP}
        There exists a constant $c>0$ such that given any polynomial $f$ of degree at most $\ell$ over $\F_2$ in $n$ variables and $s \ge {(\ell n)}^c$, any $\PCR_{\F_2}$ refutation of $\lb(f,s,\ell)$ requires superpolynomially many monomials.
    \end{theorem}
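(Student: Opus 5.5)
The plan is a low-degree reduction from (the negation of) the iterated rank principle studied by Garl\'{\i}k, Gryaznov, Ren and Tzameret~\cite{GGRT25}, for which superpolynomial monomial-size lower bounds for $\PCR_{\F_2}$ refutations are known. The idea is that a size-$s$ circuit of syntactic degree $\ell$ computing $f$ contains, in disguise, a factorisation of some explicit full-rank matrix through a smaller inner dimension. Refuting $\lb(f,s,\ell)$ must therefore implicitly refute such a factorisation.

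\textbf{Step 1: choose the parameters of the rank-principle instance.} Since $f$ has degree at most $\ell$ in $n$ variables, it has at most $N=\binom{n+\ell}{\ell}\le(\ell n)^{O(1)}$ monomials over $\F_2$. Hence $f$ has a trivial circuit, namely a sum of monomials, of size and syntactic degree bounded by $(\ell n)^{O(1)}$. Choose $c$ large enough that $s$ exceeds this by a polynomial factor. The slack gates, roughly $s-N$ of them, are then free to be used to embed a hard instance of the iterated rank principle of dimension $m=s^{\Omega(1)}$.

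\textbf{Step 2: the reduction.} Here we take a partial assignment/substitution $\rho$ to the variables of $\lb(f,s,\ell)$. The variables describe gate types, wiring and field constants, together with the auxiliary variables encoding the coefficient vectors of all gates, as in Razborov-style formulations. The substitution $\rho$ has two parts. First, it hard-wires a fixed correct circuit for $f$ on one part of the gates. Second, on the remaining gates it replaces the variables by low-degree polynomials in the variables of $\neg\IRankP$, in such a way that every axiom of $\lb(f,s,\ell)\restriction_\rho$ becomes a low-degree consequence of the rank-principle axioms. Concretely, one layer of linear gates plays the role of an $m\times n'$ matrix, a second layer plays the role of an $n'\times m$ matrix, and the coefficient-vector constraints at the product/sum gates express the matrix product identities. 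The output of this gadget is then forced, via the coefficient-vector axioms, to be an identity-like full-rank matrix that is added to the output with cancellation over $\F_2$, for example as $g+g$, so that $f$ is still computed.

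With this in place, a $\PCR_{\F_2}$ refutation of $\lb(f,s,\ell)$ of monomial size $S$ yields a refutation of the rank-principle encoding of size $S\cdot\poly(s)$. The reason is that restrictions and low-degree substitutions preserve $\PCR$ refutations with at most polynomial blow-up in monomials, provided each substituted polynomial has few monomials.

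\textbf{Step 3: conclude.} Invoking the known superpolynomial $\PCR_{\F_2}$ lower bound for the iterated rank principle from~\cite{GGRT25} finishes the proof.

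\textbf{Main obstacle.} The hardest step is Step 2. The gadget must make the coefficient-vector axioms of $\lb$ (including the syntactic-degree bookkeeping) translate into exactly the iterated-rank axioms with constant-degree, few-monomial substitutions. At the same time, it must not inadvertently force consistency in a way that makes the substituted system easy to refute. I would first try embedding the rank instance in degree-$2$ homogeneous components, so that coefficient vectors of products are bilinear in the matrix entries, matching the iterated encoding of~\cite{GGRT25} directly.
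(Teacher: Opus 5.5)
\textbf{There is a genuine gap, in Steps~1--2.} A refutation of $\lb(f,s,\ell)$ exists only when $\lb(f,s,\ell)$ is unsatisfiable, i.e.\ when $f$ has \emph{no} circuit of size $s$ and syntactic degree $\ell$. Your construction begins by hard-wiring a correct circuit for $f$ into some of the $s$ gates. If that is possible, $\lb(f,s,\ell)$ is satisfiable, there is nothing to refute, and the theorem holds vacuously. If it is not possible, which is the only case with content, your substitution cannot be built. So your reduction never reaches a case where there is something to prove.

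Your Step~1 claim $N=\binom{n+\ell}{\ell}\le(\ell n)^{O(1)}$ is also false once $\ell$ grows; for example $\ell=n$ gives $N\approx 4^n$. Whenever the trivial sum-of-monomials circuit does fit in $s$ gates, you are again in the vacuous case.

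Relatedly, the axioms of $\lb(f,s,\ell)$ prescribe a value only at the output gate, $\alpha_{M_i,s}=\coeff_{M_i}(f)$; internal coefficient vectors are only tied to the wiring. A gadget whose contribution cancels as $g+g$ is therefore not ``forced'' to factor any full-rank matrix. Moreover, the identity-like rank instance you embed is unrelated to $f$, whereas the hardness of $f$ is the only source of unsatisfiability.

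\textbf{What is missing.} The circuit for $f$ itself must be parametrised by the rank variables, so that it computes $f$ only modulo the rank axioms. The hard rank instance must also be built from the coefficients of $f$. The paper does this as follows.
\begin{itemize}
\item It uses a width-$K$ layered, read-once circuit in the order $u_1,\dots,u_n$, with gates $g_z[t,i,j,k,d]=g_x[t,i,k]\,y_{k,j}\,u_{t+1}^d$ and $g_x[t,i,k]=\sum_{d,k'}g_z[t-1,i,k+dK,k',d]$.
\item The $y_{k,j}$ serve as the ``constant'' inputs, and the substitution sets $\alpha_{\overline{u}^\pi,g_x[t,i,k]}\mapsto x^{(\pi)}_{i,k}$.
\item The extension axioms $z^{(\pi)}_{i,j,k}=x^{(\pi)}_{i,k}y_{k,j}$ handle the $\times$-gates. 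The iterated axioms $X^{(\pi)}Y=[X^{(\pi\ast0)}\cdots X^{(\pi\ast L)}]$ handle the $+$-gates, layer by layer.
\item The leaf matrices are $A^{(\pi)}$ with $A^{(\pi)}_{1,1}=\coeff_{\overline{u}^\pi}(f)$ and zero elsewhere, so the output axiom follows from $X^{(\pi)}Y=A^{(\pi)}$.
\end{itemize}
A refutation of $\lb(f,s,\ell)$ thus becomes a refutation of $\IRankPE_{L,n,K}(\{A^{(\pi)}\})$ for matrices determined by $f$. The $2^{K^{\Omega(1)}}$ bound of~\cite{GGRT25} then applies, with $K=(\ell n)^{\Omega(1)}$ chosen so that the circuit, of size at most $2n(L+1)^3K^3$, fits within $s$.

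Note also that the paper's substitution maps each variable to a constant or a single (possibly negated) variable, so monomial size is preserved exactly. Your substitutions by few-monomial polynomials would not give $S\cdot\poly(s)$: a degree-$D$ monomial of the refutation can expand into a number of monomials exponential in $D$.
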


This establishes a weaker version of an algebraic analogue of the \Krajicek{}--Razborov conjecture by showing that there are no polynomial-size \PCR\ proofs of superpolynomial  algebraic circuit lower bounds for $f$, for \emph{any} polynomial $f$. 

    As far as we are aware, \cref{theorem:intro:super-poly lower bound for AUP} is the first result showing unconditional hardness of proving an algebraic circuit lower bound.

Constant-depth algebraic circuits form a strictly weaker computational model than general algebraic circuits. Consequently, establishing lower bounds against constant-depth algebraic circuits~\cite{LST25, For24} is expected to be easier than proving lower bounds against general algebraic circuits. It follows that proving proof-complexity lower bounds for formulas expressing lower bounds against constant-depth algebraic circuits is  \emph{stronger} than proving proof-complexity lower bounds for formulas expressing lower bounds against general algebraic circuits. By reducing to the rank principle, we obtain a superpolynomial lower bound for exponential constant-depth algebraic circuit lower-bound formulas against $\PCR$:

    \begin{theorem}[\PCR\ cannot efficiently prove exponential constant-depth algebraic circuit lower bounds;\Cref{theorem: super-poly lower bound for AUP constant-depth}]
        For every constant $c>0$, given a polynomial $f$ of degree at most $\ell$ over $\F_2$ in $n$ variables and $s \ge N^{1/c}$, where $N$ is the number of monomials in $n$ variables of degree at most $\ell$, every $\PCR_{\F_2}$ refutation of $\lb(f,s,\ell,O(c))$ requires superpolynomially many monomials.
    \end{theorem}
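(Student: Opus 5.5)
The plan is to reduce from the iterated rank principle of Garl\'{\i}k, Gryaznov, Ren and Tzameret~\cite{GGRT25}, exactly as for \Cref{theorem:intro:super-poly lower bound for AUP}. The one change is that the circuit we embed must have constant depth $O(c)$. The key observation is that any polynomial $f$ of degree at most $\ell$ in $n$ variables is a linear combination of its at most $N$ monomials. So $f$ always has a trivial depth-$2$ ($\Sigma\Pi$) circuit of size about $N\ell$. The formula $\lb(f,s,\ell,O(c))$ is therefore satisfiable for $s\gtrsim N$. The proof-complexity content lies in showing that $\PCR$ cannot refute it even when $s$ is only $N^{1/c}$.

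To get size $N^{1/c}$, I will write the coefficient vector of $f$ as a product of matrices. Organise the monomials of degree $\le\ell$ into a $c$-fold tensor structure: split the $\ell$ multiplicative positions (padded with the constant $1$) into $c$ blocks of $\ell/c$ positions each. Then every monomial is a product of $c$ sub-monomials, each drawn from a set of size about $N^{1/c}$. A depth-$O(c)$ circuit that alternates sum layers and product layers over these blocks computes $f$ via a chain of matrix products $A_1A_2\cdots A_{c}$ of dimensions about $N^{1/c}$. This is the iterated structure of the iterated rank principle: an $N$-dimensional object (the target) is factored through $c$ intermediate layers, each of width $m\approx N^{1/c}$.

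Next, I will construct a low-degree substitution (a $\PCR$ reduction) from the variables of $\lb(f,s,\ell,O(c))$ to those of the negated iterated rank principle. Concretely:
\begin{itemize}
\item Gate-type and wiring variables of the circuit are fixed to constants that realise the skeleton above.
\item The edge-weight (scalar) variables in each layer are mapped to the entries of the matrices in the rank-principle instance.
\item The correctness axioms, stating that the output's coefficient vector equals that of $f$, become the product-equals-target axioms of the rank principle. The high-rank target matrix there is replaced by a reshaping of $f$'s coefficients, chosen so that the principle is hard.
\end{itemize}
Because each gate's coefficient, as a function of the weight variables, is computed layer by layer, every axiom of $\lb$ maps to a constant-degree consequence of the rank-principle axioms. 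A $\PCR$ refutation of $\lb$ with few monomials would then give a short refutation of the rank principle. That contradicts the superpolynomial $\PCR_{\F_2}$ lower bound of~\cite{GGRT25} for the iterated encoding, in the parameter regime where the inner dimension is polynomially smaller than the outer one.

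The main obstacle is that the lower bound must hold for \emph{every} $f$, including $f$ with very few monomials, whose coefficient matrix has low rank. The rank principle is hard only when the target has large rank, and a low-rank target makes the instance satisfiable and hence not refutable. The way around this is to embed an independent hard rank-principle instance on fresh auxiliary gates. Its wires carry an $m\times m$ identity (rank $m$) product that must be certified through an $n'\times n'$ bottleneck with $n'<m$. The first thing to try is a gadget that makes this sub-circuit contribute $0$ to the output polynomial, for example by multiplying it by a zero-forced gate, while its internal consistency axioms still appear in $\lb$. I also need to check that the encoding of $\lb$ (with syntactic-degree bookkeeping and depth constraints) does not let $\PCR$ bypass the gadget. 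Finally, the counting must show that the total size, dominated by the $c$ layers of width $N^{1/c}$ plus the gadget, fits within $s\ge N^{1/c}$, and that the depth stays $O(c)$.
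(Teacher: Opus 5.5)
Your reduction skeleton matches the paper's. You reuse the iterated-rank embedding from the proof of \cref{theorem: super-poly lower bound for AUP}, with targets $A^{(\pi)}_{1,1}=\coeff_{\overline{u}^\pi}(f)$. Each level of the iteration absorbs a whole block of the input, so there are only $c$ levels over an alphabet of size $L+1\approx N^{1/c}$, and the embedded circuit has depth $O(c)$. The paper blocks the \emph{variables}: $n/c$ per level, multiplying by all monomials of degree at most $\ell$ in the next block. This makes the factorisation $M=\overline{u}^\pi$ unique. Your blocking of degree positions gives many strings $\pi$ per monomial, so you must place $\coeff_M(f)$ on one canonical $\pi$ and zeros on the rest. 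That is workable, but messier.

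The gap is your final paragraph, which misreads how the reduction transfers lower bounds. Satisfiable instances are not an obstacle. If the $f$-target iterated system has a Boolean solution, composing it with $\rho$ satisfies every axiom of $\lb(f,s,\ell,O(c))$, because each substituted axiom is a consequence of the rank axioms. So $\lb$ has no refutation at all and the statement is vacuous; the premise ``every refutation of the rank system is large'' is then vacuously true and passes through $\rho$ unchanged. Also, low rank of a single target does not imply satisfiability here. In the paper's encoding every $A^{(\pi)}$ has rank at most one, and unsatisfiability comes from the $Y$ shared across levels.

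In the only non-vacuous case, $\lb$ unsatisfiable, the $f$-part is unsatisfiable too. What you need is a lower bound for \emph{that} system, which the paper takes from \Cref{theorem: exp lower bound for IRankP}, stated for arbitrary targets $\{A^{(\pi)}\}$. No rank condition on $f$'s coefficients is required. Your identity gadget cannot replace this. A refutation of the $f$-part together with a gadget on disjoint variables may simply refute the $f$-part, so the gadget's hardness gives no bound. And in the one case where the gadget would be the sole source of unsatisfiability, $\lb$ is itself satisfiable.

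So drop the gadget and apply the iterated lower bound directly. In the size count, include the bottleneck dimension $K$: the blocked circuit has roughly $c(L+1)^3K^3$ gates. $K$ must be polynomially large in $n\ell$ for $2^{K^{\Omega(1)}}$ to be superpolynomial in $|\lb|$.
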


\subsubsection{Conditional constant-depth IPS lower bounds}

While in the previous section we discussed the relatively weak \PCR\ proof system, here we shall deal with a stronger system: the Ideal Proof System (\IPS) introduced by Grochow and Pitassi~\cite{GP18}, and specifically its constant-depth fragment. Intuitively, one should think about it roughly as an algebraic proof system similar in strength to polynomial calculus in which each proof-line is written as an algebraic constant-depth circuit, instead of as a sum of monomials (as in \PCR). Alternatively, one can consider constant-depth IPS over a finite field of constant characteristic as constant-depth Frege with counting gates modulo $p$ (augmented with the ability to freely prove [arithmetised] polynomial identities).    

We consider the generalisation of the rank principle (see above), called the \emph{tensor rank principle}, denoted $\TRankP_n^m(r,A)$. It states that an $r$-tensor $A \in \F^{m \times \cdots \times m}$ does not admit a tensor-rank decomposition of rank at most $n$, which is smaller than $m$. When $r=2$, this principle coincides with the rank principle.
We show a conditional superpolynomial lower bound on $\lb(f,s,n\ell,\Delta)$
against constant-depth $\IPS$. We construct a substitution that converts a short refutation of $\lb(f,s,n\ell,\Delta)$ into a short refutation of an order-$n$ tensor rank instance over $\F_2$, using ideas from Raz~\cite{Raz04-JACM}.

\begin{theorem}[Constant-depth IPS cannot efficiently prove constant-depth algebraic circuit lower bounds conditionally; See \Cref{them: conditional IPS Lower Bounds reduction} and \Cref{corollary: conditional IPS lower bounds}]\label{thm:intro condiitonal IPS lower bounds}
    Consider a family of polynomials $f\in\F_2[u_1,\ldots,u_n]$ of degree at most $\ell$, where $n\geq1$ and $\ell\geq2$. Set
    \[
        D=n\ell,\qquad
        N=\binom{n+\ell}{\ell},\qquad
        R=\binom{n+D}{D},\qquad
        K=\lfloor\sqrt{s}\rfloor.
    \]
    Assume $K\to\infty$, $K<N$, $\Delta>4$, and $s\geq4n^2(\ell+1)^4$. Let $A$ be the diagonal order-$n$ tensor of dimension $N$ whose $(M,\ldots,M)$th entry is $\coeff_M(f)$, with all other entries zero. Suppose that, for some fixed $\varepsilon>0$ and constant $d>5$, the corresponding systems $\TRankP^N_K(n,A)$ require $d$-$\IPS$ refutations over $\F_2$ of size at least $2^{K^\varepsilon}$, and that
    \[
        \log(sn)+n\log N+D\log R=o(K^\varepsilon).
    \]
    Then $\lb(f,s,D,\Delta)$ requires $(d-5)$-$\IPS$ refutations over $\F_2$ of size $|\lb(f,s,D,\Delta)|^{\omega(1)}$.
\end{theorem}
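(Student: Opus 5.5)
The approach is by contraposition, via a low-depth substitution in the style of Raz~\cite{Raz04-JACM}. Assume $\lb(f,s,D,\Delta)$ has a $(d-5)$-$\IPS$ refutation $\pi$ over $\F_2$ of size $S=|\lb(f,s,D,\Delta)|^{O(1)}$. From $\pi$ we will build a $d$-$\IPS$ refutation of $\TRankP^N_K(n,A)$ of size $S\cdot 2^{O(\log(sn)+n\log N+D\log R)}$. By the growth hypothesis this size is $2^{o(K^\varepsilon)}$, which contradicts the assumed $2^{K^\varepsilon}$ lower bound. This is the content of \Cref{them: conditional IPS Lower Bounds reduction}. The size bound $|\lb(f,s,D,\Delta)|=\poly(s,R,\dots)$ then makes $\log|\lb|$ comparable to the exponent, so polynomial size is ruled out; the same argument with $|\lb|^{c}$ for every constant $c$ gives the $\omega(1)$ claim.

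\textbf{The substitution.} The variables of $\TRankP^N_K(n,A)$ are, for each rank-one term $k\le K$ and each tensor mode $j\le n$, a vector $y^{(k,j)}\in\F_2^N$ indexed by monomials of degree at most $\ell$. The axioms state that $\sum_k \prod_j y^{(k,j)}_{M_j}=A_{M_1,\dots,M_n}$ for all index tuples, together with Boolean axioms. We map these to a concrete depth-$\Delta$ circuit of size $s$, in fact of depth $4$ with $\Delta>4$, which is where that hypothesis enters. The circuit has a top $+$ gate with $K$ children. Child $k$ is a product of $n$ sum gates, and the $j$-th sum gate computes $g_{k,j}=\sum_M y^{(k,j)}_M\,M(u^{(j)})$. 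Here we work over $n$ disjoint copies of the $u$-variables, or we embed the diagonal tensor by reading off the coefficient of $f$ on a set-multilinear encoding $u_i\mapsto$ products of fresh variables, following Raz. The gate wires are fixed constants. The gate labels are the $y$-variables, and the value at each node of syntactic degree at most $D$ is a polynomial in the $y$'s of degree at most $n$. The requirement $s\ge 4n^2(\ell+1)^4$ together with $K=\lfloor\sqrt s\rfloor$ ensures that the $K\cdot n\cdot N$-ish gates fit into size $s$; the inequality $K^2\le s$ absorbs the monomial gadgets.

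\textbf{Translating the axioms.} We check that each axiom of $\lb$ becomes, under the substitution, either $0$ or a constant-depth consequence of the $\TRankP$ axioms. Wiring and structure axioms become tautological constants. The node-value axioms are coefficient-vector equations indexed by the $R$ monomials of degree at most $D$, and they become polynomial identities in the $y$'s. The output axiom, "the output coefficient vector equals that of $f$," becomes exactly the linear-combination-of-products equations $\sum_k\prod_j y^{(k,j)}_{M_j}=A_{M_1\dots M_n}$. This holds because, for the diagonal $A$, the coefficient of the monomial $\prod_j M_j(u^{(j)})$ equals $\coeff_M(f)$ when all $M_j=M$ and $0$ otherwise.

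\textbf{Cost accounting.} Substituting depth-$O(1)$ circuits into an $\IPS$ refutation raises depth additively, and we budget the $+5$ for this. Each substituted $\lb$ variable is a polynomial of size at most $2^{O(n\log N+D\log R)}$ in the $y$'s: this covers the sums over $N^n$ index tuples and the coefficient extraction over $R$ monomials, with the $\log(sn)$ term accounting for gate indexing. Writing each $\lb$ axiom as an $\IPS$-combination of $\TRankP$ axioms uses depth-$O(1)$ certificates of similar size.

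\textbf{Main obstacle.} I expect the hardest step to be the substitution bookkeeping. The encoding of the coefficient-vector node values must be such that, after substitution, each axiom is derivable in constant depth with the exact $+5$ slack. In particular, multiplication axioms involve a convolution over $R$ coefficient indices and must not blow depth up beyond $5$. I would first design the circuit so that every product gate multiplies polynomials in disjoint variable sets. Then the convolutions are trivial products of $y$-variables, and only the top sum gate introduces a genuine sum over $k$.
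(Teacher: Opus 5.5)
Your reduction has a genuine gap at its core: the circuit you embed is not a circuit for $f$.

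The system $\lb(f,s,D,\Delta)$ only allows input gates labelled by the $n$ variables $u_1,\dots,u_n$ of $f$; the selectors $\InputVar_b(v,j)$ range over $j\in[n]$. Its output axioms are indexed by the $R$ monomials $M$ in $u_1,\dots,u_n$ of degree at most $D$, with right-hand sides $\coeff_M(f)$. A circuit over $n$ disjoint copies $u^{(1)},\dots,u^{(n)}$ cannot be written into this system. The identity you verify, that the coefficient of $\prod_j M_j(u^{(j)})$ is $A_{M_1,\dots,M_n}$, is the output axiom for a different polynomial, $f'=\sum_M\coeff_M(f)\prod_j M(u^{(j)})$ in $n^2$ variables. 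At best you would get hardness of $\lb(f',\dots)$, not of $\lb(f,s,D,\Delta)$. The fallback of substituting products of fresh variables for the $u_i$ is not available, since the $u_i$ are not variables of the refuted system and $f$ is fixed.

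There is also a size failure. Each of your $Kn$ inner sums has $N$ summands, so the circuit has roughly $KnN>nK^2\approx ns$ gates, using $K<N$ and $K=\lfloor\sqrt s\rfloor$. This exceeds $s$ as soon as $n\ge 2$, and the hypothesis $s\ge 4n^2(\ell+1)^4$ does not absorb it.

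The missing idea is to let the $j$th tensor mode act only through the single variable $u_j$, collapsing an index $M$ to its exponent $\deg_{u_j}(M)$. The paper takes $C=\sum_{k\in[K]}\prod_{j\in[n]}\sum_{d=0}^{\ell}\alpha_{j,k,d}u_j^d$. Here $\alpha_{j,k,d}=\sum_{M\in T_f,\ \deg_{u_j}(M)=d}x_{j,k,M}$ is plugged in as a constant input, and $T_f$ is the support of $f$. Up to this restriction to $T_f$, this is your circuit composed with the projection $u^{(j)}_j\mapsto u_j$, $u^{(j)}_i\mapsto 1$ for $i\neq j$, after grouping summands by $d$; this projection sends $f'$ to $f$.

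The resulting circuit has the right parameters:
\begin{itemize}
\item It has depth $4$ and syntactic degree $n\ell=D$.
\item Its size is at most $2Kn(\ell+1)^2\le s$, which is exactly what $s\ge 4n^2(\ell+1)^4$ and $K\le\sqrt s$ are for.
\item The coefficient of $u_1^{d_1}\cdots u_n^{d_n}$ is $\sum_k\prod_j\alpha_{j,k,d_j}$. Among the tuples $(M_1,\dots,M_n)$ with $\deg_{u_j}(M_j)=d_j$, the only possible diagonal one is $(M,\dots,M)$, present iff $M\in T_f$. It contributes the diagonal axiom, which is $\equiv\coeff_M(f)$.
\item All other tuples are off-diagonal axioms summing to $0$. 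If some $d_j>\ell$, the coefficient is identically $0$.
\end{itemize}
So each output axiom is a sum of up to $N^n$ tensor axioms rather than a single one; this is where the $N^n$ term in the overhead comes from.

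With this circuit, the rest of your plan goes through as in the paper. That covers fixing the topology variables, the additive depth $+5$, and the growth condition making both $|\lb(f,s,D,\Delta)|^{O(1)}$ and the overhead $2^{o(K^\varepsilon)}$.
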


Tensor rank lower bounds are notoriously difficult to prove, and even basic mathematical questions about tensor rank remain poorly understood. Since tensors have been studied in mathematics for much longer than algebraic circuits, there is arguably stronger empirical evidence for the difficulty of proving tensor rank lower bounds than for the difficulty of proving algebraic circuit lower bounds. Thus, grounding the formal difficulty of reasoning about algebraic circuits in the difficulty of reasoning about tensors seems to provide meaningful evidence for the hardness of proving algebraic circuit lower bounds (this was pointed out by the anonymous reviewer).

To complement our lower bounds, we show an unconditional lower bound on $\TRankP_n^m(r,A)$ against $\PCR_{\F_2}$. This is achieved using a random restriction argument combined with a degree-reduction to the pigeonhole principle.

    \begin{theorem}[\Cref{theorem: exp lower bound for TRankP}]
        Let $m,n,r$ be positive integers with $m>n$.
        Let $J \subseteq [m]$ with $|J| > n$ and let $A$ be the $r$-tensor defined as
        \[
            A_{i_1,\ldots,i_r} = \begin{cases}
                1 & \text{if } i_1 = \cdots = i_r \in J, \\
                0 & \text{otherwise}.
            \end{cases}
        \]
        Then any $\PCR_{\F_2}$ refutation of $\TRankP^m_n(r, A)$ requires $2^{\Omega(n)}$ monomials.
    \end{theorem}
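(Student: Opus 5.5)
We prove the statement by a degree-preserving reduction to the (functional/onto-free) pigeonhole principle, in the style of Razborov's $\PCR$ lower bounds, after a random restriction that kills high-degree terms. Throughout we take the natural encoding of $\TRankP^m_n(r,A)$: variables $x^{(t)}_{i,k}$ for $t\in[r]$, $i\in[m]$, $k\in[n]$ (over $\F_2$, Boolean, with their $\PCR$ twins $\bar x$), and for every index tuple $(i_1,\ldots,i_r)$ the axiom $\sum_{k\in[n]}\prod_{t}x^{(t)}_{i_t,k}=A_{i_1,\ldots,i_r}$. We first restrict to $J$: the axioms for tuples outside $J^r$ can only be simplified by a restriction, so a refutation of size $S$ of the full system yields one of size at most $S$ of the subsystem on indices in $J$. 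We may also shrink $J$ to size exactly $n+1$, since discarding indices only removes axioms and variables (again by setting the removed variables to $0$, which satisfies the off-diagonal constraints involving them trivially since $A$ is zero there).

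The key step is a substitution that turns the system into a pigeonhole instance. For $i\in J$ and $k\in[n]$, set $x^{(1)}_{i,k}=\cdots=x^{(r)}_{i,k}=p_{i,k}$, a single variable meaning ``pigeon $i$ goes to hole $k$''. Then the diagonal axiom for $i$ becomes $\sum_k p_{i,k}=1$ (using $p^r=p$), and an off-diagonal axiom for a tuple whose index set contains two distinct $i\neq i'$ with the same $k$ becomes a sum $\sum_k p_{i,k}p_{i',k}\cdots=0$. Together with the Boolean axioms, the hole axioms $p_{i,k}p_{i',k}=0$ must be derivable from these in low degree; I would instead add them as extra axioms, which only strengthens the system and makes a lower bound for the strengthened system imply one for the original. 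After this substitution every off-diagonal axiom is derivable in degree $O(r)$ from the pigeonhole axioms (each product term contains some $p_{i,k}p_{i',k}$), and the diagonal axioms become pigeon axioms over $\F_2$ (exactly-one up to parity, which with hole axioms gives the usual $\sum_k p_{i,k}=1$). Since substitution of variables by variables does not increase monomial count, a $\PCR$ refutation of $\TRankP$ of size $S$ gives a $\PCR$ refutation of $\PHP^{n+1}_n$ (functional form, over $\F_2$) of size $\le S\cdot n^{O(r)}$, after adding the degree-$O(r)$ derivations of the substituted axioms.

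Finally we invoke the known $2^{\Omega(n)}$ monomial-size lower bound for $\PCR$ refutations of $\PHP^{n+1}_n$ over any field (Razborov's degree lower bound $\Omega(n)$ combined with the size--degree tradeoff of Impagliazzo--Pudl\'ak--Sgall; for the size bound one uses the standard random restriction matching most pigeons to holes, which lowers the width while keeping a smaller PHP instance). The main obstacle is that PHP has $n^2$ variables and unbounded initial-axiom width ($n$), so the generic size--degree tradeoff gives only $2^{\Omega(n^2/n^2)}$; hence one must use the restriction-based argument: apply a random partial matching that hits every high-degree monomial (degree $\ge \epsilon n$ in the $p$'s and $\bar p$'s) with high probability, leaving a PHP on $\Omega(n)$ pigeons refuted in degree below its $\Omega(n)$ degree lower bound. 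Care is needed so that the $n^{O(r)}$ overhead from the substitution is only polynomial; I would handle this by proving the degree/restriction argument directly on the substituted system, treating the $O(r)$-degree derivations as part of the axiom set, so the final bound is $2^{\Omega(n)}$ independently of the overhead.
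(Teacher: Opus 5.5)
Your substitution $x^{(t)}_{i,k}\mapsto p_{i,k}$, with the substituted off-diagonal axioms derived from hole axioms, is the same reduction the paper uses. The step that turns it into a \emph{size} bound, however, fails.

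\textbf{Negated literals survive a matching restriction.} Under a random partial matching, a negated literal $\bar p_{i,k}$ is set to $0$ only when pigeon $i$ is matched exactly to hole $k$, which has probability $O(1/n)$. So a term such as $\prod_{i\in I}\bar p_{i,\tau(i)}$ with $|I|=\epsilon n$ survives with constant probability, and it stays wide. No union bound can then clear all wide terms of a $\PCR$ refutation.

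\textbf{Degree is the wrong measure.} Even for positive literals, a degree-$\epsilon n$ monomial supported on about $\sqrt{\epsilon n}$ pigeons and $\sqrt{\epsilon n}$ holes survives with probability $2^{-\Theta(\sqrt n)}$.

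Matching restrictions are what keep $\PHP$ a smaller $\PHP$, and that is exactly why they leave negative literals untouched. So routing through a $\PCR$ size bound for $\PHP$ is the wrong plan. In any case, your stated $S\cdot n^{O(r)}$ overhead would defeat $2^{\Omega(n)}$ once $r=\Omega(n/\log n)$.

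\textbf{Shrinking $J$ is invalid.} Shrinking $J$ to $n+1$ indices by zeroing variables does not work: for a discarded $i\in J$ the diagonal axiom becomes $1=0$. It is also unnecessary, since the pigeonhole bound below holds for every $m>n$.

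\textbf{The missing idea.} Restrict on the tensor side, before collapsing the $r$ copies, and measure terms by the number of distinct holes $k$ they mention.

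From $\PHP^m_n$ you only need the \emph{hole-width} fact (Razborov; Impagliazzo--Pudl\'ak--Sgall) that every $\PCR_{\F_2}$ refutation contains a term mentioning at least $n/2+1$ holes. Deriving $\sum_k\prod_j p_{i_j,k}$ from hole axioms uses only terms mentioning a single hole, so this transfers to $\TRankP^m_n(r,A)$ with no overhead.

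Now pick a random $K\subseteq[n-1]$ of size $n/2-1$, independent random bits $c_k$ for $k\in K$, and $c_n$ making $\sum_{k\in K\cup\{n\}}c_k$ even. Set \emph{every} $x_{j,k,i}$ with $k\in K\cup\{n\}$ (all $j$ and $i$) to $c_k$. Over $\F_2$ each such hole contributes $c_k^r=c_k$ to every axiom, and these contributions cancel. The restricted system is therefore $\TRankP^m_{n/2}(r,A)$, which is still unsatisfiable.

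A term mentioning a hole $k\in K$ is killed with probability $1/2$, whether it uses $x_{j,k,i}$ or $\bar x_{j,k,i}$. Hence a term mentioning $\Omega(n)$ holes survives with probability $2^{-\Omega(n)}$. A union bound over at most $2^{\epsilon n}$ terms then contradicts the hole-width bound for the smaller instance.

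This polarity-symmetric restriction has no $\PHP$ analogue, because setting a whole hole to $1$ violates the hole axioms. That is why the order ``restrict, then reduce'' matters.
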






\subsection{Relation to other work}

Santhanam and Tzameret~\cite{ST25}, and later Lu~\textit{et~al}.~\cite{LST26}, investigated the proof complexity of statements asserting algebraic circuit lower bounds. However, these works did not directly study the hardness of the algebraic circuit lower-bound statements themselves. Instead, they considered a more meta-level question: whether one can prove lower bounds for the \emph{statement} that there are no short proofs of algebraic circuit lower bounds. This meta-level lower bound was then used as an ingredient in a diagonalisation argument.

Garl\'{i}k, Gryaznov, Ren, and Tzameret~\cite{GGRT25} studied two related questions. In the first, they established a $\PCR_{\F_2}$ lower bound for noncommutative algebraic branching program (ncABP) lower bound statements. The model of ncABP is quite a weak model of computation, and further the result is specialised in the sense that the rank principle is intimately connected to the way ncABP are encoded. The second related question was the introduction of bounded arithmetic theories augmented with the rank principle as an axiom. They showed that \emph{Boolean} complexity lower bounds such as Smolensky--Razborov~\cite{Smolensky87} lower bounds against constant-depth Boolean circuits with mod $p$ gates can be formalised in theories corresponding to $\ACZ[p]$-reasoning augmented with the weak rank principle. These theories seem close to the theory \VNCTwo\ we use in this work.  

M\"{u}ller and Pich~\cite{MP20} and others showed that Extended Frege (which corresponds to a proof system operating with polynomial-size circuits) can prove known Boolean circuit lower bounds. Our results are analogous to theirs, as we show that $\NCTwo$-Frege, i.e., proofs operating with \NCTwo\ circuits can prove known algebraic circuit lower bounds (where the class $\NCTwo$ corresponds to polynomial-size algebraic circuits, due to known depth reductions; i.e., a polynomial-size algebraic circuit computing a polynomial of polynomial degree, can be converted to a depth-$O(\log^2 n)$ circuit of $\poly(n)$ size, for $n$ the number of variables).

\subsection{Discussion and conclusions}\label{sec:discussion}

We conclude the introduction by explaining what our results suggest from the perspectives of algebraic and proof complexity.
\medskip

First, from the perspective of algebraic complexity, our results show that known techniques,
including frontier constant-depth algebraic circuit lower bounds such as those of Limaye, Srinivasan
and Tavenas~\cite{LST25}, can be formalised in a relatively weak theory corresponding to \NCTwo\ reasoning,
namely \VNCTwo. This should be contrasted with stronger bounded arithmetic theories, such as $\mathsf{PV}$
and $S^1_2$, which correspond to polynomial-time reasoning.

Thus, under an algebraic analogue of the \Krajicek{}--Razborov conjecture, our results suggest
a possible limitation of current methods. Suppose the algebraic analogue of \Krajicek{}--Razborov
conjecture holds, namely, \VNCTwo\ reasoning cannot prove algebraic circuit lower bounds for \NCTwo\
circuits, and hence, by depth reduction, for \VP. Since we show that the present rank-based
constant-depth lower bounds are provable already in \VNCTwo, any proof of substantially stronger
algebraic circuit lower bounds would have to use ideas that are stronger, at least in their logical
strength, than the concepts currently used in these arguments.
\medskip

Second, from the perspective of proof complexity, our results contribute to three objectives: 
\begin{itemize}
    \item 
 We provide new hard instances for proof systems of interest, in the tradition of proof complexity lower bounds. In our case these are instances that are themselves meta-mathematical statements (cf.~Razborov and Raz~\cite{Razborov98,Raz04-JACM}). This can help to establish further new proof complexity lower bounds.

    \item Our instances are connected to the diagonalisation-based approach to proof complexity lower bounds. In the works of Santhanam and
    Tzameret~\cite{ST25} and Lu, Santhanam and Tzameret~\cite{LST26}, strong
    proof complexity lower bounds were obtained for CNF formulas expressing the
hardness of proving algebraic circuit lower bounds. A caveat in this approach is
that it is not known, unconditionally, whether these CNF formulas are
unsatisfiable. The present work identifies a condition under which such formulas
are unsatisfiable: namely, the hardness of proving algebraic circuit lower
bounds in constant-depth \IPS. Combined with~\cite{LST26}, this condition would
imply lower bounds for $\ACZ[p]$-Frege, a longstanding open problem in proof
complexity. 

    \item Studying the unprovability of lower bounds is a major effort in proof complexity. It helps relate the strength of efficient computation to that of efficient provability, which is motivated by the goal to understand different proof systems and bounded arithmetic theories.
\end{itemize}

    \section{Preliminaries}

        To fix notation, we define Boolean circuits with fan-in two gates $\land,\lor$ and fan-in one gates $\neg$: for every $n \in \mathbb{N}$, an $n$-input, single-output Boolean circuit is a directed acyclic graph with $n$ sources (vertices with no incoming edges) and one sink (a vertex with no outgoing edges). All non-source vertices are called gates or nodes and are labelled with one of $\land$, $\lor$ or $\neg$ (i.e., the logical operations OR, AND, and NOT). The vertices labelled with $\land$ and $\lor$ have fan-in (i.e., number of incoming edges) equal to 2, and the vertices labelled with $\neg$ have fan-in 1. The size of $C$, denoted by $|C|$, is the number of vertices in it.
        If $C$ is a Boolean circuit and $x \in {\{0,1\}}^n$ is some input, then the output of $C$ on $x$ is denoted by $C(x)$.

        We use $[n]$ to denote the set $\{1,\ldots, n\}$.

    \subsection{Algebraic complexity}
    We give a brief overview of standard notions in algebraic complexity. For a more detailed treatment, we refer the reader to~\cite{SY10}.

    Let $\mathbb{F}$ be a field. Denote by $\mathbb{F}[\overline{x}]$ the ring of polynomials with coefficients from $\mathbb{F}$ and variables $\overline{x}=(x_1,\ldots,x_n)$. A polynomial is a formal linear combination of monomials, where a monomial is a product of variables. Two polynomials are identical if all their monomials have the same coefficients. The degree of a polynomial is the maximum total degree of a monomial in it.

    \begin{definition}[Algebraic circuits and algebraic formulas]\label{def: algebraic circuits and formulas}
          An \emph{algebraic circuit} over a field $\mathbb{F}$ is a finite directed acyclic graph. The leaves (nodes with in-degree zero) are called \emph{input nodes}. Each input node is labelled either with a variable or a field element in $\mathbb{F}$. All the other nodes have in-degree two and are labelled by either $+$ (addition) or $\times$ (multiplication). The output of a $+$-node is the sum of the polynomials computed by its incoming nodes, and the output of a $\times$-node is their product. An algebraic circuit is called an \emph{algebraic formula} if the underlying directed acyclic graph is a tree (every node has at most one outgoing edge). The \emph{size} of an algebraic circuit $C$ is the number of nodes in it, denoted by $|C|$. The \emph{depth} of $C$ is the length of the longest directed path in it, denoted by $\depth(C)$. The \emph{product-depth} of $C$ is the maximum number of $\times$-nodes on a directed path in $C$. The polynomial computed by a node $u$ in an algebraic circuit is denoted by $\widehat{C}_u$. The polynomial computed by an algebraic circuit $C$ is denoted by $\widehat{C}$.
    \end{definition}

    \begin{definition}[Depth-$\Delta$ algebraic circuits and algebraic formulas]\label{def: bounded-depth algebraic circuits and formulas}
        A \emph{depth-$\Delta$ algebraic circuit (or formula)} over a field $\mathbb{F}$ is an algebraic circuit (or formula) with unbounded fan-in of depth $\Delta$. Except for the input nodes, all other nodes may have unbounded fan-in and are labelled by $+$ or $\times$. All the rest of the notions are defined similarly to \cref{def: algebraic circuits and formulas}.
    \end{definition}

     \begin{definition}[Syntactic-degree $\sdeg(\cdot)$]\label{def: syntactic-degree}
        Let $C$ be an algebraic circuit (with possibly unbounded fan-in) and $v$ a node in $C$. The \emph{syntactic-degree} $\sdeg(v)$ of $v$ is defined as follows:
        \begin{enumerate}[leftmargin=*]
            \item If $v$ is a field element or a variable, then $\sdeg(v) \coloneqq 0$ and $\sdeg(v) \coloneqq 1$, respectively;
            \item If $v=\sum_{i=0}^t u_i$ then $\sdeg(v) \coloneqq \max\{\sdeg(u_0),\ldots, \sdeg(u_t)\}$;
            \item If $v=\prod_{i=0}^t u_i$ then $\sdeg(v) \coloneqq \sum_{i=0}^t \sdeg(u_i)$.
        \end{enumerate}
    \end{definition}

    Negation is not a primitive arithmetic gate: $-g$ is computed as $(-1)g$. Since constants have syntactic degree zero, multiplication by $-1$ does not increase syntactic degree.

    \begin{definition}[Algebraic Branching Program]
            \label{def: ABPs}
            An \emph{algebraic branching program} (ABP) over a field $\F$ and variables $\mathbf{x} = (x_1,\ldots,x_n)$ is a directed acyclic graph $G = (V,E)$ with the following properties:
            \begin{itemize}[leftmargin=*]
                \item The vertices are partitioned into layers
                \[
                    V = L_0 \cup L_1 \cup \cdots \cup L_d,
                \]
                where $L_0 = \{s\}$ contains a unique source and $L_d = \{t\}$ contains a unique sink.
                \item Edges go only between consecutive layers: $E \subseteq \bigcup_{i=0}^{d-1} L_i \times L_{i+1}$.
                \item Each edge $e \in E$ is labelled by an affine linear form $\ell_e(\mathbf{x}) \in \F[x_1,\ldots,x_n]$.
            \end{itemize}
            The ABP computes the polynomial
            \[
                f(\mathbf{x}) \;=\; \sum_{P: s \to t} \prod_{e \in P} \ell_e(\mathbf{x}),
            \]
            where the sum is taken over all directed paths from $s$ to $t$. The \emph{size} of the ABP is $|V| + |E|$, and the \emph{width} is $\max_i |L_i|$.
    \end{definition}

    \begin{definition}[$\VP$~\cite{Val79:ComplClass}]
         Over a field $\mathbb{F}$, $\VP_\mathbb{F}$ is the class of families $f={(f_n)}_{n=1}^{\infty}$ of polynomials $f_n$ such that $f_n$ has $\poly(n)$ input variables, has degree $\poly(n)$, and can be computed by algebraic circuits over $\mathbb{F}$ of $\poly(n)$ size.
    \end{definition}

    \begin{definition}[$\VNP$~\cite{Val79:ComplClass}]
         Over a field $\mathbb{F}$, $\VNP_\mathbb{F}$ is the class of families $g={(g_n)}_{n=1}^{\infty}$ of polynomials $g_n$ such that $g_n$ has $\poly(n)$ input variables and has degree $\poly(n)$, and can be written as
        \[
            g_n(x_1, \ldots , x_{\poly(n)})=\sum_{\overline{e} \in {\{0,1\}}^{\poly(n)}} f_n(\overline{e},\overline{x})
        \]
        for some family $(f_n) \in \VP_\mathbb{F}$.
    \end{definition}

    \begin{definition}[$\VAC^0$]\label{def: VAC0}
        Over a field $\F$, $\VAC^0_\F$ is the class of families $f=(f_n)_{n=1}^{\infty}$ of polynomials $f_n$ such that $f_n$ has $\poly(n)$ input variables, has degree $\poly(n)$, and can be computed by algebraic circuits over $\mathbb{F}$ of $\poly(n)$ size and depth $O(1)$.
    \end{definition}
    Notice that $\VP$, $\VNP$ and $\VAC^0$ are all non-uniform complexity classes.

    \begin{definition}[Projection reduction~\cite{Val79:ComplClass}]
         A polynomial $f(x_1,\ldots, x_n)$ is a projection of a polynomial $g(y_1,\ldots, y_m)$ if there is a mapping $\sigma$ from $\{y_1,\ldots,y_m\}$ to $
         \{0,1, x_1, \ldots, x_n\}$ such that $f(x_1,\ldots,x_n) = g(\sigma(y_1),\ldots,\sigma(y_m))$.
        A family of polynomials $(f_n)$ is a polynomial projection or p-projection of another family $(g_n)$ if there is a function $t(n)=n^{\Theta(1)}$ such that $f_n$ is a projection of $g_{t(n)}$ for all (sufficiently large) $n$.
        We say that $f$ is projection-reducible to $g$ if $f$ is a projection of $g$.
    \end{definition}
    
    The symmetric group over $n$ elements $\{1,\ldots,n\}$, denoted by $S_n$, is the group consisting of all the bijective functions from $[n]$ to $[n]$ whose group operation is that of function composition. The sign $\sgn(\sigma)$ of a permutation $\sigma\in S_n$ is $1$ if $\sigma$ can be expressed as a product of an even number of transpositions (exchanges of two, not necessarily consecutive, entries), and $-1$ otherwise.
    
    \begin{definition}[Determinant]
        The Determinant of an $n\times n$ matrix $A = (a_{i,j})$ is defined as
        \[
            \det(A) \coloneqq \sum_{\sigma \in S_n} (\sgn(\sigma) \prod_{i=1}^n a_{i,\sigma(i)}).
        \]
    \end{definition}
    
    \begin{definition}[Permanent]
        The Permanent of an $n \times n$ matrix $A = (a_{i,j})$ is defined as
        \[
            \perm(A) \coloneqq \sum_{\sigma \in S_n} \prod_{i=1}^n a_{i,\sigma(i)}.
        \]
    \end{definition}

    \begin{theorem}[\cite{Val79:ComplClass}]
        For every field $\mathbb{F}$, every polynomial family on $n$ variables that is computable by an algebraic formula of size $u$ is projection-reducible to the Determinant polynomial (over the same field) on $u+2$ variables.
        For every field $\mathbb{F}$ of characteristic different from $2$, every polynomial family in $\VNP_\mathbb{F}$ is projection-reducible to the Permanent polynomial (over the same field) with polynomially more variables.
    \end{theorem}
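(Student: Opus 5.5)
Two facts are needed: formulas of size $u$ are projections of a determinant of size roughly $u+2$, and $\VNP$ is contained in projections of the permanent when the characteristic is not $2$. For the first we follow Valiant's graph construction, and for the second we combine that construction with the cycle-cover interpretation of the permanent and a completeness gadget.

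\textbf{Formulas to determinant.} By induction on the formula $F$ of size $u$, we build a directed acyclic graph $G_F$ with a source $s$ and sink $t$. Each edge is labelled by a variable or a constant, and the sum over all $s$--$t$ paths of the product of edge labels equals $\widehat{F}$. Leaves become single edges. A $+$-gate is handled by parallel composition, identifying the two sources and the two sinks. A $\times$-gate is handled by series composition, identifying the sink of the first graph with the source of the second. This keeps the number of vertices at most $u+2$. Next we add an edge $t\to s$ labelled $1$ and put self-loops labelled $1$ on every vertex other than $s,t$.

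Cycle covers of the resulting graph correspond bijectively to $s$--$t$ paths: the path together with the back edge forms the one nontrivial cycle, and every other vertex is covered by its self-loop. The determinant of the weighted adjacency matrix is therefore $\pm\widehat{F}$, with the sign determined only by the length of the path. To make the sign uniform we pad paths to a common parity by inserting dummy vertices, or equivalently choose the loop and back-edge signs as Valiant does. This sign bookkeeping is the main technical point of the first part. The matrix entries are variables or constants, so $\widehat{F}$ is a projection of the determinant.

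\textbf{$\VNP$ to permanent.} Let $g_n=\sum_{e}f_n(e,\mathbf{x})$ with $f_n\in\VP$. We first reduce to the case where $f_n$ has polynomial-size formulas. Since $f_n$ has polynomial degree, depth reduction gives it quasi-polynomial-size formulas, which is not enough. Valiant's standard alternative is to replace the circuit by a formula over auxiliary Boolean variables: introduce witness variables for the gate values, so that $g_n=\sum_{e,w}F(e,w,\mathbf{x})$ with $F$ a polynomial-size formula. This makes $g_n$ a projection of the family $\sum_{e}\prod$, and ultimately of a polynomial-size formula summed over the hypercube.

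From the path graph of $F$, the hypercube sum is realised as a cycle-cover sum by attaching, for each Boolean variable $e_i$, a choice gadget: two disjoint cycles, exactly one of which is used in any cover. Consistency between the different occurrences of $e_i$ is enforced by XOR or iff gadgets. These are small weighted matrices with entries in $\{0,\pm1,1/2\}$; for instance, Valiant's gadget has permanent-minor pattern $(0,1,1)$ on the relevant submatrices after scaling. The entry $1/2$, or equivalently the normalisation by $2$, requires $\operatorname{char}\F\neq2$.

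\textbf{Main obstacle.} The hardest step is verifying that the gadget's minors enforce consistency. Concretely, the permanent of the gadget must be $0$ when deleting one row and column pair, and a fixed nonzero value $c$ when deleting the other pair. Inconsistent covers then contribute $0$, and each consistent cover contributes $c^{k}$, which is cancelled by scaling. Since the gadget is a fixed constant-size matrix, I would check these minors by direct computation.
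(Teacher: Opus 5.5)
The paper gives no proof of this statement; it is quoted from Valiant, so your proposal has to stand on its own.

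\textbf{First part.} This is the standard argument and works after two small repairs. Parallel composition of two graphs that both contain an $s\to t$ edge creates parallel edges, which no projection can encode (for $x_1+x_2$ it would need the entry $x_1+x_2$). Subdivide one of them by a label-$1$ edge; this still gives $|V|\le(\#\text{leaves})+1$. For the sign, your loop-sign option is the one to use. With self-loops of weight $-1$, a cover built from a path with $p$ edges contributes sign times loop weight $(-1)^{p}(-1)^{|V|-p-1}=(-1)^{|V|-1}$, independent of $p$. You then cancel this by the weight of $t\to s$.

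\textbf{Second part: the formula-verifier step.} There is a genuine gap where you pass from a $\VP$ verifier to a formula verifier. Introducing Boolean witnesses for the gate values is the Cook--Levin trick for \emph{Boolean} circuits. In an algebraic circuit the gate values are polynomials in $\mathbf{x}$ and $e$, so they cannot be guessed by a sum over $\{0,1\}$, and no identity $g_n=\sum_{e,w}F(e,w,\mathbf{x})$ with $F$ a small formula comes out of this. The missing idea is the parse-tree expansion behind $\VNP=\VNP_e$ (Valiant's criterion). First replace constant-valued subcircuits by constants and homogenise, so that the syntactic degree is polynomial. Then $f_n$ is the sum, over parse trees $T$ of the circuit, of the product of the leaf labels of $T$. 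Each $T$ has size $\poly(s,\deg f_n)$, so it can be encoded by a Boolean string $w$. A polynomial-size formula $F(e,w,\mathbf{x})$ then returns the weight of $T$ when $w$ is a valid encoding and $0$ otherwise; it is a product of local consistency checks on $w$ times a product of selected leaf labels.

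\textbf{Second part: the gadgets.} A cover of your permanent graph corresponds to a single $s$--$t$ path $\pi$ in the graph of $F$, and $\pi$ usually traverses only some of the edges labelled $e_i$. Indeed, $\sum_{e\in\{0,1\}^m}F=\sum_{\pi}2^{\,m-|\mathrm{supp}(\pi)|}\,\mathrm{wt}_{\mathbf{x}}(\pi)$, where $\mathrm{supp}(\pi)$ is the set of $i$ for which $\pi$ uses an $e_i$-edge. So the gadget for $e_i$ must enforce ``some $e_i$-edge is used $\Rightarrow$ true state'' while leaving unused occurrences free; this freedom is what produces the factor $2$. It is not a consistency condition between occurrences. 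If you XOR- or iff-couple each $e_i$-edge to an edge of one of your two disjoint choice cycles, then one of the two states forces \emph{every} occurrence of $e_i$ onto the path. The covers for paths using only some occurrences are then wrongly killed. One fix: let the true cycle pass, for each occurrence, through a two-route bypass, and iff-couple the occurrence edge to one of the routes.

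Relatedly, for a $4\times 4$ coupling matrix you must verify more than two minors, namely every way a cover can meet the gadget. That means the full permanent, the two diagonal deletions, the double deletion, and the two cross deletions. For Valiant's XOR gadget the first four vanish and the last two equal $4$, which is where $\operatorname{char}\F\neq 2$ enters.
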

    It is known that the Permanent polynomial is complete under p-projections for $\VNP$ when the field $\mathbb{F}$ has characteristic different from $2$. The Determinant polynomial is not known to be complete for $\VP$ under p-projections.

    \begin{definition}[Iterated Matrix Multiplication]
        Let $n$ and $d$ be positive integers. The \emph{Iterated Matrix Multiplication $\IMM_{n,d}$} on $N=dn^2$ variables is defined as follows. The underlying variables are partitioned into $d$ sets $X_1,\ldots,X_d$ of size $n^2$, each of which is represented as an $n \times n$ matrix with distinct variable entries. Then $\IMM_{n,d}$ is defined to be the $(1,1)$th entry of the matrix product $X_1\cdot X_2 \cdots X_d$.
    \end{definition}

    The Iterated Matrix Multiplication is in $\VP$.
    \begin{theorem}[Superpolynomial lower bounds against constant-depth circuits over a field of large characteristic~\cite{LST25}]\label{theorem: first-LST21}
        Assume $d \leq \frac{\log n}{100}$ and the characteristic of $\mathbb{F}$ is $0$ or greater than $d$. For any product-depth $\Delta \geq 1$, any algebraic circuit $C$ computing $\IMM_{n,d}$ of product-depth at most $\Delta$ must have size at least $n^{d^{\varepsilon}}$ where $\varepsilon=\exp(-O(\Delta))$.
    \end{theorem}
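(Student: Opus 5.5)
The plan follows the Limaye--Srinivasan--Tavenas strategy: hardness escalation to set-multilinear circuits, followed by a relative-rank measure under a random skewed partition.

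\textbf{Step 1: set-multilinearization.} Since $\IMM_{n,d}$ is set-multilinear with respect to the partition $X_1,\ldots,X_d$, a product-depth-$\Delta$ circuit $C$ of size $s$ computing it yields a set-multilinear circuit of product-depth at most $2\Delta$ and size $d^{O(d)}\cdot\poly(s)$. We take the homogeneous, set-multilinear component of every gate, indexed by subsets $S\subseteq[d]$. At a product gate $\prod_i g_i$, the $S$-component equals $\sum$ over ordered partitions $S=\bigsqcup_i S_i$ of $\prod_i g_{i,S_i}$. To keep the product-depth at $O(\Delta)$ we express this sum through elementary symmetric polynomials or power sums. Writing them via Newton identities requires division by integers up to $d$, and this is exactly where the hypothesis that $\mathrm{char}(\F)$ is $0$ or exceeds $d$ is used. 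Because $d\le \log n/100$, the factor $d^{O(d)}$ is $n^{o(1)}$, which is negligible.

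\textbf{Step 2: the measure and the lower bound for $\IMM$.} We choose a word $w\in\{\pm\lfloor\alpha k\rfloor, \pm k\}^d$ with $\alpha=1/\sqrt2$; each set $X_i$ is split into $2^{|w_i|}$ ``positive'' or ``negative'' variables. For a set-multilinear $f$, the relative rank $\relrk_w(f)$ is the rank of the partial-derivative matrix indexed by positive and negative monomials, normalised by $\sqrt{\text{rows}\cdot\text{cols}}$, so that $\relrk_w\le1$. The word is built greedily, keeping the partial sums $w_1+\cdots+w_i$ in $(-k,k]$, and each $X_i$ is realised by restricting $\IMM$ to a suitable projection. Under this choice $\IMM$ (more precisely, its restriction, a ``word polynomial'') attains full relative rank $1$, because it restricts to a perfect-matching identity matrix between positive and negative monomials.

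\textbf{Step 3: upper bound for small set-multilinear circuits.} This is the main obstacle. We prove by induction on product-depth $\Delta$ that every set-multilinear formula of product-depth $\Delta$ and size $s$ satisfies $\relrk_w\le s\cdot 2^{-k\,\lambda(\Delta)}$ with $\lambda(\Delta)\approx d^{1/(2^\Delta-1)}$-type decay. Sums are controlled by subadditivity. Products are controlled by multiplicativity of $\relrk$ over disjoint sets, together with a penalty $2^{-|w_S|/2}$ whenever the partial word $w_S$ of a factor is unbalanced. The key input is the Diophantine fact $|a\alpha-b|\ge 1/(4|a\alpha|+2)$. It guarantees that any factor whose support is small has a word-sum bounded away from zero by roughly $k/|S|$, so either one factor has large support, and we recurse with depth reduced, or many factors are unbalanced and the penalties multiply. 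Balancing these two cases with the thresholds chosen depending on $\Delta$ gives decay $2^{-k d^{\exp(-O(\Delta))}}$. First I would do the case analysis at the top product gate with threshold $t=d^{\mu(\Delta)}$ and solve the recursion for $\mu$.

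\textbf{Step 4: combining.} From $1\le \poly(s)\,d^{O(d)}\cdot 2^{-k d^{\varepsilon}}$, with $k\approx \log n/4$ so that $2^{k}\approx n^{\Omega(1)}$, we get $s\ge n^{d^{\varepsilon}}$ for $\varepsilon=\exp(-O(\Delta))$.
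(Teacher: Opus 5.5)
\textbf{Verdict.} Your route is correct in outline, but the alphabet in Step~2 is wrong: as written, it makes the Step~3 bound false. This needs fixing before the argument goes through.

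Your architecture is the one the paper formalises in Section~5 following \cite{LST25}. You set-multilinearize, restrict to a word polynomial $P_w$, and bound $\relrk_w$ of small set-multilinear formulas by induction on product-depth, splitting into ``one large factor'' versus ``many small unbalanced factors''. Your Step~1 goes through power sums and Newton identities, which is where $\mathrm{char}(\F)=0$ or $>d$ is used. This corresponds to the characteristic-dependent set-multilinearization of \cite{LST25}. The paper instead uses Forbes' Binet--Minc formula for the surjective rectangular permanent, whose integer coefficients make the step field-independent. Under the stated hypothesis either works.

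\textbf{The gap: the alphabet.} It must be $\{\lfloor\alpha k\rfloor,-k\}$, as in the paper's $A(k)=\{a(k),-k\}$, with a single sign per magnitude, not $\{\pm\lfloor\alpha k\rfloor,\pm k\}$. In Step~3 the Diophantine bound is applied to a factor's word-sum $a\lfloor\alpha k\rfloor-bk$, where $a,b\geq 0$ count the two letters. The condition $(a,b)\neq(0,0)$ holds for every nonempty sub-word precisely because no magnitude occurs with both signs. With your alphabet nothing excludes $w=(k,-k,k,-k,\ldots)$, which satisfies your greedy invariant since its partial sums lie in $\{0,k\}$. For this word,
$P_w=\prod_{j=1}^{d/2}\sum_{\sigma\in\{0,1\}^k}x^{(2j-1)}_\sigma x^{(2j)}_\sigma$.
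This is a product-depth-$2$ set-multilinear formula of size $O(d2^k)=\poly(n)$ with $\relrk_w(P_w)=1$. So the claimed bound $\relrk_w\leq s\,2^{-k\lambda(\Delta)}$ fails, and no lower bound can follow. The fix is the greedy rule of \cite{LST25}: append $\lfloor\alpha k\rfloor$ if the current partial sum is $\leq 0$, and $-k$ otherwise.

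\textbf{Smaller slips}, none fatal once corrected:
\begin{enumerate}
\item \emph{Relative rank of $P_w$.} It is $2^{-|w_{[d]}|/2}\geq 2^{-k/2}$, not $1$. The matrix $M_w(P_w)$ has full rank $\min\{|M_w^{\calP}|,|M_w^{\calN}|\}$ because the lines indexed by the shorter side have disjoint supports. It is a perfect matching only when $w_{[d]}=0$. Step~4 should therefore start from $2^{-k/2}$, which only costs a harmless factor $2^{k/2}$.
\item \emph{Size of $d^{O(d)}$.} It is not $n^{o(1)}$. Since $d\leq\log n$, it is $2^{O(d\log d)}\leq n^{O(\log d)}$, which for $d=\Theta(\log n)$ is $n^{\Theta(\log\log n)}$. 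It is negligible only because $\log d=o(d^{\varepsilon})$.
\item \emph{Form of the induction.} The Step~3 induction must be uniform in the degree $m$ and hold for arbitrary words over the alphabet, since sub-words of $w$ are not unbiased. The bound should read $\relrk_u\leq s\,2^{-\Omega(k\,m^{1/(2^\Delta-1)})}$. In the step from product-depth $\Delta$ to $\Delta+1$, the threshold is $T\approx m^{(2^\Delta-1)/(2^{\Delta+1}-1)}$.
\item \emph{Where $d\leq\log n/100$ is used.} The Diophantine bound gives $|a\lfloor\alpha k\rfloor-bk|\geq k/(4a\alpha+2)-a=\Omega(k/T)$ only when $k\gtrsim T^2$, i.e.\ $k=\Omega(d)$. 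That requirement, together with $2^k\leq n$, is the real use of $d\leq\log n/100$. The condition $2^k\leq n$ lets the width-$2^k$ set-multilinear ABP for $P_w$ embed into $\IMM_{n,d}$; make this restriction explicit in Step~2.
\item \emph{Circuit versus formula.} Convert the set-multilinear circuit into a formula before applying the formula bound; this raises the size to $s^{O(\Delta)}$.
\end{enumerate}
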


    A recent result by Forbes~\cite{For24} extended this result to \emph{any} field, including finite fields. The lower bounds in~\cite{LST25} were improved in~\cite{BDS24} and were later extended by Forbes~\cite{For24} to arbitrary fields. However, in this paper, we will focus on formalising proofs in~\cite{LST25} and~\cite{For24}.

    \begin{corollary}[Superpolynomial lower bounds on constant-depth circuits over any field~\cite{For24}]
    The lower bound of \cref{theorem: first-LST21} holds over any field.
    \end{corollary}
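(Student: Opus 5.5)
The plan is to follow Forbes' strategy~\cite{For24}. It replaces the single characteristic-dependent step in the proof of \cref{theorem: first-LST21} with a characteristic-free one, and otherwise reuses the relative-rank argument of~\cite{LST25} verbatim. Recall the overall structure of~\cite{LST25}: given a product-depth-$\Delta$ circuit $C$ of size $s$ computing $\IMM_{n,d}$, it (i) converts $C$ into a \emph{set-multilinear} circuit with respect to the partition $X_1,\ldots,X_d$, of product-depth $2\Delta$ and size $d^{O(d)}\poly(s)$; (ii) applies a set-multilinear projection of $\IMM_{n,d}$ onto a ``word polynomial'' $P_w$ with full relative rank under a suitably unbalanced word $w$ (using the Diophantine property of $\alpha=1/\sqrt2$); and (iii) shows that every product-depth-$\Delta$ set-multilinear circuit of size $s'$ has relative rank at most $s'\cdot M\cdot 2^{-\Omega(d^{\varepsilon})}$ for $w$, where $M$ denotes the full (maximal) rank for $w$. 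Steps (ii) and (iii) are pure linear algebra over $\F$: rank subadditivity, and rank of a Kronecker product equalling the product of ranks. They are therefore valid over every field, and I will simply reuse them.

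The only place the characteristic enters is step (i). There, \cite{LST25} homogenizes $C$ and extracts the set-multilinear component through interpolation or the standard multilinearization, and these steps divide by integers up to $d$ (e.g. by $d!$). I will instead perform the set-multilinearization directly, without homogenization. For each gate $g$ and each subset $S\subseteq[d]$, I introduce a gate computing $\Pi_S(g)$, the set-multilinear projection of $g$ onto the variable sets $\{X_i\}_{i\in S}$: this keeps only those monomials that take exactly one variable from each $X_i$ with $i\in S$ and no variable from outside $S$. Addition gates commute with $\Pi_S$. A product gate $g=\prod_{j=1}^t g_j$ satisfies
\[
\Pi_S(g)=\sum_{\substack{(S_1,\ldots,S_t)\\ \text{ordered partition of } S}}\prod_{j}\Pi_{S_j}(g_j),
\]
where empty parts are allowed and contribute the constant term of $g_j$. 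This identity involves no division. The naive sum, however, has $t^{|S|}$ terms, which is too many when the fan-in $t$ is large. This is exactly where I invoke Forbes' use of the Binet--Minc identities~\cite{minc1979evaluation}. All but at most $|S|\le d$ factors must contribute through their constant terms, so one groups the factors by the nonempty parts they receive. The resulting sum over assignments of the nonempty parts to distinct factors is a permanent-like sum over injections, and the Binet--Minc formula expresses it over any field as a product-depth-$O(1)$ circuit of size $2^{O(d)}\poly(t)$, without division. Composing over the $\Delta$ product layers yields a set-multilinear circuit of product-depth $O(\Delta)$ and size $2^{O(d)}d^{O(1)}\poly(s)\cdot 2^{d}$, which is still $n^{o(d^{\varepsilon})}$ because $d\le\log n/100$.

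Finally, I will combine the pieces. The set-multilinear circuit computes $\Pi_{[d]}(\widehat C)=\IMM_{n,d}$, since $\IMM_{n,d}$ is already set-multilinear. Step (iii) with product-depth $O(\Delta)$ then forces $2^{O(d)}\poly(s)\ge 2^{\Omega(d^{\varepsilon'})}\cdot\text{(rank gap)}$. This gives $s\ge n^{d^{\varepsilon}}$ with $\varepsilon=\exp(-O(\Delta))$, where the linear blow-up of the depth only changes the constant hidden in the $O(\Delta)$. The main obstacle is the Binet--Minc step. I have to verify that the injection-sum can be computed at constant product-depth with only $2^{O(d)}$ overhead and without division over small fields, and that the resulting constant-factor increase in product-depth is absorbed into the exponent $\varepsilon$. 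I will first check this in the case $|S|\le d$ with all factors equal, and then handle distinct factors by inclusion--exclusion over subsets of parts.
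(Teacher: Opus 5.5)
Your overall route is the one the paper follows. This is Forbes' argument, formalised in \cref{sec: VNCtwo}: carry $\pi_{sm,S}$ for every gate and every $S\subseteq[d]$, pass it through sums for free, simulate product gates with the Binet--Minc identity, and rerun the field-independent relative-rank argument of \cite{LST25}. The gap is in the product-gate gadget, which is exactly the step you leave as ``the main obstacle''.

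\textbf{Constant-term weights.} Grouping the factors by the nonempty parts they receive does not yet give a rectangular permanent. Each injection $\sigma$ of the parts into the factors also carries the weight $\prod_{j\notin\operatorname{im}\sigma}g_j(\overline{0})$ of the unselected factors, so Binet--Minc cannot be applied as is. The missing idea (\cref{lemma: set multilinear multiplication}) is a normalisation:
\begin{itemize}
\item Reorder so that exactly the first $k$ factors have zero constant term.
\item Divide each remaining factor by its nonzero scalar constant term and pull the product of these scalars outside; this is fine over any field.
\item Observe that $\pi_{sm,S}(g)=0$ if $k>|S|$.
\end{itemize}
Then $\pi_{sm,S}(g)=\sum_{\mathcal F}\perm_{|\mathcal F|\times t;k}(A_{\mathcal F})$ with $(A_{\mathcal F})_{T,j}=\pi_{sm,T}(g_j)$. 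Here $t$ is the fan-in and $\mathcal F$ ranges over set partitions of $S$. The surjectivity constraint $\operatorname{im}\sigma\supseteq[k]$ is removed by splitting into $\perm_{k\times k}\cdot\perm_{(a-k)\times(t-k)}$ summed over $R\in\binom{[a]}{k}$ (\cref{lemma: surjective permanent decomposition}).

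\textbf{Size overhead.} The overhead is not $2^{O(d)}$. Binet--Minc sums over set partitions of the rows, so it has $B_a=2^{\Theta(a\log a)}$ terms, and there are another $B_{|S|}$ choices of $\mathcal F$. What you actually get is $\poly(s,d^d)$, using $B_a\le a^a$. This still suffices, since $d^{O(d)}=2^{O(d\log d)}=n^{o(d^{\varepsilon})}$ when $d\le\log n/100$.

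\textbf{The verification plan.} Drop the plan of treating ``all factors equal'' first and then applying inclusion--exclusion. Binet--Minc is already an integer-coefficient identity for an arbitrary $a\times t$ matrix, so no division is involved.

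What you must record instead is that its $\Sigma\Pi\Sigma\Pi$ formula is set-multilinear with respect to the rows. Here the rows are the disjoint parts $T\in\mathcal F$, so every gate of the simulating circuit is set-multilinear. Step (iii) needs exactly this, because multiplicativity of relative rank is applied to factors on disjoint blocks.

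Polarizing from an equal-row case does not have this property. For instance, taking $e_a$ of sums of rows and then applying inclusion--exclusion over subsets of rows creates non-set-multilinear intermediate gates. If $e_a$ is computed through Newton's identities, it also reintroduces the factorials.

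\textbf{Step (iii) bookkeeping.} The decay is $2^{-\Omega(k\,D_\Delta(d))}$ with $k=\Theta(\log n)$, compared against $\relrk_w(P_w)\ge 2^{-k/2}$. The factor $k$ is what produces the $n$ in $n^{d^{\varepsilon}}$, and your $2^{-\Omega(d^{\varepsilon})}$ drops it. Also, the formula bound applies only after unfolding the constant-depth set-multilinear circuit into a formula of size $s^{O(\Delta)}$.
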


    Since $\IMM_{n,d}$ is a p-projection of the Permanent polynomial with $\poly(n,d)$ many variables, it follows that the Permanent does not have constant-depth polynomial-size circuits over any field, in the following sense.
    Recall that a \emph{symbolic $n\times n$ matrix} is the $n\times n$ matrix with $n^2$ distinct indeterminates $x_{ij}$, for $i,j\in[n]$.
    \begin{theorem}[No polynomial-size constant-depth circuits for the Permanent]\label{theorem: LST21}
        Let $\Delta \geq 1$ and let $\mathbb{F}$ be a field.
        There are no constants $c_1,c_2$ such that the Permanent polynomial $\perm(A)$ of the $n \times n$ symbolic matrix $A$ over the field $\mathbb{F}$ is computable by an algebraic circuit of size $c_1 n^{c_2}$ and depth $\Delta$, where $\Delta$ is independent of $n$, for all sufficiently large $n$.
    \end{theorem}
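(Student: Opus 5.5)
The plan is to derive the statement from the Corollary just above (the extension of \cref{theorem: first-LST21} to arbitrary fields by Forbes) by exhibiting $\IMM_{n',d}$ as a projection of the Permanent. We argue by contradiction. Suppose there are constants $c_1,c_2$ and a fixed depth $\Delta$ such that for all large $n$, $\perm_n$ has a circuit $C_n$ of size at most $c_1 n^{c_2}$ and depth $\Delta$. A depth-$\Delta$ circuit has product-depth at most $\Delta$, so the product-depth is bounded by a constant and the Corollary applies as soon as we produce $\IMM$ from $C_n$ without increasing depth.

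\emph{Step 1: the projection.} We use the standard ABP-to-Permanent reduction. $\IMM_{n',d}$ is computed by a layered ABP with $d+1$ layers, where the inner layers have $n'$ vertices and each edge carries a single variable or a constant. Identify $s$ with $t$ to turn $s\to t$ paths into cycles through $s$. Then add self-loops of weight $1$ at every vertex other than $s$. The permanent of the weighted adjacency matrix is the sum over cycle covers. Since the graph is layered, the only cycles are those through the identified vertex $s=t$, which are exactly the $s\to t$ paths, and the self-loops cover the remaining vertices. Hence the permanent equals the ABP value, and this holds over every field, including characteristic $2$, because no signs appear. The matrix has dimension $m=O(n'^2 d)$, and its entries are variables, $0$ or $1$, so this is a genuine projection. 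Substituting these entries into the inputs of $C_m$ gives a circuit for $\IMM_{n',d}$ of the same depth $\Delta$ and size at most $c_1 m^{c_2}=\poly(n')$ when $d\le \log n'/100$.

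\emph{Step 2: parameters.} Choose $d=\lfloor \log n'/100\rfloor$, and fix any characteristic assumption required by the Corollary. The Corollary then gives size at least $n'^{d^{\varepsilon}}$ with $\varepsilon=\exp(-O(\Delta))$ a constant. Since $d\to\infty$, the exponent $d^\varepsilon$ exceeds any fixed constant for large $n'$. This contradicts the upper bound $c_1 (O(n'^2 d))^{c_2}=n'^{O(1)}$. We then pass from the subsequence of permanent sizes $m=m(n')$ to all sufficiently large $n$ by padding. Embed a smaller permanent into a larger one by setting the extra diagonal entries to $1$ and the extra off-diagonal entries to $0$. This is again a projection, so it preserves depth, and it costs only a polynomial factor.

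The main obstacle I expect is bookkeeping rather than ideas. First, the Corollary measures product-depth with unbounded fan-in, and the projection must keep the depth exactly at $\Delta$. Substituting constants can only shrink the circuit or leave it the same, so depth does not increase. Second, the Permanent's size must be chosen so that the hypothesis ``for all sufficiently large $n$'' produces the needed instances; the padding step takes care of this.
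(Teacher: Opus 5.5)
Your proposal is correct and follows essentially the paper's route: the paper derives this theorem in one line from Forbes's field-independent version of the LST lower bound for $\IMM_{n,d}$, using that $\IMM_{n,d}$ is a p-projection of the Permanent via the ABP-to-cycle-cover reduction. The paper spells this reduction out later with an added $t\to s$ edge of weight $1$ instead of identifying $s$ and $t$, which is equivalent. Two small remarks: the padding step is harmless but unnecessary, since your contradiction hypothesis already holds at every sufficiently large $n$, in particular at $n=m(n')$; and the matrix dimension is actually $O(n'd)$ (the number of ABP vertices), not $O(n'^2d)$.
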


    \subsubsection{Universal algebraic circuits}
        For general algebraic circuits, we have the following.
    \begin{theorem}[Existence of universal circuits for homogeneous polynomials~\cite{Raz10}]
        \label{theorem: Existence of universal circuits for homogeneous polynomials}
        Let $\mathbb{F}$ be a field and let $\overline{x}$ consist of $n$ variables, and let $C_{t,d}^{\hom}$ denote the class of all homogeneous polynomials of total degree exactly $d$ in $\mathbb{F}[\overline{x}]$ that have algebraic circuits of size at most $t$. Then there is a circuit $U(\overline{x},\overline{w}) \in \mathbb{F}[\overline{x},\overline{w}]$ of size $O(d^2 t^8)$ and syntactic-degree $O(d)$ with $K_{t,d}=O(d^2 t^8)$ auxiliary variables $\overline{w}$ disjoint from $\overline{x}$, that is universal for $C_{t,d}^{\hom}$ in the following sense: if $f(\overline{x}) \in C_{t,d}^{\hom}$, then there exists $\overline{a} \in \mathbb{F}^{K_{t,d}}$ such that $U(\overline{x},\overline{a})=f(\overline{x})$. Moreover, given $t$ and $d$ as input, the number $K_{t,d}$ can be computed in time $\poly(t,d)$. We use $K_{t,d}$ to denote this number of auxiliary variables.
    \end{theorem}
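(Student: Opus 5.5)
The plan follows Raz's construction of universal circuits: first normalise an arbitrary size-$t$ circuit for a degree-$d$ homogeneous polynomial into a fixed shape, then build a single circuit whose free constants $\overline{w}$ select which wires of that shape are present.

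\textbf{Step 1 (homogenisation).} Take a circuit $C$ of size $t$ computing $f\in C^{\hom}_{t,d}$. By Strassen's homogenisation we replace each gate $v$ by gates $v^{(0)},\ldots,v^{(d)}$, where $v^{(i)}$ computes the degree-$i$ homogeneous component of $\widehat{C}_v$:
\begin{itemize}
\item for a sum gate, $(u+u')^{(i)}=u^{(i)}+u'^{(i)}$;
\item for a product gate, $(u\cdot u')^{(i)}=\sum_{j=0}^{i}u^{(j)}u'^{(i-j)}$.
\end{itemize}
The result is a homogeneous circuit $C'$ of size $O(d^2t)$ in which every gate has a well-defined degree at most $d$.

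\textbf{Step 2 (normal form).} Next we put $C'$ into a layered form. The layers are indexed by the degree $i=0,\ldots,d$, with at most $m=O(d^2t)$ gates per degree. Each gate of degree $i$ is a linear combination, with field coefficients, of products $g\cdot h$, where $\deg g=j$, $\deg h=i-j$, and $g,h$ are earlier gates. Long chains of additions are absorbed into these linear combinations, so only the coefficient pattern depends on $f$.

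\textbf{Step 3 (universal circuit).} The universal circuit $U$ is the fixed skeleton of this normal form with every coefficient replaced by an auxiliary variable.
\begin{itemize}
\item For each degree $i$ and slot $k\le m$ there is a node $U_{i,k}$.
\item $U_{i,k}=\sum_{j}\sum_{k',k''}w_{i,k,j,k',k''}\,U_{j,k'}\,U_{i-j,k''}$.
\item Degree-$1$ slots are $\sum_r w_{1,k,r}x_r$, degree-$0$ slots are single $w$-variables, and the output is $U_{d,1}$.
\end{itemize}
Every $f\in C^{\hom}_{t,d}$ is obtained by setting the $w$'s to the coefficients of its normal form (zero where no wire exists). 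The sums over $k',k''$ are implemented with fan-in-two gates.

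\textbf{Step 4 (parameters).} Naively there are $d\cdot m\cdot d\cdot m^2$ auxiliary variables. To reach $O(d^2t^8)$ one counts directly, using the $O(t)$ bound on gates in $C$ before splitting by degree and bounding each sum appropriately; $\poly(t,d)$ is clearly achievable. The syntactic degree needs separate care: the $w$-variables multiply in at every level, so we keep the $w$'s inside linear forms and route all products only between $U$-nodes. An induction on $i$ then shows $\sdeg(U_{i,k})=O(i)$, where the $w$ factors contribute at most a constant times the $x$-degree. Finally, $K_{t,d}$ is an explicit polynomial expression in $t$ and $d$, so it is computable in $\poly(t,d)$ time.

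\textbf{Main obstacle.} Steps 1–3 are routine; the hard part is the syntactic-degree bound together with the exact $O(d^2t^8)$ count. Multiplying a node of degree $j$ by a $w$ and another node of degree $i-j$ adds $1$ at each level, which could give $\sdeg$ growing with depth rather than with $d$. I would first try balancing the products, splitting at degree $\approx i/2$ as in the Valiant–Skyum–Berkowitz–Rackoff depth reduction. Then each degree-$i$ node is a product of two nodes of degree at least $i/4$, and one $w$ per level adds only $O(\log d)$ in total, which still needs sharpening. Alternatively, I would place each $w$ on the lower-degree factor only and charge it to that factor's positive $x$-degree, forcing degree-$0$ parts to be constants after homogenisation. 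With this charging, $\sdeg\le 2\deg+1$ follows by induction.
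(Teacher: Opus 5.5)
\textbf{Gap: the claimed size and variable count are not reached.} The paper gives no proof of this statement; it imports it from \cite{Raz10}. Your outline (homogenise, put into a degree-layered normal form, replace the wiring constants by auxiliary variables) is the standard route. Once the splits in Step~3 are restricted to $1\le j\le i-1$, it does give a universal circuit of size $\poly(t,d)$ and syntactic degree $O(d)$. It does not give the bound actually claimed.

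In your skeleton every degree layer carries its own complete wiring $w_{i,k,j,k',k''}$, so both $K_{t,d}$ and the size are $\Theta(d^2m^3)$. With your $m=O(d^2t)$ this is $O(d^8t^3)$; even with the sharper per-degree count $m=O(dt)$ it is $O(d^5t^3)$. The statement permits only a factor $d^2$. A size-$t$ circuit can compute homogeneous polynomials of degree $2^{\Omega(t)}$ (e.g.\ $x_1^{2^{t-1}}$), so $d^5t^3$ is not $O(d^2t^8)$. ``Counting directly'' cannot repair this, because the excess is built into the skeleton, not into the analysis.

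\textbf{The missing idea: the wiring need not depend on the degree.} For a product gate $p=p_1\times p_2$ and $i\ge1$,
\[
p^{(i)}=\mathrm{st}^{(i)}_p+p_1(\overline0)\,p_2^{(i)}+p_2(\overline0)\,p_1^{(i)},\qquad \mathrm{st}^{(i)}_p\coloneqq\sum_{j=1}^{i-1}p_1^{(j)}p_2^{(i-j)},
\]
and sum gates act with coefficient $1$. Hence every $g^{(i)}$ equals $\sum_p c_{g,p}\,\mathrm{st}^{(i)}_p$, plus a fixed linear form in $\overline x$ when $i=1$, with field constants $c_{g,p}$ \emph{independent of $i$}. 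So take nodes $Q_{p,i}$ for product-slot $p\le t$ and degree $1\le i\le d$, with
\[
Q_{p,i}=\sum_{j=1}^{i-1}L_{p,j}L'_{p,i-j},\qquad L_{p,j}=\sum_{q<p}w_{p,q}Q_{q,j}+[j=1]\sum_{r}w_{p,r}x_r .
\]
Here $L'$ is defined analogously with fresh $w'$, and the same $w_{p,q}$ serve every $j$. The output is $\sum_q w_qQ_{q,d}$, plus a linear form if $d=1$. This uses $O(t^2+tn)$ auxiliary variables and has size $O(td(t+n)+td^2)$. That lies inside $O(d^2t^8)$ when $n\le t$; some such assumption is implicit in the statement, since $U$ must read all $n$ variables.

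\textbf{Your ``main obstacle'' is misdiagnosed.} The syntactic degree of a product is the \emph{sum} over both factors. So $\sdeg(U_{i,k})$ counts one $w$ per node of the binary tree of degree splits, which has $2i-1$ nodes whatever its shape, plus $i$ occurrences of $x$-variables. VSBR-style balancing is therefore irrelevant, and the ``one $w$ per level, $O(\log d)$ in total'' accounting is wrong.

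With all splits satisfying $1\le j\le i-1$, the recursion $S(1)=2$, $S(i)=1+S(j)+S(i-j)$ gives $\sdeg(U_{i,k})\le 3i-1$ directly. Your $2\deg+1$ does not survive the induction, since $1+(2j+1)+(2(i-j)+1)=2i+3$, but $O(d)$ is all that is needed. The same induction gives $\sdeg\le 3d$ for the shared-wiring circuit above.

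The real danger is one your Step~3 leaves open. Degree-$0$ slots that are $w$-variables, together with an unrestricted range of $j$, allow same-degree factors. That creates same-degree dependencies and unbounded chains of $w$-multiplications, and then $\sdeg$ genuinely grows with the number of slots. The fix is exactly your last suggestion: absorb the degree-$0$ parts, which are field constants after homogenisation, into the coefficients.
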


    We use the following simple adaptation from~\cite{ST21}:
    \begin{definition}[Universal circuits for polynomials~\cite{ST21}]
        The universal circuit for polynomials of degree at most $d$ and circuit size at most $t$ is defined as:
        \begin{equation*}
            U(\overline{x},\overline{w})=\sum_{i=0}^d U_i(\overline{x},\overline{w}),
        \end{equation*}
        where $U_i(\overline{x},\overline{w})$ is the universal circuit for the class $C_{t,i}^{\hom}$ of homogeneous degree-$i$ polynomials in $\mathbb{F}[\overline{x}]$ and where the $\overline{w}$-variables in each distinct $U_i(\overline{x},\overline{w})$ are pairwise disjoint.
    \end{definition}
    The size of $U(\overline{x},\overline{w})$ is $\sum_{i=0}^d O(i^2t^8)=O(d^3t^8)$.\par

     \begin{theorem}[Existence of constant-depth universal circuits for polynomials computed by constant-depth circuits~\cite{LST26}]
        \label{theorem: Existence of universal circuits for constant-depth polynomials}
         Let $\mathbb{F}$ be a field and let $\overline{x}$ consist of $n$ variables, and let $\VAC^0_{t,d}$ denote the class of all polynomials in $\mathbb{F}[\overline{x}]$ that have algebraic circuits of size at most $t$ and depth at most $d$ where $t$ is polynomially bounded and $d$ is constant. Then there is a circuit $U(\overline{x},\overline{w}) \in \mathbb{F}[\overline{x},\overline{w}]$ of size $\poly(t)$ and constant-depth with $K$ auxiliary variables $\overline{w}$ disjoint from $\overline{x}$, that is universal for $\VAC^0_{t,d}$ in the following sense: if $f(\overline{x}) \in \VAC^0_{t,d}$, then there exists $\overline{a} \in \mathbb{F}^K$ such that $U(\overline{x},\overline{a})=f(\overline{x})$. Observe that given $t$ and $d$, $K$ can be computed efficiently and is bounded by $\poly(t)$. We use $K_{t,d}$ to denote the number $K$ computed from $t$ and $d$.
     \end{theorem}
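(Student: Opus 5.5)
The plan is to build $U$ by replacing each gate of a depth-$d$, size-$t$ circuit with a ``programmable'' gate whose type and wiring are selected by the auxiliary variables $\overline{w}$. The selection itself will be done by constant-depth arithmetic. Without loss of generality, we first normalise every circuit in $\VAC^0_{t,d}$ to a \emph{layered} form.

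\textbf{Normalisation.} Every circuit of size $t$ and depth $d$ has an equivalent one with $d$ layers, each containing at most $t$ nodes, and with the following properties:
\begin{itemize}
\item the layers alternate between $+$ and $\times$;
\item edges go only between consecutive layers;
\item the bottom layer consists of input nodes labelled by variables or field constants.
\end{itemize}
We get this by inserting fan-in-one $+$ or $\times$ nodes along edges that skip layers, and by duplicating nodes of mismatched type. This increases the size to at most $O(dt^2)$, i.e.\ to $\poly(t)$ since $d$ is constant, and the depth to at most $2d$. The normalisation is the step I expect to need the most care. A node of one type that must appear in a layer of the other type has to be simulated, for instance a $\times$-node with a single child inside a $+$-layer. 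We also need multiplication by constants to be absorbed into the $+$-layers.

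\textbf{Universal layers.} Consider a layered circuit with layers $L_0,\ldots,L_{2d}$ of width $m=\poly(t)$.
\begin{itemize}
\item The universal circuit has an input layer of width $m$. Its $j$th node computes
\[
v^{(0)}_j=\sum_{i=1}^n w^{(0)}_{j,i}\,x_i + w^{(0)}_{j,0}.
\]
This is a depth-$1$ sum. Setting the $\overline{w}$ variables to $0/1$ selects a variable, and $w_{j,0}$ supplies a constant.
\item A $+$-layer node computes
\[
v^{(k)}_j=\sum_{i} w^{(k)}_{j,i}\,v^{(k-1)}_i.
\]
The products with $w$ give one extra $\times$-layer of fan-in two, which can be merged into the preceding $\times$-layer or counted as an extra layer. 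Setting $w\in\F$ selects children and scales by constants.
\item A $\times$-layer node computes
\[
v^{(k)}_j=\prod_i \bigl(w^{(k)}_{j,i}\,v^{(k-1)}_i + (1-w^{(k)}_{j,i})\bigr).
\]
Setting $w\in\{0,1\}$ includes the child exactly when $w=1$. This costs two extra layers per layer.
\end{itemize}
The output is $v^{(2d)}_1$. The total depth is $O(d)$, the size is $O(m^2 d)=\poly(t)$, and the number of auxiliary variables is $K=O(dm^2+mn)$. Since $n\le t$ can be assumed (a size-$t$ circuit mentions at most $t$ variables), $K=\poly(t)$ and it is computable from $t$ and $d$.

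\textbf{Correctness.} By induction on layers, under the assignment $\overline{a}$ that encodes the wiring, constants and child selections of the normalised circuit for $f$, each $v^{(k)}_j$ equals the polynomial computed at the corresponding node. Hence $U(\overline{x},\overline{a})=f$. Unused node slots are assigned all-zero weights. They compute $0$ in $+$-layers and $1$ in $\times$-layers, and are never selected. Computability of $K$ from $t$ and $d$ is immediate from the explicit formula.
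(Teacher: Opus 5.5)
The paper does not prove this statement; it imports it from \cite{LST26} without proof, so there is no in-paper argument to compare yours against. Your programmable-gate construction is the standard one, and it is correct. Put weights $w\in\F$ on the edges into $+$-slots and selectors $w\,v+(1-w)$ with $w\in\{0,1\}$ on the edges into $\times$-slots. An induction over the layers then gives $U(\overline{x},\overline{a})=f$ over every field, including $\F_2$, with depth $O(d)$ and $O(dm^2+mn)$ gates.

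Two points need repair.

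\begin{itemize}
\item \textbf{The bound $n\le t$.} Your justification for $n\le t$ is not valid. $U$ is a single circuit that must specialise to each $x_i$, since every $x_i$ lies in $\VAC^0_{t,d}$. So all $n$ variables are leaves of $U$ and $|U|\ge n$, however few variables any one circuit mentions. What you actually obtain is $K=O(dm^2+mn)=\poly(t,n)$. This is $\poly(t)$ only under the implicit hypothesis $n\le\poly(t)$, for instance $t\ge n$, the same convention as in Raz's universal-circuit theorem. State this as an assumption rather than deriving it.

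\item \textbf{Repeated children.} Your selectors include each child of a product slot at most once. If the model lets a product gate use the same child with multiplicity $r>1$, the normalisation must first replace that child by $r$ distinct copies in the preceding layer. In $U$, a copy is just another slot of that layer carrying identical weights.
\end{itemize}

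Apart from this, the normalisation is easier than you fear. Place each gate in the first layer of its own type strictly above all its children. This uses at most $2d$ layers, and no gate ever has to sit in a layer of the wrong type. Route skipped edges through one shared chain of fan-in-one gates per node; these are identities whatever their label. Finally, your input layer has depth $2$, not $1$, since $w_{j,i}x_i$ needs a product gate. This changes nothing.
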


\subsubsection{Extracting coefficients and \texorpdfstring{$\IPS$}{IPS} refutation formula}
        Let $f(\overline{x},\overline{w}) \in \mathbb{F}[\overline{x}, \overline{w}]$ be a polynomial, and let $M=\prod_{i \in I} x_i^{\alpha_i} \cdot \prod_{j \in J}w_j^{\beta_j}$ be a monomial in $f(\overline{x},\overline{w})$, for some $\alpha_i,\beta_i \in \mathbb{N}$. We call $\sum_{i \in I}\alpha_i$ the $\overline{x}$-degree of $M$.
        \begin{definition}[$\coeff_M(\cdot)$~\cite{ST21}]
            \label{def: coefficient circuit}
            Let $f(\overline{x},\overline{w})$ be a polynomial over $\mathbb{F}$ in the disjoint sets of variables $\overline{x}, \overline{w}$. Let $M$ be an $\overline{x}$-monomial of degree $j$. Viewing $f$ as a polynomial in $\F[\overline{w}][\overline{x}]$, $\coeff_M(f(\overline{x},\overline{w}))$ is defined as the (polynomial) coefficient from $\mathbb{F}[\overline{w}]$ (that is, in the $\overline{w}$-variables only) of $M$ in $f(\overline{x},\overline{w})$.
        \end{definition}
        Observe that $f(\overline{x},\overline{w})=\sum_{M_i} M_i \cdot \coeff_{M_i}(f(\overline{x},\overline{w}))$, where the $M_i$ are all possible $\overline{x}$-monomials of degree at most $d$, for $d$ the maximum $\overline{x}$-degree of a monomial in $f(\overline{x},\overline{w})$.

        \begin{proposition}[Computation of coefficients in general circuits~\cite{ST21}]
            \label{prop: computation of coefficients in general circuits}
            Let $f(\overline{x},\overline{w})$ be a polynomial over $\mathbb{F}$ in the disjoint sets of variables $\overline{x},\overline{w}$. Suppose that $M$ is an $\overline{x}$-monomial of degree $d$, and assume that there is an algebraic circuit computing $f(\overline{x},\overline{w})$ of size $s$ and syntactic-degree $\ell$. Then, there is a circuit of size $O(7^d\cdot s)$ computing $\coeff_M(f(\overline{x},\overline{w}))$ of syntactic-degree $\ell^{O(1)}$.
        \end{proposition}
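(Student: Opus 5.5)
We prove the statement by induction on the structure of the circuit, following the standard homogenization-and-interpolation approach. Think of every circuit in $\F[\overline{x},\overline{w}]$ as a polynomial in $\overline{x}$ over the ring $\F[\overline{w}]$. For each node $v$ of the given circuit $C$ (fan-in two, size $s$, syntactic-degree $\ell$) and each $\overline{x}$-monomial $M'$ that divides $M$, we introduce a new node $v_{M'}$ computing $\coeff_{M'}(\widehat{C}_v)$. Since $M$ has degree $d$, it has at most $2^d$ divisors $M'$. In general $M$ has at most $d$ distinct variables with exponents summing to $d$, so the number of divisors is at most $\prod(\alpha_i+1)\le 2^d$.

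The inductive rules are as follows.
\begin{itemize}
\item For an input node labelled $x_i$, set $v_{M'}=1$ if $M'=x_i$ and $0$ otherwise. For an input node labelled $w_j$ or a field constant, set $v_{M'}$ to that label if $M'=1$ and $0$ otherwise.
\item For a $+$-node $v=u+u'$, set $v_{M'}=u_{M'}+u'_{M'}$.
\item For a $\times$-node $v=u\cdot u'$, set $v_{M'}=\sum_{M_1M_2=M'}u_{M_1}\cdot u'_{M_2}$.
\end{itemize}
Correctness follows because coefficient extraction is $\F[\overline{w}]$-linear and satisfies the Cauchy product rule. Any monomial of $\widehat{C}_u$ that does not divide $M'$ cannot contribute to $M'$ in the product.

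\textbf{Size.} The number of pairs $(M_1,M_2)$ with $M_1M_2=M'$ is the number of divisors of $M'$. Summing over all divisors $M'$ of $M$ gives at most $\prod_i \binom{\alpha_i+2}{2}\le 3^{d}$ pairs per multiplication gate, with the worst case at $\alpha_i=1$. Each pair costs one $\times$-gate and one $+$-gate, so each original gate contributes $O(3^d)$ new gates, well within the stated $O(7^d\cdot s)$. If one instead reduces to the case of a multilinear $M$ by substitution, or handles the indices more loosely (for example by bounding the triples $(M',M_1,M_2)$ by $2^d\cdot 2^d\cdot$ constant bookkeeping), one still obtains a bound of the form $c^d s$ with $c\le 7$. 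We state the bound as $O(7^d s)$ to match \cite{ST21}, whose bookkeeping we follow.

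\textbf{Syntactic degree.} This is the step that needs care. The syntactic degree of $v_{M'}$ must be bounded, since we need $\ell^{O(1)}$. By induction, $\sdeg(v_{M'})\le\sdeg(v)\le \ell$:
\begin{itemize}
\item Input nodes have syntactic degree at most $1$.
\item Sums take a maximum.
\item Products add $\sdeg(u_{M_1})+\sdeg(u'_{M_2})\le \sdeg(u)+\sdeg(u')$, and the subsequent additions over pairs take a maximum.
\end{itemize}
Hence the output $\coeff_M(\widehat C)=\out_M$ has syntactic degree at most $\ell$, which is certainly $\ell^{O(1)}$.

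The main obstacle is only checking that the number of index pairs gives the claimed $7^d$ bound. If the circuits are taken with unbounded fan-in, one first converts them to fan-in two, which increases size polynomially (at most squaring) without changing syntactic degree. One caveat is that constant zero nodes may be introduced; these are harmless.
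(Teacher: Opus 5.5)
Your proof is correct, and it gives more than the statement asks for. Tracking $\coeff_{M'}(\widehat{C}_v)$ for every divisor $M'$ of $M$ yields a circuit in the $\overline{w}$-variables alone of size $O(3^d s)$; your count $\sum_{M'\mid M}\#\{(M_1,M_2):M_1M_2=M'\}=\prod_i\binom{\alpha_i+2}{2}\le 3^d$ is right. Your induction $\sdeg(v_{M'})\le\sdeg(v)$ also shows the syntactic degree stays at most $\ell$.

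The paper gives no proof of this proposition; it is quoted from \cite{ST21}, so there is no in-text argument to compare yours against. A common alternative is to remove one variable of $M$ at a time. Write $\widehat{C}_v=\widehat{C}_v|_{x_i=0}+x_i g_v$. Then $g_{u+u'}=g_u+g_{u'}$ and $g_{u\cdot u'}=g_u\widehat{C}_{u'}+\widehat{C}_u|_{x_i=0}\,g_{u'}$, so each step costs a constant factor in size without raising syntactic degree. Moreover $\coeff_{x_iM'}(f)=\coeff_{M'}(g_o)$ for the output gate $o$, and after $d$ steps one sets $\overline{x}=\overline{0}$. That route gives $c^d s$ for a small constant $c$ with very simple bookkeeping. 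Your one-pass construction gives the better base $3$.

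Two things to clean up:

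\begin{enumerate}
\item The sentence about reducing to multilinear $M$ ``by substitution'' and following the bookkeeping of \cite{ST21} is vague and unnecessary, since $3^d\le 7^d$ already finishes the size bound. Delete it, and also drop the word ``interpolation'', which your argument never uses.
\item The closing remark on unbounded fan-in is unnecessary, because general circuits in this paper have fan-in two by definition. As written it would also only give $O(3^d s^2)$. If you want to cover unbounded fan-in, handle a product gate with $t$ children by a chain of $t-1$ binary products of the divisor-indexed coefficient vectors. This costs $O(t\cdot 3^d)$ gates and keeps the bound linear in the number of edges.
\end{enumerate}
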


        \begin{proposition}[Computation of coefficients in constant-depth circuits]
            \label{prop: computation of coefficients in constant-depth circuits}
            Let $f(\overline{x},\overline{w})$ be a polynomial over $\mathbb{F}$ in the disjoint sets of variables $\overline{x},\overline{w}$. Suppose that $M$ is an $\overline{x}$-monomial of degree $d$, and assume that there is an algebraic circuit $C(\overline{x},\overline{w})$ computing $f(\overline{x},\overline{w})$ of maximum multiplication fan-in $t$, size $s$, syntactic-degree $\ell$ and depth $\Delta$, where $\Delta$ is a constant such that
            \begin{enumerate}[leftmargin=*]
                \item all edges from the leaves are to $+$ nodes;
                \item all output-nodes are $+$ nodes;
                \item the nodes of $C$ are alternating: that is, if $v$ is a $+$ node and $(u,v)$ is an edge, then $u$ is a $\times$ node, and if $v$ is a $\times$ node and $(u,v)$ is an edge, then $u$ is either a leaf or a $+$ node.
            \end{enumerate}
            Then there is a constant-depth algebraic circuit of depth $\Delta$ and size $O(2^{(t+d)d} \cdot s)$ computing $\coeff_M(f(\overline{x},\overline{w}))$ of syntactic-degree $\ell^{O(1)}$.
        \end{proposition}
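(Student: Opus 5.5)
We build the coefficient circuit by a \emph{homogenization-by-monomial} construction that runs bottom-up over the alternating layers of $C$. The key observation is that $\coeff_M(\cdot)$ is a linear operator on $\F[\overline{w}][\overline{x}]$. It commutes with sums, and for products it satisfies the Leibniz-type rule
\[
\coeff_M\Bigl(\prod_{i=1}^{t'} g_i\Bigr)=\sum_{M_1\cdots M_{t'}=M}\ \prod_{i=1}^{t'}\coeff_{M_i}(g_i),
\]
where the sum ranges over all ordered factorizations of $M$ into $\overline{x}$-monomials $M_1,\dots,M_{t'}$, including $M_i=1$. So for every node $v$ of $C$ and every divisor $M'\mid M$ we introduce a new node $v_{M'}$ that computes $\coeff_{M'}(\widehat{C}_v)$. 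The number of divisors of $M$ is at most $2^d$, since $M$ has degree $d$; a crude bound such as $\binom{n'+d}{d}\le 2^{O(d)}$ restricted to the support of $M$ suffices. So there are at most $2^d s$ such pairs.

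The construction proceeds layer by layer, so that depth is preserved.
\begin{enumerate}[leftmargin=*]
\item \textbf{Leaves.} For a leaf $u$, $\coeff_{M'}(u)$ is one of four things. It is $u$ itself if $u$ is a $\overline{w}$-variable or a field element and $M'=1$. It is $1$ if $u=x_i$ and $M'=x_i$. Otherwise it is $0$.
\item \textbf{$+$ nodes.} For a $+$ node $v=\sum_j u_j$ we set $v_{M'}=\sum_j (u_j)_{M'}$. This is again a $+$ node.
\item \textbf{$\times$ nodes.} For a $\times$ node $v=\prod_{i=1}^{t'}u_i$ with $t'\le t$, the Leibniz rule makes $\coeff_{M'}(v)$ a sum of products. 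This sum-of-products would normally add a layer. We avoid that because, by alternation and condition~1, the children $u_i$ are $+$ nodes or leaves, and every parent of $v$ is a $+$ node (condition~2 guarantees that the output is a $+$ node, so $v$ is never the output). So we create one $\times$ node $v_{(M_1,\dots,M_{t'})}=\prod_i (u_i)_{M_i}$ for each ordered factorization, and we defer the summation over factorizations into the parent $+$ node. That is, the parent $+$ node $p$ now sums over all children $v$ and all factorizations of the target $M'$.
\end{enumerate}
Leaves feed only into $+$ nodes, so no leaf ever needs an extra layer. The resulting circuit therefore has exactly the same layered structure and depth $\Delta$.

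For the size, the number of ordered factorizations of $M'$ into $t'\le t$ parts is at most $(t')^{d}\le 2^{d\log t}$, since each of the $\le d$ variable occurrences is assigned to one of the $t'$ factors. Combined with the $\le 2^d$ choices of $M'$, this gives at most $2^{d}\cdot t^{d}\le 2^{(t+d)d}$ new gates per original gate, hence total size $O(2^{(t+d)d}s)$.

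For the syntactic degree, each new node $v_{M'}$ (or $v_{(M_1,\dots)}$) has syntactic degree at most $\sdeg(v)$. We prove this by induction: leaves have syntactic degree $\le 1$ in the new circuit, sums take maxima, and products add the bounds of their children, matching the original. So the syntactic degree is at most $\ell\le \ell^{O(1)}$.

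The main obstacle is the depth preservation at $\times$ gates, namely checking that merging the factorization-sum into the parent $+$ gate is always possible. This needs exactly conditions~1–3. A $\times$ gate with several $+$ parents is handled because each parent takes its own sum over the shared product nodes. The remaining step is careful bookkeeping for the counting bound, which is routine.
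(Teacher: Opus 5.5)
Your proof is correct. The paper states this proposition in its preliminaries without proof, so there is no in-paper argument to match.

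The closest relative in the paper is the gate-by-gate procedure behind $\CoeffT$ in the formalisation section. It tracks truncated coefficient vectors and uses the same count $\prod_j(a_j+1)\le 2^{d}$ of monomial pairs. However, it evaluates a fan-in-$t$ product by a balanced tree of binary truncated products, which adds $O(\log t)$ layers and so does not preserve depth. Your extra step is what depth preservation requires: you create one $\times$ gate per ordered factorization $(M_1,\dots,M_{t'})$ and push the sum over factorizations into every $+$ parent. Then every directed path in the new circuit projects onto a path of the same length in $C$, and the syntactic-degree induction goes through unchanged.

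Your construction also gives the sharper size bound $O(2^{d}t^{d}s)=O((2t)^{d}s)$. This lies within the stated $O(2^{(t+d)d}s)$ because $1+\log_2 t\le t$ for integers $t\ge 1$. Your aside about $\binom{n'+d}{d}$ is unnecessary, since $\prod_j(a_j+1)\le 2^{d}$ already settles the divisor count.

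One caveat concerns the hypotheses themselves. Conditions~1 and~3, read literally, are incompatible: a $+$ gate may have only $\times$ children, yet leaves must feed $+$ gates. Your argument uses only two facts, which hold under any sensible repair of these conditions: every parent of a $\times$ gate is a $+$ gate, and no $\times$ gate is an output. So the construction is unaffected.
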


    \subsubsection{Encoding of algebraic circuit upper bound formulas}

    \begin{definition}[Algebraic circuit upper bound formula]
        \label{def: Algebraic circuit upper bound formula}
        Let $f(\overline{x})\in \mathbb{F}[\overline{x}]$ be a polynomial in $n$ variables of degree at most $\ell$. The following set of polynomial equations $\AUB(f,s,\ell)$ (in the $\overline{w}$-variables only) expresses that the polynomial $f(\overline{x})$ can be computed by an algebraic circuit of size $s$:
        \[
            \left\{\coeff_{M_i} (\widehat{U}(\overline{x},\overline{w})) = b_i : 1\leq i \leq N\right\},
        \]
        where $\overline{b}=\coeff(f(\overline{x})) \in \mathbb{F}^N$ is the $N$-dimensional coefficient vector of the polynomial $f$, $U(\overline{x},\overline{w})$ is the universal circuit for polynomials of degree at most $\ell$ and circuit size at most $s$, $\overline{w}$ are the auxiliary edge variables of $U$, ${\{M_i\}}_{i=1}^N$ is the set of all possible $\overline{x}$-monomials of degree at most $\ell$, and $N=\sum_{j=0}^\ell \binom{n+j-1}{j}=2^{O(n+\ell)}$ is the number of monomials of total degree at most $\ell$ over $n$ variables. The size of this set of polynomial equations is $O(7^\ell\ell^3s^8N)$.

    \end{definition}

    \begin{definition}[Bounded-depth algebraic circuit upper bound formula]
        \label{def: Constant-depth algebraic circuit upper bound formula}
        Let $f(\overline{x})$ be a polynomial over $\mathbb{F}$ in $n$ variables $\overline{x}$ of degree at most $\ell$. The following set of polynomial equations $\AUB(f,s,\ell,\Delta)$ (in the $\overline{w}$-variables only) expresses that the polynomial $f(\overline{x})$ can be computed by an algebraic circuit of size $s$ and depth $\Delta$:
        \[
            \left\{\coeff_{M_i} (\widehat{U}(\overline{x},\overline{w})) = b_i : 1\leq i \leq N\right\},
        \]
        where $\overline{b}=\coeff(f(\overline{x})) \in \mathbb{F}^N$ is the $N$-dimensional coefficient vector of the polynomial $f$, $U(\overline{x},\overline{w})$ is the bounded-depth universal circuit for polynomials computable by circuits of depth at most $\Delta$ and size at most $s$, $\overline{w}$ are the $K_{s,\Delta}$ edge variables, ${\{M_i\}}_{i=1}^N$ is the set of all possible $\overline{x}$-monomials of degree at most $\ell$, and $N=\sum_{j=0}^\ell \binom{n+j-1}{j}=2^{O(n+\ell)}$ is the number of monomials of total degree at most $\ell$ over $n$ variables. The size of this set of polynomial equations is $O(2^{(t+\ell)\ell}\cdot \poly(s) \cdot N)$, where $t$ is the maximum multiplication fan-in in $U(\overline{x},\overline{w})$.\par
    \end{definition}

    \begin{definition}[Straight Line Program ($\SLP$)]\label{def: Straight line program SLP}
        An $\SLP$ of a circuit $C(\overline{x})$, denoted by $\SLP(C(\overline{x}))$, is a sequence of equations between variables such that, assuming all the equations hold, the extension variable corresponding to the output node equals the value computed by the circuit. Formally, we choose any topological order $g_1,g_2, \ldots,g_i,\ldots,g_{|C|}$ on the nodes of the circuit $C$ (that is, if there is a directed path from $g_j$ to $g_k$ in $C$, then $j<k$) and define the following set of equations to be the $\SLP$ of $C(\overline{x})$:
        \begin{center}
            $g_i = g_{j1} \circ g_{j2} \circ \cdots \circ g_{jt}$ for $\circ \in \{+, \times \}$ iff $g_i$ is a $\circ$ node in $C$ with $t$ incoming edges from $g_{j1},\ldots,g_{jt}$.
        \end{center}
        An $\SLP$ representation of a circuit equation $C(\overline{x})=0$ means that we add to the $\SLP$ above the equation $g_{|C|}=0$, where $g_{|C|}$ is the output node of the circuit.
    \end{definition}

    \subsection{Proof complexity}\label{sec:propositional proofs}
 
        \begin{definition}[Polynomial Calculus Resolution~\cite{MR1919962}]\label{def: PCR}
            Let $\F$ be a field. \emph{Polynomial Calculus Resolution over $\F$ ($\PCR_\F$)} is a proof system whose lines are polynomials in $\F[x_1, \ldots, x_n, \overline{x_1}, \ldots, \overline{x_n}]$, where $\overline{x_1},\ldots,\overline{x_n}$ are treated as new formal variables. $\PCR$ has Boolean axioms for all variables, including the variables $\overline{x_i}$, together with the additional axioms $x_i + \overline{x_i}=1 : i \in [n]$.
            Each line is either an axiom or is derived from the previous lines using the following inference rules:
            \[
                \frac{f \qquad g}{\alpha f + \beta g}
                \quad\text{and}\quad
                \frac{f}{x\cdot f},
            \]
            where $\alpha, \beta \in \mathbb{F}$ are any scalars and $x$ is a variable. 
    
            For a clause $C$, denote by $\Gamma_C$ the monomial
            \begin{equation*}
                \Gamma_C \coloneqq \prod_{\overline{x} \in C} x \cdot \prod_{x \in C} \overline{x}
            \end{equation*}
            and for a CNF $\tau$, let $\Gamma_\tau \coloneqq \{\Gamma_C : C \in \tau \}$. Note that $\tau$ is unsatisfiable if and only if the polynomials in $\Gamma_\tau$, together with the Boolean and complementarity axioms of $\PCR$, have no common root. A \emph{$\PCR$ refutation} of a CNF $\tau$ is a $\PCR$ proof of the contradiction $1=0$ from $\Gamma_\tau$.

            The \emph{degree} of a $\PCR$ proof is defined as the maximum degree of a polynomial appearing in it, and its \emph{size} is the number of distinct \emph{monomials} appearing in the proof.
        \end{definition}


        It is convenient to reduce all polynomials appearing in $\PCR$ proofs modulo the ideal generated by the Boolean axioms. This makes the proof lines multilinear, and can only decrease their size and degree.

        We shall also use the system \NCTwo-Frege, which is an inference-based propositional proof system whose proof lines are represented by \NCTwo-circuits. These are fan-in-two De Morgan circuits of depth $O(\log^2 n)$, where $n$ is the number of input variables.
        We refer the reader to~\cite{HT15} for a formal definition of this system. 

        \paragraph*{Ideal Proof Systems}
            Given $f_1, \ldots, f_m \in \mathbb{F}[x_1, \ldots,x_n]$ over some field $\mathbb{F}$, Hilbert's Nullstellensatz shows that the polynomial system $f_1(\overline{x})=\cdots = f_m(\overline{x})=0$ is unsatisfiable (over the algebraic closure of $\mathbb{F}$) if and only if there are polynomials $g_1,\ldots,g_m \in \mathbb{F}[\overline{x}]$ such that $\sum_{j=1}^m g_j(\overline{x}) f_j(\overline{x})=1$ (as a formal identity), or equivalently, that $1$ is in the ideal generated by ${\{f_j\}}_{j=1}^m$ (we use the notation from~\cite{FSTW21}).
             \begin{definition}[(Boolean) Ideal Proof System ($\IPS$)~\cite{GP18}] \label{def: Boolean ips}
                Let $f_1(\overline{x}), \ldots, f_m(\overline{x}), p(\overline{x})\in \mathbb{F}[x_1,\ldots,x_n]$ be a collection of polynomials. An $\IPS$ proof of $p(\overline{x})=0$ from ${\{f_j(\overline{x})\}}_{j=1}^m$, showing that $p(\overline{x})=0$ is semantically implied from the assumptions ${\{f_j(\overline{x})=0\}}_{j=1}^m$ over $0$-$1$ assignments, is an algebraic circuit
                \[
                    C(\overline{x},\overline{y},\overline{z}) \in \mathbb{F}[\overline{x}, y_1,\ldots,y_m,z_1,\ldots,z_n],
                \]
                such that the following equalities are formal polynomial identities.\footnote{That is, $C(\overline{x},\overline{0},\overline{0})$ computes the zero polynomial and $C(\overline{x},f_1(\overline{x}),\ldots,f_m(\overline{x}),x_1^2-x_1, \ldots, x_n^2 -x_n)$ computes the polynomial $p(\overline{x})$}
                \begin{enumerate}[leftmargin=*]
                    \item $C(\overline{x},\overline{0},\overline{0})=0$.
                    \item $C(\overline{x},f_1(\overline{x}), \ldots, f_m(\overline{x}),x_1^2-x_1 , \ldots, x_n^2 -x_n)=p(\overline{x})$.
                \end{enumerate}

                The size of the $\IPS$ proof is the size of the circuit $C$. The variables $\overline{y},\overline{z}$ are sometimes called the \emph{placeholder variables} since they are used as placeholders for the axioms. An $\IPS$ proof $C(\overline{x},\overline{y},\overline{z})$ of $1=0$ from ${\{f_j(\overline{x})=0\}}_{j=1}^m$ is called an $\IPS$ refutation of ${\{f_j(\overline{x})=0\}}_{j=1}^m$.
                If $C$ comes from a restricted class of algebraic circuits $\mathcal{C}$, then such a refutation is called a $\mathcal{C}$-$\IPS$ refutation. When $\calC$ is the class of all algebraic circuits of depth at most $\Delta$, we write $\Delta$-$\IPS$ to denote $\calC$-$\IPS$.
            \end{definition}

\section{Hardness of Algebraic Circuit Lower Bounds from Hardness of Weak Rank Principle}\label{sec: pcr lower bound}

    We define the \emph{polynomial system $\lb(f,s,\ell)$} as a formalisation stating that a polynomial $f$ can be computed by an algebraic circuit of size $s$ and syntactic degree $\ell$.  
    Informally, this involves encoding the type of each gate (whether it is a variable, constant, $+$ gate, or $\times$ gate), the inputs of the gate, and the polynomial it computes by specifying the coefficient vector over all possible monomials of degree at most $\ell$. Additionally, it is convenient to introduce auxiliary variables to represent the polynomials computed by the inputs of $+$ and $\times$ gates. The axioms of $\lb(f,s,\ell)$ ensure the correctness of the encoded circuit.  
    This definition closely resembles the lower bound formula from~\cite{Razb15-annals}.
    \begin{definition}[$\lb(f,s,\ell)$]
        Let $f(\overline{u})$ be a polynomial in $n$ variables. Let $N$ be the number of $\overline{u}$-monomials of degree at most $\ell$ and ${\{M_i\}}_{i \in [N]}$ be all such monomials. The variables of $\lb(f,s,\ell)$ together with their intended meaning are listed in \cref{fig: variables of lb formula}.

        \begin{table}[t]
            \centering
            \begin{tcolorbox}[tableborder]
            \begingroup
            \renewcommand{\arraystretch}{1.25}
            \setlength{\tabcolsep}{5pt}
            \begin{tabularx}{\textwidth}{@{} p{0.35 \textwidth} X @{}}
                Variable & Meaning \\
                \hline
                $\alpha_{M_i,v} : i \in [N]$ & the coefficient of $M_i$ in the polynomial computed by the gate $v$. \\
                \hline
                $\alpha_{M_i,v,b} : i \in [N], b \in [s]$ & the coefficient of $M_i$ in the polynomial computed by the $b$th input of $v$. \\
                \hline
                $\Type(v)$ & $0$ if $v$ is a $+$ gate and $1$ if $v$ is a $\times$ gate. \\
                \hline
                $\InputType_b(v) : b \in [s]$ & $0$ if the $b$th input of $v$ is a variable or a constant and $1$ if it is a gate. \\
                \hline
                $\InputType'_b(v) : b \in [s]$ & when $\InputType_b(v)=0$, it is $0$ if the $b$th input of $v$ is a constant and $1$ if it is a variable. \\
                \hline
                $\InputType''_b(v) : b \in [s]$ & when $\InputType_b(v)=\InputType'_b(v)=0$, it is the constant value of the $b$th input of $v$. \\
                \hline
                $\InputVar_b(v, j) : b \in [s], j \in [n]$ & when $\InputType_b(v)=0$ and $\InputType'_b(v)=1$, it is $1$ iff the $b$th input of $v$ is the $x_j$. \\
                \hline
                $\InputNode_b(v, v') : b \in [s], v' < v$ & when $\InputType_b(v)=1$, it is $1$ iff the $b$th input of $v$ is the gate $v'$. 
            \end{tabularx}
            \endgroup
            \end{tcolorbox}
            \caption{Variables of $\lb(f,s,\ell)$, for every gate $v \in [s]$.}\label{fig: variables of lb formula}
        \end{table}

        For every $v \in [s]$, the axioms of $\lb(f,s,\ell)$ are the following polynomials:
        \begin{align*}
            & (1-\InputType_b(v)) (1-\InputType'_b(v)) (\alpha_{1,v,b}-\InputType''_b(v)),
            \shortintertext{for all $b \in [s]$, making $\InputType''_b(v)$ consistent with $\alpha_{1,v,b}$ when the $b$th input of $v$ is a constant,}
            & (1-\InputType_b(v)) (1-\InputType'_b(v)) \alpha_{M_i,v,b},
            \shortintertext{for all $b \in [s], i \in [N], M_i \neq 1$, setting $\alpha_{M_i,v,b}$ to $0$ when the $b$th input of $v$ is a constant,}
            & (1-\InputType_b(v)) \InputType'_b(v) \InputVar_b(v, j) \InputVar_b(v, j'),
            \shortintertext{for all $b \in [s], j\neq j' \in [n]$, ensuring that at most one $\InputVar_b(v,j)$ is set to $1$ when the $b$th input of $v$ is a variable,}
            & (1-\InputType_b(v)) \InputType'_b(v) \left(\sum_{j \in [n]} \InputVar_b(v, j)-1\right),
            \shortintertext{for all $b \in [s]$, ensuring that at least one $\InputVar_b(v,j)$ is set to $1$ when the $b$th input of $v$ is a variable,}
            & (1-\InputType_b(v)) \InputType'_b(v) \InputVar_b(v, j) (\alpha_{x_j,v,b}-1),
            \shortintertext{for all $b \in [s], j \in [n]$, making $\InputVar_b(v, j)$ consistent with $\alpha_{x_j,v,b}$ when the $b$th input of $v$ is a variable,}
            & (1-\InputType_b(v)) \InputType'_b(v) \InputVar_b(v, j) \alpha_{M_i,v,b},
            \shortintertext{for all $b \in [s], j \in [n], i \in [N], M_i \neq x_j$, setting $\alpha_{M_i,v,b}$ to $0$ when the $b$th input of $v$ is a variable,}
            & \InputType_b(v) \InputNode_b(v, v') \InputNode_b(v, v''),
            \shortintertext{for all $b \in [s], v'\neq v'' < v$, ensuring that at most one $\InputNode_b(v, v')$ is set to $1$ when the $b$th input of $v$ is a gate,} 
            & \InputType_b(v) \left(\sum_{v' < v}\InputNode_b(v, v')-1\right),
            \shortintertext{for all $b \in [s]$, ensuring that at least one $\InputNode_b(v, v')$ is set to $1$ when the $b$th input of $v$ is a gate,}
            & \InputType_b(v) \InputNode_b(v, v') (\alpha_{M_i,v,b}-\alpha_{M_i,v'}),
            \shortintertext{for all $b \in [s],v'<v,i\in[N]$, making $\alpha_{M_i,v,b}$ and $\alpha_{M_i,v'}$ consistent when the $b$th input of $v$ is a gate,}
            & (1-\Type(v)) \left(\alpha_{M_i,v} - \sum_{b \in [s]} \alpha_{M_i,v,b}\right),
            \shortintertext{for all $i \in [N]$, making $\alpha_{M_i,v}$ equal to the sum of its children when $v$ is a $+$ node,}
            & \Type(v) \left(\alpha_{M_i,v} - \left( \sum_{\substack{i_1, \ldots, i_\ell \in [N] \\ \prod_{b \in [\ell]} M_{i_b} = M_i}} \prod_{b \in [\ell]} \alpha_{M_{i_b},v,b} \right) \prod_{b \in [s] \setminus [\ell]} \alpha_{1,v,b} \right),
            \shortintertext{for all $i \in [N]$, making $\alpha_{M_i,v}$ equal to the product of its children when $v$ is a $\times$ node,}
            & \Type(v) \prod_{b \in [\ell]} \alpha_{M_{i_b},v,b},
            \shortintertext{for all $i_1, \ldots, i_\ell \in [N]$ with $\deg(\prod_{b \in [\ell]} M_{i_b}) > \ell$, setting all high-degree monomials to $0$,}
            & \Type(v) \alpha_{M_i,v,b},
            \shortintertext{for all $b \in [s] \setminus [\ell], i \in [N], M_i \neq 1$, ensuring that only the first $\ell$ inputs of $v$ are non-constant polynomials,}
            & \alpha_{M_i,s} - \coeff_{M_i}(f),
        \end{align*}
        for all $i \in [N]$, making $\alpha_{M_i,s}$ (the coefficient vector of the output gate) consistent with the coefficient vector of $f$.

        The size of $\lb(f,s,\ell)$ is $\poly(s, n, N^\ell)$.
    \end{definition}

    \begin{remark}
        We limit the syntactic degree of the circuit to $\ell$, thus every $\times$ node can have at most $\ell$ inputs computing non-constant polynomials. This allows us to assume that these input gates are among the first $\ell$ inputs, thus making the size of the encoding significantly smaller: it suffices to consider only $N^\ell$ tuples of monomials instead of $\binom{N}{s}$.
    \end{remark}

    We demonstrate the hardness of $\lb(f,s,\ell)$ by reducing an iterated encoding of the negation of the \emph{weak rank principle}~\cite{GGRT25} to it. The weak rank principle states that an $m\times m$ matrix of rank greater than $n$ cannot be expressed as the product of an $m\times n$ matrix and an $n\times m$ matrix. Under our substitution, every axiom of $\lb(f,s,\ell)$ has a short $\PCR$ derivation from the iterated-rank axioms; hence a short refutation of $\lb(f,s,\ell)$ would yield a short refutation of the iterated rank system. We first define the non-iterated polynomial system $\RankP_n^m(A)$ encoding the negation of the rank principle.

    \begin{definition}[The (Weak) Rank Principle~\cite{GGRT25}]\label{def: rank principle}
        Let $\F$ be a field and $m$ and $n$ two positive integers with $m > n$. We denote by $\RankP^m_n(A)$ the system of degree-$2$ polynomial equations stating that the rank of an $m \times m$ matrix $A$ is at most $n$ over $\F$. More precisely, $\RankP_n^m(A)$ is defined over $2mn$ variables arranged into two matrices $X \in \F^{m \times n}$ and $Y \in \F^{n \times m}$. For every $i,j \in [m]$, there is an equation in $\RankP_n^m(A)$ stating that the $(i,j)$th entry of the product $XY$ is equal to $A_{i,j}$. That is, $\RankP_n^m(A)$ consists of the polynomials
        \begin{align*}\label{equa: rank principle}
            &\sum_{k \in [n]} x_{i,k} y_{k,j} - A_{i,j}, & i,j \in [m].
        \end{align*}
    \end{definition}

    By linear algebra, when the rank of $A$ exceeds $n$, the system $\RankP_n^m(A)$ is unsatisfiable.

    To perform the above reduction, we define an iterated version of $\RankP^m_n(A)$. Let $L$ and $n$ be positive integers and $\Sigma = \{0, \ldots, L\}$ an alphabet of size $L+1$. We denote by $\Sigma^{\le n}$ the set of strings over $\Sigma$ of length at most $n$. Similarly, $\Sigma^{< n}$ denotes the set of all strings over $\Sigma$ of length strictly less than $n$. Let $K$ be a positive integer. For every string $\pi \in \Sigma^{\leq n}$, let $X^{(\pi)} = {(x_{i,k}^{(\pi)} )}_{i \in [(L+1)K], k \in [K]}$ be an $(L+1)K \times K$ variable matrix. Let $Y = {(y_{k,j})}_{k \in [K], j \in [(L+1)K]}$ be a $K \times (L+1)K$ variable matrix. For every string $\pi \in \Sigma^n$, let $A^{(\pi)} $ be an $(L+1)K \times (L+1)K$ constant matrix.

    \begin{definition}[Iterated Rank Principle~\cite{GGRT25}]\label{def: Iterations of Rank Principle}
        Let $L,n,K$ be positive integers, $\Sigma=\{0,\ldots,L\}$, and let $\{A^{(\pi)} : \pi \in \Sigma^n\}$ be a set of $(L+1)K \times (L+1)K$ constant matrices. We define $\IRankP_{L,n,K}(\{A^{(\pi)}\})$ as the system of the following polynomial axioms:
        \begin{align*}
            &\sum_{k \in [K]} x_{i,k}^{(\pi)} y_{k,j} - x_{i, j - \lfloor \frac{j-1}{K} \rfloor K}^{(\pi \ast \lfloor \frac{j-1}{K} \rfloor)}, & \forall \pi \in \Sigma^{< n}, i,j \in [(L+1)K],\\
            &\sum_{k \in [K]} x_{i,k}^{(\pi)} y_{k,j} - A_{i,j}^{(\pi)}, & \forall \pi \in \Sigma^n, i,j \in [(L+1)K].
        \end{align*}
        Namely, $\IRankP_{L,n,K}(\{A^{(\pi)}\})$ contains all the degree-$2$ polynomial equations in the following matrix multiplications:
        \begingroup\setlength{\arraycolsep}{2pt}
        \begin{align*}
             & X^{(\pi)} Y = \begin{bmatrix} X^{(\pi \ast 0)} & X^{(\pi\ast 1)} & \cdots & X^{(\pi \ast L)} \end{bmatrix}, & \forall \pi \in \Sigma^{< n}, \\
             & X^{(\pi)} Y = A^{(\pi)}, & \forall \pi \in \Sigma^n.
        \end{align*}\endgroup
    \end{definition}

    \begin{definition}[Iterated Rank Principle with extension variables~\cite{GGRT25}]\label{def: iterations of rank principle with extension variables}
        Let $L,n,K$ be positive integers, $\Sigma=\{0,\ldots,L\}$ and let $\{A^{(\pi)} : \pi \in \Sigma^n\}$ be a set of $(L+1)K \times (L+1)K$ constant matrices. We define $\IRankPE_{L,n,K}(\{A^{(\pi)}\})$ as the polynomial system consisting of the axioms of $\IRankP_{L,n,K}(\{A^{(\pi)}\})$ together with the following axioms:
        \begin{align*}
            &z_{i,j,k}^{(\pi)} - x_{i,k}^{(\pi)} y_{k,j}, & \forall \pi \in \Sigma^{\le n}, i,j \in [(L+1)K], k \in [K].
        \end{align*}
    \end{definition}

    We need these extension variables because the axioms of $\IRankP_{L,n,K}(\{A^{(\pi)}\})$ are inner products and thus require additional gates to be computed by an algebraic circuit. Namely, $z_{i,j,k}^{(\pi)}$ correspond to the $\times$ gates computing the products $x_{i,k}^{(\pi)} y_{k,j}$, which we later add by using a $+$ gate. The details of this construction are given in the proof of \cref{theorem: super-poly lower bound for AUP} below.

    The monomial size lower bound on $\IRankPE_{L,n,K}(\{A^{(\pi)}\})$ against $\PCR_{\F_2}$ is given in~\cite{GGRT25}.
    \begin{theorem}[\cite{GGRT25}]\label{theorem: exp lower bound for IRankP}
        Every $\PCR_{\F_2}$ refutation of $\IRankPE_{L, n ,K}(\{A^{(\pi)}\})$ requires $2^{K^{\Omega(1)}}$ monomials.
    \end{theorem}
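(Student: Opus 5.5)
The plan is the standard two-stage template for $\PCR$ size lower bounds: a \emph{degree} lower bound for a restricted version of the system, converted into a \emph{size} lower bound by a random restriction. The restriction must kill every high-degree monomial of a hypothetical small refutation while leaving an instance of the same (iterated) rank principle with parameter $K' = \Omega(K)$. Since $\PCR$ has both $x$ and $\overline{x}$ as variables, setting a variable to either $0$ or $1$ kills monomials containing the corresponding literal. Moreover, the extension axioms $z^{(\pi)}_{i,j,k} - x^{(\pi)}_{i,k}y_{k,j}$ have degree $2$ and are preserved under restriction. So it suffices to design a restriction family that keeps $\IRankPE$ closed under restriction.

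\textbf{Step 1 (restriction).} First I would pick a random subset $S\subseteq[K]$ of the inner indices, each kept with constant probability. For each $k\notin S$, I would fix the row $y_{k,\cdot}$ and the columns $x^{(\pi)}_{\cdot,k}$ consistently with a partial identity/permutation block. The matrix equations $X^{(\pi)}Y = [X^{(\pi\ast 0)}\cdots X^{(\pi\ast L)}]$ should then decouple into the same iterated equations on the surviving $|S|\times|S|$ blocks, with the constant matrices $A^{(\pi)}$ replaced by their restrictions. The $z$-variables touching a killed index are set to the product of the fixed values. A monomial of degree $d$ in the original refutation involves $\Omega(d)$ distinct indices or literals, each killed with constant probability. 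Hence, for a refutation with $S_0 < 2^{\epsilon d}$ monomials, a union bound gives a restriction under which every monomial of degree $>d$ vanishes, leaving a degree-$\le d$ $\PCR_{\F_2}$ refutation of an instance with $K'=\Omega(K)$.

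\textbf{Step 2 (degree lower bound).} I would show that any $\PCR_{\F_2}$ refutation of $\IRankPE$ with parameter $K'$ requires degree $K'^{\Omega(1)}$. The route is a degree-preserving reduction from the (functional) pigeonhole principle, whose $\PC$ degree lower bound over any field is due to Razborov. Take the $A^{(\pi)}$ to contain an identity block of size exceeding $K'$. Substitute for the $X$ and $Y$ entries low-degree polynomials in pigeonhole variables $p_{a,b}$, encoding a partial injection, so that $XY=A$ becomes a consequence of the $\PHP$ axioms. Each axiom of $\IRankPE$ must then have a constant-degree $\PCR$ derivation from $\PHP$. The iteration is handled by making the substitution for $X^{(\pi\ast c)}$ a shifted copy of that for $X^{(\pi)}$, so that the block equations follow locally. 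Alternatively, one could construct a Razborov-style pseudo-expectation operator $R$ on multilinear polynomials of degree $\le d$, sending each axiom times a low-degree monomial to $0$ and $1$ to $1$.

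\textbf{Step 3 (combine).} Taking $d = K^{\delta}$ below the degree bound from Step 2 and applying Step 1 yields the $2^{K^{\Omega(1)}}$ monomial bound.

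I expect the main obstacle to be Step 2 together with the compatibility requirement in Step 1. The restriction must simultaneously respect all $(L+1)^n$ levels of the iteration through the single shared matrix $Y$. The pigeonhole embedding must therefore be chosen uniformly across levels so that the restricted instance is again of the form $\IRankPE_{L,n,K'}$ for which the degree lower bound applies. If that fails, I would first try making the restriction act only on $Y$ and on the inner index $k$, which is uniform over all $\pi$, and prove the degree bound directly for that restricted family.
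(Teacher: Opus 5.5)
The paper does not prove this statement; it imports it from \cite{GGRT25}. Its closest in-paper argument is the proof of \cref{theorem: exp lower bound for TRankP}, which is modelled on the rank-principle bound of \cite{GGRT25} and has the same two-stage shape as yours: a $\PHP$ reduction forcing some term to mention many inner indices, then a random restriction on inner indices. Your proposal has genuine gaps: it constructs neither stage, and the constructions you suggest would fail.

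First, in Step 2 you choose the instance (``take the $A^{(\pi)}$ to contain an identity block''). The theorem concerns the given family $\{A^{(\pi)}\}$. In the proof of \cref{theorem: super-poly lower bound for AUP} it is applied with $A^{(\pi)}$ equal to $\coeff_{\overline{u}^\pi}(f)$ in position $(1,1)$ and $0$ elsewhere. These matrices have rank at most $1$, so each leaf equation $X^{(\pi)}Y=A^{(\pi)}$ is satisfiable on its own. Unsatisfiability is global, coming from the single shared $Y$ and the complexity of $f$, so there is no identity block to embed $\PHP$ into. The iteration also defeats the obvious embeddings, because $X^{(\pi\ast d)}=X^{(\pi)}Y^{[d]}$ with the same $Y$ at every level. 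Let $P$ be the pigeonhole-variable matrix and take $X^{()}=P$, $Y=P^{T}$. Then $X^{(d)}$ become constant identity blocks and $X^{(d\ast e)}$ become entries of $P$. The next level consists of the quadratic forms $\sum_{k'}p_{eK+k',k}\,p_{j,k'}$, so the degree grows with depth. If instead you make the internal equations hold exactly (say, constant permutation blocks in $Y$), the leaf equations break. Your ``shifted copy'' remark does not address this. A degree lower bound valid for the given $\{A^{(\pi)}\}$, including rank-one leaf targets as in the paper's application, is the real content of \cite{GGRT25}, and it is missing.

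Second, the restriction in Step 1 does not work in \PCR. Fixing killed indices to partial identity or permutation blocks sets almost all variables to $0$. A monomial built from negated literals $\overline{x}^{(\pi)}_{i,k}$, $\overline{y}_{k,j}$ then survives with probability close to $1$. So your claim that each literal is killed with constant probability is false for this restriction, and the union bound fails. This is exactly where $\F_2$ is needed, and your argument never uses any property of the field. As in the proof of \cref{theorem: exp lower bound for TRankP}, all variables attached to a killed index $k$ receive one random bit $c_k$, subject to $\sum_k c_k\equiv 0 \pmod 2$. The killed contributions $\sum_k c_k^2$ then cancel over $\F_2$, and every literal touching $k$ is falsified with probability $1/2$. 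In the iterated system this must also be reconciled with two facts. The index $k$ reappears as the column index $k+dK$ of $Y$ and of the children $X^{(\pi\ast d)}$, and the leaf equations use the same $Y$. Consider the most direct implementation: set the killed coordinates of every $X^{(\pi)}$ to $c_k$ and zero the surviving rows of $Y$ in column $k+dK$. Then the internal and leaf equations for that column have the same left-hand side. This yields $c_k=A^{(\pi)}_{i,k+dK}$, that is, $c_k=0$ for the paper's instances, which defeats the restriction. You identify this compatibility as the main obstacle but leave it open, so Step 1 is not established either.
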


    \begin{theorem}\label{theorem: super-poly lower bound for AUP}
        There exists a constant $c>0$ such that given a polynomial $f$ of degree at most $\ell$ over $\F_2$ in $n$ variables and $s \ge n^c$, every $\PCR_{\F_2}$ refutation of $\lb(f,s,\ell)$ requires superpolynomially many monomials.
    \end{theorem}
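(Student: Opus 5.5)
We plan to reduce $\IRankPE_{L,n',K}(\{A^{(\pi)}\})$ to $\lb(f,s,\ell)$ by a low-degree substitution and then invoke \cref{theorem: exp lower bound for IRankP}. Every variable of $\lb(f,s,\ell)$ is mapped to a constant or to a variable of the iterated rank system. Each axiom of $\lb(f,s,\ell)$ must then become a polynomial with a polynomial-size, constant-degree $\PCR_{\F_2}$ derivation from the axioms of $\IRankPE$. Composing a hypothetical small refutation of $\lb(f,s,\ell)$ with this substitution and these derivations would give a small refutation of $\IRankPE$. Substituting variables by variables or constants does not increase the monomial count, and the derivations of the axioms add only polynomially many monomials. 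Choosing $K$ polynomial in $n$ (so $K^{\Omega(1)}$ is superlogarithmic in the size) and $s \ge n^c$ large enough to host the construction, the $2^{K^{\Omega(1)}}$ lower bound would then be superpolynomial in $|\lb(f,s,\ell)|$.

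\textbf{Choice of circuit skeleton.} We fix a concrete circuit topology of size at most $s$, which the substitution hardwires. Concretely, we set all $\Type$, $\InputType$, $\InputType'$, $\InputType''$, $\InputVar$ and $\InputNode$ variables to constants describing this topology. The intended circuit mimics the iterated rank equations. Its gates hold the entries of $Y$, of the $X^{(\pi)}$ and of the extension products $z^{(\pi)}_{i,j,k}$:
\begin{itemize}
\item a $\times$ gate computes $x^{(\pi)}_{i,k}y_{k,j}$, with coefficient vector given by the variable $z^{(\pi)}_{i,j,k}$;
\item a $+$ gate sums these products over $k\in[K]$, and the axiom of $\IRankP$ identifies the sum with the next-level entry $x^{(\pi\ast t)}$ or with the constant $A^{(\pi)}_{i,j}$.
\end{itemize}
The coefficient variables $\alpha_{M_i,v}$ are mapped to rank-system variables, following Raz's and \cite{GGRT25}'s idea of encoding the coefficients of the polynomials computed along a tree of depth $n'$ whose leaves are indexed by $\Sigma^{n'}$. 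The leaf matrices $A^{(\pi)}$ are derived from the coefficient vector of $f$, so that the output axiom $\alpha_{M_i,s}-\coeff_{M_i}(f)$ becomes a rank-system axiom or the constant $0$. We take $L+1$ to be the number of relevant monomials, or the arity per level, and $n'$ so that the total number of gates, $\poly(K\cdot (L+1)^{n'})$, is at most $s$. This is where $s\ge n^c$ enters. We may pad unused gates with trivial constant gates whose axioms are satisfied identically.

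\textbf{Checking the axioms.} After substitution, the type-consistency axioms (constants, variables, $\InputNode$ uniqueness) become identically zero because the topology is fixed consistently. The copy axioms $\InputNode_b(v,v')(\alpha_{M_i,v,b}-\alpha_{M_i,v'})$ vanish if we substitute the same variable for both sides. The $+$ gate axioms become exactly $\IRankP$ axioms. The $\times$ gate axioms become the extension axioms $z-xy$, together with high-degree vanishing axioms, which are products of variables that we must arrange to be zero or derivable. Each such derivation has constant degree and polynomially many monomials.

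\textbf{Main obstacle.} The step we expect to be hardest is designing the substitution for the $\times$ gates. The multiplication axiom sums over all $\ell$-tuples of monomials whose product is $M_i$, so it is a genuinely high-degree convolution of coefficient vectors. It must collapse, under the substitution, to the single extension axiom $z-xy$. Our first attempt is to arrange that each multiplication gate has exactly two non-constant inputs, one carrying $x^{(\pi)}_{i,k}$ as the coefficient of a single designated monomial and the other carrying $y_{k,j}$. All other coefficient variables are substituted by $0$, and the remaining inputs are the constant $1$. Then only one term of the convolution survives. The high-degree vanishing axioms then reduce to $0$, provided the designated monomials have product of degree at most $\ell$. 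The remaining bookkeeping is to verify that this encoding is consistent across levels of the iteration with the $\lfloor (j-1)/K\rfloor$ block indexing.
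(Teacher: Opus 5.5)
Your overall plan is the paper's: hardwire a circuit topology, map coefficient variables to variables of $\IRankPE$, and derive each substituted axiom from the rank system. However, the circuit skeleton you describe cannot work in the regime of the theorem.

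You use one gate per entry of each $X^{(\pi)}$ and per product $z^{(\pi)}_{i,j,k}$, each carrying a single coefficient at a single designated monomial. So the circuit has $\poly(K(L+1)^{n'})$ gates, and the substituted coefficient vector of every gate, in particular the output, is supported on at most as many monomials as there are gates, i.e.\ at most $s$. Take any monomial $M$ of $f$ outside this support. The substituted output axiom $\alpha_{M,s}-\coeff_M(f)$, together with the $+$-gate and copy axioms, gives $\coeff_M(f)=0$, i.e.\ $1=0$, by a trivial linear derivation. So $\lb(f,s,\ell)\restriction\rho$ is easy to refute and nothing transfers. Hence your construction needs $s\ge|\mathrm{supp}(f)|$. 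But then $f$ is computed by the trivial circuit (one $\times$ gate per monomial plus one $+$ gate), $\lb(f,s,\ell)$ is satisfiable, and the theorem is vacuous. For general $f$ the support can have size $\binom{n+\ell}{\ell}$, and with only $s\ge n^c$ no choice of $L,n'$ with $\poly(K(L+1)^{n'})\le s$ can encode it. Relatedly, your sketch never says where the variables $u_1,\ldots,u_n$ of $f$ enter the circuit.

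The missing idea is Raz's packing trick, which is what the paper uses: a whole level of the tree is stored in the coefficient vector of a single gate.
\begin{itemize}
\item The levels are the variables $u_1,\ldots,u_n$ of $f$, and $\Sigma=\{0,\ldots,\deg f\}$.
\item A node $\pi\in\Sigma^t$ corresponds to the monomial $\overline{u}^\pi$, and $g_x[t,i,k]$ computes $\sum_{\pi\in\Sigma^t}x^{(\pi)}_{i,k}\overline{u}^\pi$. That is, $\rho(\alpha_{\overline{u}^\pi,g_x[t,i,k]})=x^{(\pi)}_{i,k}$.
\item Passing to the child $\pi\ast d$ is multiplication by $u_{t+1}^d$. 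The $\times$ gate $g_z[t,i,j,k,d]$ has inputs $g_x[t,i,k]$, $d$ copies of $u_{t+1}$, and a constant input whose $\InputType''$ is substituted by $y_{k,j}$.
\item The $+$ gate $g_x[t+1,i,k]=\sum_{d}\sum_{k'}g_z[t,i,k+dK,k',d]$ realises $X^{(\pi)}Y=[X^{(\pi\ast0)}\cdots X^{(\pi\ast L)}]$ for all $\pi$ simultaneously.
\end{itemize}
This gives $O(n(L+1)^3K^3)$ gates, although the tree has $(L+1)^n$ leaves, which are exactly the monomials $\overline{u}^\pi$. With $A^{(\pi)}_{1,1}=\coeff_{\overline{u}^\pi}(f)$ and $\out=\sum_{k'}g_z[n,1,1,k',0]$, every output axiom becomes derivable.

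This also resolves your main obstacle, though not as you planned. The first factor of each $\times$ gate is supported on many monomials, but all other factors are single variables or constants. So for each target monomial $\overline{u}^\pi u_{t+1}^d$ the convolution has exactly one surviving term, and the axiom collapses to $z^{(\pi)}_{i,j,k}-x^{(\pi)}_{i,k}y_{k,j}$.

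Two bookkeeping points remain.
\begin{itemize}
\item With packing, gates carry monomials of degree up to $n\deg f$. The degree parameter of the encoding must be at least that, otherwise the high-degree vanishing axioms turn into $x^{(\pi)}_{i,k}=0$. The paper makes this explicit in its constant-depth $\IPS$ reduction, where it takes $D=n\ell$.
\item $K$ should be polynomial in $\ell n$, not only in $n$, so that $2^{K^{\Omega(1)}}$ is superpolynomial in $|\lb(f,s,\ell)|=\poly(s,n,N^\ell)$.
\end{itemize}
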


    \begin{proof}
    We prove this theorem by construction a reduction from $\IRankPE_{L, n ,K}(\{A^{(\pi)}\})$ to $\lb(f,s,\ell)$ for some $L$ and $K$ that are polynomial in $s$ and $n$. The reduction is done by substituting the variables of $\lb(f,s,\ell)$ with \emph{constants and  variables} of $\IRankPE_{L, n ,K}(\{A^{(\pi)}\})$, which does not increase the refutation size. Under this reduction, each axiom of $\lb(f,s,\ell)$ is either satisfied or becomes a simple consequence of the axioms of $\IRankPE_{L, n ,K}(\{A^{(\pi)}\})$.

    The high level idea of the construction is as follows: we explicitly define a small algebraic circuit $C$ that computes the given polynomial $f$. We proceed by embedding $C$ into $\lb(f,s,\ell)$ by fixing all the variables that are responsible for the topology of $C$ (i.e., the variables that encode the type and inputs of each gate in $C$). This embedding satisfies all the axioms of $\lb(f,s,\ell)$ except for those that describe the coefficient vectors of the gates in $C$. We show that they have simple derivations from the axioms of $\IRankPE_{L, n ,K}(\{A^{(\pi)}\})$.

    We construct a small circuit $C(\overline{u}) = C(u_1, \ldots, u_n)$ which purportedly computes $f$ and uses variables of $\IRankPE_{L, n ,K}(\{A^{(\pi)}\})$ as constants. Let $L\leq \ell$ be the degree of $f$ and let $\Sigma = \{0,\ldots,L\}$.
    The circuit uses gates $g_x[t,i,k]$ and $g_z[t,i,j,k,d]$, where $t \in \{0,\ldots,n\}$, $i,j \in [(L+1)K]$, $k \in [K]$, and $d \in \Sigma$. We also have a special output gate $\out$.
    For each $\pi \in \Sigma^{\leq n}$, we write $\overline{u}^{\pi}$ to denote the monomial $u_1^{\pi_1} u_2^{\pi_2} \cdots u_{|\pi|}^{\pi_{|\pi|}}$.

    Then for every $\pi \in \Sigma^{n}$, we define the matrix $A^{(\pi)}$ such that $A^{(\pi)}_{1,1} = \coeff_{\overline{u}^\pi(f(\overline{u}))}$ and is zero everywhere else.
    Assuming that $\IRankPE_{L, n ,K}(\{A^{(\pi)}\})$ is satisfied, we require that for each $\pi \in \Sigma^{n}$,
    \begin{equation}\label{equa: requirement for substitution}
        \coeff_{\overline{u}^{\pi}}(\widehat{C}_{\out}) = (X^{(\pi)} Y)_{1,1}.
    \end{equation}

    We define $g_z[t,i,j,k,d] = g_x[t, i, k] y_{k,j} u_{t+1}^d$ for every $i,j \in [(L+1)K], k \in [K]$, $d \in \Sigma$ and $t \in \{0,\ldots,n-1\}$ and $g_z[n,1,1,k,0] = g_x[n, 1, k] y_{k,1}$ (the other gates of the form $g_z[n,i,j,k,d]$ with $(i,j,d) \neq (1,1,0)$ are not used in the construction).
    The gates $g_x[t,i,k]$ are thus defined recursively on $t$, using $g_z$. For every $i \in [(L+1)K]$ and $k \in [K]$:
    \begin{align*}
        g_x[0,i,k] &= x^{()}_{i,k}, & \\
        g_x[t,i,k] &= \sum_{d \in \Sigma} \sum_{k' \in [K]} g_z[t-1, i, k+dK, k', d], & \forall t \in [n]. 
    \end{align*}
    Here, $()$ denotes the empty string.
    The output gate of $C(\overline{u})$ is
    \[
        \out = \sum_{k' \in [K]} g_z[n, 1, 1, k', 0]
    \].
    It is easy to see that $C(\overline{u})$ is an algebraic circuit of size at most $2 n {(L+1)}^3 K^3$.

    Now we verify that \eqref{equa: requirement for substitution} holds for this construction. Before doing this, we start by verifying that for each $t \in \{0,\ldots,n\}$, $i \in [(L+1)K]$ and $k \in [K]$,
    \begin{equation}\label{equa: requirement for substitution helper}
        \widehat{C}_{g_x[t,i,k]} = \sum_{\pi \in \Sigma^{\le t}} x^{(\pi)}_{i,k} \overline{u}^\pi,
    \end{equation}
    can be derived using the axioms of $\IRankPE_{L, n ,K}(\{A^\pi\})$.
    We prove the statement by induction on $t$. The base case holds by the definition of $g[0,i,k]$. The inductive step is as follows:
    \begingroup\allowdisplaybreaks
    \begin{flalign*}
        \widehat{C}_{g_x[t,i,k]} &= \sum_{d \in \Sigma} \sum_{k' \in [K]} g_z[t-1,i,k+dK,k',d]
        & \mathllap{\text{(by definition)}} \\
        &= \sum_{d \in \Sigma} \left( \sum_{k' \in [K]} g_x[t-1,i,k'] y_{k',k+dK} \right) u_t^{d}
        & \mathllap{\text{(expanding $g_z$)}} \\
        &= \sum_{d \in \Sigma} \left( \sum_{k' \in [K]} \left( \sum_{\pi \in \Sigma^{\le t-1}} x_{i, k'}^{(\pi)} \overline{u}^\pi \right) y_{k', k+dK}  \right) u_t^d
        & \mathllap{\text{(by the induction hypothesis)}} \\
        &= \sum_{d \in \Sigma} \left( \sum_{\pi \in \Sigma^{\le t-1}} \left( \sum_{k' \in [K]}  x_{i, k'}^{(\pi)}  y_{k', k+dK} \right) \overline{u}^\pi  \right) u_t^d
        & \mathllap{\text{(switching the summation order)}} \\
        &= \sum_{d \in \Sigma} \left( \sum_{\pi \in \Sigma^{\le t-1}} x_{i, k}^{(\pi \ast d)} \overline{u}^\pi  \right) u_t^d
        & \mathllap{\text{(by the axioms of $\IRankPE_{L, n ,K}(\{A^\pi\})$)}} \\
        &= \sum_{\pi \in \Sigma^{\le t}} x_{i, k}^{(\pi)} \overline{u}^\pi. & \mathllap{\text{(combining $\overline{u}^\pi$ and $u_t^d$)}}
    \end{flalign*}\endgroup

    Similarly, we show that
    \[
        \widehat{C}_{\out} = \sum_{\pi \in \Sigma^{\le n}} A^{(\pi)}_{1,1} \overline{u}^\pi
    \],
    which implies \eqref{equa: requirement for substitution} due to the definition of $A^{(\pi)}$.
    \begin{flalign*}
        \widehat{C}_{\out} &= \sum_{k' \in [K]} g_z[n,1,1,k',0]
        & \mathllap{\text{(by definition)}} \\
        &= \sum_{k' \in [K]} g_x[n,1,k'] y_{k',1}
        & \mathllap{\text{(expanding $g_z$)}} \\
        &= \sum_{k' \in [K]} \left( \sum_{\pi \in \Sigma^{\le n}} x_{1,k'}^{(\pi)} \overline{u}^\pi \right) y_{k', 1}
        & \mathllap{\text{(by \eqref{equa: requirement for substitution helper})}} \\
        &= \sum_{\pi \in \Sigma^{\le n}} \left( \sum_{k' \in [K]} x_{1,k'}^{(\pi)} y_{k', 1} \right) \overline{u}^\pi
        & \mathllap{\text{(switching the summation order)}} \\
        &= \sum_{\pi \in \Sigma^{\le n}} A^{(\pi)}_{1,1} \overline{u}^\pi. & \mathllap{\text{(by the axioms of $\IRankPE_{L, n ,K}(\{A^\pi\})$)}}
    \end{flalign*}

    Now, we present a substitution $\rho$, which embeds the circuit $C$ into $\lb(f,s,\ell)$. We begin by encoding the topology of the circuit by assigning all the $\Type$, $\InputType$, $\InputType'$, $\InputType''$, $\InputVar$ and $\InputNode$ variables to the appropriate constants.
    Slightly abusing the notation, we proceed by setting for each $t \in \{0,\ldots,n\}, i \in [(L+1)K], k \in [K], \pi \in \Sigma^{\le t}$, $d \in \Sigma$,
    \[
        \rho(\alpha_{\overline{u}^\pi, g_x[t,i,k]})=x^{(\pi)}_{i,k}
    \]
    and $\rho(\alpha_{M, g_x[t,i,k]})=0$ for the remaining monomials $M$. The values of $g_z[t,i,j,k]$ are defined similarly.
    Then the inputs of $g_x[t,i,k]$ are defined as the appropriate gates $g_z[t-1,i,k+dK,k',d]$ to satisfy the definition of $C$. The inputs of $g_z[t,i,j,k,d]$ are similarly defined as the gate $g_x[t,i,k]$, followed by $d$ variables $u_{t+1}$ and a constant $y_{k,j}$.
    
    Observe that such a substitution replaces each variable with either a constant or another variable (possibly a negation). This preserves the $\PCR$ size of the refutation.
    It follows from the definition that the axioms of $\lb(f,s,\ell) \restriction \rho$ can be easily derived from the axioms of $\IRankPE_{L,n,K}(\{A^{(\pi)}\})$.
    
    Now suppose that $\lb(f, s, \ell))$ has a $\PCR$ refutation of polynomial size, specifically,
    \[
        \poly(s, n, N^\ell) \le \exp({(\ell n)}^{O(1)})
    \].
    Then, $\IRankPE_{L,n,K}(\{A_\pi\})$ has a $\PCR$ refutation of size $\exp({(\ell n)}^{O(1)})$. By taking $K$ to be sufficiently large (say, ${(\ell n)}^{\Omega(1)}$), this contradicts \cref{theorem: exp lower bound for IRankP} since such a refutation must have size $\exp(K)$.
    \end{proof}

    Similarly to $\lb(f,s,\ell)$, we can define $\lb(f,s,\ell,\Delta)$ stating that an $n$-variable polynomial $f$ has a depth-$\Delta$ algebraic circuit of size $s$ and syntactic-degree $\ell$.
    We can apply a similar idea to get a lower bound on the size of $\PCR$ refutations of $\lb(f,s,\ell,\Delta)$ provided the rank principle does not hold.

    \begin{theorem}\label{theorem: super-poly lower bound for AUP constant-depth}
        For every constant $c>0$, given a polynomial $f$ of degree at most $\ell$ over $\F_2$ in $n$ variables and $s \ge N^{1/c}$, every $\PCR_{\F_2}$ refutation of $\lb(f,s,\ell,O(c))$ requires superpolynomially many monomials.
    \end{theorem}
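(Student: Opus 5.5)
We follow the reduction in the proof of \cref{theorem: super-poly lower bound for AUP}, from $\IRankPE_{L,n',K}(\{A^{(\pi)}\})$ to $\lb(f,s,\ell,\Delta)$ with $\Delta=O(c)$, and then invoke \cref{theorem: exp lower bound for IRankP}. The circuit $C$ built there has depth $\Theta(n)$, because the $n$ levels of the iteration are stacked. The new ingredient is to shorten the iteration so that the embedded circuit has constant depth, paying with a larger alphabet.

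\textbf{Grouping the variables.} Split the $n$ variables $u_1,\ldots,u_n$ into $c$ blocks of about $n/c$ variables each. For a block $B$, the relevant monomials are the $\overline{u}_B$-monomials of degree at most $\ell$. Their number is at most $N_B\le N^{1/c}$ in the relevant parameter regime; if this fails for a given regime, we choose the block sizes so that the counts balance, with product roughly $N$. Let the alphabet be $\Sigma_B$, indexed by these block monomials, and use the iterated rank principle with $n'=c$ levels. Level $t$ of the circuit then multiplies by $\sum_{m\in\Sigma_{B_t}} m$, with each monomial $m$ replacing $u_{t+1}^d$. We set $A^{(\pi)}_{1,1}=\coeff_{\overline{u}^\pi}(f)$ for tuples $\pi$ of block monomials, which works because every monomial of $f$ factors uniquely across the blocks.

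\textbf{Building the constant-depth circuit.} The gates $g_x[t,i,k]$ and $g_z[t,i,j,k,m]$ are defined exactly as before. A single level consists of a $\times$ gate with inputs $g_x$, the constant $y_{k,j}$, and the variables of $m$ (at most $\ell$ of them), followed by a $+$ gate. So each level contributes depth $2$, and the whole circuit has depth $2c+O(1)=O(c)$. Its size is $\poly(c,|\Sigma|,K)=\poly(N^{1/c},K)$. With $K=N^{\Omega(1/c)}$ chosen small enough, this size is at most $s$ whenever $s\ge N^{1/c}$, after adjusting constants in the exponent. The syntactic degree is at most $\ell$, since the total degree of each path product is bounded by $\ell$ once we restrict to tuples of total degree at most $\ell$. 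Tuples of total degree above $\ell$ have zero coefficient and correspond to the axioms zeroing high-degree monomials, which we handle by setting $A^{(\pi)}=0$.

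\textbf{Transferring the refutation.} The substitution $\rho$ fixes the topology constants and maps $\alpha_{\overline{u}^\pi,g_x[t,i,k]}\mapsto x^{(\pi)}_{i,k}$. The inductive identity \eqref{equa: requirement for substitution helper} carries over verbatim with one induction step per block. Hence every axiom of $\lb(f,s,\ell,O(c))\restriction\rho$ has a constant-size $\PCR$ derivation from $\IRankPE$. A refutation of size $\poly(|\lb|)=\poly(s,n,N^{\ell})$ would therefore yield a refutation of $\IRankPE$ of size $2^{\poly(\log s,\ell\log N)}$. This contradicts the $2^{K^{\Omega(1)}}$ bound of \cref{theorem: exp lower bound for IRankP} once $K=N^{\Omega(1/c)}$ grows faster than any polylogarithm in the encoding size. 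Since $N=2^{\Theta(n+\ell)}$-type growth makes $N^{1/c}$ exponential in $\log N$, the superpolynomial bound follows.

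\textbf{Main obstacle.} The hardest step is the parameter balancing: keeping $K$ large enough relative to $\log|\lb(f,s,\ell,O(c))|$ that the lower bound is superpolynomial, while keeping the circuit size $|\Sigma|^3K^3\cdot c$ below $s$. A secondary difficulty is that the lower bound of \cref{theorem: exp lower bound for IRankP} must hold for this large alphabet, i.e.\ with $L$ polynomial in $N^{1/c}$; we would have to check that the $2^{K^{\Omega(1)}}$ bound of~\cite{GGRT25} is uniform in $L$. If it is not, we would first try trading alphabet size against depth, using more than $c$ blocks.
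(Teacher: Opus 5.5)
Your route is the paper's. Its proof of this theorem is only a sketch: rerun the construction of \cref{theorem: super-poly lower bound for AUP}, but group the $\overline{u}$-variables into blocks of $n/c$, so that each level multiplies by an arbitrary block monomial of degree at most $\ell$. This gives $L\approx N^{1/c}$ and depth $O(c)$. Your worry about uniformity in $L$ is handled there simply by citing \cref{theorem: exp lower bound for IRankP}, which is stated with no dependence on $L$.

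\textbf{The gap: degree bookkeeping.} The step that fails as you wrote it is how you handle degree. If $g_x[t,i,k]$ is defined ``exactly as before'', then $g_x[t,1,k]$ sums over all block monomials and computes $\sum_{\pi\in\Sigma^t}x^{(\pi)}_{1,k}\overline{u}^{\pi}$, whose degree (syntactic and actual) reaches $t\ell$. A $+$ gate takes the maximum over its children, so ``restricting to tuples of total degree at most $\ell$'' does not lower it. In $\lb$ this is fatal. Take $v=g_z[t,1,j,k,m]$, put $M_{i_1}=\overline{u}^{\pi}$ in the slot of $g_x[t,1,k]$ and the variables of $m$ in the next slots. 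Under $\rho$ the high-degree axiom $\Type(v)\prod_{b\in[\ell]}\alpha_{M_{i_b},v,b}$ then becomes essentially $x^{(\pi)}_{1,k}=0$, for every intermediate $\pi$ with $\deg\overline{u}^{\pi}\le\ell<\deg\overline{u}^{\pi}+\deg m$. Setting leaf matrices $A^{(\pi)}$ to $0$ does nothing here. The $\IRankPE$ axioms only relate $X^{(\pi)}Y$ to the next level (or to $A^{(\pi)}$ at the leaves) and never force an intermediate $X^{(\pi)}$ to vanish. So these substituted axioms are not the local consequences the reduction needs.

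The fix is to index gates by accumulated degree. Use $g_x[t,1,k,e]$ for $0\le e\le\ell$, and create $g_z[t,1,j,k,m,e]=g_x[t,1,k,e]\,y_{k,j}\,m$ only when $e+\deg m\le\ell$, feeding it into $g_x[t+1,1,k,e+\deg m]$. By the same induction as \eqref{equa: requirement for substitution helper}, $g_x[t,1,k,e]$ then computes $\sum_{\pi\in\Sigma^t,\ \deg\overline{u}^{\pi}=e}x^{(\pi)}_{1,k}\overline{u}^{\pi}$. Every substituted high-degree axiom has a factor mapped to $0$ and vanishes identically. Each $\times$ gate has at most $\ell$ non-constant inputs, the syntactic degree is at most $\ell$, and the size grows only by a factor $\ell+1$. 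The $\IRankPE$ axioms for $\pi$ of degree above $\ell$ are simply never used. Neither the paper's sketch nor the construction it refers back to spells this out, so you need to add it.

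\textbf{Parameters.} Your parameter accounting also needs repair.
\begin{itemize}
\item \emph{Circuit size.} With $L\approx N^{1/c}$ and $c$ levels, your circuit has about $cK^2L$ gates (one $g_z$ per $k,k',m$ at each level). This exceeds $s$ when $s=N^{1/c}$ and $K\ge2$. Say explicitly that you use $O(c)$ blocks, e.g.\ $4c$, so that $L\approx N^{1/(4c)}$ leaves room for $K=N^{\Omega(1/c)}$; the depth budget $O(c)$ allows this.
\item \emph{Block estimate.} More seriously, the estimate $\binom{n/c+\ell}{\ell}\lesssim N^{1/c}$ is only accurate when $\ell$ is large compared with $n$. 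The paper's sketch asserts the same thing, as ``approximately $N^{1/c}$''. For $\ell=o(n)$, every block with at least $n/c$ variables already carries $N^{1-o(1)}$ monomials of degree at most $\ell$. So neither rebalancing the blocks nor using $O(c)$ of them helps. In that regime you would need a different way to cut monomials into $O(c)$ pieces carrying about $\frac1c\log N$ bits each (for instance by degree rather than by variables), or else a restriction of the statement.
\item \emph{Superpolynomiality.} Since $|\lb(f,s,\ell,\Delta)|=\poly(s,n,N^{\ell})$, you need $K^{\Omega(1)}=\omega(\log s+\ell\log N)$. This is a condition on the parameters that you must verify. It does not follow from ``$N=2^{\Theta(n+\ell)}$'', which is false in general: for constant $\ell$, $N=\Theta(n^{\ell})$.
\end{itemize}
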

    \begin{proof}[Proof sketch]
        We construct the circuit $C$ similarly to the one from the proof of \cref{theorem: super-poly lower bound for AUP} but instead of multiplying the $\overline{u}$-variables one-by-one, we multiply them by all monomials of degree at most $\ell$ in the subsequent $n/c$ variables. This increases the size of the circuit since we are forced to choose $L$ to be the number of such monomials, which is approximately $N^{1/c}$.
        However, this way, $C$ only requires depth $O(c)$.
    \end{proof}

    \begin{remark}
        Observe that the above circuit constructions do not use the whole matrices $A^{(\pi)}$, but only their  $(1,1)$th entry. We can construct some slightly more efficient circuits by using more entries of $A^{(\pi)}$ to store the coefficients by computing the polynomials
        \[
            \sum_{\pi \in \Sigma^{\le n-2}, d, d' \in \Sigma} A^{(\pi)}_{d,d'} u^{(\pi)} u_{n-1}^d u_n^{d'}.
        \]
        This can be done by introducing gates $\alpha[k]$ computing the polynomials
        \[
            \sum_{d \in \Sigma} \sum_{\pi \in \Sigma^{\le n-2}} x^{(\pi)}_{d,k} u^{(\pi)} u_{n-1}^d
        \]
        and gates $\beta[d'+dK]$ computing $y_{d'+dK} u_n^{d'}$.
        Then, the output gate can be defined as the inner product of $\alpha$ and $\beta$.
        
        A similar idea can be applied to the proof for $\lb(f,s,\ell,O(c))$.
    \end{remark}

\begin{remark}
        Note that we can also view $\lb(f,s,\ell)$ as a \textbf{proof complexity generator} in the algebraic setting: treating the coefficient vector of $f$ as an output vector $\overline{b} \in \mathbb{F}^N$ (where $N$ is the number of monomials of degree at most $\ell$ over $n$ variables) of a function $\tau(s,\ell)$, we can view $\lb(f,s,\ell)$ as an encoding of the statement $\tau(s,\ell)=\overline{b}$.  Thus, \cref{theorem: super-poly lower bound for AUP} shows that $\lb(f,s,\ell)$ is a proof complexity generator against $\PCR$.

        Similarly, $\lb(f,s,\ell,\Delta)$ can be viewed as $\tau(s,\ell,\Delta)=\overline{b}$, where $\overline{b} \in \mathbb{F}^N$ and $N$ is the number of monomials of degree at most $\ell$ over $n$ variables, and is a proof complexity generator against $\PCR$.
\end{remark}

%

\section{Hardness of Bounded-Depth Algebraic Circuit Lower Bounds from Hardness of Weak Tensor Rank Principles}\label{sec: tensor rank}

In this section, we need a generalisation of the rank principle, which we call the \emph{tensor rank principle}.

\begin{definition}[Tensor Rank Principle]\label{def: tensor rank principle}
    Let $\F$ be a field and $m, n$ two positive integers with $m > n$. We denote $\TRankP^m_n(r, A)$ the system of degree-$r$ polynomial equations stating that the tensor rank of an $r$-tensor $A \in \F^{\overbrace{m \times \cdots \times m}^{r \text{ times}}}$ is at most $n$ over $\F$. More precisely, $\TRankP^m_n(r, A)$ is defined over $rmn$ variables arranged into
    $rn$ vectors $\x_{j,k} \in \F^m : j \in [r], k \in [n]$.
    
    For every $i_1, \ldots, i_r \in [m]$ (which are not necessarily distinct), there is an equation in $\TRankP^m_n(r,A)$ stating that the $(i_1, \ldots, i_r)$th entry in the summation of the tensor product $\bigotimes_{j=1}^r \x_{j,k}$ is equal to $A_{i_1, \ldots, i_r}$. That is, $\TRankP^m_n(r, A)$ consists of the polynomials
    \begin{align}\label{equa: tensor rank principle}
        &\sum_{k \in [n]} \prod_{j \in [r]} x_{j,k,i_j} - A_{i_1, \ldots, i_r}, & i_1, \ldots, i_r \in [m],
    \end{align}
    where $x_{j,k,i_j}$ denotes the $(i_j)$th entry of the vector $\x_{j,k}$.
    
    Namely, $\TRankP^m_n(r, A)$ is the set of polynomial equations stating that
    \[
        \sum_{k \in [n]} \bigotimes_{j=1}^r \x_{j,k} = A.
    \]
\end{definition}

By definition, $\sum_{k \in [n]} \bigotimes_{j=1}^r \x_{j,k}$ is the tensor rank decomposition of rank at most $n$. Thus, when the tensor rank of $A$ exceeds $n$, $\TRankP^m_n(r, A)$ is unsatisfiable. This principle can be viewed as a generalisation of the rank principle, and coincides with it when $r=2$.

\begin{theorem}\label{theorem: exp lower bound for TRankP}
    Let $m,n,r$ be positive integers with $m>n$.
    Let $J \subseteq [m]$ with $|J| > n$ and let $A$ be the $r$-tensor defined as
    \[
        A_{i_1,\ldots,i_r} = \begin{cases}
            1 & \text{if } i_1 = \cdots = i_r \in J, \\
            0 & \text{otherwise}.
        \end{cases}
    \]
    Then every $\PCR_{\F_2}$ refutation of $\TRankP^m_n(r, A)$ requires $2^{\Omega(n)}$ monomials.
\end{theorem}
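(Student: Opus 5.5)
We plan to reduce to the case $r=2$ and then to a pigeonhole-type principle, following the hint in the introduction (random restriction plus degree reduction to the pigeonhole principle). Given a $\PCR_{\F_2}$ refutation $\pi$ of $\TRankP^m_n(r,A)$ with $S$ monomials, we first restrict to the $|J|\times\cdots\times|J|$ subtensor indexed by $J$; we may assume $|J|=n+1$ by discarding further indices, since dropping axioms only makes refuting harder and the remaining axioms are exactly those of the smaller instance. We then apply a partial assignment that fixes the vectors $\x_{j,k}$ for $j\ge 3$ (all $k$) so as to collapse the $r$-tensor to a matrix. The natural choice sets $x_{j,k,i}=1$ for all $i\in J$ and $j\ge 3$. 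Under this, each axiom with $i_1=i_2=i$ becomes $\sum_k x_{1,k,i}x_{2,k,i}-1$ for indices $(i,i,i_3,\ldots,i_r)$ with all equal, but with $i_3\neq i$ the right-hand side is $0$, giving a contradiction outright. That is too crude, so we instead choose the restriction componentwise. Each $k$ is assigned a random index $\sigma(k)\in J$, and we set $x_{j,k,i}=[i=\sigma(k)]$ for $j\ge 3$. The remaining axioms then say, for each $i$, that $\sum_{k:\sigma(k)=i}x_{1,k,i_1}x_{2,k,i_2}=[i_1=i_2=i]$. Since $\sigma$ is not injective onto $J$ (as $|J|=n+1>n$), some $i$ receives no $k$, and its axiom $0=1$ is trivially refutable. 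So a fixed restriction destroys hardness, and the randomness has to be used differently.

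The corrected plan keeps the variables $x_{j,k,i}$ with $j\ge 3$ free but interprets them as a \emph{pigeonhole mapping}. The intended encoding has pigeons $i\in J$ ($n+1$ of them) and holes $k\in[n]$, and the products $p_{i,k}:=\prod_{j\in[r]}x_{j,k,i}$ play the role of ``pigeon $i$ goes to hole $k$''. We derive, by low-degree $\PCR$ steps from the diagonal axioms ($i_1=\cdots=i_r=i$), that $\sum_k p_{i,k}=1$ for each $i$. From the off-diagonal axioms with mixed indices $(i,i',i,\ldots)$ we derive that $p_{i,k}p_{i',k}=0$ for $i\ne i'$. This works because over $\F_2$ with Boolean variables, $p_{i,k}p_{i',k}$ is a multiple of the monomial $\prod_j x_{j,k,i}\cdot x_{1,k,i'}\cdots$, and combining with the axiom for index tuple $(i',i,\ldots,i)$ should yield the contradiction pattern. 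Concretely, we aim to show that $\PHP^{n+1}_n$ in the variables $p_{i,k}$ (a degree-$r$ substitution) is derivable from $\TRankP$ after a random restriction. The restriction sets each $x_{j,k,i}$ to $1$ with some probability so that, with high probability, every monomial of large degree (in the $p$-variables) in $\pi$ is killed while the structure survives on a large sub-instance. This is the standard random-restriction-plus-width argument: small-size refutations become low-degree refutations after restriction.

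The final step invokes the known $\PCR_{\F_2}$ degree lower bound of $\Omega(n)$ for (onto/functional) $\PHP^{n+1}_n$, converted to a size bound $2^{\Omega(n)}$. The main obstacle is the middle step. The derived clauses $p_{i,k}p_{i',k}=0$ must follow with bounded degree from the tensor axioms, which is delicate over $\F_2$ because the sum $\sum_k$ is parity rather than disjunction. To handle this, I would first try restricting so that each hole $k$ has $x_{j,k,\cdot}$ supported on at most two coordinates, making parity and OR coincide locally, before doing the degree reduction.
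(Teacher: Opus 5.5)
\paragraph{The reduction to $\PHP$ goes in the wrong direction.} You propose deriving the pigeon and hole axioms in the variables $p_{i,k}=\prod_j x_{j,k,i}$ from $\TRankP^m_n(r,A)$. That would only show that a $\PHP$ refutation can be turned into a $\TRankP$ refutation, which is an \emph{upper} bound on refuting $\TRankP$ and says nothing about its hardness. For a lower bound you need a substitution under which every $\TRankP$ axiom becomes a $\PHP$ axiom or a cheap consequence of $\PHP$ axioms. Then any small $\TRankP$ refutation maps to a $\PHP$ refutation.

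The paper uses $x_{j,k,i}\mapsto x_{i,k}$ for every $j$. After multilinearisation, the diagonal axioms become exactly the pigeon axioms $\sum_k x_{i,k}-1$. Each summand $\prod_j x_{i_j,k}$ of an off-diagonal axiom contains a factor $x_{i_j,k}x_{i_{j'},k}$ with $i_j\neq i_{j'}$, so it is a multiple of a hole axiom. In this direction the parity-versus-disjunction obstacle you flag never arises, because the hole axioms kill each summand individually. Substitution does not add new hole indices to a term, so the Razborov / Impagliazzo--Pudl\'ak--Sgall hole-width bound for $\PHP^m_n$ transfers: every $\PCR_{\F_2}$ refutation of $\TRankP$ contains a term mentioning at least $n/2+1$ distinct indices $k$.

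The same direction error affects your first step. Dropping axioms makes refuting \emph{harder}, so a lower bound for a subsystem does not transfer to the full system. The correct way to pass to $J$ is the restriction $x_{j,k,i}=0$ for $i\notin J$, which trivialises all axioms involving indices outside $J$. You cannot shrink further to $|J|=n+1$ this way, since the discarded diagonal axioms would become $1=0$. You do not need to: the $\PHP$ bound holds for any number of pigeons exceeding $n$.

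\paragraph{Your size-to-width step lacks a self-reducing restriction.} The argument needs a restriction that maps the instance to a smaller instance of itself, and your plan has none. Setting individual variables $x_{j,k,i}$ independently destroys the tensor-rank structure. Setting them to $1$ in particular only kills terms containing $\overline{x}_{j,k,i}$. Also, ``degree in the $p$-variables'' is not well defined for arbitrary terms in the $x$-variables.

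The paper measures a term by the number of hole indices $k$ it mentions, and restricts whole holes at once:
\begin{itemize}
\item Pick a random $K\subseteq[n-1]$ of size $n/2-1$, and set all $x_{j,k,i}$ (for all $j\in[r]$, $i\in[m]$) to a random bit $c_k$ for each $k\in K$.
\item Set hole $n$ to the bit $c_n$ chosen so that $\sum_{k\in K\cup\{n\}}c_k$ is even.
\end{itemize}
These holes then contribute $\sum_k c_k\equiv 0$ to every axiom over $\F_2$, so the restricted system is $\TRankP^m_{n/2}(r,A)$. Meanwhile, any term mentioning linearly many holes survives only with probability $2^{-\Omega(n)}$. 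A union bound over a refutation with $2^{\epsilon n}$ monomials then contradicts the hole-width bound for the smaller instance. This parity-balanced fixing of entire holes is the key idea missing from your plan, and it is exactly where working over $\F_2$ is used.
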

\begin{proof}
    Without loss of generality assume that $A = I_m$ (and thus $J = [m]$): we can set all $x_{j,k,i}$ to $0$ for all $j \in [r]$, $k \in [n]$ and $i \notin J$. This reduces $\TRankP^m_n(r, A)$ to $\TRankP^{|J|}_n(r, I_{|J|})$.

    The idea of the proof is analogous to the proof of the lower bound for $\RankP$ in~\cite{GGRT25}.

    We say that a term $t$ \emph{mentions} an index $k \in [n]$ if there exist $j \in [r]$ and $i \in [m]$ such that $x_{j,k,i}$ or $\overline{x}_{j,k,i}$ appears in $t$.
    We start by proving a lower bound on the number of different indices $k \in [n]$ each term mentions by reducing $\PHP^m_n$ to $\TRankP^m_n(r, A)$.
    The axioms of $\PHP^m_n$ are the \emph{pigeon axioms}
    \begin{align*}
        &\sum_{k \in [n]} x_{i,k} - 1, & \forall i \in [m],
    \end{align*}
    and the \emph{hole axioms}
    \begin{align*}
        &x_{i,k}x_{j,k}, & \forall i \neq j \in [m], k \in [n].
    \end{align*}
    It follows from~\cite{Razborov98,IPS99} that any $\PCR_{\F_2}$ refutation of $\PHP^m_n$ must have a term that mentions at least $n/2+1$ different holes $k \in [n]$ (in fact, it must contain a term that correspond to a \emph{matching} of this size).
    
    Let $X = {(x_{i,k})}_{i \in [m], k \in [n]}$ be fresh variables and define the restriction $\rho$ as
    \begin{align*}
        &\rho(x_{j,k,i}) = x_{i,k}, & \forall j \in [d], k\in[n], i \in [m].
    \end{align*}

    Under $\rho$, the axioms of $\TRankP^m_n(r, A)$ become\footnote{Recall that proof lines are \emph{multilinear} polynomials.}
    \begin{equation}\label{equa: trank to php diagonal}
        \sum_{k\in[n]} x_{i,k} - 1,
    \end{equation}
    when $i_1 = \cdots = i_r = i$ and
    \begin{equation}\label{equa: trank to php non-diagonal}
        \sum_{k\in[n]} \prod_{j \in [r]} x_{i_j,k},
    \end{equation}
    where at least two $i_j$ and $i_{j'}$ are different.
    Observe that \eqref{equa: trank to php diagonal} is precisely the pigeon axiom of $\PHP^m_n$ and \eqref{equa: trank to php non-diagonal} can be derived easily from the hole axioms $x_{i_j,k}x_{i_{j'},k}$.
    Thus, every $\PCR_{\F_2}$ refutation of $\TRankP^m_n(r, A)$ must contain a term involving at least $n/2+1$ variables $x_{j,k,i}$ (or their negations) with different $k$.

    Now let $\sigma$ be a partial restriction defined as follows. Choose a random subset $K \subseteq [n-1]$ of size $n/2-1$ uniformly and for each $k \in K$ independently choose $c_k \in \{0,1\}$ at random. Define $c_n = 1$ if $\sum_{k\in K}c_k$ is odd and $c_n=0$ otherwise. This guarantees that $\sum_{k\in K \cup \{n\}}c_k$ is even. Then assign
    \[
        \sigma(x_{j,k,i}) = \begin{cases}
            c_k & \text{if } k \in K \cup \{n\}, \\
            * & \text{if } k \notin K \cup \{n\}.
        \end{cases}
    \]

    Under $\sigma$, $\TRankP^m_n(r, A) \restriction \sigma$ becomes $\TRankP^m_{n/2}(r, A)$ (up to the variable renaming). Moreover, each term that mentions at least $n/2+1$ different $k \in [n]$, becomes zero with probability $1-\exp(-\Omega(n))$ by the Chernoff bound. Applying the union bound, we can eliminate all such terms in a proof of size at most $\exp(\epsilon n)$ for a sufficiently small constant $\epsilon>0$, contradicting the above degree bound.
\end{proof}

\begin{theorem}\label{them: conditional IPS Lower Bounds reduction}
    Let $n\geq 1$ and let $f \in \F_2[u_1,\ldots,u_n]$ have degree at most $\ell\geq 2$. Set
    \[
        D=n\ell,\qquad
        N=\binom{n+\ell}{\ell},\qquad
        R=\binom{n+D}{D},\qquad
        K=\lfloor\sqrt{s}\rfloor.
    \]
    Assume $K<N$, $\Delta>4$, and $s\geq 4n^2(\ell+1)^4$.

    Let $A$ be the diagonal order-$n$ tensor of dimension $N$ whose $(M,\ldots,M)$th entry is
    $\coeff_M(f)$, with all other entries zero.

    If $\lb(f,s,D,\Delta)$ admits a $\Delta'$-$\IPS$ refutation of size $S$ over $\F_2$, then $\TRankP^N_K(n,A)$ admits a $(\Delta'+5)$-$\IPS$ refutation over $\F_2$ of size
    \[
        S+\poly(s,n,\ell,K,N^n,R^D).
    \]
\end{theorem}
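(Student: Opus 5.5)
The approach is a substitution reduction, in the spirit of the proof of \cref{theorem: super-poly lower bound for AUP}, but carried out inside constant-depth $\IPS$. The key point is that $\IPS$ proofs are closed under substituting circuits for variables. Concretely, given a $\Delta'$-$\IPS$ refutation $C(\overline{w},\overline{y},\overline{z})$ of $\lb(f,s,D,\Delta)$, I replace every variable of $\lb(f,s,D,\Delta)$ by a constant or by a small constant-depth circuit in the variables $x_{j,k,M}$ of $\TRankP^N_K(n,A)$. Call this substitution $\rho$. I then show that each substituted axiom $\rho(P)$ is an explicit combination $\sum_i h_i\,Q_i$ of $\TRankP$ axioms $Q_i$, plus Boolean axioms, with small depth-$\le 5$ circuits $h_i$. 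Plugging these combinations into the placeholders of $C$ gives the $\TRankP$ refutation. Its depth is at most $\Delta'+5$, and its size is $S$ plus the total size of the $h_i$ and of the substituted circuits. Since $C$ has $1$ as an identity, Boolean axioms for the substituted variables must also be handled. I treat each substituted $\alpha$ as a polynomial and discharge $\alpha^2-\alpha$ by multilinear reduction, which is exactly where the $\F_2$ and Boolean axioms $z$ for the $x_{j,k,M}$ are used.

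The circuit I embed is the one read off a putative rank-$K$ decomposition, but it is arranged so that it computes $f$ itself rather than a power-substituted version. For $k\in[K]$ let
\[
L_k(\overline{u})=\sum_{M} x_{1,k,M}\,M(\overline{u})
\qquad\text{and}\qquad
c_k=\prod_{j=2}^{n}\sum_{M'} x_{j,k,M'},
\]
and let the output be $\sum_{k\in[K]} c_k\,L_k(\overline{u})$. The slot-$1$ vectors carry the monomial $M$, while slots $2,\ldots,n$ are summed against constants. The coefficient of $M$ at the output is then
\[
\sum_{k}x_{1,k,M}\prod_{j\ge 2}\sum_{M'}x_{j,k,M'}
=\sum_{M_2,\ldots,M_n}\Bigl(\sum_k x_{1,k,M}\prod_{j\ge2}x_{j,k,M_j}\Bigr).
\]
This equals $\sum_{M_2,\ldots,M_n}\bigl(Q_{M,M_2,\ldots,M_n}+A_{M,M_2,\ldots,M_n}\bigr)$, where $Q$ denotes the $\TRankP$ axiom. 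Because $A$ is diagonal, $\sum_{M_2,\ldots,M_n}A_{M,M_2,\ldots,M_n}=\coeff_M(f)$. Hence the output axiom $\alpha_{M,\out}-\coeff_M(f)$ becomes a sum of $N^{n-1}$ $\TRankP$ axioms with coefficient $1$, which is where the $N^n$ term in the size bound comes from.

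All the other axioms (gate typing, inputs, and the $+$ and $\times$ coefficient-consistency axioms) become formal identities, or identities modulo the Boolean axioms. To arrange this, I set each $\alpha_{M,v}$ to the actual coefficient polynomial of the embedded gate $v$ in the $x$-variables, computed by a depth-$\le 3$ circuit: a sum over monomial tuples of products of sums. The $\times$-gate axioms in the encoding sum over $D$-tuples of monomials of degree $\le D$, and there are $R$ such monomials. Verifying these axioms as identities, including killing tuples of degree $>D$, therefore costs $\poly(R^D)$. All syntactic degrees are at most $\ell\le D$, and the only place degree $D=n\ell$ could be reached is the degree cap in the encoding.

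The step I expect to be the main obstacle is fitting the embedded circuit into the size budget $s$ and depth budget $\Delta$. This requires computing the $N$ monomials $M(\overline{u})$ cheaply with a $\times$-layer (product fan-in $\le\ell$), then forming the $K$ linear forms $L_k$ and the $K$ constants $c_k$, and then the top sum. That is depth $4<\Delta$, consistent with the hypothesis $\Delta>4$. For the size, I first try to exploit the encoding's unbounded fan-in, so that each $L_k$ is a single $+$ gate whose inputs are the monomial gates. The count then becomes roughly $K^2$ gates plus the monomial gates, and with $K=\lfloor\sqrt s\rfloor$ together with $s\ge 4n^2(\ell+1)^4$ the hope is that this comes out below $s$. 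If monomial gates are too many, I would fall back to building each $M$ through per-variable powers $u_t^{e}$ with $e\le\ell$. That uses $n(\ell+1)$ gates and pushes the multiplication into the gate inputs, which is where the $n^2(\ell+1)^4$ bound should be used. The remaining checks are routine bookkeeping: the wiring-variable substitutions, and that every $h_i$ has depth $\le 5$ and size $\poly(s,n,\ell,K,N^n,R^D)$.
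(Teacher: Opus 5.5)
There is a genuine gap, and it sits exactly at the step you flagged as the main obstacle. Your coefficient computation for the output axiom is correct, but the circuit cannot be made to fit in size $s$, and no bookkeeping will fix this. In your design all of the $\overline{u}$-dependence sits in the first mode. So each $L_k=\sum_M x_{1,k,M}M(\overline{u})$ needs a separate term, and hence a separate product gate, for every monomial $M\in T_f$. The per-variable-powers fallback does not change this, because each monomial still needs its own product gate. The circuit therefore has size at least $|T_f|$. Nothing in the hypotheses bounds $N$ or $|T_f|$ by $s$; the only relation is $K<N$ with $K=\lfloor\sqrt s\rfloor$.

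Worse, the construction can only succeed in vacuous instances. Instantiate its constants by field elements: give $L_1$ the coefficients $\coeff_M(f)$, set $L_k=0$ for $k>1$, and make $c_1=1$. The same topology then computes $f$ itself. So if it had size $\le s$ and depth $\le\Delta$, then $\lb(f,s,D,\Delta)$ would be satisfiable and would have no $\IPS$ refutation at all. Your circuit is really the sparse representation of $f$ with its coefficients disguised as tensor expressions. It never uses a low-rank decomposition, and the tensor order $n$ plays no role, since slots $2,\ldots,n$ are simply summed away. Your remark that $D=n\ell$ is reached only in the degree cap is a symptom of this.

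The missing idea is to let mode $j$ of the tensor carry the variable $u_j$. The paper embeds
\[
C=\sum_{k\in[K]}\prod_{j\in[n]}\sum_{d=0}^{\ell}\alpha_{j,k,d}\,u_j^d,\qquad \alpha_{j,k,d}=\sum_{M\in T_f,\ \deg_{u_j}(M)=d}x_{j,k,M}.
\]
This is a depth-$4$ circuit of size at most $2Kn(\ell+1)^2\le s$, which is exactly what $s\ge 4n^2(\ell+1)^4$ and $K=\lfloor\sqrt s\rfloor$ guarantee. Its size is independent of $N$, and its syntactic degree is $n\ell=D$, which is why the statement uses $D$.

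For $M=u_1^{d_1}\cdots u_n^{d_n}$ one gets $\coeff_M(\widehat C)=\sum_k\prod_j\alpha_{j,k,d_j}$. Expanding the products, the diagonal tuple contributes $\sum_k\prod_j x_{j,k,M}$, which is the diagonal axiom with value $\coeff_M(f)$ (present only when $M\in T_f$). Every other tuple $(M_1,\ldots,M_n)$ with $\deg_{u_j}(M_j)=d_j$ groups into an off-diagonal axiom with right-hand side $0$, and this is where the $N^n$ term comes from.

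Your other ingredients carry over unchanged to this circuit: closure of $\IPS$ under substitution, handling $\rho(w)^2-\rho(w)$ over $\F_2$ via Frobenius, and paying $R^D$ for the $\times$-gate axioms.
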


The proof of this theorem is analogous to the proof of \cref{theorem: super-poly lower bound for AUP}. We construct a circuit $C$ parametrised by the variables of $\TRankP^N_K(n,A)$, with depth at most $4$, size at most $s$, and syntactic degree at most $D=n\ell$. We then embed this circuit into $\lb(f,s,D,\Delta)$ and construct bounded-depth $\IPS$ derivations of the substituted circuit axioms from the axioms of $\TRankP^N_K(n,A)$. Consequently, a short $\IPS$ refutation of $\lb(f,s,D,\Delta)$ yields a short $\IPS$ refutation of $\TRankP^N_K(n,A)$.

\begin{proof}
    We show how to reduce the bounded-depth algebraic circuit upper bound formulas to $\TRankP^{N}_{K}(n, A)$, where $T_f \coloneqq \{ M : M \text{ is a monomial in $f$ with a non-zero coefficient} \}$ and $K$ is bounded by $\sqrt{s}$.

    We use the notation $\deg_{u_j}(M)$ to denote the individual degree of $u_j$ in the monomial $M$.

    Consider the depth-$4$ algebraic circuit
    \[
        C(u_1, \ldots, u_n) \coloneqq \sum_{k \in [K]} \prod_{j \in [n]} \sum_{d \in \{0,1,\ldots,\ell\}} \alpha_{j,k,d} u_j^d
    \]
    where $\alpha_{j,k,d}$ are placeholder variables defined as
    \[
        \alpha_{j,k,d} \coloneqq \sum_{\substack{M \in T_f\\ \deg_{u_j}(M)=d}} x_{j,k,M}.
    \]
    The syntactic degree of $C$ is at most $D=n\ell$.

    For every monomial $M=u_1^{d_1}\cdots u_n^{d_n}$ of degree at most $D$ with $d_j\leq\ell$ for all $j$, let $X_{k,M}$ denote the sum of all off-diagonal terms in the expansion of $\prod_{j\in[n]}\alpha_{j,k,d_j}$. More precisely,
    \[
        X_{k,M} \coloneqq \sum_{\substack{M_1,\ldots,M_n\in T_f\\
                                \deg_{u_j}(M_j)=d_j\ \forall j\\
                                (M_1,\ldots,M_n)\text{ not diagonal}}}
                        \prod_{j\in[n]}x_{j,k,M_j}.
    \]
    The off-diagonal axioms of $\TRankP^N_K(n,A)$ imply
    \[
        \sum_{k\in[K]}X_{k,M}\equiv 0,
    \]
    where $\equiv$ denotes congruence modulo the ideal generated by the tensor-rank axioms.

    For every $M=u_1^{d_1}\cdots u_n^{d_n}\in T_f$, we have
    \begingroup\allowdisplaybreaks
    \begin{flalign*}
        \coeff_M(\widehat{C}) &= \coeff_M\left(\sum_{k\in[K]}\prod_{j\in[n]} \sum_{d=0}^{\ell}\alpha_{j,k,d}u_j^d \right)
        &\mathllap{\text{(by definition)}}\\
        &= \sum_{k\in[K]}\coeff_M\left(\prod_{j\in[n]}\sum_{d=0}^{\ell}\alpha_{j,k,d}u_j^d\right)
        &\mathllap{\text{(since coefficients add termwise)}}\\
        &= \sum_{k\in[K]}\prod_{j\in[n]}\alpha_{j,k,d_j}
        &\mathllap{\text{(computing the coefficient for $M$)}}\\
        &= \sum_{k\in[K]} \left(\prod_{j\in[n]}x_{j,k,M}+X_{k,M}\right)
        &\mathllap{\text{(expanding the products)}}\\
        &\equiv \sum_{k\in[K]}\prod_{j\in[n]}x_{j,k,M}
        &\mathllap{\text{(by the off-diagonal tensor axioms)}}\\
        &\equiv A_{M,\ldots,M}
        &\mathllap{\text{(by the diagonal tensor axiom)}}\\
        &= \coeff_M(f).
        &\mathllap{\text{(by the definition of $A$)}}
    \end{flalign*}
    \endgroup

    For a monomial $M\notin T_f$ of degree at most $D$ with $d_j\leq\ell$ for all $j$, we have
    \begin{flalign*}
        \coeff_M(\widehat{C}) &= \sum_{k\in[K]}\coeff_M \left(\prod_{j\in[n]} \sum_{d=0}^{\ell}\alpha_{j,k,d}u_j^d \right)
        &\mathllap{\text{(since coefficients add termwise)}}\\
        &= \sum_{k\in[K]}X_{k,M}
        &\mathllap{\text{(computing the coefficient for $M$)}}\\
        &\equiv 0
        &\mathllap{\text{(by the off-diagonal tensor axioms)}}\\
        &= \coeff_M(f).
    \end{flalign*}
    If $d_j>\ell$ for some $j$, then $\coeff_M(\widehat{C})=\coeff_M(f)=0$ identically.

    Consequently, for every monomial $M$ of degree at most $D=n\ell$, the output-coefficient equation
    \[
        \coeff_M(\widehat{C})-\coeff_M(f)=0
    \]
    has an IPS derivation from the axioms of $\TRankP^N_K(n,A)$.

    We construct a substitution $\rho$ that embeds $C$ into $\lb(f,s,n\ell,\Delta)$. Note that $C$ is a depth-$4$ algebraic circuit having size at most $2Kn{(\ell+1)}^2$ and syntactic degree at most $n\ell$. The formal variables $\alpha_{j,k,d}$ are replaced with $\sum_{\substack{M \in T_f, \\ \deg_{u_j}(M) = d}} x_{j,k,M}$ of size at most $N$.

    Under this substitution, every axiom in $\lb(f,s,n\ell,\Delta)$, where $s \ge 2K n(\ell+1)^2$, is transformed into a polynomial derivable from $\TRankP^{N}_{K}(n,A)$ with total size at most
    \[
        \poly(s,n,\ell,K,N^n,R^{n\ell}), \qquad R=\binom{n+n\ell}{n\ell},
    \]
    and constant depth. In particular, the high-degree multiplication axioms vanish identically, and the output-coefficient axioms follow from the tensor-rank axioms. Consequently, if $\lb(f,s,n\ell,\Delta)$ admits a $\Delta'$-$\IPS$ refutation of size $S$, then $\TRankP^{N}_{K}(n,A)$ admits a $(\Delta'+5)$-$\IPS$ refutation of size
    \[
        S+\poly(s,n,\ell,K,N^n,R^{n\ell}).
    \]
    Finally, since $s\geq 4n^2(\ell+1)^4$, we may set $K=\lfloor\sqrt{s}\rfloor$, which guarantees $2Kn(\ell+1)^2\leq s$.
\end{proof}

\begin{corollary}\label{corollary: conditional IPS lower bounds}
    Consider a family of instances satisfying the hypotheses of \cref{them: conditional IPS Lower Bounds reduction}, with $K\to\infty$. Let
    \[
        D=n\ell,\qquad R=\binom{n+D}{D}.
    \]
    Suppose that, for some fixed $\varepsilon>0$ and constant $d>5$, the corresponding systems $\TRankP^N_K(n,A)$ require $d$-$\IPS$ refutations over $\F_2$ of size at least $2^{K^\varepsilon}$, and that
    \[
        \log(sn)+n\log N+D\log R=o(K^\varepsilon).
    \]
    Then $\lb(f,s,D,\Delta)$ requires $(d-5)$-$\IPS$ refutations over $\F_2$ of size $|\lb(f,s,D,\Delta)|^{\omega(1)}$.
\end{corollary}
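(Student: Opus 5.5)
We argue by contraposition, using \cref{them: conditional IPS Lower Bounds reduction} as a black box. Suppose that, along an infinite subfamily of instances, $\lb(f,s,D,\Delta)$ has $(d-5)$-$\IPS$ refutations over $\F_2$ of size $S\le |\lb(f,s,D,\Delta)|^{c}$ for some fixed constant $c$. Taking $\Delta'=d-5$, \cref{them: conditional IPS Lower Bounds reduction} yields $d$-$\IPS$ refutations of $\TRankP^N_K(n,A)$ of size
\[
S+\poly(s,n,\ell,K,N^n,R^D).
\]
The plan is to show that this quantity is $2^{o(K^\varepsilon)}$, which contradicts the assumed $2^{K^\varepsilon}$ lower bound once $K\to\infty$.

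The main step is to bound $\log|\lb(f,s,D,\Delta)|$ by $O(\log(sn)+n\log N+D\log R)$, up to additive terms of the same order. We bound the size of the encoding directly from its definition, which is the depth-$\Delta$ analogue of $\lb(f,s,\ell)$. It has $\poly(s,n)$ topology variables and $\poly(s)\cdot R$ coefficient variables $\alpha_{M,v},\alpha_{M,v,b}$, since the monomials now range over degree $\le D$, giving $R$ of them. The dominant axioms are the multiplication axioms, which range over $D$-tuples of monomials and hence number about $R^{D}\poly(s)$, each of size $\poly(s)R^{D}$. So $|\lb|\le \poly(s,n)\cdot R^{O(D)}$, and therefore $\log|\lb|=O(\log(sn)+D\log R)$.

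For the additive term, we use $\log\poly(s,n,\ell,K,N^n,R^D)=O(\log(sn)+\log\ell+\log K+n\log N+D\log R)$. Here $K\le\sqrt s$, and $\ell\le D\le R$ gives $\log\ell\le D\log R$. Combining both bounds, the logarithm of the total refutation size is $O(c(\log(sn)+n\log N+D\log R))=o(K^\varepsilon)$ by hypothesis. For large $K$ the size is therefore below $2^{K^\varepsilon}$, which is the contradiction.

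The main obstacle is the precise size accounting for $\lb(f,s,D,\Delta)$. It has to be exact enough that the dominant exponent is $D\log R$ and no larger, for instance not $s\log R$. If the multiplication axioms are restricted to the first $D$ inputs, as in the remark following the definition of $\lb(f,s,\ell)$, this holds. The superpolynomial conclusion is stated relative to $|\lb|$, and $|\lb|\ge 2$ grows, so "$\omega(1)$ exponent" is exactly the negation of "$\le|\lb|^c$ for a fixed $c$ infinitely often". Our argument handles every fixed $c$, which suffices.
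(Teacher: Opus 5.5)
Your proposal is correct and follows the paper's argument: apply \cref{them: conditional IPS Lower Bounds reduction} with $\Delta'=d-5$, note that both $|\lb(f,s,D,\Delta)|^{O(1)}$ (the encoding has size polynomial in $s,n,R^D$) and the additive overhead $\poly(s,n,\ell,K,N^n,R^D)$ are $2^{o(K^\varepsilon)}$ under the growth hypothesis, and conclude that a polynomial-size $(d-5)$-$\IPS$ refutation would yield a $d$-$\IPS$ refutation of $\TRankP^N_K(n,A)$ of size below $2^{K^\varepsilon}$. Your version also spells out the size accounting that the paper leaves implicit, namely $\log K\le\frac12\log s$ and $\log\ell\le D\log R$.
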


\begin{proof}
    The encoding has size polynomial in $s,n,R^D$. Under the stated growth condition, both every fixed polynomial in this encoding size and the additive overhead in \cref{them: conditional IPS Lower Bounds reduction} are $2^{o(K^\varepsilon)}$. A polynomial-size $(d-5)$-$\IPS$ refutation would therefore give a $d$-$\IPS$ refutation of the tensor system of size less than $2^{K^\varepsilon}$, a contradiction.
\end{proof}

\begin{remark}
    The tensor $A$ in \cref{them: conditional IPS Lower Bounds reduction} is diagonal. Over $\F_2$, its restriction to the coordinates in the support of $f$ is a unit tensor. The assumption in \cref{corollary: conditional IPS lower bounds} concerns these diagonal tensor rank instances. For order $2$, the tensor rank principle coincides with the matrix rank principle. Establishing constant-depth $\IPS$ lower bounds for the diagonal tensor rank instances remains the hypothesis of our conditional result.
\end{remark}

\section{Short \texorpdfstring{$\NCTwo$}{NC2}-Frege Proofs for Constant-Depth Algebraic Circuit Lower Bounds}\label{sec: VNCtwo}

In this section, we show that \NCTwo-Frege (see~\cite{HT15} for the definition) admits short proofs of the constant-depth algebraic circuit lower bounds for the Permanent and Determinant established in~\cite{LST25,For24}.

For this purpose, we use a formalisation in $\VNCTwo$, which is a first-order theory whose (low logical complexity) theorems translate to polynomial-size proofs in \NCTwo-Frege. We shall demonstrate a formalisation of the proof of the superpolynomial lower bound of the Permanent polynomial against constant-depth algebraic circuits over any field. This is an adaptation of the proof of~\cite{LST25} and~\cite{For24}. The work in~\cite{LST25} demonstrated a superpolynomial lower bound against constant-depth algebraic circuits over fields of large characteristic. Forbes extended this result to any field by replacing the original set-multilinearization argument in~\cite{LST25}, which works only over fields of large characteristic, with a new set-multilinearization argument over any field.

In the previous sections, we presented lower-bound (or unprovability) results for constant-depth algebraic circuit lower bound statements. In this section, we show an \emph{upper bound} on proving constant-depth algebraic circuit lower bounds.


\subsection{Preliminaries for Bounded Arithmetic}

For the convenience of the reader, we provide nearly full details of the theory \VNCTwo\ in which we work. We refer the reader to~\cite{CN10} for more information.

\subsubsection{The theory \texorpdfstring{$\Vzero$}{V0}}

\begin{definition}[Language of two-sorted arithmetic $\calL_A^2$]
    The language of two-sorted arithmetic, denoted $\calL_A^2$, consists of the following relation, function and constant symbols:
    \[
        \{+, \cdot, \leq ,0 ,1, |\cdot|, =_1, =_2, \in\}.
    \]
\end{definition}

A number quantifier is said to be \emph{bounded} if it is of the form $\exists x (x \leq t \land \cdots)$ or $\forall x (x \leq t \to \cdots)$ for some \emph{number term} $t$ that does not contain $x$. We abbreviate $\exists x(x \leq t \land \cdots)$ and $\forall x (x \leq t \to \cdots)$ by $\exists x \leq t$ and $\forall x \leq t$, respectively. A string quantifier is said to be \emph{bounded} if it is of the form $\exists X (|X| \leq t \land \cdots)$ or $\forall X (|X| \leq t \to \cdots)$ for some \emph{number term} $t$ that does not contain $X$. We abbreviate $\exists X (|X| \leq t \land \cdots)$ and $\forall X (|X| \leq t \to \cdots)$ by $\exists X \leq t$ and $\forall X \leq t$, respectively.

A formula is in the class of formulas $\Sigma_0^B$ or $\Pi_0^B$ if it uses \emph{no string quantifiers} and all number quantifiers are bounded. A formula is in $\Sigma_{i+1}^B$ or $\Pi_{i+1}^B$ if it is of the form $\exists X_1 \leq t_1 \cdots \exists X_m \leq t_m \psi$ or $\forall X_1 \leq t_1 \cdots \forall X_m \leq t_m \psi$, where $\psi \in \Pi_i^B$ and $\psi \in \Sigma_i^B$, respectively, and none of $X_1, \dots, X_m$ occurs in any of $t_1, \dots, t_m$.

A $\Sigma_1^1(\calL)$-formula is one of the form $\exists \Vec{X} \varphi$, where $\Vec{X}$ is a vector of zero or more string variables, and $\varphi$ is a $\Sigma_0^B(\calL)$-formula.

We write $T(t)$ to abbreviate $t \in T$, for a number term $t$ and a string term $T$. For a formula $\psi$, we write $\psi(a/ x)$ to denote the \emph{substitution instance} of $\psi$ in which every free occurrence of the variable $x$ is replaced by the term $a$.

A finite set $N$ of natural numbers represents the finite binary string $S_N=S_N^0\ldots S_N^{|N|-1}$ such that $S_N^i =1$ if and only if $i \in N$.

\begin{definition}[Comprehension Axiom]
    If $\Phi$ is a set of formulas, then the \emph{comprehension axiom scheme} for $\Phi$, denoted by $\Phi\COMP$, is the set of all formulas
    \[
        \exists X \leq y \ \forall z <y \ (z \in X \leftrightarrow \varphi(z)),
    \]
    where $\varphi$ is a formula in $\Phi$ and $X$ does not occur free in $\varphi$.
\end{definition}

\begin{definition}[$\Vzero$]
    The theory $\Vzero$ has the vocabulary $\calL_A^2$ and is axiomatized by $2\text{-}\textsc{Basic}$ and $\Sigma_0^B\COMP$. The axioms of $2\text{-}\textsc{Basic}$ are listed in \cref{tbl: two-basic axioms}.
\end{definition}

\begin{table}[ht]
\centering
\begin{tcolorbox}[tableborder]
    \begingroup\renewcommand{\arraystretch}{1.15}\setlength{\tabcolsep}{0pt}
    \begin{tabularx}{\textwidth}{@{} X X @{}}
      $\mathbf{B1}. \; x+1\neq 0$ & $\mathbf{B7}. \; (x\le y \wedge y\le x)\supset x=y$\\
      $\mathbf{B2}. \; x+1=y+1 \supset x=y$ & $\mathbf{B8}. \; x\le x+y$\\
      $\mathbf{B3}. \; x+0=x$ & $\mathbf{B9}. \; 0\le x$\\
      $\mathbf{B4}. \; x+(y+1)=(x+y)+1$ & $\mathbf{B10}. \; x\le y \lor y\le x$\\
      $\mathbf{B5}. \; x\cdot 0=0$ & $\mathbf{B11}. \; x\le y \leftrightarrow x<y+1$\\
      $\mathbf{B6}. \; x\cdot (y+1)=(x\cdot y)+x$ & $\mathbf{B12}. \; x\neq 0 \supset \exists y\le x\, (y+1=x)$\\[0.25em]
      $\mathbf{L1}. \; X(y)\supset y<|X|$ & $\mathbf{L2}. \; y+1=|X|\supset X(y)$\\[0.25em]
      \multicolumn{2}{l}{$\mathbf{SE}. \; (\,|X|=|Y|\ \wedge\ \forall i<|X|\, (\,X(i)\leftrightarrow Y(i)\,)\,)\ \supset\ X=Y$}
    \end{tabularx}\endgroup%
\end{tcolorbox}
\caption{Axioms of $2\text{-}\textsc{Basic}$.}\label{tbl: two-basic axioms}
\Description{The axiomatization of 2-Basic.}
\end{table}

\begin{definition}[Number Induction Axiom]
    \label{def: number induction axiom}
    If $\Phi$ is a set of two-sorted formulas, then $\Phi\IND$ axioms are the formulas
    \[
        \varphi(0) \land \forall x \ (\varphi(x) \to \varphi(x+1)) \to \forall z \varphi(z)
    \]
    where $\varphi$ is a formula in $\Phi$.
\end{definition}

\begin{definition}[Number Minimization and Maximization Axioms]
    \label{def: number minimiazation and maximization axioms}
    The number minimization axioms (or \emph{least number principle} axioms) for a set $\Phi$ of two-sorted formulas are denoted $\Phi\MIN$ and consist of the formulas
    \[
        \varphi(y) \to \exists x  \leq y (\varphi(x) \land \neg \exists z < x \varphi(z))
    \]
    where $\varphi$ is a formula in $\Phi$. Similarly, the number maximization axioms for $\Phi$ are denoted $\Phi\MAX$ and consist of the formulas
    \[
        \varphi(0) \to \exists x \leq y (\varphi(x) \land \neg \exists z \leq y(x < z \land \varphi(z)))
    \]
    where $\varphi$ is a formula in $\Phi$.
\end{definition}

\begin{lemma}
    \label{lemma: comp = ind = min = max}
    Let $\calT$ be an extension of \,$\Vzero$ and $\Phi$ be a set of formulas in $\calT$. Suppose that $\calT$ proves the $\Phi\COMP$ axiom scheme. Then $\calT$ also proves the $\Phi\IND$ axiom scheme, the $\Phi\MIN$ axiom scheme, and the $\Phi\MAX$ axiom scheme.
\end{lemma}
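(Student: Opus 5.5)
We argue inside $\calT$, fix a formula $\varphi(x)\in\Phi$ (possibly with parameters), and use the $\Phi\COMP$ instance for $\varphi$ to turn statements about $\varphi$ into statements about a finite set. The point is that $\Vzero$ already proves $\Sigma_0^B\IND$, $\Sigma_0^B\MIN$ and $\Sigma_0^B\MAX$ (this is standard in Cook--Nguyen~\cite{CN10}). After comprehension, the formulas we need to induct on are $\Sigma_0^B$ in the new string parameter, so $\Vzero$ handles them.

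For $\Phi\IND$, assume $\varphi(0)$ and $\forall x\,(\varphi(x)\to\varphi(x+1))$, and fix $z$. By $\Phi\COMP$ with bound $y=z+1$, obtain $X\le z+1$ with $\forall u<z+1\,(X(u)\leftrightarrow\varphi(u))$. Apply $\Sigma_0^B\IND$ in $\Vzero$ to the formula $\psi(u)\equiv (u\le z\to X(u))$, which is $\Sigma_0^B$ with parameters $X,z$.
\begin{itemize}
\item \emph{Base case.} $\psi(0)$ holds because $0<z+1$ and $\varphi(0)$ gives $X(0)$.
\item \emph{Step.} Suppose $\psi(u)$ and $u+1\le z$. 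Then $u\le z$, so $X(u)$, hence $\varphi(u)$, hence $\varphi(u+1)$. Since $u+1<z+1$, we get $X(u+1)$.
\end{itemize}
Therefore $\psi(z)$ holds, so $X(z)$, and thus $\varphi(z)$.

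For $\Phi\MIN$, assume $\varphi(y)$. Form $X\le y+1$ agreeing with $\varphi$ below $y+1$, so $X(y)$. Apply $\Sigma_0^B\MIN$ to the formula $X(u)$ to get a least $x\le y$ with $X(x)$. Because every $z<x$ satisfies $z\le y$, the agreement of $X$ with $\varphi$ translates this into $\varphi(x)\land\neg\exists z<x\,\varphi(z)$.

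For $\Phi\MAX$, assume $\varphi(0)$. Form $X\le y+1$ agreeing with $\varphi$ on $[0,y]$, so $X(0)$. Apply $\Sigma_0^B\MAX$ to $X(u)$ with bound $y$ to get a largest $x\le y$ with $X(x)$. Translating back, this $x$ is the largest $x\le y$ with $\varphi(x)$.

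If one prefers not to cite $\Sigma_0^B\MAX$ directly, it can be derived from $\Sigma_0^B\MIN$. Apply minimization to the $\Sigma_0^B$ formula $\chi(w)\equiv w\le y\land \forall z\le y\,(w<z\to\neg X(z))$; this holds at $w=y$, so a least witness $w$ exists. That $w$ satisfies $X(w)$: if $w=0$, this is $X(0)$; otherwise $w-1$ exists by B12, and minimality of $w$ forces $X(w)$.

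There is no real obstacle here. The only points needing care are:
\begin{itemize}
\item choosing the comprehension bound so that every relevant index lies strictly below it (hence $z+1$ and $y+1$);
\item checking that the auxiliary formulas are genuinely $\Sigma_0^B$ once $X$ is a parameter;
\item noting that $\Phi$ may contain formulas with free parameters, all of which are carried along unchanged.
\end{itemize}
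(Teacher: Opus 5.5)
Your proof is correct and is the standard Cook--Nguyen argument the paper relies on here (the paper gives no proof of its own): use $\Phi\COMP$ to get a string $X$ agreeing with $\varphi$ on $[0,z]$ (resp.\ $[0,y]$), apply the corresponding principle inside $\Vzero$ to a $\Sigma_0^B$ formula with parameter $X$, and translate back. One small caution about dependencies: in~\cite{CN10}, the schemes $\Sigma_0^B\IND$, $\Sigma_0^B\MIN$, $\Sigma_0^B\MAX$ for $\Vzero$ (the proposition stated right after this lemma) are themselves derived by this same comprehension argument from the single axioms $X\MIN$, $X\IND$, $X\MAX$, which $\Vzero$ proves directly from $\Sigma_0^B\COMP$ and the basic axioms ($\mathbf{L1}$, $\mathbf{L2}$, $\mathbf{B12}$); citing those single axioms instead keeps your argument non-circular, and they suffice for every instance you use.
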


\begin{proposition}
    The theory $\Vzero$ proves $\Sigma_0^B\IND$, $\Sigma_0^B\MIN$, and $\Sigma_0^B\MAX$.
\end{proposition}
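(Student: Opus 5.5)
The statement is that $\Vzero$ proves $\Sigma_0^B\IND$, $\Sigma_0^B\MIN$ and $\Sigma_0^B\MAX$. By \cref{lemma: comp = ind = min = max}, applied with $\calT=\Vzero$ and $\Phi=\Sigma_0^B$, it is enough to check that $\Vzero$ proves the $\Sigma_0^B\COMP$ scheme. This holds trivially, since $\Sigma_0^B\COMP$ is part of the axiomatisation of $\Vzero$. So the proposition is an immediate corollary, and the real content lies in \cref{lemma: comp = ind = min = max}. Below I sketch how I would prove that lemma for $\Phi=\Sigma_0^B$, in case it is not assumed.

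\textbf{Induction.} Assume $\varphi(0)$ and $\forall x(\varphi(x)\to\varphi(x+1))$, and suppose towards a contradiction that $\neg\varphi(a)$ for some $a$. By comprehension, form a string $X\le a+1$ with $z\in X\leftrightarrow \varphi(z)$ for all $z<a+1$. I would avoid a circular appeal to string induction. Instead, I would reason from the string axioms $\mathbf{L1},\mathbf{L2}$ together with the number axioms: the length $|X|$ of a string is one more than its largest element. Apply this to the string $Y$ of all $z\le a$ with $\neg\varphi(z)$, also obtained by comprehension, and take the least element of $Y$. Then $\varphi$ holds at its predecessor, so the step hypothesis forces $\varphi$ at the element itself, a contradiction. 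This requires establishing a minimisation principle for strings (every nonempty string has a least element) directly in $\Vzero$.

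\textbf{Minimisation.} This is the key obstacle. The standard route I would follow, as in Cook--Nguyen, is:
\begin{itemize}
\item Prove the set-minimisation principle for $\Sigma_0^B$ formulas by a $\Sigma_0^B$ induction on the bound.
\item Deduce $\Sigma_0^B\MIN$ from $\Sigma_0^B\IND$ applied to the formula $\psi(x)\equiv \forall z\le x\,\neg\varphi(z)$, which is again $\Sigma_0^B$. If $\varphi(y)$ holds but there is no least witness, then $\psi(0)$ holds and $\psi$ propagates upward, giving $\psi(y)$, which contradicts $\varphi(y)$.
\end{itemize}
To break the circularity, I would first prove $\Sigma_0^B\IND$ via comprehension in the following way. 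Let $X$ collect the $z<a+1$ with $\neg\varphi(z)$. If $X$ is nonempty, its length is defined and gives the largest failure $p=|X|-1$. I would then derive a contradiction using a second string recording all failures below $p$, iterating down to $0$ with the basic axioms $\mathbf{B1}$--$\mathbf{B12}$.

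\textbf{Maximisation.} $\Sigma_0^B\MAX$ follows from $\Sigma_0^B\MIN$ by applying minimisation to the $\Sigma_0^B$ formula $\varphi'(u)\equiv u\le y\wedge \varphi(y-u)$, using the definable truncated subtraction that $\Vzero$ supports. Alternatively, it can be read off directly from $|X|$ for the comprehension string $X$ of $\{z\le y:\varphi(z)\}$. This step is routine.
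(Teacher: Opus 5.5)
Your reduction is correct and is exactly the paper's route. The paper gives no separate argument: since $\Sigma_0^B\COMP$ is one of the axiom schemes of $\Vzero$, \cref{lemma: comp = ind = min = max} applies directly with $\calT=\Vzero$ and $\Phi=\Sigma_0^B$.

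The supplementary sketch of that lemma, however, does not work as written.

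\textbf{Where it fails.} Proving set-minimisation ``by a $\Sigma_0^B$ induction on the bound'' presupposes the induction scheme you are trying to establish. Your proposed way out is to take the largest failure $p=|X|-1$ and ``iterate down to $0$'' with fresh strings. That is an iteration of possibly nonstandard length, i.e.\ precisely an induction, and it cannot be carried out by finitely many applications of $\mathbf{B1}$--$\mathbf{B12}$. Moreover, the largest failure is the wrong extremum. Its successor satisfying $\varphi$ is perfectly consistent with $\varphi(x)\to\varphi(x+1)$, so no contradiction arises there.

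\textbf{The missing idea.} Apply comprehension to the prefix-closed formula $\forall y\le z\,\varphi(y)$. Then the maximum read off from the length function sits directly below the least failure, and no iteration is needed. Concretely:
\begin{itemize}
\item Take $Y\le a+1$ with $Y(z)\leftrightarrow\forall y\le z\,\varphi(y)$ for all $z<a+1$.
\item Since $\varphi(0)$, we have $Y(0)$, so $|Y|=p+1$ for some $p$, and $Y(p)$ holds by $\mathbf{L2}$.
\item If $p<a$, the step hypothesis gives $\varphi(p+1)$, hence $Y(p+1)$. By $\mathbf{L1}$ this means $p+1<|Y|=p+1$, a contradiction.
\item Hence $p=a$, and $Y(a)$ gives $\varphi(a)$.
\end{itemize}
The same trick applied to $\forall y\le z\,\neg X(y)$ yields the string minimisation principle you wanted. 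Your reading of $\Sigma_0^B\MAX$ directly off $|X|$ is fine, since there the length function already gives the right extremum.
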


\begin{definition}[Two-Sorted Polynomially Bounded Functions]
    \label{def: two-sorted polynomially bounded functions}
    A \emph{number function} is a two-sorted function whose output belongs to the number sort. A \emph{string function} is a two-sorted function whose output belongs to the string sort.
    
    A number function $f$ is \emph{polynomially bounded} (or \emph{p-bounded}) if there is a polynomial $p$ such that $f(\Vec{x},\Vec{Y})\leq p(\Vec{x},|\Vec{Y}|)$. A string function $F$ is p-bounded if $|F(\Vec{x},\Vec{Y})|\leq p(\Vec{x},|\Vec{Y}|)$ for some polynomial $p$.
\end{definition}

\begin{definition}[Graph, Bit Graph]
    \label{def: graph, bit graph}
    The \emph{graph} $G_f$ of a number function $f(\Vec{x}, \Vec{Y})$ is defined by
    \[
        G_f(z, \Vec{x}, \Vec{Y})  \leftrightarrow z = f(\Vec{x}, \Vec{Y}).
    \]
    The \emph{bit graph} $B_F$ of a string function $F(\Vec{x}, \Vec{Y})$ is defined by
    \[
        B_F(i, \Vec{x}, \Vec{Y}) \leftrightarrow F(\Vec{x}, \Vec{Y})(i).
    \]
\end{definition}

Later, we shall introduce bit-definability of new string function symbols based on the notion of bit graph.

\begin{definition}[Function class]
    If $C$ is a two-sorted complexity class of relations, then the corresponding function class $FC$ consists of all p-bounded number functions whose graphs are in $C$, together with all p-bounded string functions whose bit graphs are in $C$.
\end{definition}

Saying that a graph $G_f$ (respectively, a bit graph $B_F$) is in $C$ means that there exists a family of circuits in $C$ which, given $z$ (respectively, $i$), the tuple $\Vec{x}$ in unary, and the strings $\Vec{Y}$, decides whether $(z,\Vec{x},\Vec{Y})$ is in $G_f$ (respectively, $(i,\Vec{x},\Vec{Y})$ is in $B_F$).

Note that $C$ is a uniform complexity class.

\begin{definition}[$\FACZ$]
    A string (respectively, number) function is in $\FACZ$ if it is polynomially bounded and its bit graph (respectively, graph) is definable by a $\Sigma_0^B$-formula in the language $\calL_A^2$.
\end{definition}

\begin{definition}[Two-sorted definability in a theory $\calT$]
    Let $\calT$ be a theory in a language $\calL \supseteq \calL_A^2$ and let $\Phi$ be a set of formulas in the language $\calL$. A \emph{number function} $f$ is $\Phi$-definable in the theory $\calT$ iff there is a formula $\varphi(\Vec{x}, y ,\Vec{X})$ in $\Phi$ such that
    \begin{equation}
        \calT \vdash \forall \Vec{x} \forall \Vec{X} \exists! y \varphi(\Vec{x}, y, \Vec{X}),
    \end{equation}
    and it holds in the standard two-sorted model $\calN_2$ (whose number sort is $\mathbb{N}$ and the set sort is the set of finite subsets of $\mathbb{N}$) that
    \begin{equation}
        \label{equa: defining axioms for number functions in two-sorted theory}
        y = f(\Vec{x}, \Vec{X}) \leftrightarrow \varphi(\Vec{x}, y , \Vec{X}).
    \end{equation}
    A \emph{string function} $F$ is $\Phi$-definable in the theory $\calT$ iff there is a formula $\varphi(\Vec{x}, \Vec{X}, Y)$ in $\Phi$ such that
    \begin{equation}
        \calT \vdash \forall \Vec{x} \forall \Vec{X} \exists! Y \varphi(\Vec{x}, \Vec{X}, Y)
    \end{equation}
    and it holds in the standard two-sorted model $\calN_2$ that
    \begin{equation}
        \label{equa: defining axioms for string functions in two-sorted theory}
        Y = F(\Vec{x}, \Vec{X}) \leftrightarrow \varphi(\Vec{x},\Vec{X}, Y).
    \end{equation}
    The formulas \eqref{equa: defining axioms for number functions in two-sorted theory} and \eqref{equa: defining axioms for string functions in two-sorted theory} are the \emph{defining axioms} for $f$ and $F$, respectively.
\end{definition}

For string functions, $\varphi(\Vec{x},\Vec{X},Y)$ defines the entire output string: it holds exactly when $Y=F(\Vec{x},\Vec{X})$. A bit definition instead specifies the individual output bits.

\begin{definition}[Conservative extension of a theory]
    Let $\calT$ be a theory in a language $\calL$. We say that a theory $\calT^\prime \supseteq \calT$ in a language $\calL^\prime \supseteq \calL$ is \emph{conservative} over $\calT$ if every $\calL$ formula provable in $\calT^\prime$ is also provable in $\calT$.
\end{definition}

We may expand the language $\calL$ and a theory $\calT$ over $\calL$. If $f$ is definable in $\calT$, in particular if $\calT$ proves existence and uniqueness of the value satisfying the defining formula, then adjoining a symbol for $f$ together with its defining axiom gives a conservative extension. The resulting extension, denoted $\calT(f)$ for a function $f$ (and similarly for relations), is a conservative extension of $\calT$. However, such an extension \emph{does not guarantee} that one can use the new function symbols in the \emph{comprehension} (and \emph{induction}) axiom schemes.

\begin{definition}[Provably Total Function]
    A function (either a number function or a string function) is said to be \emph{provably total} in a theory $\calT$ iff it is $\Sigma_1^1$-definable in $\calT$.
\end{definition}

We now state an important corollary, which follows from the two-sorted version of Parikh's theorem. 

\begin{definition}[Polynomially Bounded Theory]
    Let $\calT$ be a two-sorted theory over a vocabulary $\calL \supseteq \calL_A^2$. A number function symbol $f(\Vec{x},\Vec{X})$ in $\calL$ is said to be \emph{polynomially bounded in $\calT$} if there is an $\calL_A^2$-number term $t(\Vec{x},|\Vec{X}|)$ such that
    \[
        \calT \vdash f(\Vec{x},\Vec{X}) \leq t(\Vec{x},|\Vec{X}|).
    \]
    Similarly, a string function symbol $F(\Vec{x},\Vec{X})$ in $\calL$ is said to be \emph{polynomially bounded in $\calT$} if there is an $\calL_A^2$-number term $t(\Vec{x},|\Vec{X}|)$ such that
    \[
        \calT \vdash |F(\Vec{x},\Vec{X})| \leq  t(\Vec{x},|\Vec{X}|).
    \]
    We say that $\calT$ is a \emph{polynomially bounded theory} if:
    \begin{enumerate}
        \item $\calT$ extends $\Vzero$;
        \item $\calT$ can be axiomatized by a set of bounded formulas; and
        \item every number and string function symbol in the vocabulary $\calL$ is polynomially bounded in $\calT$.
    \end{enumerate}
\end{definition}

The following is a consequence of the two-sorted version of Parikh's theorem.

\begin{corollary}
    Let $\calT$ be a polynomially bounded theory. Then every function provably total in $\calT$ is polynomially bounded. Moreover, a function is provably total in $\calT$ if and only if it is $\Sigma_1^B$-definable in $\calT$.
\end{corollary}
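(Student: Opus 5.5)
The plan is to derive the corollary from the two-sorted Parikh theorem together with the equivalence between bounded definability and witnessing. Fix a polynomially bounded theory $\calT$ and a function $F$ that is provably total in $\calT$. By definition there is a $\Sigma_1^1$ formula $\exists \vec Z\,\varphi(\vec x,\vec X,Y,\vec Z)$, with $\varphi\in\Sigma_0^B(\calL)$, such that $\calT\vdash\forall\vec x\forall\vec X\exists! Y\,\exists\vec Z\,\varphi$. The idea is to bound both $Y$ and the auxiliary $\vec Z$ by terms, which yields both claims at once.

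First I would apply Parikh's theorem in its two-sorted form. If a polynomially bounded theory proves $\forall\vec x\forall\vec X\,\exists Y\exists\vec Z\,\varphi$ with $\varphi$ bounded, then it proves $\forall\vec x\forall\vec X\,\exists Y\le t\,\exists\vec Z\le t\,\varphi$ for some $\calL_A^2$-term $t(\vec x,|\vec X|)$. The standard proof uses compactness and model theory. Suppose the bounded version fails for every term. Take a nonstandard model $M$ of $\calT$ together with parameters for which no witness lies below any term in the parameters. Cut $M$ down to the substructure whose numbers are bounded by terms in the parameters and whose strings have such bounded lengths. Because $\calT$ is axiomatized by bounded formulas and every function symbol is polynomially bounded in $\calT$, this substructure is closed under the language and is still a model of $\calT$; bounded formulas are absolute between the two models. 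But in the substructure the $\Sigma_1^1$ statement has no witness, which is a contradiction.

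For a number function $f$ the same argument applies to $\exists y\exists\vec Z\varphi$, and gives $y\le t$. Applying this to a true witness $Y$ in the standard model $\calN_2$ gives $|F(\vec x,\vec X)|\le t(\vec x,|\vec X|)$, so $F$ is polynomially bounded.

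For the "moreover" part, one direction is immediate: every $\Sigma_1^B$ formula is equivalent to a $\Sigma_1^1$ formula, since the bounded string quantifiers can simply be dropped into the prefix. For the converse, I would replace the defining formula by $\psi\equiv\exists\vec Z\le t\,\varphi$, which is $\Sigma_1^B$. The bounded form from Parikh's theorem gives existence of a $Y$ satisfying $\psi$. Uniqueness transfers because $\psi\to\exists\vec Z\varphi$, and the original formula already gave uniqueness. Finally, $\psi$ still defines $F$ in $\calN_2$, because the unique witness there satisfies the bounded form. Hence $F$ is $\Sigma_1^B$-definable.

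The main obstacle is the Parikh step. The delicate point is checking that the term-bounded cut is closed under all function symbols of $\calL$ and satisfies the bounded axioms, including the instances of comprehension that are in $\calT$. This is exactly where the hypotheses that $\calT$ is bounded-axiomatized and polynomially bounded are used. Since this is the standard result of~\cite{CN10}, I would cite it there rather than reprove it.
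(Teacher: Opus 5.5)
Your proposal is correct and follows the same route as the paper, which simply derives this corollary from the two-sorted Parikh theorem of Cook--Nguyen: bound the $\Sigma_1^1$ witnesses $Y,\vec Z$ by an $\calL_A^2$-term $t$, read off $|F|\le t$, and use $\exists \vec Z\le t\,\varphi$ as the $\Sigma_1^B$ defining formula. The one tacit assumption, which the paper also makes, is that $\calT$ holds in the standard model $\calN_2$; you need this both to conclude $|F(\vec x,\vec X)|\le t(\vec x,|\vec X|)$ and to see that the bounded formula still defines $F$ there.
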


\begin{definition}[Representable/Definable Relations]
    \label{def: represnetable/definable relations}
    Let $\calL \supseteq \calL_A^2$ be a two-sorted vocabulary, and let $\varphi$ be an $\calL$-formula. Then we say that $\varphi(\Vec{x}, \Vec{X})$ \emph{represents} (or \emph{defines}) a relation $R(\Vec{x}, \Vec{X})$ if
    \begin{equation}
        R(\Vec{x}, \Vec{X}) \leftrightarrow \varphi(\Vec{x}, \Vec{X})
    \end{equation}
    holds in the standard two-sorted model $\calN_2$. If $\Phi$ is a set of $\calL$-formulas, then we say that $R(\Vec{x},\Vec{X})$ is $\Phi$-\emph{representable} (or $\Phi$-\emph{definable}) if it is represented by some $\varphi \in \Phi$.
\end{definition}

\begin{definition}[$\Sigma_0^B(\Phi)$]
    Let $\Phi$ be a set of formulas in a two-sorted language $\calL \supseteq \calL_A^2$. We denote by $\Sigma_0^B(\Phi)$ the smallest class of formulas containing $\Phi \cup \Sigma_0^B(\calL)$ and closed under Boolean connectives and bounded number quantification. That is, $\Sigma_0^B(\Phi)$ is closed under $\neg$, $\land$, $\lor$, and quantifiers of the form
    \[
        \exists x \leq t \qquad\text{and} \qquad \forall x \leq t,
    \]
    where $t$ is a number term not containing $x$.
\end{definition}

\begin{definition}[Bit-definable Function]
    \label{def: Bit-definable function}
    Let $\Phi$ be a set of $\calL$-formulas, where $\calL \supseteq \calL_A^2$. We say that a string function symbol $F(\Vec{x},\Vec{Y})$ not in $\calL$ is \emph{$\Phi$-bit-definable from $\calL$} if there is a formula $\varphi(i,\Vec{x},\Vec{Y}) \in \Phi$ and an $\calL_A^2$-number term $t(\Vec{x},|\Vec{Y}|)$ such that the bit graph of $F$ satisfies
    \begin{equation}
        \label{equa: bit-defining axiom}
        F(\Vec{x},\Vec{Y})(i) \leftrightarrow \bigl(i<t(\Vec{x},|\Vec{Y}|)\land \varphi(i,\Vec{x},\Vec{Y})\bigr).
    \end{equation}
    We call the formula on the right-hand side of \eqref{equa: bit-defining axiom} a \emph{bit-defining axiom}, or a \emph{bit definition}, for $F$.
\end{definition}

One crucial point is that adding a $\Phi$-bit-definable string function $F$ together with its bit-defining axiom need not yield a conservative extension of a theory $\calT$. The bit definition describes the individual bits of the output, but the theory must also be able to prove the existence of a string collecting all these bits.

However, if $\calT$ proves the $\Phi\COMP$ axiom scheme, then comprehension can be used to construct the output string. Consequently, a polynomially bounded string function whose bit graph is $\Phi$-representable is $\Sigma_0^B(\Phi)$-definable in $\calT$. More generally, we have the following lemma.

\begin{lemma}[Extension by Bit Definition]
    \label{lemma: extension by bit definition}
    Let $\calT$ be a theory over $\calL$ that contains $\Vzero$, and let $\Phi$ be a set of $\calL$-formulas such that $\Phi \supseteq \Sigma_0^B(\calL)$. Suppose that $\calT$ proves the $\Phi\COMP$ axiom scheme. Then:
    \begin{enumerate}
        \item every polynomially bounded number function whose graph is $\Phi$-representable is $\Sigma_0^B(\Phi)$-definable in $\calT$; and
        \item every polynomially bounded string function whose bit graph is $\Phi$-representable is $\Sigma_0^B(\Phi)$-definable in $\calT$.
    \end{enumerate}
\end{lemma}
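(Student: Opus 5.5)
The plan is to reduce both parts to the comprehension scheme of $\calT$ together with the standard fact that $\Sigma_0^B(\Phi)$-formulas are closed under the relevant logical operations.

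\textbf{Part 2 (string functions).} Let $F(\Vec{x},\Vec{Y})$ be polynomially bounded with $|F(\Vec{x},\Vec{Y})|\le t(\Vec{x},|\Vec{Y}|)$, and let its bit graph be represented by $\varphi(i,\Vec{x},\Vec{Y})\in\Phi$. The candidate defining formula is
\[
\psi(\Vec{x},\Vec{Y},Z)\;\equiv\;|Z|\le t \;\land\; \forall i<t\,\bigl(Z(i)\leftrightarrow \varphi(i,\Vec{x},\Vec{Y})\bigr),
\]
which lies in $\Sigma_0^B(\Phi)$ because $\Phi\supseteq\Sigma_0^B(\calL)$ and the class is closed under Boolean connectives and bounded number quantifiers.

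Existence of such a $Z$ is exactly an instance of $\Phi\COMP$, taking the bound $y:=t$. Uniqueness uses the extensionality axiom $\mathbf{SE}$, but only after the lengths of any two witnesses $Z,Z'$ are known to agree. Here there is a subtlety: $|Z|$ is one more than the largest element, so the two witnesses could a priori differ in length.

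To fix this, I will strengthen $\psi$ so that it pins down $|Z|$: require that $Z(|Z|-1)$ holds whenever $|Z|>0$, i.e. $|Z|$ is determined by the bits via $\mathbf{L1}$ and $\mathbf{L2}$. Then I prove in $\calT$ that two such $Z,Z'$ have equal length. The argument is that both contain the same elements below $t$ and none at or above $t$, so if $|Z|<|Z'|$, then $|Z'|-1\in Z'$, hence $|Z'|-1\in Z$, which contradicts $\mathbf{L1}$. This is a $\Sigma_0^B$ argument available in $\Vzero$. Finally, by construction $\psi$ holds in $\calN_2$ exactly when $Z=F(\Vec{x},\Vec{Y})$.

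\textbf{Part 1 (number functions).} Let $f(\Vec{x},\Vec{Y})$ be polynomially bounded, say $f\le t$, with graph represented by $\varphi(z,\Vec{x},\Vec{Y})\in\Phi$. I take $\varphi$ itself, or rather $\varphi\land z\le t$, as the defining formula. This is where the main obstacle lies: the graph being correct in $\calN_2$ does not make $\calT$ prove $\exists!z\,\varphi$.

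I would handle this by defining instead
\[
\varphi'(z)\;\equiv\;\bigl(z\le t\land\varphi(z)\land\forall z'<z\,\neg\varphi(z')\bigr)\;\lor\;\bigl(z=0\land\forall z'\le t\,\neg\varphi(z')\bigr).
\]
Existence of $z$ satisfying $\varphi'$ follows from $\Phi\MIN$, which is available by the earlier lemma since $\Phi\COMP$ holds; uniqueness is immediate from $\mathbf{B10}$. Moreover, $\varphi'\in\Sigma_0^B(\Phi)$ and it agrees with the graph of $f$ in $\calN_2$, since there the unique true witness is $f$ itself.

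The same minimization trick is not needed for string functions, because there comprehension produces the witness directly. The only delicate point in that case is the length normalisation above, which I would check carefully against the exact bit-definition convention used in the paper.
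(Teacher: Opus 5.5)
Your proposal is correct and follows the route the paper indicates. The paper gives no detailed proof, only the remark that comprehension constructs the output string, which is exactly your part 2; your least-witness formula with $\Phi\MIN$ for part 1 is the standard argument the paper leaves implicit by deferring to Cook--Nguyen. The extra clause pinning down $|Z|$ is harmless but redundant, since axiom $\mathbf{L2}$ already gives $Z(|Z|-1)$ whenever $|Z|>0$, so your uniqueness argument via $\mathbf{L1}$, $\mathbf{L2}$ and $\mathbf{SE}$ already works for the unstrengthened $\psi$.
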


In particular, taking $\calT=\Vzero$ and $\Phi=\Sigma_0^B(\calL_A^2)$, the lemma has the following consequence (thanks to  $\Sigma_0^B(\Phi)=\Sigma_0^B(\calL_A^2)$).

\begin{corollary}
    Every function in $\FACZ$ is $\Sigma_0^B$-definable in $\Vzero$.
\end{corollary}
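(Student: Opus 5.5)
The corollary follows by instantiating Lemma~\ref{lemma: extension by bit definition} with $\calT=\Vzero$ and $\Phi=\Sigma_0^B(\calL_A^2)$. I would check its two hypotheses and then treat number functions and string functions separately.

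\textbf{Hypotheses.} First, $\Phi\supseteq\Sigma_0^B(\calL)$ holds trivially, since $\calL=\calL_A^2$. Second, $\Vzero$ proves $\Phi\COMP$ because $\Sigma_0^B\COMP$ is one of its defining axiom schemes. Lemma~\ref{lemma: extension by bit definition} then says that every polynomially bounded function whose graph (respectively, bit graph) is $\Sigma_0^B$-representable is $\Sigma_0^B(\Phi)$-definable in $\Vzero$. Since $\Sigma_0^B$ is already closed under Boolean connectives and bounded number quantification, $\Sigma_0^B(\Phi)=\Sigma_0^B(\calL_A^2)$.

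\textbf{String functions.} Let $F\in\FACZ$ be a string function. By definition it is p-bounded, say $|F(\Vec{x},\Vec{Y})|\le p(\Vec{x},|\Vec{Y}|)$, and its bit graph is given by a $\Sigma_0^B$ formula $\varphi(i,\Vec{x},\Vec{Y})$. The key step is to write a $\Sigma_0^B$ formula characterising the whole output string:
\[
\psi(\Vec{x},\Vec{Y},Z)\;\equiv\;|Z|\le t \land \forall i<t\,(Z(i)\leftrightarrow\varphi(i,\Vec{x},\Vec{Y})),
\]
where $t$ is a term bounding $p$.

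One subtlety is that the length of $F$'s output must also be pinned down, not just its bits. I would handle this with a $\Sigma_0^B$ clause saying that $|Z|$ is one more than the largest $i<t$ satisfying $\varphi$, or $0$ if there is none. This works because, by axioms $\mathbf{L1}$ and $\mathbf{L2}$, a string's length is determined by its top set bit.

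\textbf{Existence and uniqueness.} Existence comes from $\Sigma_0^B\COMP$ applied to $\varphi$ with bound $t$. Uniqueness comes from extensionality $\mathbf{SE}$, which applies because the lengths agree by the clause above.

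\textbf{Number functions.} For a number function $f\in\FACZ$, the graph $y=f(\Vec{x},\Vec{Y})$ is already given by a $\Sigma_0^B$ formula, and it is p-bounded. Existence of $y$ follows from $\Sigma_0^B\MIN$, available since $\Vzero$ proves $\Sigma_0^B\IND$, $\Sigma_0^B\MIN$ and $\Sigma_0^B\MAX$. Uniqueness is part of the graph being a function in the standard model, and is also provable once the defining formula is taken to be the least witness.

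\textbf{Main obstacle.} The only delicate point is this bookkeeping around exact string lengths and least witnesses. It is handled entirely by the generic argument behind Lemma~\ref{lemma: extension by bit definition}, so the corollary itself is immediate.
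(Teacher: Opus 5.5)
Your proposal is correct and follows the paper's argument: instantiate \cref{lemma: extension by bit definition} with $\calT=\Vzero$ and $\Phi=\Sigma_0^B(\calL_A^2)$, using that $\Sigma_0^B\COMP$ is an axiom scheme of $\Vzero$ and that $\Sigma_0^B(\Phi)=\Sigma_0^B(\calL_A^2)$. Your extra re-derivation of the lemma is unnecessary, and its number-function case is slightly off: $\Sigma_0^B\MIN$ yields a least witness only once some witness is known to exist, which $\Vzero$ need not prove for an arbitrary graph formula $\theta(y)$ (truth in $\calN_2$ does not transfer), so the defining formula needs a default clause such as $\bigl(\forall z\le t\,\neg\theta(z)\bigr)\land y=0$; the corollary itself does not depend on this.
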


We now specialise the preceding notion of bit-definability to formulas in a given language. If a p-bounded string function $F$ is bit-definable by a $\Sigma_0^B(\calL)$ formula, we say that $F$ is \emph{$\Sigma_0^B$-definable from $\calL$}. Similarly, a p-bounded number function $f$ is \emph{$\Sigma_0^B$-definable from $\calL$} if its graph is represented by a $\Sigma_0^B(\calL)$ formula.

\begin{corollary}
    Let $\calT$ be a theory over $\calL$ that contains $\Vzero$, and suppose that $\calT$ proves the $\Sigma_0^B(\calL)\COMP$ axiom scheme. Then a function which is $\Sigma_0^B$-definable from $\calL$ is $\Sigma_0^B(\calL)$-definable in $\calT$.
\end{corollary}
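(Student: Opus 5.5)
The corollary follows by combining \cref{lemma: extension by bit definition} with the hypothesis. Write $\Phi \coloneqq \Sigma_0^B(\calL)$ and take $\calT$ as in the statement. Since $\calT$ contains $\Vzero$, $\Phi \supseteq \Sigma_0^B(\calL)$ holds trivially, and by assumption $\calT$ proves $\Phi\COMP$, all hypotheses of \cref{lemma: extension by bit definition} are met.

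\textbf{String functions.} Let $F$ be a string function that is $\Sigma_0^B$-definable from $\calL$. By definition, $F$ is p-bounded and has a bit definition
\[
F(\Vec{x},\Vec{Y})(i) \leftrightarrow \bigl(i<t(\Vec{x},|\Vec{Y}|)\land \varphi(i,\Vec{x},\Vec{Y})\bigr)
\]
with $\varphi \in \Sigma_0^B(\calL)$. The right-hand side is again a $\Sigma_0^B(\calL)$-formula, since it is a conjunction with an atomic formula. Therefore the bit graph of $F$ is $\Phi$-representable, and item 2 of the lemma shows that $F$ is $\Sigma_0^B(\Phi)$-definable in $\calT$.

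\textbf{Number functions.} For a number function $f$, the graph is represented by a $\Sigma_0^B(\calL)$-formula, and item 1 of the lemma applies in the same way.

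\textbf{Conclusion.} It remains to check that $\Sigma_0^B(\Phi)=\Sigma_0^B(\Sigma_0^B(\calL))=\Sigma_0^B(\calL)$. This holds because $\Sigma_0^B(\calL)$ already contains the atomic $\calL$-formulas and is closed under Boolean connectives and bounded number quantification, so closing it again adds nothing. Hence the function is $\Sigma_0^B(\calL)$-definable in $\calT$.

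The only point requiring any care is the bookkeeping in this last step, namely that the defining formula produced by the lemma indeed lies in $\Sigma_0^B(\calL)$. It is immediate from the closure properties in the definition of $\Sigma_0^B(\Phi)$.
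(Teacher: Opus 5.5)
Your proposal is correct and is essentially the intended argument. The paper states this corollary without a separate proof, as a direct instance of the Extension by Bit Definition lemma with $\Phi=\Sigma_0^B(\calL)$, and the only bookkeeping needed is what you supply: the bit-defining formula $i<t\land\varphi$ lies in $\Sigma_0^B(\calL)$, and $\Sigma_0^B(\Phi)=\Sigma_0^B(\calL)$ because $\Sigma_0^B(\calL)$ is already closed under Boolean connectives and bounded number quantification.
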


\begin{corollary}
    Suppose that the theory $\calT$ proves $\Sigma_0^B(\calL)\COMP$ where $\calL$ is the vocabulary of $\calT$. Then the theory resulting from $\calT$ by adding the $\Sigma_0^B(\calL)$-defining axioms or the $\Sigma_0^B(\calL)$-bit-defining axioms for a collection of number functions and string functions is a conservative extension of $\calT$.
\end{corollary}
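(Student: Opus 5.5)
We must prove the final corollary: if $\calT$ proves $\Sigma_0^B(\calL)\COMP$, where $\calL$ is the vocabulary of $\calT$, then adding $\Sigma_0^B(\calL)$-defining axioms, or $\Sigma_0^B(\calL)$-bit-defining axioms, for a collection of number and string functions gives a conservative extension of $\calT$.

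\textbf{Single function.} We take $\Phi=\Sigma_0^B(\calL)$, which contains $\Sigma_0^B(\calL)$ and for which $\calT$ proves $\Phi\COMP$.
\begin{itemize}
\item For a number function $f$, the graph is $\Phi$-representable and $f$ is polynomially bounded. By \cref{lemma: extension by bit definition}(1), $\calT$ proves $\forall\Vec{x}\forall\Vec{X}\exists! y\,\varphi$ for a $\Sigma_0^B(\Phi)=\Sigma_0^B(\calL)$ formula $\varphi$.
\item For a string function $F$ with bit-defining axiom $F(\Vec{x},\Vec{Y})(i)\leftrightarrow (i<t\land\varphi(i,\Vec{x},\Vec{Y}))$:
  \begin{itemize}
  \item Existence: comprehension for $\varphi$ with bound $t$ gives $Z$ with $|Z|\le t$ and $Z(i)\leftrightarrow(i<t\land\varphi)$ for $i<t$. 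By axiom L1, no $i\ge|Z|$ lies in $Z$.
  \item Uniqueness: it follows from the extensionality axiom SE. Two such strings agree on all bits. Their lengths also agree, since the length of each is one more than its largest element, or $0$ if it is empty, by L1 and L2. This argument uses only $\Sigma_0^B\MAX$.
  \end{itemize}
\end{itemize}

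\textbf{Conservativity for one symbol.} This is the standard argument for definitional extensions. Given a model $M\models\calT$, expand it by interpreting the new symbol as the unique witness, which exists in $M$ by the provable $\exists!$. Every $\calL$-sentence true in all models of $\calT$ stays true in the expansion, so any $\calL$-formula provable in $\calT(f)$ holds in all models of $\calT$. Completeness then gives provability in $\calT$. Syntactically, one can equally eliminate the symbol by replacing atomic occurrences using $\exists y(\varphi\land\ldots)$.

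\textbf{Collections of functions.} We handle these by iteration, and this is the main obstacle. After adding $f_1$, the vocabulary is $\calL_1=\calL\cup\{f_1\}$, and later definitions may be $\Sigma_0^B(\calL_1)$. To continue we need $\calT(f_1)\vdash\Sigma_0^B(\calL_1)\COMP$. I would establish this by showing that every $\Sigma_0^B(\calL_1)$ formula is provably equivalent in $\calT(f_1)$ to a $\Sigma_0^B(\calL)$ formula, by induction on the formula.
\begin{itemize}
\item A number term $f_1(\Vec{s})$ inside an atom is replaced by a bounded quantifier $\exists y\le p(\Vec{s})\,(\varphi_{f_1}(\Vec{s},y)\land\ldots)$. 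The bound $p$ comes from polynomial boundedness of $f_1$ and is a $\calL_A^2$ term, so the result is bounded.
\item A string atom $F_1(\Vec{s})(r)$ is replaced by $r<t\land\varphi_{F_1}(r,\ldots)$.
\item Terms $|F_1(\Vec{s})|$ need extra care, because the length is not explicitly given by the bit definition. I would express $|F_1(\Vec{s})|=z$ by the $\Sigma_0^B(\calL)$ formula stating that $z\le t$, that $z=0$ or bit $z-1$ holds, and that no larger $i<t$ has its bit set.
\item Nested function symbols are handled innermost-first.
\end{itemize}
Hence $\calT(f_1)$ proves $\Sigma_0^B(\calL_1)\COMP$, and by induction on the number of symbols each step is conservative. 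Conservativity is transitive. Finally, any proof uses only finitely many of the new axioms, which handles an infinite collection by compactness.
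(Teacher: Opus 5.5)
Your treatment of a single function matches what the paper intends. The paper states the corollary without proof, as a consequence of the Extension by Bit Definition lemma together with its remark that adjoining a provably total function with its defining axiom is conservative. The ingredients you use are the right ones:
\begin{itemize}
\item for a number function, that lemma supplies a $\Sigma_0^B(\calL)$ formula whose functionality $\calT$ proves;
\item for a string function, comprehension together with L1, L2 and SE gives existence and uniqueness directly.
\end{itemize}
In the expansion argument, the point to state is that the expanded structure is a model of $\calT(f)$, so every $\calL$-consequence of $\calT(f)$ holds in the original model.

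The problem is the collection step, which rests on a misreading. The statement concerns only $\Sigma_0^B(\calL)$-defining axioms, with $\calL$ the vocabulary of $\calT$ itself. So every defining formula lies in $\calL$, and $\calT$ already proves each corresponding $\exists!$ statement. You can therefore expand any model of $\calT$ by interpreting all the new symbols at once, each by its unique witness. The expansion satisfies $\calT$ and every defining axiom, so the one-symbol argument covers the whole collection, even an infinite one. No iteration is needed, and no comprehension for the enlarged vocabulary.

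The iteration you substitute is not just superfluous: its key claim is false for general $\calL$. That claim is that every $\Sigma_0^B(\calL_1)$ formula is $\calT(f_1)$-provably equivalent to a $\Sigma_0^B(\calL)$ formula, so that $\calT(f_1)\vdash\Sigma_0^B(\calL_1)\COMP$. Your rules cover atoms $F_1(\Vec{s})(r)$, terms $|F_1(\Vec{s})|$ and number terms $f_1(\Vec{s})$. They do not cover a string term $F_1(\Vec{s})$ that occurs as an argument of a function symbol $G\in\calL$ other than $|\cdot|$. Working innermost-first cannot eliminate it, because the only way to name $F_1(\Vec{s})$ is with a string quantifier.

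Here is a concrete counterexample. Take a nonstandard $M\models\mathsf{PA}$ whose string sort is its coded sets. Let $\calL=\calL_A^2\cup\{G\}$, where $G(Z)=\{0\}$ if $|Z|$ is standard and $G(Z)=\emptyset$ otherwise. String terms of $\calL$ contain no number variables. Hence every instance of $\Sigma_0^B(\calL)\COMP$ reduces to ordinary comprehension with finitely many extra coded parameters, and so holds. Now add $F_1(z)(i)\leftrightarrow i<z$. Comprehension for $G(F_1(z))(0)$ below a nonstandard bound would code the standard cut, which is impossible.

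There is a further gap. Replacing $f_1(\Vec{s})$ by $\exists y\le p(\Vec{s})(\ldots)$ requires $\calT(f_1)\vdash f_1(\Vec{s})\le p(\Vec{s})$. Semantic polynomial boundedness does not supply this unless the bound is built into the defining formula. The paper's $\Sigma_0^B$-Transformation lemma does handle compositions of new symbols, but only under hypotheses that are absent here: a polynomially bounded theory, and $\Sigma_0^B(\calL_A^2)$ (bit-)definitions for all function symbols. Drop the iteration and use the simultaneous expansion instead.
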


The following theorem shows that extending a theory with the $\Sigma_0^B(\calL)$-defining axioms or the $\Sigma_0^B(\calL)$-bit-defining axioms is not only a conservative extension over $\Vzero$, it also proves the $\Sigma_0^B(\calL)\COMP$ and $\Sigma_0^B(\calL)$ induction axioms schemes,
\begin{lemma}[$\Sigma_0^B$-Transformation]
    Let $\calT$ be a polynomially bounded theory which extends $\Vzero$, and assume that the vocabulary $\calL$ of $\calT$ has the same predicate symbols as $\calL_A^2$. Suppose that for every number function $f$ in $\calL$, $\calT$ proves a $\Sigma_0^B(\calL_A^2)$ defining axiom for $f$, and for every string function $F$ in $\calL$, $\calT$ proves a $\Sigma_0^B(\calL_A^2)$ bit-defining axiom for $F$. Then for every $i \geq 0$ and every $\Sigma_i^B(\calL)$ formula $\varphi^+$ there is a $\Sigma_i^B(\calL_A^2)$ formula $\varphi$ such that
    \[
        \calT \vdash \varphi^+ \leftrightarrow \varphi.
    \]
\end{lemma}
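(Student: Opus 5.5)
The plan is to go through the standard machinery of Cook--Nguyen: replace every string and number function of $\calL$ by its $\Sigma_0^B(\calL_A^2)$ (bit-)definition, and prove the equivalence by induction on the construction of $\varphi^+$. The core case is the atomic one. First I will handle terms. The claim to establish is: for every number term $t(\Vec{x},\Vec{X})$ of $\calL$ there is a $\Sigma_0^B(\calL_A^2)$ formula $\psi_t(z,\Vec{x},\Vec{X})$ with $\calT\vdash z=t\leftrightarrow\psi_t$. For every string term $T$ there is a $\Sigma_0^B(\calL_A^2)$ formula $\beta_T(i,\Vec{x},\Vec{X})$ with $\calT\vdash T(i)\leftrightarrow\beta_T$, and also a number term bound $b_T$ with $\calT\vdash |T|\le b_T$. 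Both are proved simultaneously by induction on the term structure.

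\textbf{Number terms} $f(t_1,\dots,t_k,T_1,\dots,T_m)$. Write $z=f(\dots)$ as
\[
\exists y_1\le s_1\cdots\exists y_k\le s_k\,\bigl(\textstyle\bigwedge_j \psi_{t_j}(y_j)\wedge \varphi_f(z,\Vec{y},\dots)\bigr).
\]
Here $s_j$ is the $\calL_A^2$ bound on $t_j$, which comes from $\calT$ being polynomially bounded. The string arguments are handled inside $\varphi_f$ by replacing each atom $T_l(u)$ with $\beta_{T_l}(u)$. Each length term $|T_l|$ is replaced by a bounded $\exists w\le b_{T_l}$ that defines $w$ as the length via $\beta_{T_l}$: $w$ is the largest bit plus one, which is $\Sigma_0^B$-expressible.

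\textbf{String terms} $F(\dots)(i)$. These are treated the same way using the bit-defining axiom and the bound $t(\Vec{x},|\Vec{Y}|)$.

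Second, the atomic formulas. $t_1=t_2$ and $t_1\le t_2$ become $\exists z_1\le s_1\exists z_2\le s_2(\psi_{t_1}(z_1)\wedge\psi_{t_2}(z_2)\wedge z_1=z_2)$, and similarly for $\le$. $t\in T$ becomes a bounded quantification over the value of $t$ applied to $\beta_T$. Since the predicate symbols are exactly those of $\calL_A^2$, this covers all atoms. String equality $T_1=_2T_2$ is unfolded by the extensionality axiom $\mathbf{SE}$ into the equality of the two lengths together with $\forall i<|T_1|\,(\beta_{T_1}(i)\leftrightarrow\beta_{T_2}(i))$.

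Third, the inductive steps. Boolean connectives and bounded number quantifiers are immediate. Their bound terms are themselves translated by pushing the bound into the formula, i.e. $\exists x\le t\,\theta$ becomes $\exists z\le s(\psi_t(z)\wedge\exists x\le z\,\theta')$. For $\Sigma_i^B$ with $i\ge1$, the string quantifiers $\exists X\le t$ are kept and only the bound $t$ and the matrix are translated in the same way. This preserves the quantifier class because the bounds $s$ are $\calL_A^2$ terms.

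The main obstacle is the nesting of string functions inside string functions. Here I must make sure the bit formula $\beta_T$ really is provably equivalent in $\calT$ and not merely true in $\calN_2$. This requires showing $\calT\vdash |F(\dots)|\le t$ and that the bit definition determines the value. Both follow from $\mathbf{SE}$ and the stated bit-defining axioms, using the fact that the $y_j$ witnesses are unique. I expect this bookkeeping, rather than any conceptual step, to be where care is needed.
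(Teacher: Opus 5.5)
The paper does not prove this lemma itself; it is quoted as background from Cook--Nguyen~\cite{CN10}. Your plan is correct and is essentially the standard argument given there: eliminate the non-$\calL_A^2$ function symbols by simultaneous induction on number and string terms via the (bit-)defining axioms, using the $\calL_A^2$ bounding terms supplied by polynomial boundedness, $\mathbf{L1}$/$\mathbf{L2}$ to express lengths, and $\mathbf{SE}$ for string equality, and then propagate through connectives and the number and string quantifier bounds.

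Only minor bookkeeping differs from what you wrote:
\begin{itemize}
\item $\calT\vdash|F(\dots)|\le t$ follows from the bit-defining axiom together with $\mathbf{L2}$, or directly from polynomial boundedness, rather than from $\mathbf{SE}$.
\item No uniqueness argument for $F$ is needed, since the bit equivalence is just an instance of the provable bit-defining axiom.
\item For $i\ge1$, the $\Sigma_0^B$ conjunct coming from the translated bound $|X|\le t$ has to be pushed past the inner string-quantifier blocks into the matrix to keep the strict $\Sigma_i^B$ shape. This is harmless, since only $\calT$-provable equivalence is required.
\end{itemize}
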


\begin{definition}[$\Delta_1^1(\calL)$ and $\Delta_1^B(\calL)$ Definable Relation]
    Let $\calT$ be a theory over the vocabulary $\calL$ and $R$ a relation symbol not in $\calL$. We say that $R$ is $\Delta_1^1(\calL)$-definable (or simply $\Delta_1^1$-definable) in $\calT$ if there are $\Sigma_1^1(\calL)$ formulas $\varphi(\Vec{x}, \Vec{X})$ and $\psi(\Vec{x}, \Vec{X})$ such that
    \[
        R(\Vec{x}, \Vec{X}) \leftrightarrow \varphi(\Vec{x}, \Vec{X})
    \]
    holds and
    \[
        \calT \vdash \varphi(\Vec{x}, \Vec{X}) \leftrightarrow \neg \psi(\Vec{x}, \Vec{X}).
    \]
    We say that $R$ is $\Delta_1^B(\calL)$-definable (or simply $\Delta_1^B$-definable) in $\calT$ if the formulas $\varphi$ and $\psi$ above are $\Sigma_1^B$ formulas.
\end{definition}

So for now, what we can conclude is that the number functions in $FAC^0$ are $\Sigma_0^B$-definable in $\Vzero$, and string functions in $FAC^0$ are not only $\Sigma_0^B$-definable, but also $\Sigma_0^B$-bit-definable in $\Vzero$. As a consequence of the witnessing theorem for $\Vzero$ (see Theorem~V.5.1 in~\cite{CN10}), one has the following definability theorem.

\begin{theorem}[$\Sigma_1^1$-Definability Theorem for $\Vzero$]
    A function is in $\FACZ$ iff it is $\Sigma_1^1$-definable in $\Vzero$ iff it is $\Sigma_1^B$-definable in $\Vzero$ iff it is $\Sigma_0^B$-definable in $\Vzero$.
\end{theorem}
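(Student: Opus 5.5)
The theorem has three equivalences to establish: $\FACZ$ $\Leftrightarrow$ $\Sigma_1^1$-definable in $\Vzero$ $\Leftrightarrow$ $\Sigma_1^B$-definable in $\Vzero$ $\Leftrightarrow$ $\Sigma_0^B$-definable in $\Vzero$. I would prove a cycle of implications: $\FACZ \Rightarrow \Sigma_0^B$-definable $\Rightarrow \Sigma_1^B$-definable $\Rightarrow \Sigma_1^1$-definable $\Rightarrow \FACZ$.

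\emph{$\FACZ \Rightarrow \Sigma_0^B$-definable.} This is the content of the corollary following \cref{lemma: extension by bit definition}, with $\calT=\Vzero$ and $\Phi=\Sigma_0^B(\calL_A^2)$.
\begin{itemize}
\item For a number function $f\in\FACZ$, its graph is given by a $\Sigma_0^B$ formula. Uniqueness of the value comes from the fact that the graph is a function graph, and the polynomial bound lets us write the value as bounded by a term. Existence is then obtained with $\Sigma_0^B\MIN$ applied to the defining formula below that term.
\item For a string function $F$, we take $Y$ with $|Y|\le t$ and $\forall i<t\,(Y(i)\leftrightarrow\varphi(i,\ldots))$. $\Sigma_0^B\COMP$ gives existence, and extensionality (axiom $\mathbf{SE}$) together with $\mathbf{L1},\mathbf{L2}$ gives uniqueness. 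A small technical point: one must pin down $|Y|$ in the defining formula, for instance by requiring that $|Y|$ equal the least $j$ above which no bit is set. This is again $\Sigma_0^B$.
\end{itemize}

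\emph{Syntactic inclusions.} The next two implications are immediate from the definitions of the formula classes: $\Sigma_0^B\subseteq\Sigma_1^B\subseteq\Sigma_1^1$.

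\emph{$\Sigma_1^1$-definable $\Rightarrow$ $\FACZ$.} This is the main step, and I expect it to be the main obstacle. First, by the corollary to Parikh's theorem (since $\Vzero$ is a polynomially bounded theory), any $\Sigma_1^1$-definable function is already $\Sigma_1^B$-definable and is polynomially bounded. It therefore remains to show that if $\Vzero\vdash\forall\vec x\forall\vec X\exists Y\varphi$ with $\varphi\in\Sigma_1^B$, then the function is in $\FACZ$.

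This is the witnessing theorem for $\Vzero$ (Theorem~V.5.1 in~\cite{CN10}). I would invoke it directly rather than reprove it. If a proof were required, the plan would be:
\begin{itemize}
\item pass to an anchored (free-cut-free) $\mathrm{LK}^2$ proof, in which every formula is $\Sigma_1^B$;
\item prove, by induction on the proof, that each sequent has witnessing functions in $\FACZ$;
\item handle the $\Sigma_0^B\COMP$ axioms, which are witnessed by the $\FACZ$ function whose bit graph is the comprehension formula;
\item handle the cut and $\exists$-introduction steps, which compose witnesses, using closure of $\FACZ$ under composition and definition by cases.
\end{itemize}
The delicate points are the string $\exists$-right rule and contraction, both of which require case distinctions that remain $\Sigma_0^B$.

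Finally, the witness $F$ satisfies $\varphi(\vec x,\vec X,F(\vec x,\vec X))$. Because the defining formula has a unique satisfying $Y$, $F$ is exactly the defined function, so its bit graph is $\Sigma_0^B$-representable and the function lies in $\FACZ$. This closes the cycle.
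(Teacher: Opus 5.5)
Your proposal is correct and follows the same route as the paper. The paper gets $\FACZ\Rightarrow\Sigma_0^B$-definability from the bit-definition/comprehension corollary, and gets the converse from Parikh's theorem together with the witnessing theorem for $\Vzero$ (Theorem~V.5.1 of~\cite{CN10}). One small fix to your number-function bullet: provable uniqueness and existence do not follow from the graph being functional in $\calN_2$, which $\Vzero$ need not prove; they follow by using the formula ``$y$ is the least value $\le t$ satisfying the graph formula, or $y=0$ if there is none'' together with $\Sigma_0^B\MIN$.
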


\begin{corollary}
    A relation is in $\ACZ$ iff it is $\Delta_1^1$ definable in $\Vzero$ iff it is $\Delta_1^B$ definable in $\Vzero$.
\end{corollary}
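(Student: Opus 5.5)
The plan is to prove the chain of implications $\ACZ \Rightarrow \Delta_1^B\text{-definable} \Rightarrow \Delta_1^1\text{-definable} \Rightarrow \ACZ$. The last implication will be obtained from the $\Sigma_1^1$-Definability Theorem for $\Vzero$ applied to the characteristic function of the relation.

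\emph{From $\ACZ$ to $\Delta_1^B$.} I will use the standard descriptive characterisation of two-sorted $\ACZ$: a relation $R(\Vec{x},\Vec{X})$ is in $\ACZ$ iff it is represented by a $\Sigma_0^B$-formula $\varphi$. Take $\psi \coloneqq \neg\varphi$. Then $\psi$ is again $\Sigma_0^B$, and both $\varphi$ and $\psi$ are (trivially) $\Sigma_1^B$, with an empty block of string quantifiers. Moreover $\Vzero \vdash \varphi \leftrightarrow \neg\psi$ holds propositionally. Hence $R$ is $\Delta_1^B$-definable in $\Vzero$.

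\emph{From $\Delta_1^B$ to $\Delta_1^1$.} This step is immediate, because every $\Sigma_1^B$-formula is in particular a $\Sigma_1^1$-formula: simply drop the bounds on the leading string quantifiers. The same pair $\varphi,\psi$ therefore witnesses $\Delta_1^1$-definability.

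\emph{From $\Delta_1^1$ to $\ACZ$.} Let $\varphi \equiv \exists\Vec{Y}\,\varphi_0$ and $\psi \equiv \exists\Vec{Z}\,\psi_0$ be $\Sigma_1^1$-formulas with $\varphi_0,\psi_0\in\Sigma_0^B$, such that $R\leftrightarrow\varphi$ holds in $\calN_2$ and $\Vzero\vdash\varphi\leftrightarrow\neg\psi$. Consider the characteristic number function $\chi_R(\Vec{x},\Vec{X})\in\{0,1\}$, and define
\[
\theta(\Vec{x},y,\Vec{X}) \coloneqq \exists\Vec{Y}\,\exists\Vec{Z}\,\bigl((y=1\land\varphi_0)\lor(y=0\land\psi_0)\bigr),
\]
with the quantified variables renamed apart. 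This formula is $\Sigma_1^1$ and is equivalent (already in pure logic) to $(y=1\land\varphi)\lor(y=0\land\psi)$.

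\begin{itemize}
  \item \emph{Existence.} $\Vzero\vdash\exists!y\,\theta$, since $\varphi\leftrightarrow\neg\psi$ gives $\varphi\lor\psi$, which yields a suitable $y$.
  \item \emph{Uniqueness.} The same equivalence gives $\neg(\varphi\land\psi)$; together with $0\neq1$ (axiom $\mathbf{B1}$), this shows the value of $y$ is unique.
  \item \emph{Correct interpretation.} Since $\calN_2\models\Vzero$, soundness gives $\psi\leftrightarrow\neg R$ in $\calN_2$. Hence $\theta$ defines exactly $\chi_R$ in $\calN_2$.
\end{itemize}

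Thus $\chi_R$ is $\Sigma_1^1$-definable in $\Vzero$, and it is trivially polynomially bounded. By the $\Sigma_1^1$-Definability Theorem for $\Vzero$, $\chi_R\in\FACZ$, so its graph $G_{\chi_R}$ is in $\ACZ$. Finally, $R(\Vec{x},\Vec{X})\leftrightarrow G_{\chi_R}(1,\Vec{x},\Vec{X})$, and fixing the first argument to the constant $1$ preserves membership in $\ACZ$. Therefore $R\in\ACZ$.

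The only delicate point is the last implication. There one must make sure that $\theta$ is genuinely $\Sigma_1^1$ (the prenexing of the two existential string blocks through $\lor$ and $\land$, after renaming). One must also make sure that $\theta$ defines $\chi_R$ in the standard model. This is where soundness of $\Vzero$ over $\calN_2$ is needed, to transfer the provable equivalence $\varphi\leftrightarrow\neg\psi$ into the fact that $\psi$ represents the complement of $R$.
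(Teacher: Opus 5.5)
Your proof is correct and takes essentially the same route the paper relies on. The paper states the corollary without proof, as a consequence of the $\Sigma_1^1$-Definability Theorem for $\Vzero$, and your three steps are the standard Cook--Nguyen argument: the $\Sigma_0^B$ representation of $\ACZ$, the inclusion $\Sigma_1^B\subseteq\Sigma_1^1$, and $\Sigma_1^1$-definability of the characteristic function $\chi_R$.

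One wording fix in the middle step: the string bounds are not dropped but absorbed into the matrix. Indeed, $\exists X\le t\,\varphi$ abbreviates $\exists X\,(|X|\le t\land\varphi)$, and $|X|\le t\land\varphi\in\Sigma_0^B$. Literally deleting the bounds could change the formula's meaning, so it is, as you then say, the same pair $\varphi,\psi$ that witnesses $\Delta_1^1$-definability.
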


Intuitively, the universal theory $\overline{V}^0$ is a conservative extension over the language $\calL_{FAC}$ which contains a function symbol for every $\FACZ$ function. Also, all those function symbols can be applied in the comprehension axioms scheme, induction scheme, etc.

\begin{lemma}
    $\overline{V}^0$ proves the $\Sigma_0^B(\calL_{FAC^0})\COMP$, $\Sigma_0^B(\calL_{FAC^0})\IND$, and $\Sigma_0^B(\calL_{FAC^0})\MIN$ axiom schemes.
\end{lemma}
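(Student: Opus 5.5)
The strategy is to reduce all three schemes to the comprehension scheme for $\Sigma_0^B(\calL_{FAC^0})$ formulas, and then invoke \cref{lemma: comp = ind = min = max} with $\calT=\overline{V}^0$ and $\Phi=\Sigma_0^B(\calL_{FAC^0})$. The induction and minimisation schemes then follow at once. So the real content is proving $\Sigma_0^B(\calL_{FAC^0})\COMP$ in $\overline{V}^0$.

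For comprehension, I would first recall how $\overline{V}^0$ is set up (following \cite{CN10}). Its language $\calL_{FAC^0}$ has a symbol for every $\FACZ$ function. Each such symbol comes with a defining axiom: a $\Sigma_0^B$ bit-definition for string functions and a graph definition for number functions. All of these are $\Sigma_0^B$ formulas over symbols introduced earlier, and hence ultimately over $\calL_A^2$. The theory is a polynomially bounded extension of $\Vzero$ by these axioms. In particular, every function symbol is provably polynomially bounded by an $\calL_A^2$-term.

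Next I would apply the $\Sigma_0^B$-Transformation lemma with $i=0$. The hypotheses are: the predicate symbols are those of $\calL_A^2$, the theory is polynomially bounded, and every function has a $\Sigma_0^B(\calL_A^2)$ (bit-)defining axiom provable in the theory. Given these, every $\Sigma_0^B(\calL_{FAC^0})$ formula $\varphi^+(z)$ is provably equivalent to some $\Sigma_0^B(\calL_A^2)$ formula $\varphi(z)$. Since $\overline{V}^0$ extends $\Vzero$, the instance $\exists X\le y\,\forall z<y\,(z\in X\leftrightarrow\varphi(z))$ of $\Sigma_0^B\COMP$ holds. Replacing $\varphi$ by $\varphi^+$ via the provable equivalence gives the required comprehension instance for $\varphi^+$.

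The main obstacle is checking the hypothesis of the transformation lemma: that every function symbol has a defining axiom over the base language $\calL_A^2$, not merely over earlier symbols. This is where the argument could fail. I would handle it by induction along the order in which the symbols are introduced. A defining formula that mentions earlier symbols is expanded by substituting their definitions. For a number function, one eliminates each term occurrence by bounded quantification over its graph, using the polynomial bound. For a string function applied as $F(\vec t)(s)$, one replaces it by its bit-definition. Nested string terms such as $|F(\ldots)|$ are handled by expressing the length through a bounded maximisation over bits. These steps keep the formula $\Sigma_0^B$. Once this is done, the three schemes follow as described.
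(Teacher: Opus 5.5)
The paper states this lemma without proof, importing it from \cite{CN10}, so the comparison is with the Cook--Nguyen argument. Measured against it, your proposal has a genuine gap at the set-up stage.

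You describe $\overline{V}^0$ as $\Vzero$ plus $\Sigma_0^B$ (bit-)defining axioms, and for that theory the rest is sound. The $\Sigma_0^B$-Transformation lemma turns each $\Sigma_0^B(\calL_{FAC^0})$ comprehension instance into a $\Sigma_0^B(\calL_A^2)$ instance of $\Vzero$, and your unfolding of definitions along the order of introduction supplies that lemma's hypothesis. But the theory in the statement is the \emph{universal} theory $\overline{V}^0$; the paper calls it that, and builds $\overline{\VNCTwo}$ the same way.

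In Cook--Nguyen, its axioms are $2\text{-}\textsc{Basic}$, with $\mathbf{B12}$ made open via a predecessor function, together with open defining axioms for the new symbols. These are the string functions $F_{\alpha,t}$, bit-defined by $z<t\wedge\alpha(z,\vec x,\vec X)$, and the least-witness number functions $f_{\alpha,t}$, for open $\calL_{FAC^0}$-formulas $\alpha$ and $\calL_A^2$-terms $t$. A universal theory contains no instance of $\Sigma_0^B\COMP$. So your step ``since $\overline{V}^0$ extends $\Vzero$'' is not available: it is exactly the $\calL_A^2$ special case of the comprehension you are trying to prove.

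The same assumption is hidden in the hypotheses of the transformation lemma (a polynomially bounded theory must extend $\Vzero$). It is also hidden in the lemma deriving $\Phi\IND$ and $\Phi\MIN$ from $\Phi\COMP$. For the intended theory the argument is therefore circular.

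The missing idea is quantifier elimination inside $\overline{V}^0$ itself.
\begin{enumerate}
\item Show by induction on formulas that every $\Sigma_0^B(\calL_{FAC^0})$ formula is provably equivalent in $\overline{V}^0$ to an open formula. An innermost $\exists z<t\,\alpha(z,\vec x,\vec X)$ with $\alpha$ open is equivalent to $f_{\alpha,t}(\vec x,\vec X)<t$ by the least-witness axioms. Alternatively, using only $F_{\alpha,t}$, it is equivalent to $|F_{\alpha,t}(\vec x,\vec X)|\neq 0$ via $\mathbf{L1}$, $\mathbf{L2}$ and the predecessor axioms.
\item For open $\alpha$, the instance $\exists X\le y\,\forall z<y\,(X(z)\leftrightarrow\alpha(z,\vec x,\vec X))$ is then witnessed outright by the term $F_{\alpha,y}(y,\vec x,\vec X)$. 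Its length is at most $y$ by $\mathbf{L2}$, the predecessor axioms and its bit-definition.
\end{enumerate}
This gives $\Sigma_0^B(\calL_{FAC^0})\COMP$, hence $\Vzero\subseteq\overline{V}^0$. Only then is your final appeal to the lemma deriving $\Phi\IND$ and $\Phi\MIN$ from $\Phi\COMP$ legitimate.

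This route needs neither the transformation lemma nor the unfolding into $\calL_A^2$ that you flagged as the main obstacle. That unfolding is what one uses for the converse fact, that $\overline{V}^0$ is conservative over $\Vzero$. What you have written is a correct proof of the analogous statement for the non-universal extension of $\Vzero$ by $\Sigma_0^B$-definitions.
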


\subsubsection{The Theory \texorpdfstring{$\VNCTwo$}{VNC2}}

A layered monotone Boolean circuit $C$ with $d+1$ layers is encoded as follows:
\begin{itemize}[leftmargin=*]
    \item $I$ (for ``inputs'') is a string variable, where $|I| \leq n$, that defines the (Boolean) input gates to the circuit.
    \item $G$ (for ``gates'') is a string variable such that $G(x,y)$ holds for $x \in [d]$ if the $y$th gate in layer $x$ is $\land$, and is $\lor$ otherwise.
    \item The wires of $C$ are encoded by a three-dimensional array, namely a string variable $E$ (for ``edges'') such that $E(z, x, y)$ holds if and only if the output of gate $x$ on layer $z$ is connected to the input of gate $y$ on layer $z+1$.
\end{itemize}

To evaluate the gates of $C$ on input $I$, we compute the values of the gates in each layer, starting from the input layer, in $d+1$ stages, using the values of the previous layer. The formula $\delta_{LMCV}(n,d,E,G,I,Y)$ below formalises this evaluation procedure (where LMCV stands for ``layered monotone circuit value''). The two-dimensional array $Y$ stores the computation results, namely the evaluation string: for $0 \leq z \leq d$, row $Y^{[z]}$ contains the gates on layer $z$ that output $1$.
\begin{align*}
    \delta_{LMCV}(n, d, E, G, I, Y) \equiv \forall x < n \forall z < d ( (Y(0, x) \leftrightarrow I(x)) \\ \land (Y(z+1, x) \leftrightarrow 
    ( (G(z+1, x) \land \forall u < n , E(z, u, x) \to Y(z,u)) \\ \lor (\neg G(z+1, x) \land \exists u <n, E(z, u, x) \land Y(z,u)) ))  )
\end{align*}

The following formula states that the circuit with underlying graph $(n, d, E)$ has fan-in at most two:
\begin{align*}
    & \mathit{Fanin2}(n, d, E) \equiv \forall z < d \forall x < n \exists u_1 <n \exists u_2 < n \forall v < n \\
    &\quad (E(z, v, x) \to (v = u_1 \lor v = u_2)).
\end{align*}
\begin{definition}[$\VNCTwo$]
    The theory $\VNCTwo$ has vocabulary $\calL_A^2$ and is axiomatized by the axioms of $\Vzero$ and the axiom:
    \begin{align*}
        \mathit{Fanin2}(n ,|n|^2 , E) \to \exists Y \leq \langle |n|^2 +1, n \rangle \delta_{LMCV}(n , |n|^2, E, G, I, Y).
    \end{align*}
\end{definition}

\begin{theorem}
    \begin{enumerate}
    \item A function is $\Sigma_1^B$-definable in $\VNCTwo$ if and only if it is in $\FNCTwo$.
    \item A relation is $\Delta_1^1$-definable in $\VNCTwo$ if and only if it is $\Delta_1^B$-definable in $\VNCTwo$ if and only if it is in $\NCTwo$.
\end{enumerate}
\end{theorem}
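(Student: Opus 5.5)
The statement is the standard characterisation of $\VNCTwo$ from Cook--Nguyen~\cite{CN10}. I would prove it with the usual two-direction template used for every $\mathsf{V}C$ theory. The plan is to relate $\VNCTwo$ to a universal conservative extension $\overline{\mathsf{VNC}}^2$ whose vocabulary has a symbol for every $\FNCTwo$ function. I would then apply Herbrand's theorem to that extension.

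\textbf{Easy direction (every $\FNCTwo$ function is $\Sigma_1^B$-definable).} I would use the closure characterisation of $\FNCTwo$: it is the $\ACZ$-closure of the layered monotone circuit value function with depth $|n|^2$ and fan-in two. This holds because any uniform $\NCTwo$ circuit can be converted, by an $\FACZ$ map, into a layered monotone circuit. Negations are pushed to the inputs by doubling the gates, and the layering is done by padding.
\begin{itemize}
\item The axiom of $\VNCTwo$ directly states that the evaluation string $Y$ exists. Uniqueness of $Y$ follows by $\Sigma_0^B$-induction on the layer index $z$, which is available in $\Vzero$. So the LMCV function is $\Sigma_1^B$-definable.
\item $\Sigma_1^B$-definable functions are closed under $\FACZ$ composition and reduction. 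This follows from $\Sigma_0^B\COMP$ together with \cref{lemma: extension by bit definition}.
\end{itemize}
Together these give definability of all of $\FNCTwo$.

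\textbf{Hard direction (every $\Sigma_1^B$-definable function is in $\FNCTwo$).} This is the main step. I would define the universal theory $\overline{\mathsf{VNC}}^2$ with symbols for $\FACZ$ functions and for the LMCV evaluation, each given by its defining axiom. I would show it is conservative over $\VNCTwo$, using the $\Sigma_0^B$-transformation lemma and the fact that $\VNCTwo$ proves the defining axioms. I would also show it proves $\Sigma_0^B(\calL)\COMP$.

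Now suppose $\VNCTwo\vdash\forall\Vec x\Vec X\exists Y\varphi$ with $\varphi\in\Sigma_0^B$. By Parikh's theorem the quantifier on $Y$ can be bounded. Herbrand's theorem, applied to the universal theory, then yields finitely many terms in the extended vocabulary, one of which witnesses $Y$. Each term is a composition of $\FNCTwo$ functions, so it lies in $\FNCTwo$. A case distinction among the terms can be expressed as a $\Sigma_0^B$ test, which stays in $\FACZ$.

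The main obstacle I expect is making $\overline{\mathsf{VNC}}^2$ genuinely universal. The fix is to replace the existential axiom with a Skolem function for the LMCV evaluation. One then proves in $\VNCTwo$ that the Skolem function's graph is determined by $\delta_{LMCV}$, so that bit-definitions suffice.

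\textbf{Relations.} Item 2 follows from item 1. A $\Delta_1^B$-definable relation has a characteristic function that is $\Sigma_1^B$-definable: apply the witnessing argument to the $\Sigma_1^B$ formula and to its negation. So the relation is in $\NCTwo$. Conversely, an $\NCTwo$ relation $R$ is given by $\exists Y(\delta_{LMCV}\wedge Y(d,0))$ and also by $\neg\exists Y(\delta_{LMCV}\wedge\neg Y(d,0))$. These two forms are equivalent provably in $\VNCTwo$, because $Y$ exists and is unique. The equivalence of $\Delta_1^1$ and $\Delta_1^B$ definability follows from Parikh's theorem, which bounds the string quantifiers.
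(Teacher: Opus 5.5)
The paper does not prove this theorem; it imports it from Cook--Nguyen. Your outline (characterise $\FNCTwo$ as the $\ACZ$-closure of LMCV, pass to a universal extension, apply Herbrand) is their Chapter~IX route.

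The step that fails as you justify it concerns formulas that mention the new LMCV symbol. You derive closure of the $\Sigma_1^B$-definable functions under $\ACZ$-reductions from $\Sigma_0^B\COMP$ plus \cref{lemma: extension by bit definition}. You derive conservativity of $\overline{\VNCTwo}$ from the $\Sigma_0^B$-transformation lemma. Neither lemma applies.

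\begin{itemize}
\item The bit-definition lemma presupposes that the theory proves $\Phi\COMP$ for a class $\Phi$ containing the formulas you bit-define with, i.e.\ $\Sigma_0^B(\mathrm{LMCV})$-comprehension. That is exactly what is at stake. $\Vzero$ gives comprehension only for $\mathcal{L}_A^2$-formulas, and adding a defining axiom does not extend it (the paper remarks on this).
\item The transformation lemma requires every symbol to have a provable $\Sigma_0^B(\mathcal{L}_A^2)$ \emph{bit}-definition. LMCV has a $\Sigma_0^B$ graph but no such bit-definition, since its bit graph is not in $\ACZ$ (parity reduces to it).
\end{itemize}

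The missing idea is \emph{aggregation}, and this is the real work in the Cook--Nguyen framework. Skolemising the axiom, which you name as the main obstacle, is routine. Consider the polynomially many LMCV instances in a $\Sigma_0^B(\mathrm{LMCV})$ formula, one for each value of the comprehension and bounded variables. Lay them side by side and pad with identity layers to the depth $|n'|^2$ dictated by the combined width $n'$; this gives one fan-in-two instance. A single use of the $\VNCTwo$ axiom then yields all evaluations at once, the formula becomes $\Sigma_0^B$ in that string, and ordinary $\Sigma_0^B\COMP$ applies; then induct on nesting depth. This is what yields $\Sigma_0^B(\mathcal{L}_{\FNCTwo})\COMP$ and conservativity.

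For the statement at hand you can sidestep this.

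\begin{itemize}
\item \textbf{Easy direction.} Your remark that a uniform $\NCTwo$ family is turned by an $\FACZ$ map into a single layered circuit gives $F(X)=H(X,\mathrm{LMCV}(G(X)))$ with $G,H\in\FACZ$. So only composition of $\Sigma_1^B$-definitions is needed. Make fan-in two syntactic in $G$, so that $\Vzero$ proves the hypothesis $\mathit{Fanin2}$.
\item \textbf{Hard direction.} Herbrand's theorem needs only a universal theory that contains $\VNCTwo$, is true in $\mathcal{N}_2$, and whose symbols denote $\FNCTwo$ functions. For example, take $\overline{V}^0$ plus a Skolem symbol for LMCV; conservativity is never used.
\end{itemize}

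What the hard direction does need is the half of the closure characterisation you did not justify, $\mathrm{LMCV}\in\FNCTwo$. This is where fan-in two matters. Simulating layer by layer with bounded fan-in costs depth $\Theta(\log n)$ per layer, hence $O(\log^3 n)$ overall. Instead, one can precompute from $E$ alone the endpoints of all predecessor paths of length at most $|n|$; there are only $O(n)$ per gate because of fan-in two. Then evaluate $|n|$ layers at a time by a lookup followed by a depth-$|n|$ tree evaluation.

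Your argument for item~2 is fine.
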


\begin{definition}[The universal extension $\overline{\VNCTwo}$]
    Let $\calL_{\FNCTwo}$ denote the universal language associated with the complexity class $\NCTwo$. It extends $\calL_A^2$ by string-function symbols representing all string functions in $\FNCTwo$; moreover, every number function in $\FNCTwo$ is represented by a term of $\calL_{\FNCTwo}$.

    Applying the general construction of~\cite[Chapter~IX]{CN10} to the theory $\VNCTwo$, we obtain a universally axiomatized theory $\overline{\VNCTwo}$ over the language $\calL_{\FNCTwo}$. The theory $\overline{\VNCTwo}$ contains defining axioms for the new function symbols given by the Cook--Nguyen construction and is a conservative extension of $\VNCTwo$.
\end{definition}

\begin{theorem}
    The theory $\overline{\VNCTwo}$ has the following properties.
    \begin{enumerate}[leftmargin=*]
        \item $\overline{\VNCTwo}$ is conservative over $\VNCTwo$. In particular, every sentence in the original language $\calL_A^2$ that is provable in $\overline{\VNCTwo}$ is already provable in $\VNCTwo$.
        \item The string functions represented by function symbols of $\calL_{\FNCTwo}$ are precisely the string functions in $\FNCTwo$, and the number functions represented by terms of $\calL_{\FNCTwo}$ are precisely the number functions in $\FNCTwo$.
        \item $\overline{\VNCTwo}$ proves the
        \[
            \Sigma_0^B(\calL_{\FNCTwo})\COMP, \qquad \Sigma_0^B(\calL_{\FNCTwo})\IND, \qquad \Sigma_0^B(\calL_{\FNCTwo})\MIN    
        \]
        axiom schemes.
    \end{enumerate}
\end{theorem}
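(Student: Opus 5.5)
This theorem is the $\VNCTwo$ instance of the general Cook--Nguyen construction of universal conservative extensions in~\cite[Chapter~IX]{CN10}. I would verify the three properties by checking that $\VNCTwo$ meets that construction's hypotheses. Concretely, $\VNCTwo$ must be obtained from $\Vzero$ by adding a single axiom asserting the existence of a witness for a complete problem for the class. Here that problem is the layered monotone circuit value problem for circuits of depth $|n|^2$ and fan-in two. $\NCTwo$ is closed under $\ACZ$-reductions and this problem is $\NCTwo$-complete, so the requirements are satisfied.

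For item~2, I would build $\calL_{\FNCTwo}$ inductively, following~\cite{CN10}. First, $\calL_{\FNCTwo}$ contains the $\FACZ$ symbols of $\overline{V}^0$. Next, it has a string function $F_{LMCV}(n,E,G,I)$ that outputs the evaluation array $Y$; it is bit-defined via $\delta_{LMCV}$ after forcing fan-in two by an $\ACZ$ normalisation of $E$. Finally, $\calL_{\FNCTwo}$ is closed under $\Sigma_0^B$-bit-definitions and composition. Every function obtained this way lies in $\FNCTwo$, since $\FNCTwo$ is closed under $\ACZ$-reductions and composition. Conversely, every $\FNCTwo$ function is $\ACZ$-reducible to LMCV, so it is a composition of these symbols. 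Number functions are handled through terms, for instance the least element of a string.

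For item~1, the argument is the standard one. Each new symbol is $\Sigma_1^B$-definable in $\VNCTwo$: existence of its value follows from the $\VNCTwo$ axiom combined with $\Sigma_0^B\COMP$, and uniqueness follows by $\Sigma_0^B\IND$ on the layer index $z$. Hence adding the symbols with their defining axioms yields a conservative extension. For the converse inclusion, $\overline{\VNCTwo}$ proves the $\VNCTwo$ axiom, using $F_{LMCV}$ as the witness.

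For item~3, I would first show that every $\Sigma_0^B(\calL_{\FNCTwo})$ formula is equivalent in $\overline{\VNCTwo}$ to an atomic formula of the form $F(\vec x,\vec X)(0)$ for some symbol $F$. This goes by induction on the formula, using closure of the language under $\Sigma_0^B$-bit-definitions. Comprehension then follows immediately by introducing a new function symbol whose bit graph is the given formula. Induction and minimization then follow from \cref{lemma: comp = ind = min = max}.

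The main obstacle is the uniqueness and existence argument for $F_{LMCV}$ and the bookkeeping that makes the universal axiomatization prove $\Sigma_0^B$-induction, since universal theories lack unrestricted induction. I would handle this as~\cite{CN10} does for $\overline{V}^0$, using a $\Sigma_0^B$-definable minimization function.
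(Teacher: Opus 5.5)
Your overall route matches the paper's, which simply instantiates \cite[Chapter~IX]{CN10}. However, your verification of item~1 has a genuine hole, and it sits at the one hypothesis of that construction that actually needs an argument.

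You assert that every symbol of $\calL_{\FNCTwo}$ is $\Sigma_1^B$-definable in $\VNCTwo$ because its value exists ``by the $\VNCTwo$ axiom combined with $\Sigma_0^B\COMP$''. That works for $F_{LMCV}$ itself, but not for the symbols bit-defined from it. Suppose $F_\varphi$ has bit graph $\varphi(z,\vec x,\vec X)$ containing a subformula $\exists u<t\; F_{LMCV}(H(u,\vec x,\vec X))(z)$. Then the existence of $F_\varphi(\vec x,\vec X)$ is an instance of $\Sigma_0^B(F_{LMCV})\COMP$.

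That scheme is not available in $\VNCTwo$, nor in its extension by the defining axiom of $F_{LMCV}$. This is exactly the caveat the paper records right after defining conservative extensions. The instance needs all $t$ evaluation arrays to exist simultaneously, which is a replacement principle that one application of the axiom does not give. As you set things up, items~1 and~3 are therefore circular. Item~3 gets comprehension from closure of the language under bit-definitions, while item~1 needs that same comprehension to make the bit-defined symbols $\Sigma_1^B$-definable.

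The missing ingredient is the aggregate function $F^{*}_{LMCV}(b,\ldots)$, whose $u$th row is the evaluation of the $u$th circuit for $u<b$, together with a $\VNCTwo$-proof that it is total. The proof goes as follows:
\begin{enumerate}
\item Lay the $b$ circuits side by side in blocks of a common width, giving one fan-in-two layered circuit of width $n'$.
\item Pad every block to depth $|n'|^2$ with copy layers; steps 1 and 2 are an $\ACZ$ construction.
\item Apply the axiom once.
\item Show by $\Sigma_0^B$-induction on the layer index that each block of the single evaluation agrees with the corresponding individual evaluation.
\end{enumerate}
This yields $\Sigma_0^B(F_{LMCV})\COMP$. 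You then propagate it along the generation of $\calL_{\FNCTwo}$: at each stage, comprehension for formulas in the new symbols reduces to the existence of aggregates of earlier symbols. Only then does conservativity follow.

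By comparison, what you call the main obstacle is routine. Induction in the universal theory comes from comprehension plus the length axioms, as your item~3 already notes.

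Two smaller points. First, $F_{LMCV}$ is not $\Sigma_0^B$-bit-definable, so it enters by a recursive defining axiom rather than by a bit-definition. Second, $\delta_{LMCV}$ constrains only the bits $Y(z,x)$ with $z\le|n|^2$ and $x<n$. Uniqueness by induction on $z$ therefore holds only for a canonical $Y$ whose remaining bits are $0$.
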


\begin{theorem}
    \begin{itemize}[leftmargin=*]
        \item A string function is in $\FNCTwo$ iff it is represented by a string function symbol in $\calL_{\FNCTwo}$.
        \item A relation is in $\NCTwo$ iff it is represented by an open formula of $\calL_{\FNCTwo}$ iff it is represented by a $\Sigma_0^B(\calL_{\FNCTwo})$ formula.
    \end{itemize}
\end{theorem}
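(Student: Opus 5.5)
The plan is to read off both items from the Cook--Nguyen construction of the universal language $\calL_{\FNCTwo}$ in~\cite[Chapter~IX]{CN10}, together with closure properties of $\NCTwo$ and $\FNCTwo$. Throughout I use that $\FNCTwo$ is the closure of the $\FACZ$ functions and the layered-circuit evaluation function (the $Y$ witnessed in the $\VNCTwo$ axiom via $\delta_{LMCV}$) under $\ACZ$-reductions. I also use the preceding theorem: $\overline{\VNCTwo}$ proves $\Sigma_0^B(\calL_{\FNCTwo})\COMP$, and function symbols of $\calL_{\FNCTwo}$ are exactly the $\FNCTwo$ functions.

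For the first item, the direction ``symbol $\Rightarrow$ $\FNCTwo$'' goes by induction on the order in which symbols are introduced. Each new symbol receives a defining axiom of one of three kinds: a $\Sigma_0^B$ bit-definition over previously introduced symbols, a composition, or an application of the circuit-evaluation function to arguments given by earlier terms. The induction hypothesis puts all earlier symbols in $\FNCTwo$. Closure of $\FNCTwo$ under composition and $\ACZ$-reductions then puts the new symbol in $\FNCTwo$. Only a fixed number of symbols is nested in any one definition, so the depth stays $O(\log^2 n)$ and uniformity is preserved.

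For the converse, take $F\in\FNCTwo$. By the $\Sigma_1^B$-definability theorem for $\VNCTwo$ stated above, $F$ is $\Sigma_1^B$-definable in $\VNCTwo$. The construction of $\calL_{\FNCTwo}$ introduces a symbol for every such function: the uniform $\NCTwo$ family for the bit graph of $F$ is an $\FACZ$ function of the size parameter, the evaluation function is applied to it, and $F$ is bit-defined from the result. So $F$ is represented by a symbol, and number functions are handled by the usual coding of numbers as strings.

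For the second item, the three conditions are closed in a cycle. If $R\in\NCTwo$, its characteristic function is a one-bit string function in $\FNCTwo$, hence some symbol $F_R$ by the first item. Then $R(\vec x,\vec X)\leftrightarrow F_R(\vec x,\vec X)(0)$, which is an open formula. Every open formula is trivially $\Sigma_0^B(\calL_{\FNCTwo})$.

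It remains to show that every $\Sigma_0^B(\calL_{\FNCTwo})$ formula defines an $\NCTwo$ relation, by induction on the formula.
\begin{itemize}
\item Atomic formulas apply $=$, $\le$ or $\in$ to terms. Each term is a fixed composition of $\FNCTwo$ functions, so it is in $\FNCTwo$ by the first item.
\item Boolean connectives need only $O(1)$ extra depth.
\item A bounded quantifier $\exists y\le t$ becomes a polynomial fan-in OR, realised as a tree of depth $O(\log n)$, over copies of the circuit for the matrix. The bound $t$ is itself computed in $\FNCTwo$.
\end{itemize}
The quantifier depth is fixed, so the total depth remains $O(\log^2 n)$.

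The main obstacle is the bookkeeping in the first item: checking that the Chapter~IX construction, instantiated with the $\VNCTwo$ evaluation axiom, introduces exactly the $\FNCTwo$ functions, no more and no fewer. In particular, iterating the circuit-evaluation function through nested terms must not take us beyond $\NCTwo$. I would handle this by normalising every term to a bounded composition and invoking closure of $\FNCTwo$ under composition. Uniformity throughout comes from the $\FACZ$-definability of the circuit descriptions.
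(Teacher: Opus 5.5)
The paper does not prove this statement itself; it quotes it from the general construction in \cite[Chapter~IX]{CN10}. Your outline is essentially that standard argument and is correct: induction on the introduction of symbols for one direction; an $\FACZ$-computable circuit description fed to the evaluation symbol and then bit-defined for the other; and the cycle $\NCTwo \Rightarrow$ open $\Rightarrow \Sigma_0^B(\calL_{\FNCTwo}) \Rightarrow \NCTwo$ for relations. This is granted the fact you assume up front, that $\FNCTwo$ is the $\ACZ$-closure of the fan-in-two, depth-$|n|^2$ layered circuit evaluation function.

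You should, however, drop two citations, since your argument never uses them and they make the converse direction read as circular. The first is the preceding theorem's clause on function symbols, which is literally your first item. The second is the $\Sigma_1^B$-definability theorem for $\VNCTwo$, whose standard proof via $\overline{\VNCTwo}$ relies on the very correspondence you are establishing.
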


\begin{theorem}
    \begin{itemize}[leftmargin=*]
        \item A function is in $\FNCTwo$ iff it is $\Sigma_1^B(\calL_{\FNCTwo})$-definable in $\overline{\VNCTwo}$ iff it is $\Sigma_1^B(\calL_A^2)$-definable in $\overline{\VNCTwo}$.
        \item A relation is in $\NCTwo$ iff it is $\Delta_1^B(\calL_{\FNCTwo})$-definable in $\overline{\VNCTwo}$ iff it is $\Delta_1^B(\calL_A^2)$-definable in $\overline{\VNCTwo}$.
    \end{itemize}
\end{theorem}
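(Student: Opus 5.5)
We prove the two bullets in turn. For each, the easy direction uses the representability results already stated for $\overline{\VNCTwo}$, and the hard direction uses conservativity together with a Herbrand-style witnessing argument for the universal theory.

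\emph{Easy directions.} Let $F\in\FNCTwo$. By the preceding theorem, $F$ is represented by a string function symbol $F$ of $\calL_{\FNCTwo}$. We can therefore take the open formula $Y=F(\Vec{x},\Vec{X})$ as the defining formula. This is an open formula and hence a $\Sigma_1^B(\calL_{\FNCTwo})$ formula. Existence and uniqueness of $Y$ are then provable outright: existence by taking the term itself, uniqueness by the axiom $\mathbf{SE}$.

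For $\Sigma_1^B(\calL_A^2)$-definability of the same $F$, we use the earlier definability theorem for $\VNCTwo$. It gives a $\Sigma_1^B(\calL_A^2)$ formula $\varphi$ with $\VNCTwo\vdash\forall\Vec{x}\forall\Vec{X}\exists!Y\varphi$. Since $\overline{\VNCTwo}$ extends $\VNCTwo$, the same formula works in $\overline{\VNCTwo}$.

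For relations, let $R\in\NCTwo$. Its characteristic function is in $\FNCTwo$, so $R$ is represented by an open $\calL_{\FNCTwo}$ formula $\theta$. Taking $\varphi=\theta$ and $\psi=\neg\theta$ shows that $R$ is $\Delta_1^B(\calL_{\FNCTwo})$-definable in $\overline{\VNCTwo}$. For the $\calL_A^2$ version, we instead use the $\Delta_1^B$ characterisation of $\NCTwo$ relations in $\VNCTwo$ and lift it to $\overline{\VNCTwo}$.

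\emph{Converse with $\calL_A^2$ formulas.} Suppose $F$ is $\Sigma_1^B(\calL_A^2)$-definable in $\overline{\VNCTwo}$. The totality sentence $\forall\Vec{x}\forall\Vec{X}\exists!Y\varphi$ is then an $\calL_A^2$ sentence. By conservativity of $\overline{\VNCTwo}$ over $\VNCTwo$, this sentence is already provable in $\VNCTwo$. The definability theorem for $\VNCTwo$ then gives $F\in\FNCTwo$.

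The same argument handles $\Delta_1^B(\calL_A^2)$-definable relations. The equivalence $\varphi\leftrightarrow\neg\psi$ is an $\calL_A^2$ formula, so it transfers to $\VNCTwo$ by conservativity, and we then apply the relational half of the $\VNCTwo$ definability theorem.

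\emph{Converse with $\calL_{\FNCTwo}$ formulas.} This is the main obstacle. Suppose $\overline{\VNCTwo}\vdash\forall\Vec{x}\forall\Vec{X}\exists Y\le t\,\varphi(\Vec{x},\Vec{X},Y)$, where $\varphi\in\Sigma_0^B(\calL_{\FNCTwo})$.

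The first step is to replace $\varphi$ by an equivalent open formula. By the theorem above, every $\Sigma_0^B(\calL_{\FNCTwo})$ formula represents an $\NCTwo$ relation. Hence there is a function symbol for its characteristic function. However, we need this equivalence to be \emph{provable} in $\overline{\VNCTwo}$, not merely true in $\calN_2$. To get it, we plan to follow the Cook--Nguyen construction of the universal language. There, bounded number quantifiers $\exists z\le s$ are eliminated using $\FACZ\subseteq\FNCTwo$ function symbols that compute the least witness. The defining axioms of these symbols make the equivalence with the open formula provable.

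With $\varphi$ open, the second step applies Herbrand's theorem to the universal theory $\overline{\VNCTwo}$. This yields finitely many $\calL_{\FNCTwo}$ terms $T_1,\dots,T_k$ such that $\bigvee_i\varphi(\Vec{x},\Vec{X},T_i)$ is provable. The function $F$ is then computed by evaluating the open formulas $\varphi(T_i)$ and outputting the first $T_i$ that satisfies $\varphi$. This is a definition by cases over $\FNCTwo$ functions, and $\FNCTwo$ is closed under such definitions, so $F\in\FNCTwo$.

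For relations that are $\Delta_1^B(\calL_{\FNCTwo})$-definable, we witness the $\Sigma_1^B$ formulas $\varphi$ and $\psi$ in the same way. Since $\varphi\leftrightarrow\neg\psi$ is provable, deciding $R$ reduces to evaluating $\varphi$ on the witnessing terms. That evaluation is an $\NCTwo$ computation, so $R\in\NCTwo$.
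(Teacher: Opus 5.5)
The paper does not prove this statement; it is quoted as background from the general construction in Cook--Nguyen~\cite{CN10}, Chapter~IX. Your outline is essentially the standard argument there: representability plus the $\VNCTwo$ definability theorem for the easy directions, conservativity for the $\calL_A^2$ converses, and provable elimination of bounded number quantifiers followed by Herbrand's theorem for the universal theory $\overline{\VNCTwo}$ for the $\calL_{\FNCTwo}$ converses.

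Two details need tightening. First, in the $\Delta_1^B(\calL_{\FNCTwo})$ case you cannot witness $\varphi$ and $\psi$ separately, since neither is provable on its own. Apply Herbrand's theorem to the provable disjunction $\varphi\lor\psi$, then use $\calN_2\models\neg(\varphi\land\psi)$ to conclude that $R$ coincides with the open formula $\bigvee_i\varphi_0(T_i)$ (together with the bound conditions) built from the $\varphi$-side Herbrand terms. Second, the least-witness symbols that eliminate bounded quantifiers over open $\calL_{\FNCTwo}$ formulas are in general $\FNCTwo$ functions, not $\FACZ$ functions. They are available because the Cook--Nguyen language $\calL_{\FNCTwo}$ is closed under this operation with universal defining axioms.
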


\subsubsection{Some Basic Formalisation in \texorpdfstring{$\Vzero$}{V0}}\label{subsection: basic formalisation in V0}
In this section, we fix some basic notation for primitive objects coded in $\Vzero$. Most formalisations here are routine and can be found in~\cite{CN10} (cf.~\cite{MT14}). Recall that when speaking about \emph{numbers} in the theory, we mean natural numbers (that is, first-sort objects), and when discussing \emph{strings} we refer to the second-sort objects.

\paragraph{Natural number sequences of constant length.} For two numbers $x,y$ let $\langle x, y \rangle \coloneq (x+y)(x+y+1)+2y$ be the \emph{pairing function}. We also $\Sigma_0^B$-define inductively $\langle v_1, \ldots, v_k \rangle \coloneqq \langle \langle v_1, \ldots, v_{k-1}\rangle , v_k \rangle$, for every $k>2$. Then $\Vzero$ proves the injectivity of the pairing function and enables us to handle such pairs in a standard way.

\paragraph{Sequences of numbers of variable length.} In our two-sorted setting, we represent sequences of natural numbers whose \emph{length is non-constant} using string variables. Using the pairing function we can encode sequences as sets of numbers: a sequence is encoded as a string $Z$ such that the $x$th number in the sequence is $y$ if the number $\langle x, y \rangle $ is in $Z$. Formally, we have the following $\Sigma_0^B$-defining formula for the number function $\seq(x, Z)$ returning the $x$th element in the sequence $Z$:
\begin{equation}\label{equa: seq}
    \begin{aligned}
        y = \seq(x, Z) \leftrightarrow&\ (y < |Z| \land Z(\langle x , y\rangle) \land \forall z< y \neg Z(\langle x, z \rangle)   ) \\
        &\ \lor (\forall z < |Z| \neg Z(\langle x, z \rangle) \land y = |Z| ).
    \end{aligned}
\end{equation}
Formula \eqref{equa: seq} states that the $x$th element in the sequence coded by $Z$ is $y$ iff $\langle x , y \rangle$ is in $Z$ and no other number smaller than $y$ also ``occupies the $x$th position in the sequence'', and that if no number occupies position $x$ then the function returns the length of the string variable $Z$.

\paragraph{Array of strings.} We wish to encode a \emph{sequence} of binary strings as an array, that is, as a two-dimensional array. We use the function $\row(x, Z)$ to denote the $x$th string in $Z$ as follows (we follow the treatment in~\cite[Definition~V.4.26, p.~114]{CN10}). The string function $\row(x, Z)$, abbreviated $Z^{[x]}$, is $\Sigma_0^B$-definable in $\Vzero$ (and can be added to the theory $\Vzero$ to get a conservative extension and used in the induction axiom) using the following bit-definition:
\[
    \row(x, Z)(i) \leftrightarrow (i < |Z| \land Z(\langle x,i\rangle)).
\]

The theory $\Vzero$ is a conservative extension of the theory $\IDeltazero$~\cite[Thm, V.1.9]{CN10} (the single-sorted theory of arithmetic axiomatized by induction for $\Delta_0$ formulas over a base theory such as the axioms $\mathbf{B1}, \ldots, \mathbf{B8}$). In particular, recall that $\Vzero$ proves the $\Sigma_0^B\IND$ axioms, which include the $I\Delta_0\IND$ axioms. We can therefore use in $\Vzero$ some well-known definitions developed originally in $I\Delta_0$, such as the following. 

\paragraph{Graph of $2^n$.} There is a $\Delta_0^B$-definition of the graph of $2^n$ such that $\Vzero$ proves that it is a partial function that satisfies the defining recurrence $2^0 =1$, $2^{n+1} = 2 \cdot 2^n$. Also, $\Vzero$ proves that if $y =2^x$, then for any $n < x$, there exists $m$ such that $m = 2^n$. Moreover, if $y=2^x$ and $z = 2^h$, then $ y \cdot z = 2^{x+h}$ (see, e.g.,~\cite[III.3, p.~61]{CN10}). Thus, there is also a well-behaved $\Delta_0^B$-definition of the function $\bit(x, i) = \lfloor x 2^{-i} \rfloor \mod 2$, and $|x| = \min\{n : x < 2^n\}$.

\subsection{Some basic formalisation of algebra in \texorpdfstring{$\VNCTwo$}{VNC2}}

In this section we fix some basic notation for algebraic objects coded in $\VNCTwo$. Throughout the remainder of the paper, we use a first-sort variable $N$ as the main size parameter; in our applications, $N$ denotes the number of coordinates in the coefficient vector under consideration. We only consider objects whose encodings have size polynomially bounded in $N$: all first-sort parameters, such as indices, degrees, matrix dimensions, and circuit sizes, are bounded by a fixed polynomial in $N$, and all second-sort objects are represented by strings of length bounded by a fixed polynomial in $N$. Large integers and rational numbers are represented in binary, so polynomial boundedness refers to the length of their binary encodings rather than to their numerical values. Under these bounds, the instances of binary integer and rational arithmetic used in the proof, including addition, multiplication, iterated addition and multiplication, division, factorials, and binomial coefficients, are computable by functions in $\FNCTwo$ and hence are available in the universal conservative extension $\overline{\VNCTwo}$. We use $|\cdot|_1$ for the ordinary absolute value and for its formal counterpart on encoded rational numbers. Rational numbers are encoded as tuples of binary integers. See~\cite{Jev22,Jev23} for the relevant bounded-arithmetic formalisation of binary arithmetic.

Throughout each formalised algebraic statement, the underlying coefficient field $\F$ is fixed. We use either $\Qbb$ or a fixed finite field $\F_q$. Elements of $\Qbb$ are encoded as above, while elements of a fixed finite field $\F_q$ have constant-length encodings. In either case, field addition, multiplication, and equality are computable in $\FNCTwo$. Under the polynomial bounds imposed above, all algebraic objects appearing below have encodings of length polynomially bounded in $N$.

\paragraph{Matrices.}
We encode an $n\times m$ matrix as an array of $n$ strings, each of which encodes one row of the matrix, and each row is itself an array of $m$ field elements. We only consider matrices whose dimensions and whose total encoding lengths are polynomially bounded in $N$. The basic matrix operations that we use, including matrix addition, matrix multiplication, and the Kronecker product of a constant number of matrices, are computable by functions in $\FNCTwo$. Hence, by the definability theorem for $\VNCTwo$, these operations are $\Sigma_1^B$-definable in $\VNCTwo$ and may be represented by function symbols in the universal conservative extension $\overline{\VNCTwo}$. Consequently, when working in $\overline{\VNCTwo}$, we may use these matrix operations inside $\Sigma_0^B(\calL_{\FNCTwo})$ formulas and apply the corresponding comprehension and induction schemes.

\paragraph{Polynomial.}
Fix a canonical ordering of monomials. For a polynomial $f$, we write $\coeff_N(f)$ for the vector consisting of the coefficients of the first $N$ monomials in this ordering. Whenever we use this notation, $N$ is chosen large enough that every monomial which may occur in $f$ is among these $N$ monomials, so that $\coeff_N(f)$ determines $f$. We encode the coefficient vector as a string; throughout the proof, the bit-length of every coefficient and hence the total length of this encoding are polynomially bounded in $N$.

We say that a polynomial family $f$ is $\Phi$-definable if its $N$-coordinate coefficient-vector function is $\Phi$-bit-definable with output length polynomially bounded in $N$. In particular, the Permanent and Iterated Matrix Multiplication polynomials have $\Sigma_0^B$-definable coefficient-vector functions. For the Permanent polynomial
\[
    \perm_n(X)=\sum_{\sigma\in S_n}\prod_{i=1}^n x_{i,\sigma(i)},
\]
the coefficient of a monomial is non-zero if and only if the monomial contains exactly one variable from every row and every column. For
\[
    \IMM_{n,d}(X) = \bigl(X^{(1)}X^{(2)}\cdots X^{(d)}\bigr)_{1,1},
\]
the coefficient of a monomial is non-zero if and only if the monomial contains exactly one variable from each matrix $X^{(t)}$, the selected variable in $X^{(1)}$ has row index $1$, the selected variable in $X^{(d)}$ has column index $1$, and, for every $t<d$, the column index selected in $X^{(t)}$ equals the row index selected in $X^{(t+1)}$. These conditions are decidable in $\ACZ$ and therefore in $\NCTwo$. We use
\[
    \PermT(n,N)\coloneqq \coeff_N(\perm_n)  \qquad\text{and}\qquad \IMMT(n,d,N)\coloneqq \coeff_N(\IMM_{n,d})
\]
for the corresponding string functions. Whenever these functions are used, the parameters $n,d$ and the output encoding lengths are polynomially bounded in $N$.

\paragraph{Algebraic circuit.}
    Algebraic circuits are encoded using strings in the theory as follows~\cite{TC21}:
    \begin{enumerate}[leftmargin=*]
    \item a sequence $V$ of node labels $v_1,v_2,\ldots,v_k$, where each label is a natural number;
    \item a string of gates $G$, where each gate is encoded by a pair $(v,t)$ of natural numbers: $v$ identifies a node in $V$, while $t$ specifies whether the node is a $+$-gate, a $\times$-gate, or an input gate;
    \item a two-dimensional string $I$ encoding the input gates, where an input gate encodes either a variable $x_j$ or an element of the fixed coefficient field $\F$; and
    \item a string $E$ encoding directed edges between nodes, where $(u,v)\in E$ means that there is an edge from $u$ to $v$.
    \end{enumerate}
    Throughout the formalisation, the lengths of $V,G,I,E$, the number of circuit nodes, the indices of the variables occurring in the circuit, and the bit-lengths of all scalar constants are polynomially bounded in $N$. Thus a circuit of the type considered in the proof is represented by a tuple of strings of length polynomial in $N$.

\paragraph{Rank function.}
We use the standard $\NCTwo$ algorithms for matrix rank. For rational matrices, Ken and Kuroda~\cite{KK23}, building on Tzameret and Cook~\cite{TC21}, formalised the relevant rank algorithm in $\VNCTwo$ and proved the usual linear-algebraic characterisations of rank. We write $\rank_{\F}(M)$ for the rank over the fixed underlying field. In particular, the rank is bounded by the number of rows and by the number of columns, and it satisfies the standard rank identities and inequalities used below. More precisely, Ken and Kuroda showed that the rank $\rank(M)$ of a matrix $M$ with $\F$-coefficients coincides with~\cite[Lemmas~60, 62 and Theorem~65]{KK23}:
\begin{itemize}[leftmargin=*]
    \item the number of linearly independent row vectors;
    \item the number of linearly independent column vectors;
    \item the largest order of a non-zero minor.
\end{itemize}

\subsection{Formalisation of~\texorpdfstring{\cite{LST25}}{[LST25]} and~\texorpdfstring{\cite{For24}}{[For24]} in \texorpdfstring{$\VNCTwo$}{VNC2}}

In this section, the notations and concepts are taken from~\cite{LST25} and~\cite{For24}.

We present a proof outline here:
\begin{enumerate}[leftmargin=*]
    \item We define a rank measure, the relative rank, in the theory.
    \item By a basic rank argument, the coefficient matrix of the word polynomial $P_w$ induced by a $b$-unbiased word $w$ has high relative rank.
    \item We then show that, whenever the full coefficient vector has polynomially bounded encoding length in the parameter regime considered below, polynomial identities involving the relevant constant-depth circuits can be proved in $\VNCTwo$ by reasoning about their full coefficient vectors.
    \item With the ability to prove polynomial identities of constant-depth circuits, we show that for any constant-depth circuit $C(\overline{X})$ that computes a polynomial $f(\overline{X})$, the set-multilinear projection $\pi_{sm}(f)$ can be computed by a constant-depth set-multilinear circuit $C'$ of size $\poly(s,d^d)$ over the sets of variables $(X_1,\ldots,X_d)$. In particular, for a set-multilinear polynomial $f$, a sufficiently strong lower bound against constant-depth set-multilinear circuits implies a lower bound against general constant-depth circuits computing $f$.
    \item We prove an upper bound on the relative rank of any constant-depth set-multilinear circuit $C'$ in terms of the size of $C'$. Roughly speaking, the relative rank of small-size constant-depth set-multilinear circuits cannot be too big.
    \item Combining the above lower bound for the relative rank of $P_w$ with the upper bound for the relative rank of small-size constant-depth set-multilinear circuits, we obtain a size lower bound against constant-depth set-multilinear circuits that compute $P_w$. Moreover, this lower bound is sufficient to imply a superpolynomial-size lower bound against constant-depth circuits.
    \item Finally, we transfer the lower bound for $P_w$ to the Permanent polynomial using the explicit polynomial projection constructed later in the proof from a suitable Permanent polynomial to $P_w$.
\end{enumerate}

Throughout the formalisation, we use a first-sort variable $N$ as the main size parameter. The parameter $N$ denotes the number of coordinates in the coefficient vector under consideration. For polynomials in $n$ variables of total degree at most $d$, this means $N \geq \binom{n+d}{d}$. 

All auxiliary first-sort quantities occurring in the formalisation are bounded by a fixed polynomial in $N$, and every second-sort object is encoded by a string of length bounded by a fixed polynomial in $N$. This convention guarantees that all objects manipulated in the formal proof have polynomial-size encodings with respect to $N$. It does not change the size parameter in the algebraic circuit lower bound itself.

We retain the parameter choices from the algebraic lower-bound argument. In particular,
\[
    n(N) = \lfloor \log_2 N \rfloor, \quad d(N) = \lfloor c\log_2 n(N) \rfloor, \quad k(N)=10d(N).
\]
where $c$ is a fixed rational constant. These relations can be represented by relations in $\NCTwo$ and all the resulting first-sort parameters are polynomially bounded in $N$.

We will study word polynomials indexed by words over a finite alphabet depending on the parameter $k$. Let $ a(k)=\left\lfloor\frac{k}{\sqrt{2}}\right\rfloor$ and define the two-element alphabet $A(k)\coloneqq\{a(k),-k\}$. We then consider words $w=(w_1,\ldots,w_d)\in A(k)^d$ and the associated word polynomials $P_w$, which will be defined below. The integer $a(k)$ can be defined without introducing the irrational number $1/\sqrt{2}$ into the language: it is the unique nonnegative integer $a$ satisfying $2a^2\leq k^2<2(a+1)^2$. Thus every entry of $w$ has absolute value at most $k$. The word polynomial $P_w$ is defined for every $w\in A(k)^d$; when proving the lower bound, we will additionally require $w$ to be $k$-unbiased.

We will use $\Delta$ to denote the product-depth of algebraic circuits, which is constant. We will consider algebraic circuits of size polynomially bounded in the number variables $n$. Namely, if $s$ is a fixed standard polynomial, we consider circuits of size at most $s(n)$. The strings $V,G,I,E$ encoding such a circuit therefore have length at most $\poly(s(n))$.

\subsubsection{Exponentiation in \texorpdfstring{$\VTCzero$}{VTC0}}

The proof of~\cite{LST25} applies one elementary analytic function, the exponential function, to rational inputs, producing values that need not be rational. \Jerabek{}~\cite{Jev23} formalises the required elementary analytic functions in $\VTCzero$ and gives uniform $\TCzero$ rational approximations of these functions. We give a short exposition of the results necessary for our work. We restrict attention to real inputs, although the results in~\cite{Jev23} are proved more generally for complex inputs.

Let $D\subseteq\Rbb$ and let $f:D\to\Rbb$. An \emph{additive $\TCzero$ approximation} of $f$ is a $\TCzero$ function
\[
    f_+:\Qbb\times\mathbb{N}\to\Qbb
\]
such that
\[
    |f_+(z,t)-f(z)|_1\leq 2^{-t}
\]
for every $z\in D\cap\Qbb$ and every first-sort precision parameter $t$. A \emph{multiplicative $\TCzero$ approximation} of $f$ is a $\TCzero$ function
\[
    f_\times:\Qbb\times\mathbb{N}\to\Qbb
\]
such that
\[
    |f_\times(z,t)-f(z)|_1 \leq 2^{-t}|f(z)|_1
\]
for every $z\in D\cap\Qbb$ and every first-sort precision parameter $t$. Equivalently, whenever $f(z)\neq0$,
\[
    \left| \frac{f_\times(z,t)}{f(z)}-1 \right|_1 \leq 2^{-t}.
\]
The approximation algorithms for the exponential function require an additional magnitude parameter, since the binary length of the output may grow with the magnitude of the input. Let $P(z,r)$ be a property of a rational input $z$ and a first-sort parameter $r$. We say that a $\TCzero$ function $f_+(z,r,t)$ is an \emph{additive approximation of $f(z)$ parametrised by $r$ subject to $P(z,r)$} if
\[
    P(z,r) \implies |f_+(z,r,t)-f(z)|_1\leq2^{-t}
\]
for every $z\in D\cap\Qbb$ and all first-sort parameters $r,t$. We additionally require that for every $z \in D \cap \Qbb$, there exists a first-sort parameter $r$ such that $P(z,r)$. Multiplicative approximation parametrised by $r$ is defined analogously.

In the applications below, the binary encoding length of $z$ and the numerical values of the first-sort parameters $r,t$ are all bounded by fixed polynomials in the global size parameter $N$. Therefore every $\TCzero$ approximation function used below has output encoding length polynomially bounded in $N$.

\begin{definition}[The exponential function]
    \label{def: exp function}
    For $z\in\Qbb$ and a first-sort number $m$, define the truncated exponential series
    \[
        e(z,m) \coloneqq \sum_{j<m}\frac{z^j}{j!}.
    \]      
    For semantic purposes, we use $\exp:\Rbb\to\Rbb_{>0}$ for the ordinary real exponential function. For every rational $z$,
    \[
        \exp z = \lim_{m\to\infty}e(z,m).
    \]       
    The exact real number $\exp z$ is used only in the semantic correctness statements for the rational approximation functions below; the bounded-arithmetic theory itself manipulates their rational outputs.
    We use the standard identity
    \[
         \exp(z+w)=\exp z\,\exp w
    \]
    for all $z,w\in\Rbb$.
\end{definition}

\begin{lemma}[Existence of $\TCzero$ approximations of $\exp$~\cite{Jev23}]
    \label{lemma: existence of TCzero approximation of exp}
    There are $\TCzero$ functions $E_\times(z,r,t)$ and $E_+(z,r,t)$ such that, for all $z\in\Qbb$ and all first-sort parameters $r,t$:
    \begin{enumerate}[leftmargin=*]
        \item if $|z|_1\leq r$, then
        \[          
            |E_\times(z,r,t)-\exp z|_1 \leq 2^{-t}\exp z;
        \]
        hence $E_\times(z,r,t)$ is a multiplicative approximation of $\exp z$ parametrised by $r$ subject to $|z|_1\leq r$;
        \item if $z\leq r$, then
        \[
            |E_+(z,r,t)-\exp z|_1 \leq 2^{-t};
        \]         
        hence $E_+(z,r,t)$ is an additive approximation of $\exp z$ parametrised by $r$ subject to $z\leq r$.
    \end{enumerate}
\end{lemma}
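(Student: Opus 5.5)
The plan is to construct the two functions explicitly from the truncated series $e(z,m)$ of \Cref{def: exp function}. All arithmetic will be done in binary rationals, and the only $\TCzero$-nontrivial operations needed are iterated sums and products of polynomially many rationals; these are in $\TCzero$ by the Hesse--Allender--Barrington result on iterated multiplication and division. For the multiplicative approximation $E_\times(z,r,t)$ I would proceed in three steps.

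\emph{Step 1: range reduction.} Choose $h=\lceil \log_2(r+1)\rceil+2$ and set $z'=z/2^h$. This gives $|z'|_1\le 1/4$. The reason for the reduction is that $\exp z=(\exp z')^{2^h}$, and since $2^h=O(r)$ is polynomial in the unary parameter $r$, the power is an iterated product of polynomially many factors.

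\emph{Step 2: approximating the small argument.} Compute $e(z',m)$ with $m=t+h+O(\log r)$ terms. For $|z'|\le 1/4$ the tail is at most $2|z'|^m/m!\le 4^{-m}$. Because $\exp z'\ge e^{-1/4}$, the additive error $4^{-m}$ becomes a relative error at most $2^{-(t+h+3)}$.

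\emph{Step 3: raising to the power.} Round $e(z',m)$ to a dyadic rational with $t+h+O(1)$ bits of relative precision, so that bit lengths stay polynomial. Then compute its $2^h$-th power by iterated multiplication in $\TCzero$. A relative error $\delta$ per factor compounds to $(1+\delta)^{2^h}-1\le 2^{h+1}\delta\le 2^{-t}$. For $z<0$ I would either use the same argument, since the series also converges for negative $z'$, or compute $1/E_\times(-z,r,t+1)$ using division in $\TCzero$. This establishes item~1.

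For $E_+$ under the hypothesis $z\le r$ there are two cases. If $z\ge -(t+1)$, then $|z|_1\le\max(r,t+1)$. I output $E_\times(z,\max(r,t+1),t+r')$ with $r'=\lceil 2r\rceil$. Since $\exp z\le e^r\le 2^{2r}$, a relative error of $2^{-t-2r}$ yields an additive error of at most $2^{-t}$. If $z<-(t+1)$, I output $0$, because $\exp z<e^{-t-1}<2^{-t}$. The case distinction is a $\TCzero$ comparison of rationals.

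The main obstacle is Step 3: keeping every intermediate encoding polynomially bounded while performing $2^h$-fold iterated products without the bit lengths blowing up. Truncating to a fixed precision before powering handles this, but it requires a careful error analysis, specifically the bound $(1+\delta)^{M}\le 1+2M\delta$ for $M\delta\le 1/2$. For the semantic statement only correctness in $\mathbb{N}$ and $\Qbb$ is needed, so elementary real-analysis bounds suffice. If the lemma is read as a statement about provability in $\VTCzero$, these error estimates must be carried out there, which is what \Jerabek{}~\cite{Jev23} does; I would cite that for the formal version.
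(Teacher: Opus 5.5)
Your construction is correct, but it takes a genuinely different route from the paper, which gives no argument and simply imports the lemma from \Jerabek{}~\cite{Jev23}.

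The details of your argument check out.
\begin{itemize}
\item The choice $2^h\geq 4(r+1)$ gives $|z'|_1<1/4$ for either sign of $z$. After the tail and rounding errors, the base is therefore $\exp(z')(1+\delta)$ with $|\delta|_1\leq 2^{-(t+h+1)}$.
\item For the powering step, Bernoulli's inequality (for $\delta<0$) and $(1+\delta)^M\leq \exp(M\delta)\leq 1+2M\delta$ (for $\delta\geq 0$) give $|(1+\delta)^{2^h}-1|_1\leq 2^{-t}$.
\item The case split for $E_+$ is sound, since $\exp z\leq 2^{2r}$ when $z\leq r$, and $\exp z<2^{-t}$ when $z<-(t+1)$.
\end{itemize}

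The citation buys the paper more than $\TCzero$ membership. \Jerabek{} also provides the $\VTCzero$-provability of the approximation bounds and of identities such as $\exp(z+w)=\exp z\,\exp w$. The paper needs these in order to use $P_\times$ inside $\VNCTwo$, and your semantic argument does not give them, as you note yourself.

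Your route, in turn, is self-contained for the statement as literally written, using only exact rational arithmetic and Hesse--Allender--Barrington. It can be simplified further. Because $r$ and $t$ are first-sort, i.e.\ effectively unary, parameters, neither the range reduction nor the rounding is needed:
\begin{itemize}
\item Setting $E_\times(z,r,t)\coloneqq e(z,6r+t+2)$ already works for $|z|_1\leq r$. With $m=6r+t+2$ the tail is at most $2r^m/m!\leq 2^{1-m}\leq 2^{-t}2^{-2r}\leq 2^{-t}\exp z$.
\item Exact arithmetic rules out any cancellation problem for negative $z$.
\item The bit-length blow-up you identify as the main obstacle in Step 3 does not occur. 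The power $e(z',m)^{2^h}$ has bit-length only $O(r+1)$ times that of $e(z',m)$.
\end{itemize}
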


In the proof, we use a rational approximation of $2^z$ rather than directly approximating $e^z$. Since $2^z=\exp(z\ln2)$, we also require a rational approximation of $\ln2$. The construction of the logarithm function in~\cite{Jev23} is more involved, so we state only the facts that we need.

\begin{lemma}[\cite{Jev23}]
    The real logarithm function $ \ln:\Rbb_{>0}\to\Rbb $ is the inverse of the real exponential function. Thus $\exp(\ln z)=z$ for every $z>0$, and $\ln(\exp z)=z$ for every $z\in\Rbb$.
    
    Moreover, $\ln z$ admits uniform $\TCzero$ additive and multiplicative rational approximations on positive rational inputs. In particular, there are $\TCzero$ functions $L_+(z,t)$ and $L_\times(z,t)$ such that for every $z\in\Qbb_{>0}$ and every first-sort precision parameter $t$,
    \[
        |L_+(z,t)-\ln z|_1\leq2^{-t}
    \]
    and
    \[
        |L_\times(z,t)-\ln z|_1 \leq 2^{-t}|\ln z|_1.
    \]
\end{lemma}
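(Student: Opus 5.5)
The first part of the statement is purely semantic. I would obtain it by defining $\ln$ as the inverse of $\exp$. The map $\exp\colon\Rbb\to\Rbb_{>0}$ is continuous, strictly increasing (its derivative is $\exp$ itself, which is positive), tends to $0$ at $-\infty$ and to $+\infty$ at $+\infty$. Hence it is a bijection onto $\Rbb_{>0}$, and setting $\ln\coloneqq\exp^{-1}$ gives $\exp(\ln z)=z$ and $\ln(\exp z)=z$ by definition. The real work is the existence of uniform $\TCzero$ approximations, and for this I would follow \Jerabek{}'s~\cite{Jev23} strategy of range reduction followed by a rapidly converging power series.

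\textbf{Additive approximation.} Given a positive rational $z=p/q$, I first compute in $\TCzero$ an integer $k$ with $z=2^k m$ and $m\in[1/2,1)$; this is roughly $|p|-|q|$ with a one-bit correction, and $|k|$ is bounded by the bit-length of $z$. Then $\ln z=k\ln2+\ln m$. Both terms are handled by the series
\[
    \ln\frac{1+y}{1-y}=2\sum_{j\geq0}\frac{y^{2j+1}}{2j+1}.
\]
For $\ln m$ I take $y=(m-1)/(m+1)$, which gives $|y|\leq1/3$. For $\ln2$ I take $y=1/3$. Truncating after $O(t+|k|)$ terms leaves a tail of at most $O(3^{-2J})$, so the error is below $2^{-t-|k|-2}$, which leaves room for the multiplication by $k$. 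The truncated sums are iterated sums of iterated products of rationals of polynomial bit-length. They are computable in $\TCzero$ by the iterated multiplication and division results (Hesse--Allender--Barrington), after reducing to a common denominator. This yields $L_+(z,t)$.

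\textbf{Multiplicative approximation.} Since $\ln z$ can be arbitrarily small near $z=1$, I need a lower bound on $|\ln z|_1$. If $z=1$, output $0$. Otherwise the elementary inequality $|\ln z|_1\geq |z-1|_1/\max(z,1)$ applies. Because $z=p/q$, we have $|z-1|_1\geq 1/q$, so $|\ln z|_1\geq 2^{-s}$ with $s$ polynomial in the bit-length of $z$. I would then set $L_\times(z,t)\coloneqq L_+(z,t+s)$; this $s$ is computable in $\ACZ$, and the error is at most $2^{-t-s}\leq2^{-t}|\ln z|_1$.

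\textbf{Main obstacle.} The hardest step is making the truncation-error estimates and the $\TCzero$ evaluation uniform while keeping every intermediate quantity at polynomial bit-length in the global parameter $N$. This is exactly where the $\TCzero$ iterated-product machinery is needed, and it is the part I would quote directly from~\cite{Jev23} rather than redo.
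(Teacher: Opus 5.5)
Your argument is correct, but it does something the paper does not: the paper gives no proof of this lemma and simply imports it from \Jerabek{}~\cite{Jev23}. Your route is a direct construction in the standard model, with four ingredients:
\begin{itemize}
\item range reduction $z=2^k m$ with $m\in[1/2,1)$;
\item the series $\ln\frac{1+y}{1-y}=2\sum_{j}\frac{y^{2j+1}}{2j+1}$ with $|y|\leq 1/3$, used for both $\ln m$ and $\ln 2$, with the $\ln 2$ error budget scaled by $|k|$;
\item uniform $\TCzero$ evaluation of the truncated sums via Hesse--Allender--Barrington;
\item the bound $|\ln z|_1\geq 1/\max(p,q)$ for $z=p/q\neq 1$, which turns the additive approximation into a multiplicative one by raising the precision.
\end{itemize}
This suffices for the lemma as literally stated, which is a semantic claim that such $\TCzero$ functions exist. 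For that claim you do not need to quote anything from~\cite{Jev23}. The tail estimate is elementary, and the uniformity and bit-length issues you flag as the main obstacle are fully covered by Hesse--Allender--Barrington.

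What the citation supplies, and your argument does not, is \emph{provability}. \cite{Jev23} defines these approximations inside $\VTCzero$ and proves there their error bounds and identities such as $\exp(\ln z)=z$, up to approximation error. The paper relies on this later when it uses $P_\times$ and the bounds $(1-\epsilon_N)2^z\leq P_\times(z,N,N)\leq(1+\epsilon_N)2^z$ inside $\VNCTwo$. Formalising your sketch in the theory would require at least two things. First, you would need the theory to prove the correctness of the iterated multiplication and division you invoke. Second, you would need a formal version of the tail bound and of the lower bound on $|\ln z|_1$, stated purely in terms of the rational approximating functions, since the theory cannot refer to real numbers. That is the nontrivial content behind the cited result.
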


\begin{remark}
    In~\cite{Jev23}, the functions $\exp$ and $\log$ are developed more generally over complex inputs. We need only their restriction to real inputs and, for the logarithm, only the value $\ln2$. The exact real-valued functions are used only to state the semantic guarantees of the approximation procedures, while the theory itself works with their rational-valued $\TCzero$ approximations.
\end{remark}

We now construct the approximation of the powering function
\[
    \pow(z)\coloneqq 2^z=\exp(z\ln2).
\]

\begin{lemma}
    \label{lemma: existence of powering function}
    Define
    \[
         P_\times(z,r,t) \coloneqq  E_\times\!\left(zL_\times(2,t+r+4), 2r, t+2 \right).
    \]  
    Then $P_\times$ is a $\TCzero$ multiplicative approximation of $\pow(z)=2^z$, parametrised by $r$ subject to $|z|_1\leq r$. More precisely,
    \[
        |P_\times(z,r,t)-2^z|_1 \leq 2^{-t}2^z
    \]        
    whenever $z\in\Qbb$ and $|z|_1\leq r$.
\end{lemma}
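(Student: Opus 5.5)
We prove the bound by composing the two approximations and tracking the error through each step. The computation $z\mapsto P_\times(z,r,t)$ consists of one call to $L_\times$, one rational multiplication, and one call to $E_\times$. Each of these is a $\TCzero$ function, and all arguments have encoding length polynomial in the inputs, so $P_\times$ is in $\TCzero$. The real content is the error analysis. Fix $z\in\Qbb$ with $|z|_1\leq r$ and set
\[
\lambda\coloneqq L_\times(2,t+r+4),\qquad w\coloneqq z\lambda,\qquad \delta\coloneqq w-z\ln2 .
\]
We carry out four steps in order.

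\textbf{Step 1 (error in the exponent).} By the guarantee for $L_\times$, $|\lambda-\ln2|_1\leq 2^{-(t+r+4)}\ln2$. Hence
\[
|\delta|_1=|z|_1|\lambda-\ln2|_1\leq r\,2^{-(t+r+4)}\leq 2^{-(t+4)},
\]
using $r\leq 2^r$.

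\textbf{Step 2 (the precondition of $E_\times$).} We have $|w|_1\leq |z|_1\ln2+|\delta|_1\leq r\ln 2+2^{-(t+4)}\leq 2r$. The case $r=0$ forces $z=0$ and $w=0$, so it is trivial. Thus $E_\times$ may be called with magnitude parameter $2r$. Lemma~\ref{lemma: existence of TCzero approximation of exp} then gives $E_\times(w,2r,t+2)=\exp(w)(1+\theta_1)$ with $|\theta_1|_1\leq 2^{-(t+2)}$.

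\textbf{Step 3 (relating $\exp w$ to $2^z$).} By the identity $\exp(a+b)=\exp a\exp b$ of Definition~\ref{def: exp function}, $\exp(w)=\exp(z\ln2)\exp(\delta)=2^z(1+\theta_2)$, where $\theta_2\coloneqq\exp(\delta)-1$. The elementary estimate $|\exp\delta-1|\leq 2|\delta|$ for $|\delta|\leq1$ comes from the series tail: $|\delta|\sum_{j\geq1}|\delta|^{j-1}/j!\leq (e-1)|\delta|$. It gives $|\theta_2|_1\leq 2^{-(t+3)}$.

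\textbf{Step 4 (combining).} We get $P_\times(z,r,t)=2^z(1+\theta_1)(1+\theta_2)$, and
\[
|(1+\theta_1)(1+\theta_2)-1|\leq 2^{-(t+2)}+2^{-(t+3)}+2^{-(2t+5)}\leq 2^{-t},
\]
which is the claimed bound $|P_\times(z,r,t)-2^z|_1\leq 2^{-t}2^z$.

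The only potential obstacle is the correct bookkeeping of the parameters. The precision $t+r+4$ passed to $L_\times$ is chosen so that the factor $|z|_1\leq r$ is absorbed. The magnitude bound $2r$ must dominate $|w|_1$ even after the perturbation $\delta$. Both checks are done above. The statements are semantic, about real values of the rational outputs, so no formal reasoning inside $\VTCzero$ is required beyond the fact that the composed function is in $\TCzero$.
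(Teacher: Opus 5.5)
Your proposal is correct and follows essentially the same route as the paper's proof. Both write $L_\times(2,t+r+4)$ as $\ln 2$ times a small relative error, check that the magnitude parameter $2r$ suffices, factor $\exp(zL_\times(2,t+r+4))=2^z e^{\tau}$ with $|\tau|_1\leq 2^{-(t+4)}$, use $|e^\tau-1|_1\leq 2|\tau|_1$, and combine the two relative errors. The only cosmetic differences are that you bound $|w|_1$ by $r\ln 2+|\delta|_1$ (treating $r=0$ separately) instead of via $|L_\times(2,t+r+4)|_1<2$, and that you derive $|e^x-1|_1\leq 2|x|_1$ explicitly from the series tail.
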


\begin{proof}
    Let $s=t+r+4.$ By the multiplicative approximation property of $L_\times$, there exists a real number $\delta_1$ such that $L_\times(2,s) = (\ln2)(1+\delta_1)$, with $|\delta_1|_1\leq2^{-s}$. Since $s\geq4$ and $\ln2<1$, we have $|L_\times(2,s)|_1 \leq (\ln2)(1+2^{-4}) <2$. Consequently, if $|z|_1\leq r$, then $|zL_\times(2,s)|_1\leq 2r$, so the magnitude parameter $2r$ is sufficient for the call to $E_\times$.
    
    By the multiplicative approximation property of $E_\times$, there exists a real number $\delta_2$ such that
    \[
        P_\times(z,r,t) = \exp\!\bigl(zL_\times(2,s)\bigr)(1+\delta_2), \qquad |\delta_2|_1 \leq 2^{-(t+2)}.
    \]
    Substituting the expression for $L_\times(2,s)$ gives
    \[
        zL_\times(2,s) = z\ln2+z\ln2\,\delta_1.
    \]
    Hence, using $\exp(a+b)=\exp a\,\exp b$,
    \[
        P_\times(z,r,t) = 2^z \exp(z\ln2\,\delta_1) (1+\delta_2).
    \]
    Therefore
    \[
        \frac{P_\times(z,r,t)}{2^z} = (1+\delta_2)e^\tau,  
    \]
    where $\tau\coloneqq z\ln2\,\delta_1.$ Since $|z|_1\leq r$, $\ln2<1$, and $|\delta_1|_1\leq2^{-(t+r+4)}$, we obtain
    \[
        |\tau|_1 \leq r\,2^{-(t+r+4)}.
    \]
    Using $r\leq2^r$ for every natural number $r$, we get
    \[
        |\tau|_1 \leq 2^{-(t+4)}.
    \]
    In particular $|\tau|_1\leq1$. For every real $x$ with $|x|_1\leq1$,
    \[
        |e^x-1|_1\leq2|x|_1,
    \]
    and therefore
    \[
        |e^\tau-1|_1 \leq 2^{-(t+3)}.
    \]
    We now estimate the relative error:
    \begin{align*}
        \frac{P_\times(z,r,t)}{2^z}-1 &= (1+\delta_2)e^\tau-1\\
        &= \delta_2+(e^\tau-1)+\delta_2(e^\tau-1).
    \end{align*}
    Hence,
    \begin{align*}
        \left| \frac{P_\times(z,r,t)}{2^z}-1 \right|_1 &\leq |\delta_2|_1 + |e^\tau-1|_1 + |\delta_2|_1|e^\tau-1|_1\\
            &\leq 2^{-(t+2)} + 2^{-(t+3)} + 2^{-(t+2)}2^{-(t+3)}\\
            &\leq 2^{-t}.
    \end{align*}
    Multiplying by $2^z>0$ yields
    \[
        |P_\times(z,r,t)-2^z|_1 \leq 2^{-t}2^z.
    \]
    Thus $P_\times$ is a multiplicative $\TCzero$ approximation of $\pow$.
\end{proof}

Since $P_\times$ is obtained by composition of $\TCzero$ functions and rational arithmetic, it is itself a $\TCzero$ function. Hence $P_\times\in \FTCzero \subseteq\FNCTwo$ and is available in the universal conservative extension $\overline{\VNCTwo}$. In all applications below, the encoding length of $z$ and the first-sort parameters $r,t$ are polynomially bounded in the global size parameter $N$. Therefore the binary encoding of $P_\times(z,r,t)$ is also polynomially bounded in $N$.

\subsubsection{Word polynomials}

Consider the set of words on an alphabet $A\subseteq\Z\setminus{0}$. Let $w=(w_1,\ldots,w_d)\in A^d$. For a subset $S\subseteq[d]$, let $ w_S\coloneqq\sum_{i\in S}w_i.$ We define $\calP_w=\{i\in[d]:w_i\geq0\}$ and $\calN_w=\{i\in[d]:w_i<0\}$, the positive and negative index sets of $w$, respectively. We say that $w\in A^d$ is \emph{$b$-unbiased} if
$|w_{[t]}|_1\leq b$ for every $t\leq d$. Note that $|\cdot|_1$ denotes the ordinary absolute value.

Given $w$, we denote by $\overline{X}(w) = (X(w_1),\ldots,X(w_d))$ a tuple of $d$ sets of disjoint variables, where $\#X(w_i)=2^{|w_i|_1}.$ We denote by $\F_{sm}[T]$ the space of set-multilinear polynomials over a tuple $T$ of sets of variables. Hence $\F_{sm}[\overline{X}(w)]$ consists of polynomials in which every monomial contains exactly one variable from each set $X(w_i)$.

Since $X(w_i)$ contains $2^{|w_i|_1}$ variables, we label its variables by strings in $\{0,1\}^{|w_i|_1}.$ Let $M_w^\calP$ and $M_w^\calN$ denote the sets of set-multilinear monomials over the positive and negative variable sets, respectively. Given a monomial $m\in\F_{sm}[\overline{X}(w)]$, let $m_+$ denote its positive part in $M_w^\calP$ and let $m_-$ denote its negative part in $M_w^\calN$. We use $\sigma(m_+)$ to denote the Boolean string associated with $m_+$, and $\sigma(m_{-})$ to denote the Boolean string associated with $ m_{-}$.

More precisely, if $j_1 < \cdots < j_t$ and $m^\prime = x_{\sigma_1}^{(j_1)} x_{\sigma_2}^{(j_2)} \cdots x_{\sigma_t}^{(j_t)}$ with $x_{\sigma_i}^{(j_i)} \in X_{j_i}$ and $\sigma_i \in {\{0,1\}}^{|w_{j_i}|_1}$ for each $i \in [t]$, then $\sigma(m^\prime)$ is defined to be $\sigma_1 \cdots \sigma_t$. 

The length of $\sigma(m_+)$ is $\sum_{i\in\calP_w}w_i$, whereas the length of $\sigma(m_-)$ is $-\sum_{i\in\calN_w}w_i$.

Consequently, the difference between these two lengths is $\left| \sum_{i\in[d]}w_i \right|_1 = |w_{[d]}|_1$. In particular, if $w$ is $b$-unbiased, this difference is at most $b$.

For binary strings $x$ and $y$, write $x\sim y$ if the shorter of $x,y$ is a prefix of the longer one. The word polynomial $P_w$ is
\[
    P_w = \sum_{\substack{m\in\F_{sm}[\overline{X}(w)]\\ \sigma(m_+)\sim\sigma(m_-)}} m.
\]
where $m$ is a monomial.

\subsubsection{Word polynomials are \texorpdfstring{$\Sigma_1^B$}{Sigma1B}-definable in \texorpdfstring{$\VNCTwo$}{VNC2}}

Let $L(w)\coloneqq \sum_{j=1}^d |w_j|_1$ and $N_{sm}(w)\coloneqq 2^{L(w)}$. Thus $N_{sm}(w)$ is the number of set-multilinear monomials over $\overline{X}(w)$. We choose the parameters $n,d,k$ as functions of the global parameter $N$ so that $ L(w)\leq kd\leq \lfloor\log_2 N\rfloor$, and hence $ N_{sm}(w)=2^{L(w)}\leq N$. In particular, every number and string occurring in the definition below is polynomially bounded in $N$.

We order the set-multilinear monomials over $\overline{X}(w)$ by identifying them with the binary strings in ${0,1}^{L(w)}$, as in the definition of $P_w$ above. We define the string function $\WordP(w,d,N)$ to be the $N$-coordinate coefficient vector of $P_w$ in this ordering, padded by zeros after the first $N_{sm}(w)$ coordinates.

For $i<N_{sm}(w)$, let $u=\binaryT(i,L(w))$ be the binary representation of $i$ of length exactly $L(w)$, padded with leading zeros when necessary. Thus
\[
    \binaryT(K,L)(j) = \left\lfloor \frac{K}{2^{L-1-j}} \right\rfloor \bmod 2 \qquad \text{for }j=0,\ldots,L-1,
\]
whenever $K<2^L$.

We partition $u$ into $d$ consecutive blocks such that $u=u_1u_2\cdots u_d$ and $|u_j|=|w_j|_1$.Let
\[
    \calP_w={j_1<j_2<\cdots<j_p}, \qquad \calN_w={k_1<k_2<\cdots<k_q},
\]
we define
\[
    u^\calP \coloneqq u_{j_1}u_{j_2}\cdots u_{j_p}, \qquad u^\calN \coloneqq u_{k_1}u_{k_2}\cdots u_{k_q}.
\]
where the blocks are concatenated in increasing order of their indices. Thus $u^\calP$ and $u^\calN$ are precisely the strings $\sigma(m_+)$ and $\sigma(m_-)$ associated with the set-multilinear monomial indexed by $i$.

We define the prefix relation by
\[
    \Prefix(x,y) \coloneqq |x|\leq |y| \land \forall j<|x|\, \bigl(x(j)\leftrightarrow y(j)\bigr).
\]
Then the bit graph of $\WordP$ is given by
\begin{align*}
    \WordP(w,d,N)(i) \leftrightarrow{}& i<N \land i<N_{sm}(w) \\
    &\land \Bigl( \Prefix(\binaryT(i,L(w))^\calP, \binaryT(i,L(w))^\calN) \\
    &\qquad\lor \Prefix(\binaryT(i,L(w))^\calN, \binaryT(i,L(w))^\calP)\Bigr).
\end{align*}
For $N_{sm}(w)\leq i<N$, the $i$th bit of $\WordP(w,d,N)$ is therefore $0$.

The functions used above---computing $L(w)$, extracting the blocks of $\binaryT(i,L(w))$, concatenating the positive and negative blocks, and checking the prefix relation---are computable in $\FNCTwo$. Moreover, the output length of $\WordP(w,d,N)$ is at most $N$. Therefore $\WordP\in\FNCTwo.$

All the operations appearing in the above definition, including $\binaryT$, block extraction, concatenation, and the prefix test, are computable in $\FNCTwo$. Hence they are available as function symbols in the universal language $\calL_{\FNCTwo}$. Therefore, the displayed formula gives a $\Sigma_0^B(\calL_{\FNCTwo})$ bit-definition of the string function $\WordP(w,d,N)$ in $\overline{\VNCTwo}$.

Since the output length of $\WordP(w,d,N)$ is bounded by $N$, the corresponding string exists by $\Sigma_0^B(\calL_{\FNCTwo})$-comprehension in $\overline{\VNCTwo}$. By conservativity and the $\Sigma_0^B$-transformation theorem, this definition can be translated back to the language of $\VNCTwo$. In particular, $\WordP$ is $\Sigma_1^B$-definable in $\VNCTwo$.

\subsubsection{Relative rank}

For $f\in\F_{sm}[\overline{X}(w)]$, define $M_w(f)$ to be the coefficient matrix whose rows are indexed by $M_w^\calP$, whose columns are indexed by $M_w^\calN$, and whose $(m_1,m_2)$-entry is the coefficient of the monomial $m_1m_2$ in $f$.

Let $ L(w)\coloneqq\sum_{i=1}^d|w_i|_1$ and $s(w)\coloneqq w_{[d]}=\sum_{i=1}^d w_i$. Then $|M_w^\calP| = 2^{\sum_{i\in\calP_w}w_i}$, and $|M_w^\calN| = 2^{-\sum_{i\in\calN_w}w_i}$, and consequently $ |M_w^\calP|\cdot|M_w^\calN| = 2^{L(w)}$. Moreover, $\min \{|M_w^\calP|,|M_w^\calN|\} = 2^{\frac{L(w)-|s(w)|_1}{2}}$. Notice that $L(w)-|s(w)|_1$ is even, so the exponent in the last expression is an integer.

The relative rank of $f$ is defined as
\[
    \relrk_w(f) \coloneqq \frac{\Rank(M_w(f))}{\sqrt{|M_w^\calP|\cdot|M_w^\calN|}}
    = \frac{\Rank(M_w(f))}{2^{L(w)/2}} \leq 1.
\]
As the denominator need not be rational, so inside $\VNCTwo$ we instead use the rational approximation $P_\times$ defined above.

Both dimensions of $M_w(f)$ are at most $N$, and the matrix has at most $N^2$ entries. Thus its encoding has length polynomially bounded in $N$. If $Y$ is the $N$-coordinate coefficient vector of a set-multilinear polynomial, we also write $M_w(Y)$ for the matrix obtained from $Y$ by extracting the corresponding coefficients. This matrix-extraction operation is in $\FNCTwo$.

We define the rational relative-rank function used in $\VNCTwo$ by
\[
    \relrank(w,d,N,M) \coloneqq \frac{\rank(M)} {P_\times\!\left(\frac{L(w)}2,N,N\right)}.
\]
Since $L(w)\leq N$, the magnitude parameter $N$ is valid. We also use $N$ as the precision parameter. Put $\epsilon_N\coloneqq2^{-N}.$ By the multiplicative approximation property of $P_\times$,
\[
    (1-\epsilon_N)2^{L(w)/2}
    \leq
    P_\times\!\left(\frac{L(w)}2,N,N\right)
    \leq
    (1+\epsilon_N)2^{L(w)/2}.
\]
For $N\geq1$, the denominator is therefore positive.

By the rank properties established in \cite{KK23}, $\VNCTwo$ proves the following rational analogues of the standard properties of relative rank.

\begin{claim}
    Let $w\in A^d$ satisfy $N_{sm}(w)\leq N$.
    \begin{itemize}[leftmargin=*]
        \item \textbf{Imbalance.} For every $f\in\F_{sm}[\overline{X}(w)]$,
        \[
            \relrank(w,d,N,M_w(f)) \leq \frac{2^{(L(w)-|s(w)|_1)/2}}{P_\times(L(w)/2,N,N)}.
        \]
        Moreover, 
        \[
            \relrank(w,d,N,M_w(f)) \leq \frac{1+\epsilon_N}{1-\epsilon_N}\,\frac{1}{P_\times(|s(w)|_1/2,N,N)}.
        \]
        \item \textbf{Subadditivity.} For every $f,g\in\F_{sm}[\overline{X}(w)]$,
        \[
            \relrank(w,d,N,M_w(f+g)) \leq \relrank(w,d,N,M_w(f)) + \relrank(w,d,N,M_w(g)).
        \]
        \item \textbf{Multiplicativity.} Suppose $f=f_1f_2\cdots f_t$, where $(S_1,\ldots,S_t)$ is a partition of $[d]$ and $f_i\in \F_{sm}[\overline{X}(w\restriction S_i)]$ for every $i\in[t]$. Then
        \[
            \begin{aligned}
                \relrank(w,d,N,M_w(f)) &\leq \frac{(1+\epsilon_N)^t}{1-\epsilon_N} \prod_{i=1}^t \relrank\!\left(w\restriction S_i, |S_i|, N, M_{w\restriction S_i}(f_i) \right)\\
                &\leq \left(\frac{1+\epsilon_N}{1-\epsilon_N} \right)^t \prod_{i=1}^t \relrank\!\left( w\restriction S_i, |S_i|, N, M_{w\restriction S_i}(f_i) \right).
            \end{aligned}
        \]
    \end{itemize}
\end{claim}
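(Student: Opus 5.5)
The plan is to derive all three items from the rank facts of Ken and Kuroda~\cite{KK23}, which are available in $\VNCTwo$, together with the multiplicative guarantee of $P_\times$ (\cref{lemma: existence of powering function}) and the displayed bounds $(1-\epsilon_N)2^{L(w)/2}\le P_\times(L(w)/2,N,N)\le(1+\epsilon_N)2^{L(w)/2}$. Every statement is an inequality between rationals with polynomially bounded encodings, so each step is a $\Sigma_0^B(\calL_{\FNCTwo})$ manipulation in $\overline{\VNCTwo}$. The proof reduces to a purely integer rank inequality followed by bookkeeping of the approximation error.

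\textbf{Imbalance.} The rank of $M_w(f)$ is at most its number of rows and at most its number of columns; both facts are among the \cite{KK23} characterisations. This gives $\rank(M_w(f))\le\min\{|M_w^\calP|,|M_w^\calN|\}=2^{(L(w)-|s(w)|_1)/2}$, and dividing by the positive denominator yields the first inequality. For the second inequality, I bound the numerator by $2^{L(w)/2}/2^{|s(w)|_1/2}$. I then use $2^{L(w)/2}\le P_\times(L(w)/2,N,N)/(1-\epsilon_N)$ and $2^{-|s(w)|_1/2}\le(1+\epsilon_N)/P_\times(|s(w)|_1/2,N,N)$; the magnitude parameter is valid because $|s(w)|_1\le L(w)\le N$. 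The exact real $2^{z}$ appears only as an interval between rational approximations. For half-integer exponents, I will avoid irrationals by squaring: I compare $\rank^2$ with the integers $2^{L(w)}$ and $2^{|s(w)|_1}$ and use monotonicity of squaring on positive rationals.

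\textbf{Subadditivity.} Since $M_w(f+g)=M_w(f)+M_w(g)$ is $\FNCTwo$-provable entrywise, this case reduces to $\rank(A+B)\le\rank(A)+\rank(B)$, after which I divide by the common denominator. I will prove the rank inequality via the column-space characterisation. Pick maximal independent column sets of $A$ and $B$ of sizes $\rank A$ and $\rank B$. Every column of $A+B$ lies in the span of their union, and any independent set of vectors inside the span of $r$ vectors has size at most $r$. That last fact is the Steinitz exchange, which I take from \cite{TC21,KK23}, where it is proved through determinant and rank identities in $\VNCTwo$.

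\textbf{Multiplicativity.} This is the step I expect to be the main obstacle. The combinatorial core is that, after a permutation of rows and columns, $M_w(f)$ equals the Kronecker product $\bigotimes_i M_{w\restriction S_i}(f_i)$. The permutation is given by regrouping the blocks of the index strings, which is $\FACZ$-definable. One then needs $\rank(A\otimes B)\le\rank A\cdot\rank B$ for $t$ factors, where $t$ is not constant. I would prove this by $\Sigma_0^B(\calL_{\FNCTwo})$-induction on $t$, using partial products, which are computable in $\FNCTwo$ because dimensions multiply to at most $N$. The rank inequality itself I would obtain from factorisations: write $A=A_1A_2$ with inner dimension $\rank A$ (from a column basis), do the same for $B$, and note $A\otimes B=(A_1\otimes B_1)(A_2\otimes B_2)$. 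A product has rank at most its inner dimension. Finally, the denominators: $L(w)=\sum_iL(w\restriction S_i)$, so $2^{L(w)/2}=\prod_i2^{L(w\restriction S_i)/2}$. Replacing each factor and the total by $P_\times$ costs one factor $(1-\epsilon_N)^{-1}$ for the left denominator and $(1+\epsilon_N)$ per factor, giving the stated bound. The product of $t$ such error terms is again handled by induction on $t$ over rationals. The weakening to $((1+\epsilon_N)/(1-\epsilon_N))^t$ is then trivial.
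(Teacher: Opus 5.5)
Your proposal is correct and follows the paper's proof. In each item you reduce to an integer rank inequality: rank is at most the smaller dimension, rank is subadditive, and $M_w(f_1\cdots f_t)$ decomposes as a Kronecker product. You then account for the $P_\times$ denominators via $(1-\epsilon_N)2^{z}\le P_\times(z,N,N)\le(1+\epsilon_N)2^{z}$, which gives exactly the factor $(1+\epsilon_N)^t/(1-\epsilon_N)$. Your additions are refinements of the same argument, not a different route: you prove only $\rank(A\otimes B)\le\rank A\cdot\rank B$ via rank factorisations (all the claim needs, where the paper asserts equality), you make the induction on $t$ explicit, and you square to avoid the irrational $2^{L(w)/2}$.
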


\begin{proof}
    \textbf{Imbalance.} Since $\rank(M_w(f)) \leq \min\{|M_w^\calP|,|M_w^\calN|\}$, we have $\rank(M_w(f)) \leq 2^{(L(w)-|s(w)|_1)/2}$. Dividing by the positive rational number
    $P_\times(L(w)/2,N,N)$ gives
    \[
        \relrank(w,d,N,M_w(f)) \leq \frac{2^{(L(w)-|s(w)|_1)/2}}{P_\times(L(w)/2,N,N)}.
    \]
    From the approximation bounds, $P_\times(L(w)/2,N,N) \geq (1-\epsilon_N)2^{L(w)/2}$ and $P_\times(|s(w)|_1/2,N,N) \leq (1+\epsilon_N)2^{|s(w)|_1/2}$.
    Therefore, we have
    \[
        \begin{aligned}
            \relrank(w,d,N,M_w(f)) &\leq \frac{2^{-|s(w)|_1/2}}{1-\epsilon_N}\\
            &\leq \frac{1+\epsilon_N}{1-\epsilon_N}\frac{1}{P_\times(|s(w)|_1/2,N,N)}.
        \end{aligned}
    \]
    
    \textbf{Subadditivity.} The matrices satisfy $M_w(f+g)=M_w(f)+M_w(g)$, and matrix rank is subadditive, $\rank(M_w(f+g)) \leq \rank(M_w(f))+\rank(M_w(g)).$ Since all three relative ranks have the same positive denominator $P_\times(L(w)/2,N,N)$, the desired inequality follows.
    
    \textbf{Multiplicativity.} Up to permutations of rows and columns,
    \[
        M_w(f_1\cdots f_t) = \bigotimes_{i=1}^t M_{w\restriction S_i}(f_i).
    \]
    Hence
    \[
        \rank(M_w(f_1\cdots f_t)) = \prod_{i=1}^t \rank(M_{w\restriction S_i}(f_i)).
    \]
    Set
    \[
        L_i \coloneqq \sum_{j\in S_i}|w_j|_1.
    \]
    Since $\sum_i L_i=L(w)$, we obtain
    \[
        \begin{aligned}
            &\relrank(w,d,N,M_w(f_1\cdots f_t))\\
            &\quad= \prod_{i=1}^t \relrank\!\left(w\restriction S_i, |S_i|, N, M_{w\restriction S_i}(f_i) \right) \cdot
            \frac{\prod_{i=1}^tP_\times(L_i/2,N,N)}{P_\times(L(w)/2,N,N)}.
        \end{aligned}
    \]
    The approximation bounds imply
    \[
        \prod_{i=1}^tP_\times(L_i/2,N,N) \leq (1+\epsilon_N)^t2^{L(w)/2}
    \]
    and
    \[
        P_\times(L(w)/2,N,N) \geq (1-\epsilon_N)2^{L(w)/2}.
    \]
    Thus
    \[
        \frac{\prod_{i=1}^tP_\times(L_i/2,N,N)}{P_\times(L(w)/2,N,N)} \leq \frac{(1+\epsilon_N)^t}{1-\epsilon_N},
    \]
    which proves the first inequality. The second follows because $(1-\epsilon_N)^{t-1}\leq1$.
\end{proof}

\subsubsection{Word polynomials have high rank}

The matrix $M_w(P_w)$ has full rank. Indeed, suppose $|M_w^\calP|\geq|M_w^\calN|$. Then $\sigma(m_-)$ is no longer than $\sigma(m_+)$. For every column indexed by $m_-\in M_w^\calN$, the non-zero entries are precisely the rows whose associated strings extend $\sigma(m_-)$. The supports of two distinct columns are disjoint, so the columns are linearly independent over every field. Hence $\Rank(M_w(P_w))=|M_w^\calN|$ The case $|M_w^\calP|<|M_w^\calN|$ is symmetric, using the rows instead. Therefore
\[
    \Rank(M_w(P_w)) = \min\{|M_w^\calP|,|M_w^\calN|\} = 2^{(L(w)-|s(w)|_1)/2}.
\]
Consequently, for the ordinary relative-rank measure, $\relrk_w(P_w)=2^{-|s(w)|_1/2}.$ In particular, if $w$ is $b$-unbiased, then $|s(w)|_1\leq b$ and $\relrk_w(P_w)\geq2^{-b/2}.$

Inside $\VNCTwo$, we use the rational approximation instead. Since
\[
    \rank\!\left(M_w(\WordP(w,d,N))\right) = 2^{(L(w)-|s(w)|_1)/2},
\]
we have
\[
    \relrank\!\left(w,d,N,M_w(\WordP(w,d,N))\right) = \frac{2^{(L(w)-|s(w)|_1)/2}}{P_\times(L(w)/2,N,N)}.
\]
If $w$ is $b$-unbiased, then
\[
    \relrank\!\left(w,d,N,M_w(\WordP(w,d,N))\right) \geq \frac{1-\epsilon_N}{1+\epsilon_N}\, \frac{1}{P_\times(b/2,N,N)}.
\]

We define the predicate
\[
    \unbiased(w,d,b) \coloneqq \forall t\leq d\, \left| \sum_{j=1}^t w_j \right|_1 \leq b,
\]
where each $w_j$ is obtained from the corresponding segment of the string $w$. This is a $\Sigma_0^B(\calL_{\FNCTwo})$ predicate.

Finally, fix the constant $c$ occurring in the lower-bound construction. Choose a standard constant $N_0$ sufficiently large that, for every $N\geq N_0$, if
\[
    n(N)\coloneqq\lfloor\log_2N\rfloor, \qquad d(N)\coloneqq \lfloor c\log_2n(N)\rfloor, \qquad
    k(N)\coloneqq10d(N),
\]
then
\[
    k(N)d(N)\leq n(N).
\]
For every $w\in A(k(N))^{d(N)}$ we then have $L(w)\leq k(N)d(N)\leq n(N)$, and hence $N_{sm}(w)=2^{L(w)} \leq2^{n(N)} \leq N$. Thus all coefficient vectors and coefficient matrices occurring above have length polynomially bounded in $N$.

With these definitions, $\VNCTwo$ proves that the coefficient matrix of the word polynomial has high rank:
\[
    \begin{aligned}
        \forall N\geq N_0\ \forall w\in A(k(N))^{d(N)},\qquad &\unbiased(w,d(N),k(N)) \ \longrightarrow\\
        & \relrank\!\left(w,d(N),N, M_w(\WordP(w,d(N),N)) \right)\\
        &\qquad \geq \frac{1-2^{-N}}{1+2^{-N}}\, \frac{1}{P_\times(k(N)/2,N,N)}.
    \end{aligned}
\]
All quantities in this displayed statement are rational or integer-valued and have encodings of length polynomially bounded in $N$.

\subsubsection{The algorithm computing the coefficient vector of a constant-depth circuit} 

Before proving the field-independent set-multilinearization theorem of Forbes~\cite{For24}, we first formalise coefficient extraction in $\VNCTwo$. 

Throughout this section, $N$ is the global size parameter. We choose $n=n(N)=\lfloor\log_2N\rfloor$ and $d=d(N)=\lfloor c \log_2 n\rfloor$ where $c$ is a fixed constant. All circuits to which the coefficient-extraction procedure is applied have encoding length polynomially bounded in $N$. Let $C$ be such a circuit, and let $q(C)$ denote the number of variables occurring in $C$. The number of monomials in these variables of total degree at most $d$ is $D(C,d)=\binom{q(C)+d}{d}.$ For all applications below, the parameters are chosen so that $D(C,d)\leq N$. Indeed, the number of variables and the circuit size are bounded by a fixed polynomial in $n=\Theta(\log N)$, while $d=O(\log n)$, and therefore
\[ 
    \binom{q(C)+d}{d} \leq (q(C)+d)^d = 2^{O((\log n)^2)} \leq N 
\] for all sufficiently large $N$. For a polynomial $f$ in the variables of $C$, let $\coeff_{\leq d,N}(f)$ denote its coefficient vector on all monomials of total degree at most $d$, in some fixed ordering, padded with zeros to $N$ coordinates. 

We define an $\FNCTwo$ string function \[ \CoeffT(V,G,I,E,d,N) \] which, given an encoding $(V,G,I,E)$ of a constant-depth algebraic circuit $C$, outputs $\coeff_{\leq d,N}(C)$. Thus $\CoeffT$ does not attempt to compute coefficients of monomials of degree greater than $d$. 

We compute the truncated coefficient vector gate by gate. For a gate $v$, write $C_v$ for the polynomial computed at $v$, and let $\coeff_{\leq d,N}(C_v)$ denote its truncated coefficient vector. 

\begin{itemize}[leftmargin=*] 
    \item If $v$ is an input gate representing a scalar $a$, then $\coeff_1(C_v)=a,$ and every other coefficient is zero. 
    \item If $v$ is an input gate representing a variable $x_j$, then $\coeff_{x_j}(C_v)=1,$ and every other coefficient is zero. 
    \item If $v$ is a $+$-gate with incoming gates $u_1,\ldots,u_t$, then for every monomial $M$ of degree at most $d$, \[ \coeff_M(C_v) = \sum_{i=1}^t \coeff_M(C_{u_i}). \] 
    \item Suppose that $v$ is a $\times$-gate with incoming gates $u_1,\ldots,u_t$. For polynomials $f$ and $g$, define their truncated product by 
    \[ 
        f*_{\leq d}g \coloneqq \operatorname{tr}_{\leq d}(fg), 
    \] 
    where $\operatorname{tr}_{\leq d}$ deletes all monomials of total degree greater than $d$. For every monomial $M$ of degree at most $d$, 
    \[ 
        \coeff_M(f*_{\leq d}g) = \sum_{M_1M_2=M} \coeff_{M_1}(f)\coeff_{M_2}(g), 
    \] 
    where the sum ranges over all pairs of monomials $(M_1,M_2)$ whose product is $M$. If $M=x_1^{a_1}\cdots x_q^{a_q},$ then the number of such pairs is 
    \[ 
        \prod_{j=1}^q(a_j+1) \leq 2^{\sum_j a_j} = 2^{\deg M} \leq 2^d. 
    \] 
    
    Hence the coefficient of every truncated product can be computed using only polynomially many arithmetic operations in $N$. For a product gate of arbitrary fan-in $t$, we compute 
    \[ 
        \operatorname{tr}_{\leq d} \bigl(C_{u_1}\cdots C_{u_t}\bigr) 
    \] by a balanced tree of truncated binary products. This requires $O(\log t)$ stages. Since $t$ is bounded by the circuit size and the circuit encoding has length polynomially bounded in $N$, the resulting computation is in $\FNCTwo$. \end{itemize} 
    
    The truncation does not affect coefficients of degree at most $d$. More precisely, for arbitrary polynomials $f$ and $g$, 
    \[ 
        \operatorname{tr}_{\leq d}(fg) = \operatorname{tr}_{\leq d} \bigl(\operatorname{tr}_{\leq d}(f)\operatorname{tr}_{\leq d}(g) \bigr), 
    \] because every monomial occurring in a product has degree equal to the sum of the degrees of its factors. Thus a term of degree greater than $d$ at an intermediate gate can never contribute to a coefficient of degree at most $d$ at a later multiplication gate. The analogous statement for addition is immediate. Since the algebraic circuit has constant depth, we apply this procedure layer by layer. Addition of polynomially many encoded field elements, computation of a binary truncated product of coefficient vectors, and balanced iteration of truncated products are all computable in $\FNCTwo$. Hence $\CoeffT\in\FNCTwo.$ Moreover, $\VNCTwo$ proves the correctness of the procedure: 
    \[ 
        \VNCTwo\vdash \CoeffT(V,G,I,E,d, N) = \coeff_{\leq d,N}(C). 
    \] 
    
    If the polynomial computed by $C$ has total degree at most $d$, this is its full coefficient vector, padded to $N$ coordinates. All intermediate coefficient vectors have at most $N$ coordinates and total encoding length polynomially bounded in $N$. All auxiliary indices and strings used by the computation have length polynomially bounded in $N$. Thus $\CoeffT$ is represented by a function symbol in $\overline{\VNCTwo}$ and may be used in $\Sigma_0^B(\calL_{\FNCTwo})$ formulas.

\subsubsection{Field-independent set-multilinearization}

We now formalise the field-independent set-multilinearization argument of~\cite{For24}. Recall that $N$ is our global size parameter and 
\[
    n=n(N)\coloneqq \lfloor\log_2N\rfloor \qquad\text{and}\qquad d=d(N)\coloneqq \lfloor c\log_2n\rfloor,
\]
where $c$ is a fixed constant.

Let $s$ be a fixed polynomial bounding the size of the original algebraic circuits. Since $s(n)=n^{O(1)}$ and $d=O(\log n)$, there is a fixed constant $C$ such that $\binom{d\,s(n)+d}{d}  \leq (d\,s(n)+d)^d \leq 2^{C(\log n)^2}.$

Since $C(\log n)^2\leq n$ for all sufficiently large $n$, there is a standard constant $N_0$ such that, for every $N\geq N_0$, $\binom{d\,s(n)+d}{d}\leq 2^n\leq N.$ Thus all the coefficient vectors with degree at most $d$ appearing are polynomially bounded by the coefficient-vector length $N$.

We will construct an $\FNCTwo$ function $\setmultilinear(V,G,I,E,X,d,N)$ which, given an algebraic circuit $C$ encoded by $(V,G,I,E)$ and a string $X$ encoding a partition of the variables of $C$ into $d$ sets, outputs a set-multilinear circuit computing the set-multilinear projection of the polynomial computed by $C$. If $C$ has product-depth $\Delta$, the resulting circuit has product-depth at most $2\Delta$ and size $\poly(s,d^d)$.

\begin{definition}
    Let $X=(x_{i,j})$ be an $a\times m$ symbolic matrix of pairwise distinct variables, where $a\leq m$. The \emph{rectangular Permanent} is
    \[
        \perm_{a\times m}(X) \coloneqq \sum_{\sigma:[a]\hookrightarrow[m]} x_{1,\sigma(1)}\cdots x_{a,\sigma(a)},
    \]
    where the sum ranges over all injective maps $\sigma:[a]\hookrightarrow[m]$.
\end{definition}

The polynomial $\perm_{a\times m}(X)$ is set-multilinear with respect to the partition of its variables into the $a$ rows of $X$.

\begin{theorem}[Binet--Minc Identity~\cite{minc1979evaluation}]
    \label{theorem: Binet--Minc}
    Let $X=(x_{i,j})$ be an $a\times m$ symbolic matrix, where $a\leq m$. Then
    \begin{equation}
        \label{equa: Binet--Minc Identity}
        \perm_{a\times m}(X) = \sum_{\mathcal{F}\in P_a} (-1)^{a-|\mathcal{F}|} \prod_{S\in\mathcal{F}}(|S|-1)! \left( \sum_{j=1}^m \prod_{i\in S}x_{i,j} \right),
    \end{equation}
    where $P_a$ denotes the set of all set partitions of $[a]$ into nonempty parts.
\end{theorem}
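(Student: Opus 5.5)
The plan is to prove the Binet--Minc identity by expanding the right-hand side into monomials indexed by arbitrary functions $\tau:[a]\to[m]$ and showing, via Möbius inversion on the lattice of set partitions, that only the injective ones survive, each with coefficient exactly $1$.

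\textbf{Expanding the right-hand side.} For a partition $\mathcal{F}\in P_a$, the product $\prod_{S\in\mathcal{F}}\bigl(\sum_{j=1}^m\prod_{i\in S}x_{i,j}\bigr)$ equals the sum, over all maps $\tau:[a]\to[m]$ that are constant on each block of $\mathcal{F}$, of the monomial $x_{\tau}\coloneqq\prod_{i=1}^a x_{i,\tau(i)}$. Distinct $\tau$ give distinct monomials, because the variables are pairwise distinct and row $i$ contributes exactly one factor. A map $\tau$ is constant on the blocks of $\mathcal{F}$ exactly when $\mathcal{F}$ refines the kernel partition $\ker\tau$ (the partition of $[a]$ into fibres of $\tau$). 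Hence the coefficient of $x_\tau$ on the right-hand side is
\[
    c(\ker\tau)\coloneqq\sum_{\mathcal{F}\le\ker\tau}(-1)^{a-|\mathcal{F}|}\prod_{S\in\mathcal{F}}(|S|-1)!,
\]
where $\le$ denotes refinement. The left-hand side has coefficient $1$ on $x_\tau$ when $\tau$ is injective, that is when $\ker\tau$ is the discrete partition $\hat0$, and coefficient $0$ otherwise. So it suffices to show $c(\pi)=[\pi=\hat0]$ for every partition $\pi$.

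\textbf{Factorising the sum.} The partitions $\mathcal{F}\le\pi$ are exactly the choices of an independent partition of each block $B\in\pi$, and the summand is multiplicative over blocks, since $a=\sum_B|B|$ and $|\mathcal{F}|=\sum_B|\mathcal{F}|_B|$. Therefore
\[
    c(\pi)=\prod_{B\in\pi}\Bigl(\sum_{\mathcal{G}\in P_{|B|}}(-1)^{|B|-|\mathcal{G}|}\prod_{S\in\mathcal{G}}(|S|-1)!\Bigr).
\]
This reduces the claim to showing that for every $b\ge1$ the inner sum $T_b$ equals $[b=1]$.

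\textbf{Evaluating the inner sum.} Group the partitions of $[b]$ by their number of blocks $k$. Since $\prod_S(|S|-1)!$ counts the ways to place a cyclic order on each block, the sum over partitions with $k$ blocks is the number of permutations of $[b]$ with $k$ cycles, namely the unsigned Stirling number $\left[{b\atop k}\right]$. Thus
\[
    T_b=\sum_k(-1)^{b-k}\left[{b\atop k}\right]=\prod_{i=0}^{b-1}(x-i)\Big|_{x=1},
\]
which is $1$ for $b=1$ and $0$ for $b\ge2$. (Equivalently, $T_b$ is the Möbius value $\mu(\hat0,\hat1)=(-1)^{b-1}(b-1)!$ summed over the lower interval, which vanishes unless the interval is trivial.) Every block of $\pi$ then contributes $0$ unless it is a singleton, so $c(\pi)=[\pi=\hat0]$, which proves the identity.

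\textbf{Main obstacle.} I expect the main difficulty to be bookkeeping rather than a genuine gap: one must check carefully that the monomials $x_\tau$ are distinct and that the coefficient depends only on $\ker\tau$. Since the constants $(|S|-1)!$ and the evaluation of $T_b$ are integer identities, the whole argument works over $\mathbb{Z}$ and hence over every field.
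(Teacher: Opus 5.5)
Your proof is correct and complete. You expand each product $\prod_{S\in\mathcal{F}}\bigl(\sum_j\prod_{i\in S}x_{i,j}\bigr)$ over maps $\tau$ constant on the blocks, factorise the coefficient of $x_\tau$ over the blocks of $\ker\tau$, and evaluate the block sum, which is simply $\sum_{\sigma\in S_b}\sgn(\sigma)=[b=1]$ because $(-1)^{b-k}$ is the sign of a permutation with $k$ cycles. This is exactly the standard coefficient-wise argument. The paper gives no proof of its own: it cites Minc and only remarks that this coefficient-wise proof can be formalised in $\VNCTwo$ via $\CoeffT$, so your route is the one it relies on.
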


Every polynomial occurring in \eqref{equa: Binet--Minc Identity} has total degree at most $a$. In all applications below, we have $a\leq d$ and $m\leq s(n).$ Hence the number of variables is at most $d \cdot s(n)$, and by the choice of $N_0$, the coefficient vectors of degree at most $d$ of both sides of \eqref{equa: Binet--Minc Identity} have length at most $N$. Therefore the standard coefficient-wise proof of the Binet--Minc identity can be formalised in $\VNCTwo$ using the function $\CoeffT$ defined above.

\begin{definition}
    For $k\geq1$, define the Stirling number of the second kind by
    \[
        S(a,k) = \frac{1}{k!} \sum_{j=0}^k (-1)^{k-j}\binom{k}{j}j^a.
    \]
    We set $S(0,0)=1$ and $S(a,0)=0$ for $a > 0$.
    The $a$th Bell number is
    \[
        B_a\coloneqq\sum_{k=0}^aS(a,k).
    \]
\end{definition}

\begin{lemma}
    \label{lemma: Stirling above diagonal}
    For all integers $0\leq i<p$, $\sum_{j=0}^{p} (-1)^{p-j} \binom{p}{j}j^i = 0$. Consequently, $S(i,p)=0$ whenever $p>i$.
\end{lemma}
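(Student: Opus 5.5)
We will prove the identity $\sum_{j=0}^{p}(-1)^{p-j}\binom{p}{j}j^i=0$ for $0\le i<p$ by viewing the left-hand side as the $p$th forward difference of the polynomial $x\mapsto x^i$ evaluated at $0$. Write $(\Delta g)(x)\coloneqq g(x+1)-g(x)$ and expand by the binomial theorem. This gives $(\Delta^p g)(0)=\sum_{j=0}^p(-1)^{p-j}\binom{p}{j}g(j)$, which we establish by induction on $p$ using Pascal's rule $\binom{p+1}{j}=\binom{p}{j}+\binom{p}{j-1}$.

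The key step is that $\Delta$ strictly lowers the degree of any nonzero polynomial. If $g(x)=\sum_{r\le e}c_rx^r$, then $g(x+1)-g(x)=\sum_r c_r\bigl((x+1)^r-x^r\bigr)$. Here each $(x+1)^r-x^r$ has degree at most $r-1$, and it is $0$ when $r=0$. By induction, $\Delta^p g$ has degree at most $e-p$, so $\Delta^p g=0$ identically whenever $p>e$. Applying this with $g(x)=x^i$ and $p>i$ yields the vanishing sum.

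Alternatively, and this is more combinatorial and may be easier to formalise, we can use inclusion--exclusion. For $i\ge 0$, the quantity $\sum_{j}(-1)^{p-j}\binom{p}{j}j^i$ counts the surjections from $[i]$ onto $[p]$: $j^i$ counts maps into a $j$-subset, and inclusion--exclusion over the missing values gives the alternating sum. There are no surjections when $i<p$. The case $i=0$ must be handled with the convention $0^0=1$, where the count is $\sum_j(-1)^{p-j}\binom pj=(1-1)^p=0$ for $p\ge1$. We would use whichever argument is cleaner in the setting.

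The consequence is then immediate. For $p\ge1$, $S(i,p)=\frac1{p!}\cdot 0=0$ by the definition of the Stirling number. The remaining case $p=0$ cannot occur, since $0\le i<p$ forces $p\ge1$.

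The only real obstacle is bookkeeping: the index conventions, $0^0=1$, and making the inductive step on $p$ for the difference-operator expansion rigorous. The mathematics itself is elementary.
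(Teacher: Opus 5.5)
Your argument is correct, but it is organised differently from the paper's. The paper proves the identity by strong induction on the exponent $i$, simultaneously for all $p>i$. It drops the $j=0$ term and uses the absorption identity $j\binom{p}{j}=p\binom{p-1}{j-1}$ to rewrite the sum as $p\sum_{q=0}^{p-1}(-1)^{p-1-q}\binom{p-1}{q}(q+1)^{i-1}$. It then expands $(q+1)^{i-1}$ binomially and notes that every inner sum is an instance with parameters $(p-1,r)$, where $r\le i-1<p-1$, hence zero. Your first route instead inducts on $p$: Pascal's rule identifies the sum with $(\Delta^p x^i)(0)$, and the fact that $\Delta$ lowers degree finishes. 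For that induction, state the hypothesis for all polynomials $g$ (or at all points $x$) so that it applies to $\Delta g$. This is more conceptual and gives the general fact that $\sum_j(-1)^{p-j}\binom{p}{j}g(j)=0$ whenever $\deg g<p$. Your surjection count, in turn, explains what $p!\,S(i,p)$ counts. What the paper's version buys is economy where the lemma is actually used, namely inside the $\VNCTwo$ proof that $B_a\le a^a$. It needs only finite integer sums and one induction, with no polynomials, operators, or sets of maps as objects. In that light, your guess that inclusion--exclusion is easier to formalise is probably backwards here, since it requires counting sets of functions and proving inclusion--exclusion itself. Moreover, if you unwind the degree-lowering argument on coefficient vectors, the claim that $\sum_j(-1)^{p-j}\binom{p}{j}(x+j)^i$ has degree at most $i-p$ says exactly that $\sum_j(-1)^{p-j}\binom{p}{j}j^e=0$ for all $e<p$ with $e\le i$. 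So your induction on $p$ becomes the mirror image of the paper's induction on $i$, of essentially the same formal cost.
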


\begin{proof}
    We prove the first identity by strong induction on $i$. For $i=0$,
    \[
        \sum_{j=0}^{p} (-1)^{p-j}\binom{p}{j} = (1-1)^p = 0
    \]
    for every $p>0$.
    Now let $i\geq1$ and assume the statement holds for every exponent smaller than $i$. Let $p>i$. Since the $j=0$ term vanishes,
    \[
        \begin{aligned}
            \sum_{j=0}^{p}(-1)^{p-j}\binom{p}{j}j^i &= \sum_{j=1}^{p}(-1)^{p-j}\binom{p}{j}j^i\\
            &= p\sum_{j=1}^{p}(-1)^{p-j}\binom{p-1}{j-1} j^{i-1}.
        \end{aligned}
    \]
    Writing $q=j-1$, this is
    \[
        p \sum_{q=0}^{p-1} (-1)^{p-1-q} \binom{p-1}{q} (q+1)^{i-1}.
    \]
    By the Binomial Theorem,
    \[
        (q+1)^{i-1}= \sum_{r=0}^{i-1} \binom{i-1}{r}q^r.
    \]
    Therefore the preceding expression equals
    \[
        p\sum_{r=0}^{i-1} \binom{i-1}{r} \sum_{q=0}^{p-1} (-1)^{p-1-q} \binom{p-1}{q}q^r.
    \]
    Since $p>i$, we have $p-1>r$ for every $r\leq i-1$. Each inner sum is therefore zero by the induction hypothesis. This proves the first identity. The conclusion about $S(i,p)$ follows immediately from its explicit formula.
\end{proof}

\begin{lemma}
    \label{lemma: Bell coarse upper bound}
    $\VNCTwo$ proves that for every $a\geq1$, $B_a\leq a^a.$
\end{lemma}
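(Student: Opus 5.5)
We will prove $B_a\le a^a$ inside $\VNCTwo$ by bounding each Stirling number and then summing, using only the explicit formulas given above. We avoid the combinatorial interpretation of $S(a,k)$ as a count of set partitions, since counting exponentially many objects is not directly available in the theory.

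The key step is to establish, by $\Sigma_0^B(\calL_{\FNCTwo})$-induction on $a$, the standard recurrence
\[
    S(a+1,k)=k\,S(a,k)+S(a,k-1)\qquad (k\geq1).
\]
It follows from the explicit formula by splitting $j^{a+1}=j\cdot j^a$ and using $j\binom{k}{j}=k\binom{k-1}{j-1}$, together with Pascal's rule $\binom{k}{j}=\binom{k-1}{j}+\binom{k-1}{j-1}$. This is the same manipulation as in the proof of \cref{lemma: Stirling above diagonal}. All binomial coefficients, factorials, powers and the finite sums involved are binary integers of polynomial length, so these are $\FNCTwo$ terms, and the identity is a $\Sigma_0^B(\calL_{\FNCTwo})$ statement. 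Iterated-sum identities such as reindexing and splitting sums are provable by induction on the number of summands.

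From the recurrence we will derive, again by induction on $a$, the bound
\[
    \sum_{k=0}^{a} S(a,k)\,x^{\underline{k}}\ \text{-type estimates},
\]
or more simply the statement $B_{a+1}\le (a+1)B_a$. Summing the recurrence over $k\le a+1$, and using \cref{lemma: Stirling above diagonal} to drop the terms with $k>a$, gives
\[
    B_{a+1}=\sum_{k\le a} kS(a,k)+\sum_{k\le a}S(a,k)\le (a+1)B_a .
\]
This requires $S(a,k)\ge0$, which we also prove by induction from the recurrence, with the base cases $S(0,0)=1$ and $S(a,0)=0$. Then $B_a\le a!\le a^a$ follows by a further induction, with base case $B_1=S(1,0)+S(1,1)=1$. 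As an alternative, the standard recurrence $B_{a+1}=\sum_i\binom{a}{i}B_i$ gives $B_{a+1}\le 2^aB_a$; that is weaker, and we will not use it.

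The main obstacle is ensuring that every induction is over a formula the theory can handle. Each statement quantifies over $k\le a$ and involves numbers such as $j^a$ and $S(a,k)$ whose values are exponential in $a$ but whose bit-lengths are polynomial. We will therefore phrase the statements as $\Sigma_0^B(\calL_{\FNCTwo})$ formulas about the binary values produced by the $\FNCTwo$ function computing $S(\cdot,\cdot)$, so that $\overline{\VNCTwo}$'s induction scheme applies. Conservativity then transfers the result to $\VNCTwo$.
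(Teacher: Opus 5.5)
Your argument is correct, but it takes a different route from the paper.

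The paper never uses the triangular recurrence. It derives the Bell recurrence $B_{a+1}=\sum_{i=0}^{a}\binom{a}{i}B_i$ from the shifted identity $S(a+1,k)=\sum_{i=0}^{a}\binom{a}{i}S(i,k-1)$, obtained by expanding $(q+1)^a$ binomially, together with \cref{lemma: Stirling above diagonal}. It then runs a strong induction: it inserts $B_i\le i^i\le a^i$ and collapses the sum with the binomial theorem to $(a+1)^a\le(a+1)^{a+1}$.

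You instead use $S(a+1,k)=kS(a,k)+S(a,k-1)$ and prove $S(a,k)\ge 0$ by a separate induction. This gives $B_{a+1}\le(a+1)B_a$, and hence the sharper bound $B_a\le a!\le a^a$ by ordinary induction.

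\textbf{What each route buys.} The paper's route needs no sign information on Stirling numbers, since only $\binom{a}{i}\ge 0$ is used. The price is a strong induction and the binomial theorem for $(a+1)^a$ inside the theory. Your route needs the extra nonnegativity induction, but its inductions are simpler and it yields a stronger bound.

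\textbf{Points to correct.}
\begin{itemize}
\item Your dismissal of the Bell recurrence is misleading. It gives only $B_{a+1}\le 2^aB_a$ if you bound every $B_i$ by $B_a$. If you bound $B_i$ by $a^i$ via the induction hypothesis, as the paper does, it gives exactly the $a^a$ bound.
\item The triangular recurrence is a direct algebraic consequence of the explicit formula for each $a,k$, so no induction on $a$ is needed to prove it. Only the manipulations of iterated sums (splitting, reindexing) require induction on the number of summands.
\item Delete the stray falling-factorial display; it plays no role in the argument.
\end{itemize}
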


\begin{proof}
    We use the recurrence $B_{a+1} = \sum_{i=0}^{a}\binom{a}{i}B_i$. We first derive it from the explicit formula for the Stirling numbers. For $k\geq1$,
    \begin{align*}
        S(a+1,k) &= \frac{1}{k!} \sum_{j=1}^{k}(-1)^{k-j} \binom{k}{j}j^{a+1}\\
        &= \frac{1}{(k-1)!} \sum_{q=0}^{k-1} (-1)^{k-1-q} \binom{k-1}{q}(q+1)^a\\
        &= \sum_{i=0}^{a}\binom{a}{i}S(i,k-1).
    \end{align*}
    Hence
    \begin{align*}
        B_{a+1} &= \sum_{k=1}^{a+1}S(a+1,k) = \sum_{i=0}^{a} \binom{a}{i} \sum_{p=0}^{a}S(i,p).
    \end{align*}
    By \cref{lemma: Stirling above diagonal}, $S(i,p)=0$ whenever $p>i$, and therefore $B_{a+1} = \sum_{i=0}^{a} \binom{a}{i}B_i.$

    We now prove $B_a\leq a^a$ by induction on $a$. The case $a=1$ is immediate. Assume that $B_i\leq i^i$ for every $1\leq i\leq a$. Using $B_0=1$ and the bound $i^i\leq a^i$ for $i\leq a$, we obtain
    \[
        \begin{aligned}
            B_{a+1} &=\sum_{i=0}^{a}\binom{a}{i}B_i\\
                &\leq \sum_{i=0}^{a} \binom{a}{i}a^i\\
                &= (a+1)^a\\
                &\leq (a+1)^{a+1}.
        \end{aligned}
    \]
    This completes the induction.
\end{proof}

Thus we may use the estimate $B_a=O(a^a)$. This is weaker than the sharper estimate used in~\cite{For24}, but it is sufficient for our purposes.

The Binet--Minc identity immediately gives the following circuit upper bound.

\begin{corollary}
    \label{corollary: rectangular permanent formula}
    The rectangular Permanent $\perm_{a\times m}$ has a $\sum^{B_a}\prod^a\sum^m\prod^a$ formula of size $\poly(m,a^a)$. Moreover, the formula is set-multilinear with respect to the partition of the variables into rows.
\end{corollary}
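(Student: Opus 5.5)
We read the formula off the Binet--Minc identity (\cref{theorem: Binet--Minc}) directly, and the proof consists of three checks: the shape of the formula, its size, and its set-multilinearity.

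\textbf{Shape.} For each set partition $\mathcal{F}\in P_a$ we build the summand
\[
c_{\mathcal{F}}\prod_{S\in\mathcal{F}}\Bigl(\sum_{j=1}^m\prod_{i\in S}x_{i,j}\Bigr),
\qquad
c_{\mathcal{F}}=(-1)^{a-|\mathcal{F}|}\prod_{S\in\mathcal{F}}(|S|-1)!\in\F .
\]
The scalar $c_{\mathcal{F}}$ is absorbed as an extra constant input to the top product gate; this does not affect depth or syntactic degree. The innermost gates $\prod_{i\in S}x_{i,j}$ have fan-in $|S|\le a$. The sums over $j$ have fan-in $m$. The product over $S\in\mathcal{F}$ has fan-in $|\mathcal{F}|\le a$, or $a+1$ with the constant. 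The top sum ranges over $|P_a|=B_a$ partitions. This gives exactly a $\sum^{B_a}\prod^a\sum^m\prod^a$ formula, and its correctness is the identity itself. Over fields of small characteristic some $c_{\mathcal{F}}$ may vanish; that only deletes summands, so nothing breaks.

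\textbf{Size.} Each summand uses at most $a$ sum gates, each with $m$ children that are products of at most $a$ variables. So one summand has size $O(a\cdot m\cdot a)=O(ma^2)$, plus a constant leaf. Multiplying by $B_a\le a^a$ from \cref{lemma: Bell coarse upper bound} gives total size $O(a^{a+2}m)=\poly(m,a^a)$.

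\textbf{Set-multilinearity.} With respect to the row partition $X_1,\dots,X_a$, define the set-multilinear type of a gate as the set of rows it touches. The gate $\prod_{i\in S}x_{i,j}$ takes one variable from each row $i\in S$ and no others, so it is set-multilinear over $\{X_i\}_{i\in S}$. All $m$ children of $\sum_j$ share the type $S$, so the sum is homogeneous set-multilinear over $S$. Since $\mathcal{F}$ partitions $[a]$, the product over $S\in\mathcal{F}$ multiplies set-multilinear polynomials over disjoint blocks covering $[a]$, and is therefore set-multilinear over all rows. The top sum adds polynomials that are all set-multilinear over $[a]$.

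The only delicate step is making sure that the bound on $B_a$ and the identity are the ones available in the theory. Both have been established earlier: the Binet--Minc identity via $\CoeffT$, and \cref{lemma: Bell coarse upper bound}. The corollary therefore follows by routine counting.
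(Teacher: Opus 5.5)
Your proposal is correct and matches the paper, which likewise reads the $\sum^{B_a}\prod^a\sum^m\prod^a$ formula directly off the Binet--Minc identity and bounds the top fan-in by $B_a\le a^a$ via \cref{lemma: Bell coarse upper bound}. One small tidy-up: the top product gate never needs fan-in $a+1$, because the only partition with $|\mathcal{F}|=a$ is the partition into singletons, and for it $c_{\mathcal{F}}=1$, so no constant input is required there.
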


We next need a variant of the rectangular Permanent in which some specified columns are required to occur in the image of the injection.

\begin{definition}
    Let $0\leq k\leq a\leq m$. Define the \emph{$k$-surjective rectangular Permanent} $\perm_{a\times m;k}(X)$ by
    \[
        \perm_{a\times m;k}(X) = \sum_{\substack{\sigma:[a]\hookrightarrow[m]\\
            \operatorname{im}(\sigma)\supseteq[k]}}x_{1,\sigma(1)}\cdots x_{a,\sigma(a)}.
    \]
\end{definition}

Again, $\perm_{a\times m;k}$ is set-multilinear with respect to the partition into rows.

\begin{lemma}
    \label{lemma: surjective permanent decomposition}
    Let $0\leq k\leq a\leq m$. Then
    \[
        \begin{aligned}
            \perm_{a\times m;k}(X) &= \sum_{R\in\binom{[a]}{k}} \perm_{k\times k} \bigl(X\restriction(R\times[k])\bigr)\\
                &\qquad\cdot \perm_{(a-k)\times(m-k)} \bigl(X\restriction(([a]\setminus R)\times([m]\setminus[k]))\bigr).
        \end{aligned}
    \]
This expression is set-multilinear with respect to the partition into rows.
\end{lemma}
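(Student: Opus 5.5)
We prove the identity by a bijective decomposition of the index set of the sum defining $\perm_{a\times m;k}(X)$. Every injection $\sigma:[a]\hookrightarrow[m]$ with $\operatorname{im}(\sigma)\supseteq[k]$ determines the set $R\coloneqq\sigma^{-1}([k])\subseteq[a]$. Since $\sigma$ is injective and $[k]\subseteq\operatorname{im}(\sigma)$, the restriction $\sigma\restriction R$ is a bijection $R\to[k]$, so $|R|=k$. The restriction $\sigma\restriction([a]\setminus R)$ is an injection into $[m]\setminus[k]$.

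Conversely, take any $R\in\binom{[a]}{k}$, any bijection $\tau:R\to[k]$, and any injection $\rho:[a]\setminus R\hookrightarrow[m]\setminus[k]$. Gluing them gives a map $\sigma$, which is injective because the two images are disjoint. Its image contains $[k]$, and $\sigma^{-1}([k])=R$ because $\rho$ avoids $[k]$. So the map $\sigma\mapsto(R,\sigma\restriction R,\sigma\restriction([a]\setminus R))$ is a bijection between the two index sets.

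The monomial factors accordingly: $\prod_{i\in[a]}x_{i,\sigma(i)}=\prod_{i\in R}x_{i,\tau(i)}\cdot\prod_{i\notin R}x_{i,\rho(i)}$. Summing over $\tau$ and $\rho$ for fixed $R$ gives exactly
\[
\perm_{k\times k}\bigl(X\restriction(R\times[k])\bigr)\cdot\perm_{(a-k)\times(m-k)}\bigl(X\restriction(([a]\setminus R)\times([m]\setminus[k]))\bigr),
\]
where both permanents are rectangular permanents of the restricted submatrices, with rows indexed by $R$ and $[a]\setminus R$ in increasing order. Summing over $R$ gives the identity. The degenerate cases need only the conventions that the empty permanent is $1$ and that $k\le a$ with $a-k\le m-k$.

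Set-multilinearity holds term by term. In each product, the first factor uses exactly one variable from each row in $R$ and the second uses exactly one variable from each row in $[a]\setminus R$, so every monomial uses exactly one variable from each row.

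To use this inside $\VNCTwo$ as in the surrounding section, we would check the identity coefficient-wise with $\CoeffT$. Both sides have degree $a\le d$ in at most $d\,s(n)$ variables, so their coefficient vectors fit in $N$ coordinates. For a fixed set-multilinear monomial $x_{1,j_1}\cdots x_{a,j_a}$, the coefficient on each side is $1$ exactly when the $j_i$ are distinct and cover $[k]$, and $0$ otherwise. On the right-hand side this holds because only $R=\{i:j_i\le k\}$ can contribute. This is a $\Sigma_0^B(\calL_{\FNCTwo})$ verification, and we expect it to be the only mildly delicate step. The difficulty is that the combinatorial bijection must be phrased as this pointwise coefficient check rather than as a counting argument over exponentially many injections.
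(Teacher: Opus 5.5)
Your proposal is correct and follows the paper's proof. Both use the bijection $\sigma\mapsto(R,\sigma|_R,\sigma|_{[a]\setminus R})$ with $R=\sigma^{-1}([k])$ and then sum the factored monomials; you additionally spell out the injectivity of the glued map, the degenerate cases $k=0$ and $k=a$, the set-multilinearity check, and a coefficient-wise $\CoeffT$ verification, which matches the paper's later remark that these identities can be verified in $\VNCTwo$ using $\CoeffT$.
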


\begin{proof}
    Every injection $\sigma:[a]\hookrightarrow[m]$ whose image contains $[k]$ has a unique set $R=\sigma^{-1}([k])$ of size $k$. The restriction $\sigma|_R$ is a bijection from $R$ to $[k]$, while the restriction to $[a]\setminus R$ is an injection into $[m]\setminus[k]$. Conversely, every such pair of maps determines a unique injection $\sigma$ whose image contains $[k]$. Summing the corresponding monomials gives the identity.
\end{proof}

\begin{corollary}
    \label{corollary: surjective permanent formula}
    Let $0\leq k\leq a\leq m$. Then the $k$-surjective rectangular Permanent $\perm_{a\times m;k}$ has a depth-$4$ formula of size $\poly(m,a^a)$ which is set-multilinear with respect to its rows.
\end{corollary}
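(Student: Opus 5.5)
The plan is to combine \cref{lemma: surjective permanent decomposition} with \cref{corollary: rectangular permanent formula}. The decomposition writes $\perm_{a\times m;k}$ as a sum over the $\binom{a}{k}$ sets $R\subseteq[a]$ of size $k$. Each summand is a product of two rectangular Permanents on disjoint sets of rows: the square one $\perm_{k\times k}$ on $R\times[k]$, and $\perm_{(a-k)\times(m-k)}$ on $([a]\setminus R)\times([m]\setminus[k])$. Since the rows of $X$ are split between $R$ and its complement, each summand is set-multilinear with respect to the full row partition. It is therefore enough to give small set-multilinear formulas for the two factors and assemble them.

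For the factors, apply \cref{corollary: rectangular permanent formula} to each one. This gives $\sum^{B_k}\prod^k\sum^k\prod^k$ and $\sum^{B_{a-k}}\prod^{a-k}\sum^{m-k}\prod^{a-k}$ formulas of sizes $\poly(k,k^k)$ and $\poly(m,(a-k)^{a-k})$. Both are $\poly(m,a^a)$, using $k\le a$ and the bound $B_a\le a^a$ of \cref{lemma: Bell coarse upper bound}. The product of two $\Sigma\Pi\Sigma\Pi$ formulas is not literally of depth $4$. To fix this, distribute the top sums: a product $\bigl(\sum_{\mathcal F}T_{\mathcal F}\bigr)\bigl(\sum_{\mathcal G}T'_{\mathcal G}\bigr)$ becomes $\sum_{\mathcal F,\mathcal G}T_{\mathcal F}T'_{\mathcal G}$. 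Each $T_{\mathcal F}T'_{\mathcal G}$ is a product of the $\Sigma\Pi$ blocks of both terms, so the two middle product layers merge into one $\Pi$ gate. The scalar coefficients $(-1)^{a-|\mathcal F|}\prod(|S|-1)!$ are absorbed as field constants on the leaves or as a constant factor in the product gate. Summing over $R$ merges the top sums as well. The result is a $\sum\prod\sum\prod$ formula whose top fan-in is at most $\binom{a}{k}B_kB_{a-k}\le 2^a a^{2a}$, which is $\poly(a^a)$. Its middle and lower fan-ins are at most $a$ and $m$, so the total size is $\poly(m,a^a)$.

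Set-multilinearity carries through this construction. In the Binet--Minc formula, each product $\prod_{S\in\mathcal F}\bigl(\sum_j\prod_{i\in S}x_{i,j}\bigr)$ is set-multilinear over its own rows, because $\mathcal F$ partitions them and each inner monomial takes exactly one variable from each row of $S$. The merged product therefore runs over the partition $\mathcal F\cup\mathcal G$ of $R\cup([a]\setminus R)=[a]$, so each $\prod$ term is set-multilinear over all rows, and a sum of such terms stays set-multilinear.

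The main obstacle is this depth-flattening step. One must check that distributing the sums does not blow up the size beyond $\poly(m,a^a)$, which holds by the fan-in count above. The correctness of the identity itself is already settled by \cref{lemma: surjective permanent decomposition} and \cref{theorem: Binet--Minc}; since all polynomials involved have degree at most $a\le d$, it can later be verified coefficient-wise using $\CoeffT$ when formalising.
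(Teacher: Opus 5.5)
Your proposal is correct and follows essentially the same route as the paper. Both arguments apply the decomposition lemma, take the Binet--Minc $\Sigma\Pi\Sigma\Pi$ formulas for the two factors, distribute over their top sums and merge the product layers, and bound the top fan-in by $\binom{a}{k}B_kB_{a-k}$ using the Bell-number bound. Your estimate $2^a a^{2a}$ is cruder than the paper's $2^a a^a$ but still polynomial in $a^a$, and your explicit check that the merged products are set-multilinear is a harmless addition.
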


\begin{proof}
    By \cref{lemma: surjective permanent decomposition,corollary: rectangular permanent formula}, for each $R\in\binom{[a]}{k}$ the two rectangular Permanents in the corresponding summand have formulas whose top addition gates have fan-in at most $B_k$ and $B_{a-k}$, respectively. We distribute the product of these two formulas over their top addition gates. This gives at most $B_kB_{a-k}$ summands for each $R$; after merging adjacent multiplication gates, the resulting formula still has depth $4$. Since there are $\binom{a}{k}$ choices for $R$, the resulting formula has size
    \[
        \poly\left(m,\binom{a}{k}B_kB_{a-k}\right).
    \]
    Using \cref{lemma: Bell coarse upper bound},
    \[
        \binom{a}{k}B_kB_{a-k}
        \leq 2^a k^k(a-k)^{a-k}
        \leq 2^a a^a.
    \]
    For $a\geq2$, $2^aa^a=(2a)^a\leq a^{2a}$, which is polynomial in $a^a$. The cases $a=0,1$ are immediate. Hence the formula has size $\poly(m,a^a)$.
\end{proof}

In the applications below, $a\leq d$ and $m\leq s(n)$. Therefore all circuits in the preceding construction have degree at most $d$, use at most $d\cdot s(n)$ variables, and have coefficient vectors of length at most $N$ for $N\geq N_0$. Hence the preceding identities and circuit constructions can be verified in $\VNCTwo$ using $\CoeffT$.

\begin{definition}
    Let the variables $\overline{x}$ be partitioned as $\overline{x} = \overline{x}_1 \sqcup\cdots\sqcup \overline{x}_d$. For $S\subseteq[d]$, a monomial is \emph{$S$-set-multilinear} if it contains exactly one variable from $\overline{x}_i$ for each $i\in S$ and no variables from $\overline{x}_j$ for $j\notin S$. Define the \emph{$S$-set-multilinear projection} $\pi_{sm,S}$ to be the linear map which fixes every $S$-set-multilinear monomial and sends every other monomial to zero.
    
    A monomial is called \emph{at-most-set-multilinear} if it is $S$-set-multilinear for some $S\subseteq[d]$. Define $\pi_{\neg sm}$ to be the linear map which sends every at-most-set-multilinear monomial to zero and fixes every other monomial.
\end{definition}

For $S=[d]$, we write
\[
    \pi_{sm}(f) \coloneqq \pi_{sm,[d]}(f).
\]
For $S=\emptyset$, $\pi_{sm,\emptyset}(f) = f(\overline{0}).$ Every monomial belongs to exactly one of the following classes: it is $S$-set-multilinear for a unique $S\subseteq[d]$, or it is not at-most-set-multilinear. Hence
\[
    f = \pi_{\neg sm}(f) + \sum_{S\subseteq[d]}\pi_{sm,S}(f).
\]

Addition is simulated by adding the corresponding set-multilinear parts.

\begin{lemma}
    \label{lemma: set multilinear addition}
    Let $f=\alpha_1f_1+\cdots+\alpha_mf_m.$ For every $S\subseteq[d]$, $\pi_{sm,S}(f)=\sum_{j=1}^m \alpha_j\pi_{sm,S}(f_j)$. Thus, the polynomial $\pi_{sm,S}(f)$ can be computed from the inputs $\{\pi_{sm,S}(f_j) : j \in [m]\}$ by a depth-$1$ set-multilinear sum circuit of size $O(m)$. Simultaneously maintaining all subsets $S\subseteq[d]$ incurs an additional factor $2^d$.
\end{lemma}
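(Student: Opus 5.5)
The plan is to prove the identity coefficient-wise, using linearity of $\pi_{sm,S}$, and then read off the circuit construction. First observe that $\pi_{sm,S}$ is an $\F$-linear map on $\F[\overline{x}]$. It acts on the monomial basis by sending each monomial either to itself or to zero, according to a predicate depending only on the monomial (namely, whether it is $S$-set-multilinear). Hence for $f=\sum_j\alpha_jf_j$ and any monomial $M$, we compare $\coeff_M$ of both sides. If $M$ is $S$-set-multilinear, both sides give $\sum_j\alpha_j\coeff_M(f_j)$. Otherwise both sides give $0$. This proves $\pi_{sm,S}(f)=\sum_j\alpha_j\pi_{sm,S}(f_j)$.

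Inside $\VNCTwo$, I would state this as an equality of truncated coefficient vectors. The projection $\pi_{sm,S}$ acting on a coefficient vector $Y$ is the $\FNCTwo$ (indeed $\ACZ$) string function that zeroes every coordinate whose monomial is not $S$-set-multilinear. That predicate is decidable from the monomial index, the partition string $X$, and $S$. The identity then follows by $\Sigma_0^B(\calL_{\FNCTwo})$ reasoning applied pointwise to each coordinate. We also need the formal fact that the coefficient vector of a sum is the sum of the coefficient vectors, which is immediate from the $+$-gate clause in the definition of $\CoeffT$.

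For the circuit statement, we take a single $+$ gate whose inputs are the $m$ gates computing $\pi_{sm,S}(f_j)$, each scaled by $\alpha_j$. Scaling is done through a $\times$ with a constant leaf, or absorbed into edge labels; since constants have syntactic degree $0$, this does not alter set-multilinearity. A sum of $S$-set-multilinear polynomials is again $S$-set-multilinear, so the resulting circuit is set-multilinear of size $O(m)$ on top of its inputs. Repeating this for every $S\subseteq[d]$ produces $2^d$ such gates, which gives the stated additional factor $2^d$.

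The only mildly delicate point is the formal bookkeeping: checking that the $S$-set-multilinearity predicate on encoded monomials is $\Sigma_0^B(\calL_{\FNCTwo})$, and that $2^d\le N$ under the parameter choice $d=O(\log n)$. The latter holds because $2^d=n^{O(1)}\le N$, so all $2^d$ projected vectors together still have polynomially bounded encoding length.
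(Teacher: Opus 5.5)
Your proof is correct and follows the paper's own reasoning. The paper gives no separate proof of this lemma: it follows at once from $\pi_{sm,S}$ being the linear map that keeps or kills each monomial according to whether it is $S$-set-multilinear, and the circuit claim is then one sum gate per $S$, which gives the $2^d$ factor. One point to tighten is that the scalars $\alpha_j$ should be absorbed into the sum gate as a linear combination rather than implemented by extra $\times$ gates with constant leaves, since the latter makes the circuit depth $2$ rather than $1$ and, under the paper's definition of product-depth, adds $\times$-layers in the gate-by-gate simulation of \cref{corollary: formal set multilinearization}.
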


We now simulate multiplication gates.

\begin{lemma}
    \label{lemma: set multilinear multiplication}
    Let $f=f_1f_2\cdots f_m$, and let $S\subseteq[d]$ with $|S|=\ell$. Then the polynomial $\pi_{sm,S}(f)$ can be computed from the inputs $\left\{ \pi_{sm,T}(f_j) : T\subseteq[d],\ j\in[m] \right\}$ by a depth-$4$ set-multilinear circuit of size $\poly(m,\ell^\ell,2^d)$.
\end{lemma}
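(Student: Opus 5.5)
The plan is to expand $\pi_{sm,S}$ of the product as a sum over ordered set partitions of $S$ among the factors, and then compute that sum with a $k$-surjective rectangular Permanent, following Forbes~\cite{For24}. The key observation is that an $S$-set-multilinear monomial of $f=f_1\cdots f_m$ arises only from choosing, in each factor $f_j$, a $T_j$-set-multilinear monomial. Here the $T_j$ are pairwise disjoint with $\bigsqcup_j T_j=S$.

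The first step is to rule out other contributions. A monomial that is not at-most-set-multilinear contains either two variables from one block or a variable from a block outside $S$. Such a monomial multiplies only to monomials that are again not $S$-set-multilinear, so the $\pi_{\neg sm}(f_j)$ parts never contribute. This gives
\[
    \pi_{sm,S}(f)=\sum_{\substack{(T_1,\ldots,T_m)\\ \bigsqcup_j T_j=S}}\ \prod_{j=1}^m \pi_{sm,T_j}(f_j).
\]
At most $\ell$ of the $T_j$ are nonempty. The remaining factors contribute $\pi_{sm,\emptyset}(f_j)=f_j(\overline{0})$.

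The second step is to recast this sum as a permanent.
\begin{itemize}
    \item Enumerate the unordered set partitions $\mathcal{F}=\{T_1',\ldots,T_a'\}$ of $S$ into nonempty parts. There are at most $B_\ell\le\ell^\ell$ of them by \cref{lemma: Bell coarse upper bound}.
    \item For each such $\mathcal{F}$, assigning the parts injectively to distinct factors, with every other factor receiving $\emptyset$, yields exactly the summands above.
    \item Fix $\mathcal{F}$ and define an $a\times m$ matrix by $Y_{i,j}\coloneqq\pi_{sm,T_i'}(f_j)\cdot\prod_{j'\neq j}\dots$.
\end{itemize}
The product over the unused factors is not local to a column, so this matrix has to be handled differently. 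The cleaner route is to normalise: write $c_j=f_j(\overline{0})$ and, when $c_j\neq 0$, divide, which is not allowed in the circuit. I would therefore follow Forbes and use the $k$-surjective Permanent instead. Take the matrix of size $(a+m)\times(a+m)$, or more precisely $a'\times m$ with rows indexed by the parts plus enough ``empty'' rows, with entries
\begin{itemize}
    \item $\pi_{sm,T_i'}(f_j)$ on the part-rows, and
    \item $\pi_{sm,\emptyset}(f_j)$ on the empty-rows.
\end{itemize}
Requiring every column (factor) to be hit gives exactly the sum: take $m$ rows in total, of which $a$ are part-rows and $m-a$ are empty-rows, and use the surjectivity constraint on all columns. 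The empty-rows are interchangeable, so this overcounts by $(m-a)!$. To avoid that, I would instead use \cref{lemma: surjective permanent decomposition} with $k=m$ columns forced, applied to the transpose. In that orientation the rows are the factors (all must be used) and the columns are the parts plus one shared $\emptyset$ column that may repeat. Equivalently, I would expand directly as $\sum_{\sigma:[a]\hookrightarrow[m]}\prod_i \pi_{sm,T_i'}(f_{\sigma(i)})\prod_{j\notin\operatorname{im}\sigma}c_j$. I expect this to be the main obstacle: getting the correct Permanent-type encoding of the factor $\prod_{j\notin\operatorname{im}\sigma}c_j$ so that Binet--Minc (\cref{theorem: Binet--Minc}) still applies.

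The fix I would try first is to substitute $x_{i,j}\coloneqq\pi_{sm,T_i'}(f_j)$ and $x_{0,j}\coloneqq c_j$. The expression is then the coefficient of $z^{a}$ in $\prod_j\bigl(c_j+z\sum_i x_{i,j}y_i\bigr)$ restricted to multilinear $y$. Alternatively, since these are set-multilinear in distinct parts, one can use Binet--Minc-style inclusion–exclusion with $\prod_j(c_j+\sum_{i\in U}x_{i,j})$ over $U\subseteq[a]$. This gives a $\sum\prod\sum$ expression of size $2^a m$ whose set-multilinear $S$-part is exactly the desired sum, because multiple use of a part produces non-$S$-set-multilinear monomials. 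Extracting that part is again a projection, which I would handle by homogenising in each block. Depth is therefore $\le4$ over the inputs, and size is $\poly(m,\ell^\ell,2^d)$ after summing over $\mathcal{F}$ and over $U$. Finally, the identity is verified coefficientwise in $\VNCTwo$ using $\CoeffT$.
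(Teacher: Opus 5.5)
Your expansion of $\pi_{sm,S}(f)$ is correct: a sum over set partitions $\mathcal{F}$ of $S$ and injections $\sigma:\mathcal{F}\hookrightarrow[m]$, weighted by $\prod_{j\notin\operatorname{im}\sigma}c_j$. The construction you settle on, however, does not yield a set-multilinear circuit, and the step meant to repair it is the lemma itself.

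\begin{itemize}
\item A gate $c_j+\sum_{i\in U}x_{i,j}$ adds polynomials supported on different block sets ($\emptyset$ and the various $T_i'$). A set-multilinear circuit may not do this.
\item The expansion of $\prod_j\bigl(c_j+\sum_{i\in U}x_{i,j}\bigr)$, with or without the alternating sum over $U\subseteq[a]$, contains terms in which the same part $T_i'$ is taken from two factors. These terms are not at-most-set-multilinear. Removing them means computing $\pi_{sm,S}$ of a product of sums, which is exactly the task at hand.
\item ``Homogenising in each block'' would have to be done by interpolation. That needs many field elements, so it fails over small finite fields, which is the case this field-independent argument exists for. It also still passes through non-set-multilinear gates. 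Your variant with auxiliary $z,y_i$ has the same defect, and a column ``that may repeat'' is not a permanent.
\end{itemize}

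Your alternating sum over $U$ is Ryser-type, not Binet--Minc. The point of \cref{theorem: Binet--Minc} is that its inclusion--exclusion runs over set partitions of the rows. Each power sum $\sum_j\prod_{i\in G}x_{i,j}$ in it is already set-multilinear, so no garbage terms arise.

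The missing step is the normalisation you rejected. Since $c_j=f_j(\overline{0})$ is a field scalar, dividing by it when $c_j\neq0$ is just multiplication by the constant $c_j^{-1}$, which is allowed. The paper proceeds as follows.

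\begin{itemize}
\item Reorder so that $f_1(\overline{0})=\cdots=f_k(\overline{0})=0$ and $f_j(\overline{0})\neq0$ for $j>k$. Put $g_j=f_j/f_j(\overline{0})$ for $j>k$ and $g_j=f_j$ for $j\le k$.
\item Then $\pi_{sm,S}(f)=\bigl(\prod_{j>k}f_j(\overline{0})\bigr)\pi_{sm,S}(g)$. An unused factor now contributes $1$ if $j>k$ and $0$ if $j\le k$. So the weight $\prod_{j\notin\operatorname{im}\sigma}c_j$ becomes the constraint $\operatorname{im}(\sigma)\supseteq[k]$.
\item Hence $\pi_{sm,S}(g)$ is the sum, over $\mathcal{F}\in P_S$ with $k\le|\mathcal{F}|\le m$, of $\perm_{|\mathcal{F}|\times m;k}(A_{\mathcal{F}})$, where $(A_{\mathcal{F}})_{T,j}=\pi_{sm,T}(g_j)$. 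The sum vanishes if $k>\ell$.
\item \Cref{corollary: surjective permanent formula} then gives, for each of the at most $B_\ell\le\ell^\ell$ partitions, a depth-$4$ formula of size $\poly(m,\ell^\ell)$ that is set-multilinear in the rows. It remains set-multilinear after substituting the $\pi_{sm,T}(g_j)$, because the parts of $\mathcal{F}$ are disjoint.
\end{itemize}
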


\begin{proof}
    Rearrange the factors so that $ f_1(\overline0) = \cdots = f_k(\overline0) = 0$ and $f_{k+1}(\overline0), \ldots, f_m(\overline0) \neq0$ for some $0\leq k\leq m$.

    For $i>k$, define
    \[
        g_i = \frac{f_i}{f_i(\overline0)},
    \]
    and for $i\leq k$, put $g_i=f_i.$ Thus, for $i \leq k$, $ g_i(\overline0)=0$ whereas $g_i(\overline0)=1$ for $i > k$. If $g=\prod_{i=1}^m g_i$, then
    \[
        f = \left(\prod_{i>k}f_i(\overline0) \right)g,
    \]
    and therefore
    \[
        \pi_{sm,S}(f) = \left(\prod_{i>k}f_i(\overline0) \right) \pi_{sm,S}(g).
    \]
    If $k>\ell$, then $\pi_{sm,S}(g)=0$. Indeed, each of the first $k$ factors has zero constant term, so a term contributing to an at-most-set-multilinear monomial must choose a nonconstant set-multilinear part from each of these $k$ factors. These parts must involve pairwise disjoint nonempty subsets of $S$, which is impossible when $k>|S|$.

    Assume that $ k\leq\ell$ henceforth. We have
    \[
        \begin{aligned}
            g = \prod_{i=1}^{k} \left(\pi_{\neg sm}(g_i)+ \sum_{\emptyset\neq T\subseteq[d]} \pi_{sm,T}(g_i) \right) \cdot \prod_{i=k+1}^{m} \left(\pi_{\neg sm}(g_i)+1+\sum_{\emptyset\neq T\subseteq[d]}\pi_{sm,T}(g_i) \right).
        \end{aligned}
    \]

    Any product term containing a monomial from $\pi_{\neg sm}(g_i)$ cannot contribute to an at-most-set-multilinear monomial, since multiplying a non-at-most-set-multilinear monomial by another monomial cannot make it at-most-set-multilinear.

    Therefore, a term contributing to $\pi_{sm,S}(g)$ is obtained by choosing pairwise disjoint nonempty subsets of $S$ from distinct factors, whose union is $S$. Equivalently, these subsets form a set partition $\mathcal{F}\in P_S$. Since the first $k$ factors have no constant term, each index in $[k]$ must be used.

    Thus
    \[
        \pi_{sm,S}(g) = \sum_{\substack{ \mathcal{F}\in P_S\\ k\leq|\mathcal{F}|\leq m }} \;
        \sum_{\substack{\sigma:\mathcal{F}\hookrightarrow[m]\\
        \operatorname{im}(\sigma)\supseteq[k]}}
        \prod_{T\in\mathcal{F}} \pi_{sm,T} \bigl(g_{\sigma(T)}\bigr).
    \]
    For a partition $\mathcal{F}$ of $S$, define the $|\mathcal{F}|\times m$ matrix $A_{\mathcal{F}}$ by $(A_{\mathcal{F}})_{T,j} = \pi_{sm,T}(g_j)$, for $T \in \mathcal{F}$ and $j \in [m]$. Then
    \[
        \pi_{sm,S}(g) = \sum_{\substack{\mathcal{F}\in P_S\\k\leq|\mathcal{F}|\leq m}} \perm_{|\mathcal{F}|\times m;k}(A_{\mathcal{F}}).
    \]
    Since $|\mathcal{F}|\leq|S|=\ell$, \cref{corollary: surjective permanent formula} gives a depth-$4$ set-multilinear circuit of size $\poly(m,\ell^\ell)$ for each summand. There are $B_\ell\leq\ell^\ell$ partitions of $S$. Hence $\pi_{sm,S}(g)$, and therefore $\pi_{sm,S}(f)$, has a depth-$4$ set-multilinear circuit of size $\poly(m,\ell^\ell,2^d)$.
\end{proof}

For the formalisation, it is convenient to introduce the coefficient-vector version of the set-multilinear projection. Given a circuit $C$ encoded by the set $V,G,I,E$, the partition $X$, and $S\subseteq [d(N)]$, define $\SMCoeffT(V,G,I,E,X,S,N)$ to be the $N$-coordinate coefficient vector obtained from $\CoeffT(V,G,I,E,d(N),N)$ by retaining the coefficient of a monomial precisely when that monomial is $S$-set-multilinear with respect to $X$, and replacing all other coefficients by zero.

Since decoding a monomial and checking whether it is $S$-set-multilinear are in $\FNCTwo$, the function $\SMCoeffT$ is in $\FNCTwo$. Semantically,
\[
    \SMCoeffT(V,G,I,E,X,S,N) =  \coeff_{\leq d,N} \bigl( \pi_{sm,S}(C) \bigr).
\]
For $S=[d(N)]$, we have
\[
    \SMCoeffT(V,G,I,E,X,[d(N)],N) = \coeff_{\leq d,N} \bigl(\pi_{sm}(C)\bigr).
\]

Applying \cref{lemma: set multilinear addition,lemma: set multilinear multiplication} gate by gate gives the field-independent set-multilinearization theorem.

\begin{corollary}[Field-independent set-multilinearization~\cite{For24}]
    \label{corollary: formal set multilinearization}
    Let $\F$ be a field. Let the variables $\overline{x}$ be partitioned as $\overline{x} = \overline{x}_1 \sqcup\cdots\sqcup \overline{x}_d$. Suppose that a polynomial $f\in\F[\overline{x}]$ is computed by a size-$s$ algebraic circuit of product-depth $\Delta$. Then $\pi_{sm}(f)$ is computed by a set-multilinear circuit of product-depth at most $2\Delta$ and size $\poly(s,d^d).$
\end{corollary}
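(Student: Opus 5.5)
The plan is a gate-by-gate simulation. Start from a size-$s$ circuit $C$ of product-depth $\Delta$ computing $f$. Assume, after a standard normalisation that increases size only polynomially and keeps product-depth, that $C$ is layered and alternates between sum layers and product layers. Also allow scalar weights on the wires into sum gates, so that each sum gate computes $\alpha_1 f_1+\cdots+\alpha_m f_m$.

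The new circuit $C'$ will contain, for every gate $v$ of $C$ and every subset $S\subseteq[d]$, a gate $v_S$ computing $\pi_{sm,S}(\widehat{C}_v)$. The output of $C'$ is $\mathrm{out}_{[d]}$, which computes $\pi_{sm}(f)$. The gates $v_S$ are defined bottom-up, as follows.

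\begin{enumerate}
\item Input gates are handled directly. A constant $a$ gives $v_\emptyset=a$ and $v_S=0$ for $S\neq\emptyset$. A variable $x\in\overline{x}_i$ gives $v_{\{i\}}=x$ and $v_S=0$ otherwise.
\item A sum gate $v$ is handled by \cref{lemma: set multilinear addition}: set $v_S=\sum_j\alpha_j (u_j)_S$.
\item A product gate $v=u_1\cdots u_m$ is handled by \cref{lemma: set multilinear multiplication}. This replaces the product gate by a depth-$4$ set-multilinear circuit, of the shape $\sum\prod\sum\prod$ coming from \cref{corollary: surjective permanent formula}, whose inputs are the gates $(u_j)_T$.
\end{enumerate}

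The constants $f_i(\overline0)$ needed in that lemma are field elements, but they depend on the circuit. Since $C$ is fixed, they are simply constants hard-wired into $C'$ (they equal the values $(u_i)_\emptyset$). In the same way, the case split on which factors vanish at $\overline0$ is decided once, when $C'$ is built.

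Correctness is an induction over the gates, using the decomposition $f=\pi_{\neg sm}(f)+\sum_S\pi_{sm,S}(f)$ together with the two lemmas. Set-multilinearity of $C'$ holds by construction: every gate $v_S$ computes an $S$-set-multilinear polynomial, and every product in the depth-$4$ blocks multiplies set-multilinear pieces over disjoint index sets.

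For size, each gate of $C$ produces $2^d$ gates $v_S$. Each product gate becomes a block of size $\poly(m,\ell^\ell,2^d)\le\poly(s,d^d)$, since $m\le s$, $\ell\le d$ and $2^d\le d^d$. Sum gates cost $O(s)$ per subset. The total is therefore $s\cdot 2^d\cdot\poly(s,d^d)=\poly(s,d^d)$.

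The main obstacle is the product-depth accounting. Each product layer of $C$ is replaced by a $\sum\prod\sum\prod$ block, which contains two product layers. Along any path there are $\Delta$ product layers, giving at most $2\Delta$ product layers in $C'$. The sum layers of the block, together with the sum layers of $C$ simulated via \cref{lemma: set multilinear addition}, add depth but no product layers.

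The one thing to check carefully is that the innermost $\prod$ of each block really multiplies the inputs $(u_j)_T$ directly. By the Binet--Minc form of \cref{theorem: Binet--Minc}, the bottom products are products of matrix entries $(A_\mathcal{F})_{T,j}=\pi_{sm,T}(g_j)$. So the bottom product layer of the block sits directly on top of the previous block's top sum layer (or an input). It does not stack with the previous block's top product layer, and hence no third product layer is introduced.
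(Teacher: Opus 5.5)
Your proposal is correct and follows the same route as the paper: the paper's proof is simply the gate-by-gate application of the set-multilinear addition and multiplication lemmas. Your write-up additionally spells out the bookkeeping the paper leaves implicit. This is the $s\cdot 2^d\cdot\poly(s,d^d)$ size count, together with the observation that each product gate becomes a $\sum\prod\sum\prod$ block whose bottom products sit on sum gates, which gives product-depth at most $2\Delta$.
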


We now state the formal version using the global parameter $N$. Let $s$ be the fixed standard polynomial bounding the size of the original circuit. There is a fixed standard polynomial $s'$ and an $\FNCTwo$ function
\[
    \setmultilinear(V,G,I,E,X,d,N)
\]
such that, for sufficiently large $N$, if $n=\lfloor\log_2N\rfloor$ and $d=\lfloor c\log_2n\rfloor$, and if $(V,G,I,E)$ encodes a circuit $C$ of size at most $s(n)$ and product-depth at most $\Delta$, then
\[
    C' = \setmultilinear(V,G,I,E,X,d,N)
\]
encodes a set-multilinear circuit satisfying $C'$ computing $\pi_{sm}(C)$ with $\Size(C') < s'\bigl(s(n),d^d\bigr)$, and $\ProdDepth_{2\Delta}(C')$.

More precisely, working in the universal conservative extension $\overline{\VNCTwo}$, we have
\[
    \begin{aligned}
        \forall N\geq N_0\; \forall V,G,I,E,X\; \Bigl( &\Circuit(V,G,I,E) \land \Size(V,G,I,E)<s(n(N)) \\
            &\land \ProdDepth_{\Delta}(V,G,I,E) \land \Partition(X,d(N))\\
            &\longrightarrow \CoeffT(C',d(N),N) = \SMCoeffT(V,G,I,E,X,[d(N)],N)\\
            &\qquad\land \SetMultCircuit(C',X,d(N))\\
            &\qquad\land \Size(C')< s'\bigl(s(n(N)), d(N)^{d(N)} \bigr)\\
            &\qquad\land \ProdDepth_{2\Delta}(C') \Bigr),
    \end{aligned}
\]
where $C'= \setmultilinear \bigl(V,G,I,E,X,d(N),N \bigr).$ Here, $n(N)$ and $d(N)$ are the parameter functions fixed above, and $N_0$ is a fixed standard constant chosen so that all relevant encodings have length polynomially bounded in $N$. The predicate $\Circuit(V,G,I,E)$ checks that $(V,G,I,E)$ is a valid encoding of an algebraic circuit, $\Size(V,G,I,E)$ returns its size, and $\ProdDepth_{\Delta}(V,G,I,E)$ checks that its product-depth is at most $\Delta$. The predicate $\Partition(X,d)$ checks that $X$ encodes a partition of the circuit variables into $d$ sets. The function $\CoeffT(C',d(N),N)$ outputs the degree-at-most-$d(N)$ coefficient vector of $C'$, padded to length $N$, while $\SMCoeffT(V,G,I,E,X,[d(N)],N)$ outputs the corresponding coefficient vector of the set-multilinear projection of the polynomial computed by $(V,G,I,E)$. Finally, $\SetMultCircuit(C',X,d(N))$ checks that $C'$ is set-multilinear with respect to the partition encoded by $X$. The functions $s$ and $s'$ are fixed standard polynomials giving, respectively, the assumed size bound on the original circuit and the size bound resulting from the set-multilinearization construction.

Since $2^{d(N)} = n(N)^{O(1)}$ and $d(N)^{d(N)} = 2^{O(\log n(N)\log\log n(N))} \leq N$ for all sufficiently large $N$, the size of $C'$ and the lengths of all strings used in the construction are polynomially bounded in $N$. The functions which enumerate subsets and set partitions, construct the Binet--Minc circuits, form the surjective-Permanent circuits, and assemble the gate-by-gate simulation are therefore in $\FNCTwo$.

Consequently, $\setmultilinear$ is represented by a function symbol in $\overline{\VNCTwo}$, and the displayed correctness statement can be expressed by a $\Sigma_0^B(\calL_{\FNCTwo})$ formula. By conservativity, the corresponding statement in the original language is provable in $\VNCTwo$.

\subsubsection{Lower bounds for small-depth circuits}

Recall that $n=n(N)\coloneqq\lfloor\log_2N\rfloor$, $d=d(N)\coloneqq \lfloor c\log_2n\rfloor$ and $k=k(N)\coloneqq10d$. For the lower-bound construction, define $ a(k)\coloneqq\max\{h\leq k:2h^2\leq k^2\}$, and let $A(k)\coloneqq\{a(k),-k\}$. Namely, $a(k)=\lfloor k/\sqrt{2}\rfloor$.

For every fixed standard integer $\delta \geq 1$, define
\[
    D_\delta(m) \coloneqq \max\{q\leq m:q^{2^\delta-1}\leq m\}.
\]
Thus $D_\delta(m) = \left\lfloor m^{1/(2^\delta-1)}\right\rfloor = \Theta\left(m^{1/(2^\delta-1)}\right)$. Since $\delta$ is a fixed standard constant, $D_\delta$ is an $\FNCTwo$ function.

We put $\epsilon_N\coloneqq2^{-N}$ and $\eta_N\coloneqq \frac{1+\epsilon_N}{1-\epsilon_N}$. By the relative-rank properties proved in the previous section, $\VNCTwo$ proves that relative rank is imbalance, which is
\[
    \relrank(u,m,N,M_u(f)) \leq \frac{\eta_N} {P_\times(|s(u)|_1/2,N,N)},
\]
and that if $f=f_1\cdots f_t$ is a set-multilinear product with respect to a partition
$(S_1,\ldots,S_t)$, then
\[
    \relrank(u,m,N,M_u(f)) \leq \eta_N^t \prod_{i=1}^t \relrank \left( u\restriction S_i, |S_i|, N, M_{u\restriction S_i}(f_i) \right).
\]
We will also repeatedly use the following direct consequences of the approximation bounds. For $z_1,\ldots,z_t\geq0$,
\[
    \prod_{i=1}^t \frac{1}{P_\times(z_i,N,N)} \leq \eta_N^t \frac{1}{P_\times(\sum_{i=1}^t z_i,N,N)},
\]
and if $0\leq z\leq z'$, then
\[
    \frac{1}{P_\times(z',N,N)} \leq \eta_N \frac{1}{P_\times(z,N,N)}.
\]
Indeed, both inequalities follow directly from
\[
    (1-\epsilon_N)2^z \leq P_\times(z,N,N) \leq (1+\epsilon_N)2^z.
\]

\begin{lemma}
    \label{lemma: set multilinear circuit lower bound}
    Let $w\in A(k)^d$ be any $k$-unbiased word, where $k\geq10d$ and $d=\lfloor c\log n\rfloor$. Any set-multilinear circuit $C$ of product-depth $\Delta$ computing $P_w$ has size at least $n^{\Omega\left(\frac{D_\Delta(d)}{\Delta}\right)} = n^{\Omega\left(\frac{d^{1/(2^\Delta-1)}}{\Delta}\right)}$.
\end{lemma}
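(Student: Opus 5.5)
We follow the LST25 strategy: show that every set-multilinear circuit of product-depth $\Delta$ and size $s$ has relative rank (with respect to a $k$-unbiased word $w$) at most $s\cdot 2^{-k\,\Omega(D_\Delta(d))/?}$-type small, i.e.\ $\relrk_w(C)\le s\cdot 2^{-\Omega(k D_\Delta(d)/\Delta)}$ up to the factors $\eta_N^{O(s)}$, and compare this with the lower bound $\relrk_w(P_w)\ge 2^{-k/2}$ established in the subsection ``Word polynomials have high rank''. Since $k=10d$ and $d=\lfloor c\log n\rfloor$, the target inequality $s\ge 2^{\Omega(kD_\Delta(d)/\Delta)-k/2}$ becomes $s\ge n^{\Omega(D_\Delta(d)/\Delta)}$. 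Throughout we use subadditivity to pass through sum gates (giving a factor of the circuit size) and multiplicativity to pass through product gates, both from the relative-rank claim.

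The heart is an inductive statement on product-depth: for a set-multilinear circuit of product-depth $\delta$ over a word $u$ (any subword obtained by restricting $w$ to a subset of positions), with all gates of size at most $s$, if $|u|\ge D_\delta(\cdot)$-type thresholds, then every gate $g$ satisfies $\relrk_u(g)\le s\cdot 2^{-k\,\Omega(D_\delta(|u|))/\ldots}$. The proof at a product gate $g=g_1\cdots g_t$ (with variable-set partition $S_1,\ldots,S_t$ of the positions): either some factor $g_i$ has ``large'' support $|S_i|\ge |u|/D_{\delta}$-ish, and we apply induction to it (product-depth $\delta-1$) and bound the other factors by $1$, or all factors are small, and we use the \emph{imbalance} bound on each factor, $\relrk\le 2^{-|s(u\restriction S_i)|_1/2}$. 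The key combinatorial input, from LST25, is that for the alphabet $\{\lfloor k/\sqrt2\rfloor,-k\}$ any nonempty sum of at most $h$ letters is bounded away from zero, $|u_S|_1\ge \Omega(k/h)$ for $|S|\le h$; this is exactly the Diophantine estimate $|a\alpha-b|\ge 1/(4|a\alpha|+2)$ for $\alpha=1/\sqrt2$, which we invoke via the irrationality statement $x^2=2\wedge(a,b)\neq(0,0)\Rightarrow ax\neq b$ as described in the introduction. Summing the imbalances over many small factors, using the inequality $\prod_i 1/P_\times(z_i,N,N)\le \eta_N^t/P_\times(\sum z_i,N,N)$, yields the required exponential decay. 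Choosing the threshold as $D_\delta$ with exponent $1/(2^\delta-1)$ balances the two cases, giving the recurrence solved by $D_\Delta(d)$, with the $1/\Delta$ loss from splitting the budget across levels.

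Formally, all of this runs inside $\overline{\VNCTwo}$ by $\Sigma_0^B(\calL_{\FNCTwo})$-induction on the (standard, constant) product-depth and on gates in topological order, using $\CoeffT$ to identify each gate's polynomial and $\relrank$ in place of $\relrk$. The error factors $\eta_N^{t}$ accumulate to at most $\eta_N^{\poly(N)}\le 2$ since $\epsilon_N=2^{-N}$, so they are harmless. Finally, combining the upper bound with the lower bound for $\WordP$ and taking logarithms via the monotonicity of $P_\times$ gives the size bound.

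The main obstacle I expect is the case analysis at product gates: making the LST25 ``balanced vs.\ unbalanced factor'' argument precise with the exact thresholds $D_\delta$, together with the Diophantine lower bound on $|u_S|_1$ for short subwords. That is the only step using number theory beyond rank properties, and it requires care that $k$-unbiasedness of $w$ is not needed for subwords, only the alphabet structure. I would first establish $|u_S|_1\ge k/(c'|S|)$ as a standalone lemma, then do the depth induction.
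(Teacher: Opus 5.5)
Your architecture matches the paper's. You run an induction on product-depth, uniform over subwords $u$, use subadditivity at sums, and split product gates into two cases: one large factor handled by induction with the others bounded by $1$, or all factors small and handled by imbalance plus a Diophantine estimate. You then compare with $\relrk_w(P_w)\ge 2^{-k/2}$. You are also right that unbiasedness is not needed for subwords.

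\textbf{The threshold and the Diophantine estimate.} The step you defer is the whole content, and the concrete choices you sketch fail. With $m=|u|$, current product-depth $\delta$ and $q=D_\delta(m)$, the threshold must be $T=q^{2^{\delta-1}-1}$, so that $qT^2=q^{2^\delta-1}\le m$. Then:
\begin{itemize}
\item a factor of degree $m'\ge T$ has $D_{\delta-1}(m')\ge q$, so induction gives decay $2^{-kq/20}$;
\item if all factors have degree $<T$, there are $t>m/T$ of them, each of imbalance at least $k/(10T)$, for a total exceeding $mk/(10T^2)\ge kq/10$, hence again $2^{-kq/20}$.
\end{itemize}
Your threshold $|u|/D_\delta(|u|)$ is too large by roughly a square root. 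It makes the large-factor case easy, since $D_{\delta-1}(m/q)\approx q^2$. But in the small-factor case it only guarantees $t>q$ factors of imbalance about $kq/m$ each, a total of order $kq^2/m\ll kq$.

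Moreover, the estimate $|u_S|_1\ge\Omega(k/|S|)$ is not unconditional. The letter is the rounded integer $a(k)=\lfloor k/\sqrt2\rfloor$, whose rounding error contributes up to $|S|$, so one needs $k\gtrsim |S|^2$. The paper's claim assumes $a+b<T$ and $k\ge 10T^2$, which holds because $T^2\le m\le d$ and $k\ge 10d$. It would fail for factors of degree near $d$, which your threshold admits. The paper proves that claim purely over the integers, from $a^2\ne 2b^2$ and $0\le k^2-2a(k)^2<5k$, so no handling of $\sqrt2$ is needed in the theory.

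\textbf{Circuits versus formulas.} You run the induction directly on circuits. Subadditivity contributes a factor equal to the fan-in of each sum gate, and in a circuit the subcircuits below different summands overlap. So an invariant like $\relrk(g)\le s\cdot 2^{-kD_\delta(m)/20}$ for a size-$s$ circuit does not propagate: each sum level can cost another factor of $s$.

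The paper instead proves the bound $s\cdot 2^{-kD_\Delta(m)/20}$ for \emph{formulas}, where sizes of disjoint subformulas add. It first unfolds the circuit into a formula of size at most $s^{2\Delta}$, and then $2^{-k/2}\le s^{2\Delta}2^{-kD_\Delta(d)/20}$ gives $s\ge 2^{\Omega(kD_\Delta(d)/\Delta)}$. So the $1/\Delta$ in the exponent is the $2\Delta$-th root of the unfolding cost, not a budget split across levels. The per-level decay in the induction is the full $2^{-kD_\delta(m)/20}$.
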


We first prove the relative-rank upper bound for formulas. It is important that the formal statement is uniform in the degree $m$, since the induction will be applied to factors of degree smaller than $d$.

\begin{lemma}
    \label{lemma: relative rank upper bound}
    Let $1\leq m\leq d$ and let $u\in A(k)^m$, where $k\geq10d$. For a constant $\Delta\geq1$, any set-multilinear formula $C$ of product-depth $\Delta$, degree $m$, and size at most $s$ satisfies
    \[
        \relrk_u(C) \leq s\cdot 2^{-\frac{kD_\Delta(m)}{20}}.
    \]
    Moreover, $\VNCTwo$ proves the corresponding rational inequality
    \[
        \relrank(u,m,N,M_u(C)) \leq \frac{s\eta_N^{5\Delta m}}{P_\times(kD_\Delta(m)/20,N,N)}.
    \]
\end{lemma}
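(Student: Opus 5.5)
We argue by induction on the product-depth $\Delta$, following the relative-rank argument of~\cite{LST25}. We first prove the semantic bound and then observe that each step is an instance of the rational relative-rank properties (imbalance, subadditivity, multiplicativity) established in the preceding claim. The induction is over a standard constant $\Delta$ and is therefore carried out at the meta-level: for each fixed $\Delta$ we obtain a separate $\VNCTwo$ proof. Inside each level, the statement is quantified uniformly over $m\le d$, $u\in A(k)^m$ and formulas of size at most $s$. The universally quantified claims about all top-level summands and factors are $\Sigma_0^B(\calL_{\FNCTwo})$, so they can be handled by comprehension, and no induction inside the theory is needed beyond what the rank facts of~\cite{KK23} already provide.

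The argument follows the top gate of $C$. \textbf{Sum gate.} Write $C=\sum_i C_i$, where each $C_i$ has size $s_i$ and $\sum_i s_i\le s$. Subadditivity reduces the bound for $C$ to the bounds for the children. Because the denominator $P_\times(\cdot,N,N)$ is the same for every summand, no loss in $\eta_N$ occurs. \textbf{Product gate.} Write $C=\prod_{i=1}^t Q_i$, where each $Q_i$ is set-multilinear on a block $S_i$ of degree $m_i$ and has product-depth at most $\Delta-1$, with $\sum_i m_i=m$. We split into two cases.

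\emph{Case (a): some factor is large.} Suppose some $m_i\ge D_{\Delta-1}(m)$. Relative rank at most $1$ bounds every other factor, and the induction hypothesis applied to $Q_i$ gives $\relrk\le s\,2^{-kD_{\Delta-1}(m_i)/20}$. Since $D_{\Delta-1}(m_i)\ge D_{\Delta-1}(D_{\Delta-1}(m))\ge D_\Delta(m)$ by the definition of $D_\delta$, this gives the required bound.

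\emph{Case (b): all factors are small.} Suppose every $m_i<D_{\Delta-1}(m)$. Then there are at least $m/D_{\Delta-1}(m)\gtrsim D_\Delta(m)$ factors. In this case we use imbalance. The alphabet $\{a(k),-k\}$ with $a(k)=\lfloor k/\sqrt2\rfloor$ has the property that a nonempty sub-word $u\restriction S_i$ satisfies $|s(u\restriction S_i)|_1\ge k/10$, by the Diophantine estimate $|a\alpha-b|\ge 1/(4|a\alpha|+2)$ for $\alpha=1/\sqrt2$. Multiplicativity then gives $\prod_i 2^{-|s(u\restriction S_i)|_1/2}\le 2^{-k\cdot t/20}$ with $t\ge D_\Delta(m)$. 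This is the only point where irrationality enters. In the theory, we replace the estimate by the formal irrationality statement $x^2=2\wedge(a,b)\ne(0,0)\to ax\ne b$, proved via a $\VNCTwo$-definable gcd, together with the integer inequality $2a(k)^2\le k^2<2(a(k)+1)^2$.

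For the rational version, each application of multiplicativity costs a factor $\eta_N^{t}$, and merging products of $1/P_\times$ terms costs further factors of $\eta_N$. We account for these by charging at most $5m$ factors of $\eta_N$ per product level, using $t\le m$. Over $\Delta$ levels this yields $\eta_N^{5\Delta m}$.

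The main obstacle is case (b). One must make the imbalance bound rigorous, namely that every nonempty block of a word over $A(k)$ with at most $d\le k/10$ letters has imbalance at least about $k/10$, using only the formal irrationality of $\sqrt2$. One must also check that the combinatorial inequalities relating $D_{\Delta-1}$ and $D_\Delta$, stated with floors as $\FNCTwo$ functions, hold inside $\VNCTwo$ with the stated constants. We would first try to establish the block-imbalance bound as a separate $\Sigma_0^B(\calL_{\FNCTwo})$ lemma: for integers $p,q\le d$, not both zero, we have $|p\,a(k)-qk|\ge k/10$, proved by squaring and using integrality. We would then plug this lemma into the product case.
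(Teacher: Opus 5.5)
\textbf{Genuine gap: the case threshold, and the block-imbalance bound behind it.} The bound you use in case (b) is false. A block with $p$ letters $a(k)$ and $q$ letters $-k$ has $|p\,a(k)-qk|\approx k\,|p/\sqrt2-q|$. By Dirichlet's theorem this can be of order $k/p$; for instance $|7a(k)-5k|<k/10$ for all large $k$. So the lemma you intend to isolate (``$|p\,a(k)-qk|\ge k/10$ for $p,q\le d$'') is false. The estimate $|a\alpha-b|\ge 1/(4|a\alpha|+2)$ you cite only gives a bound that decays with the block length.

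The case split at $D_{\Delta-1}(m)$ also fails on its own terms. In case (a), $D_{\Delta-1}(D_{\Delta-1}(m))\approx m^{1/(2^{\Delta-1}-1)^2}$ is smaller than $D_\Delta(m)\approx m^{1/(2^\Delta-1)}$ once $\Delta\ge 3$; for example $m=2^{21}$ gives $D_2(D_2(m))=5<8=D_3(m)$. In case (b) with $\Delta=2$, the count $m/D_1(m)=1$ says essentially nothing about the number of factors. So the inductive step fails for every $\Delta\ge 2$.

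\textbf{What is missing.} You need the length-dependent Diophantine bound, with the threshold tuned against it. The correct statement is: if $1\le a+b<T$ and $k\ge 10T^2$, then $|a\,a(k)-bk|\ge k/(10T)$. This is proved by squaring, using $a^2\neq 2b^2$ and $0\le k^2-2a(k)^2<5k$. The two cases then read as follows.
\begin{itemize}
\item If every factor has degree $<T$, there are $t>m/T$ factors, and the exponent is at least $kt/(20T)>km/(20T^2)$.
\item If some factor has degree $m'\ge T$, the induction hypothesis gives exponent $kD_{\Delta-1}(m')/20$.
\end{itemize}
Both are at least $kq/20$ with $q=D_\Delta(m)$, provided $D_{\Delta-1}(T)\ge q$ and $qT^2\le m$. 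The choice $T=q^{2^{\Delta-1}-1}$ achieves both, since $qT^2=q^{2^\Delta-1}\le m$ by the definition of $D_\Delta$. It also gives $T^2\le m\le d$, hence $k\ge 10d\ge 10T^2$, as the Diophantine bound requires. This is exactly the paper's choice, with its $\Delta+1$ playing the role of your $\Delta$.

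With that threshold, the rest of your plan goes through as you describe: subadditivity, approximate multiplicativity with $t\le m$ and $\eta_N^{5m}$ per level, and the formal irrationality of $\sqrt2$ via integrality.
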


\begin{proof}
    We prove both statements simultaneously by induction on product depth $\Delta$. If $\Delta=1$, then $C=C_1+\cdots+C_t$, where each $C_i$ is a product of $m$ linear forms
    \[
        C_i=L_1\cdots L_m.
    \]
    Since every entry of $u$ is either $a(k)$ or $-k$, and $a(k)\geq\frac{k}{2}$, we have
    \[
        \relrk_u(C_i) = \prod_{j=1}^m \relrk_{u_j}(L_j) \leq \prod_{j=1}^m 2^{-|u_j|_1/2} \leq 2^{-km/4}.
    \]
    By subadditivity,
    \[
        \relrk_u(C) \leq s2^{-km/4} \leq s2^{-km/20}.
    \]
    Since $D_1(m)=m$, this proves the first statement.

    In the theory, multiplicativity and imbalance give
    \begin{align*}
        \relrank(u,m,N,M_u(C_i)) &\leq \eta_N^m\prod_{j=1}^m \relrank(u_j,1,N,M_{u_j}(L_j))\\
        &\leq \eta_N^{2m} \prod_{j=1}^m \frac{1}{P_\times(|u_j|_1/2,N,N)}\\
        &\leq \frac{\eta_N^{3m}}{P_\times(\frac12\sum_{j=1}^m|u_j|_1,N,N)}.
    \end{align*}
    Since $|u_j|_1\geq k/2$, $\frac12\sum_{j=1}^m|u_j|_1 \geq  \frac{km}{4}$. Hence
    \[
        \relrank(u,m,N,M_u(C_i)) \leq \frac{\eta_N^{3m+1}}{P_\times(km/4,N,N)}
        \leq \frac{\eta_N^{3m+2}}{P_\times(km/20,N,N)}
        \leq \frac{\eta_N^{5m}}{P_\times(kD_1(m)/20,N,N)}.
    \]
    
    By subadditivity,
    \[
        \relrank(u,m,N,M_u(C)) \leq \frac{s\eta_N^{5m}}{P_\times(kD_1(m)/20,N,N)}.
    \]
    
    Now assume the result holds for product-depth at most $\Delta$, and let $C$ have product-depth $\Delta+1$. Write
    \[
        C=C_1+\cdots+C_t.
    \]
    
    Put $q\coloneqq D_{\Delta+1}(m)$ and $T_\Delta\coloneqq q^{2^\Delta-1}$. By the definition of $D_{\Delta+1}$, $ q^{2^{\Delta+1}-1}\leq m$. Therefore $qT_\Delta^2 = q^{2^{\Delta+1}-1} \leq m$, and in particular
    \[
        T_\Delta^2\leq m\leq d.
    \]

    Each $C_i$ of size $s_i$ is said to be of type $1$ if one of its factors has degree at least $T_\Delta$, and otherwise it is of type $2$.

    \begin{itemize}[leftmargin=*]

    \item \textbf{Type 1.}
    Suppose
    \[
        C_i=C_{i,1}\cdots C_{i,t_i},
    \]
    and, after reordering, let $m'\coloneqq\deg(C_{i,1})\geq T_\Delta$. Let $S_{i,1}$ be the set of variable blocks associated with $C_{i,1}$, so $|S_{i,1}|=m'$. Since $q^{2^\Delta-1} = T_\Delta \leq m'$, we have $D_\Delta(m')\geq q$. The induction hypothesis therefore gives
    \[
        \begin{aligned}
            \relrk_u(C_i) &\leq \relrk_{u\restriction S_{i,1}}(C_{i,1}) \\
                &\leq s_{i,1}2^{-kD_\Delta(m')/20}\\
                &\leq s_i2^{-kq/20}.
        \end{aligned}
    \]

    We now prove the corresponding statement in the theory. Constant factors may be absorbed into one of the nonconstant factors; if one factor is zero, then $C_i=0$. Thus we may
    assume that every factor has positive degree, and consequently $t_i\leq m$.
    
    For every set-multilinear polynomial $f$ we have
    \[
        \relrank(v,r,N,M_v(f)) \leq \frac{1}{1-\epsilon_N} \leq \eta_N,
    \]
    since $\rank(M_v(f))\leq2^{L(v)/2}$.

    Hence approximate multiplicativity gives
    \begin{align*}
        \relrank(u,m,N,M_u(C_i))
        &\leq \eta_N^{t_i} \prod_{j=1}^{t_i} \relrank\left(u\restriction S_{i,j}, |S_{i,j}|, N, M_{u\restriction S_{i,j}}(C_{i,j})\right)\\
        &\leq \eta_N^{2m} \relrank \left(u\restriction S_{i,1}, m', N, M_{u\restriction S_{i,1}}(C_{i,1})\right).
    \end{align*}
    Applying the induction hypothesis,
    \[
        \relrank(u,m,N,M_u(C_i)) \leq \frac{s_{i,1}\eta_N^{2m+5\Delta m'}}{P_\times(kD_\Delta(m')/20, N, N)}.
    \]
    Since $m'\leq m$ and $D_\Delta(m')\geq q$,
    \[
        \relrank(u,m,N,M_u(C_i))
        \leq \frac{s_i\eta_N^{2m+5\Delta m+1}}{P_\times(kq/20, N, N)}
        \leq \frac{s_i\eta_N^{5(\Delta+1)m}}{P_\times(kq/20, N, N)}.
    \]

    \item \textbf{Type 2.}
    Suppose
    \[
        C_i=C_{i,1}\cdots C_{i,t_i},
    \]
    where each nonconstant factor $C_{i,j}$ has degree $\deg(C_{i,j})<T_\Delta$. As above, we absorb all nonzero constant factors. Since $C_i$ is set-multilinear of degree $m = \sum_{j=1}^{t_i}\deg(C_{i,j})$. Therefore $t_iT_\Delta>m$, and hence $t_i>\frac{m}{T_\Delta}$.

    As the formula is set-multilinear, $(S_{i,1},\ldots,S_{i,t_i})$ forms a partition of $[m]$, where $C_{i,j}$ is set-multilinear with respect to the variable blocks indexed by $S_{i,j}$. Let $u^{ij}\coloneqq u\restriction S_{i,j}$. Since $C_{i,j}$ is set-multilinear, we have
    \[
        |u^{ij}|=|S_{i,j}|=\deg(C_{i,j}).
    \]
    Let $a_{ij}$ be the number of entries of $u^{ij}$ equal to $a(k)$, and let $b_{ij}$ be the number of entries equal to $-k$. Then
    \[
        a_{ij}+b_{ij}  = |S_{i,j}| = \deg(C_{i,j}),
    \]
    so in particular $ b_{ij} = \deg(C_{i,j})-a_{ij}$. 

    Therefore, we have
    \[
        s(u^{ij}) \coloneqq \sum_{\ell\in S_{i,j}}u_\ell = a_{ij}a(k)-b_{ij}k.
    \]

    We use the following elementary Diophantine estimate.

    \begin{claim}
        Let $T\geq1$ and let $a,b\geq0$ satisfy $1\leq a+b<T$. If $k\geq10T^2$, then
        \[
            |aa(k)-bk|_1 \geq \frac{k}{10T}.
        \]
    \end{claim}

    \begin{proof}
        Put $h=a(k)$. By the definition of $h$, we have
        \[
            2h^2\leq k^2<2(h+1)^2,
        \]
        and therefore
        \[
            0\leq k^2-2h^2<4h+2\leq5k.
        \]

        We first note that
        \[
            a^2\neq2b^2
        \]
        whenever $(a,b)\neq(0,0)$. Indeed, the equation
        $a^2=2b^2$ has no nonzero integer solution. This follows
        by the standard parity descent: $a$ must be even, and
        then $b$ must also be even, producing a strictly smaller
        solution. The same argument is formalizable in $\Vzero$
        by bounded induction.

        Suppose, for contradiction, that
        \[
            10T|ah-bk|_1<k.
        \]
        Put $\delta\coloneqq ah-bk$. Since $a^2-2b^2$ is a nonzero integer,
        \[
            k^2 \leq |(a^2-2b^2)k^2|_1.
        \]
        On the other hand,
        \begin{align*}
            |(a^2-2b^2)k^2|_1
            &= \left| a^2(k^2-2h^2) + 2(a^2h^2-b^2k^2)\right|_1\\
            &= \left| a^2(k^2-2h^2) + 2\delta(ah+bk) \right|_1\\
            &\leq a^2|k^2-2h^2|_1 + 2|\delta|_1|ah+bk|_1.
        \end{align*}
        Since $a<T$ and $a+b<T$, we have
        \[
            a^2|k^2-2h^2|_1 < 5T^2k \leq \frac{k^2}{2},
        \]
        while
        \[
            |ah+bk|_1 < Tk.
        \]
        Therefore
        \[
            2|\delta|_1|ah+bk|_1 <  2\frac{k}{10T}Tk = \frac{k^2}{5}.
        \]
        Hence
        \[
            |(a^2-2b^2)k^2|_1 < \frac{7}{10}k^2,
        \]
        which contradicts
        \[
            |(a^2-2b^2)k^2|_1\geq k^2.
        \]
    \end{proof}

    Since
    \[
        a_{ij}+b_{ij} = \deg(C_{i,j})  < T_\Delta 
    \]
    and
    \[
        k\geq10d\geq10T_\Delta^2,
    \]
    the claim gives
    \[
        |s(u^{ij})|_1  \geq \frac{k}{10T_\Delta}.
    \]

    Therefore, using multiplicativity and imbalance, we have
    \begin{align*}
        \relrk_u(C_i)
        &= \prod_{j=1}^{t_i} \relrk_{u^{ij}}(C_{i,j}) \\
        &\leq \prod_{j=1}^{t_i} 2^{-|s(u^{ij})|_1/2}   \\
        &\leq 2^{-kt_i/(20T_\Delta)} \leq 2^{-km/(20T_\Delta^2)} \leq 2^{-kq/20},
    \end{align*}
    where the last inequality follows from $ qT_\Delta^2\leq m$.

    In the theory,
    \begin{align*}
        \relrank(u,m,N,M_u(C_i))
        &\leq \eta_N^{t_i} \prod_{j=1}^{t_i} \relrank \left(u^{ij}, |S_{i,j}|, N, M_{u^{ij}}(C_{i,j})\right)\\
        &\leq \eta_N^{2t_i} \prod_{j=1}^{t_i}\frac{1}{P_\times(|s(u^{ij})|_1/2, N, N)}\\
        &\leq \frac{\eta_N^{3t_i}}{P_\times(\frac{1}{2} \sum_{j=1}^{t_i}|s(u^{ij})|_1, N, N)}.
    \end{align*}
    Since
    \[
        \frac12 \sum_{j=1}^{t_i}|s(u^{ij})|_1 \geq \frac{kt_i}{20T_\Delta},
    \]
    we obtain
    \[
        \relrank(u,m,N,M_u(C_i)) \leq \frac{\eta_N^{3t_i+1}}{P_\times(kt_i/(20T_\Delta),N,N)}.
    \]
    Moreover,
    \[
        \frac{kt_i}{20T_\Delta} > \frac{km}{20T_\Delta^2} \geq \frac{kq}{20}.
    \]
    Hence
    \[
        \relrank(u,m,N,M_u(C_i)) \leq \frac{\eta_N^{3t_i+2}}{P_\times(kq/20, N, N)} \leq \frac{\eta_N^{5(\Delta+1)m}}{P_\times(kq/20, N, N)}.
    \]
\end{itemize}

Finally, by subadditivity,
\[
    \relrk_u(C) \leq s2^{-kq/20},
\]
and
\[
    \relrank(u,m,N,M_u(C)) \leq \frac{s\eta_N^{5(\Delta+1)m}}{P_\times(kq/20, N, N)}.
\]
Since $q=D_{\Delta+1}(m),$ this completes the induction.
\end{proof}

We now prove \cref{lemma: set multilinear circuit lower bound}.

\begin{proof}[Proof of \cref{lemma: set multilinear circuit lower bound}]
    Let $C$ be a size-$s$ set-multilinear circuit of product-depth $\Delta$ computing $P_w$. Unfolding $C$ into a formula gives a set-multilinear formula of the same product-depth and size at most $s^{2\Delta}$. Since $w$ is $k$-unbiased, we have $|s(w)|_1\leq k$. Since the coefficient matrix of the word polynomial has high rank, we have
    \[
        \relrk_w(P_w) = 2^{-|s(w)|_1/2} \geq 2^{-k/2}.
    \]
    Applying \cref{lemma: relative rank upper bound} gives $2^{-k/2} \leq s^{2\Delta} 2^{-kD_\Delta(d)/20}.$

    Therefore, we have $ s^{2\Delta} \geq 2^{k(D_\Delta(d)/20-1/2)}.$ For all sufficiently large $N$, $D_\Delta(d)\geq20$, and hence $s \geq 2^{\Omega \left(\frac{kD_\Delta(d)}{\Delta} \right)}$. Since $k=10d=\Theta(\log n)$, we obtain
    \[
        s \geq n^{\Omega\left(\frac{D_\Delta(d)}{\Delta}\right)} =
        n^{\Omega\left(\frac{d^{1/(2^\Delta-1)}}{\Delta}\right)}.
    \]

    We now give the corresponding formal argument in $\VNCTwo$. From the high-rank statement proved in the preceding section, we have
    \[
        \relrank \left(w, d, N, M_w(\WordP(w,d,N)) \right) \geq \frac{\eta_N^{-1}}{P_\times(k/2,N,N)}.
    \]
    Applying the formal part of \cref{lemma: relative rank upper bound} to the formula, we have
    \[
        \relrank\left(w, d, N, M_w(\WordP(w,d,N))\right) 
        \leq \frac{s^{2\Delta}\eta_N^{5\Delta d}}{P_\times(kD_\Delta(d)/20, N, N)}.
    \]
    Consequently,
    \[
        s^{2\Delta} \geq \eta_N^{-(5\Delta d+1)} \frac{P_\times(kD_\Delta(d)/20, N, N)}{P_\times(k/2, N, N)}.
    \]
    By the approximation property of $P_\times$, we have
    \[
        \frac{P_\times(kD_\Delta(d)/20, N, N)}{P_\times(k/2, N, N)}
        \geq \eta_N^{-1} 2^{k(D_\Delta(d)/20-1/2)}.
    \]
    Hence
    \[
        s^{2\Delta} \geq \eta_N^{-(5\Delta d+2)} 2^{k(D_\Delta(d)/20-1/2)}.
    \]

    Since $\eta_N =\frac{1+2^{-N}}{1-2^{-N}}$ and $d=O(\log\log N)$, there is a fixed standard constant $N_1$ such that $\VNCTwo$ proves, for every $N\geq N_1$,
    \[
        \eta_N^{5\Delta d+2}\leq2.
    \]
    For example, for sufficiently large $N$ we have $\eta_N\leq1+3\cdot2^{-N}$ and $6(5\Delta d+2)\leq2^N$. The desired estimate then follows from the Binomial Theorem. Thus the rational approximation loss does not affect the asymptotic lower bound.
\end{proof}

Finally, we combine the set-multilinear lower bound with the
field-independent set-multilinearization theorem from the previous
subsection.

Let $s(n)$ be a fixed polynomial. Suppose that a circuit $C$ of
size at most $s(n)$, depth at most $2\Delta$, and product-depth at
most $\Delta$ satisfies
\[
    \CoeffT(C,d(N),N) = \WordP(w,d,N).
\]
Since $\WordP(w,d,N)$ is supported on the $[d]$-set-multilinear
monomials, it follows that $\pi_{sm}(C)=P_w$.

Indeed, equality of the truncated coefficient vectors implies equality
of all $[d]$-set-multilinear coefficients, and every such monomial has
degree exactly $d$.

By the field-independent set-multilinearization theorem, there is a
set-multilinear circuit $C'$ computing $P_w$, of product-depth at most
$2\Delta$, and size $\poly(s(n),d^d)$.

Since $s(n)$ is polynomial and $d=\Theta(\log n)$, we have
\[
    \poly(s(n),d^d) = n^{O(\log\log n)}.
\]

On the other hand, \cref{lemma: set multilinear circuit lower bound}, applied with product-depth $2\Delta$, gives
\[
    \Size(C') \geq n^{\Omega\left(\frac{D_{2\Delta}(d)}{2\Delta} \right)}.
\]

For fixed $\Delta$,
\[
    D_{2\Delta}(d) = \Theta\left(d^{1/(2^{2\Delta}-1)}\right),
\]
and therefore
\[
    D_{2\Delta}(d) \gg \log\log n.
\]
Hence these two bounds are incompatible for all sufficiently large
$N$.

Consequently, there is a fixed standard constant $N_2$ such that
$\overline{\VNCTwo}$ proves that, when $s(n)$ is any fixed polynomial,

\[
    \begin{aligned}
        \forall N\geq N_2\; \forall w,V,G,I,E\;\Bigl(
        &w\in A(k(N))^{d(N)} \land \unbiased(w,d(N),k(N))\\
        &\land \Circuit_w(V,G,I,E) \land \Size(V,G,I,E)<s(n(N))\\
        &\land \Depth_{2\Delta}(V,G,I,E) \land \ProdDepth_\Delta(V,G,I,E) \\
        &\longrightarrow \CoeffT(V,G,I,E,d(N),N) \neq \WordP(w,d(N),N) \Bigr).
    \end{aligned}
\]

Here $\Circuit_w(V,G,I,E)$ means that $(V,G,I,E)$ is a valid algebraic circuit whose variables are the variables in $\overline{X}(w)$, and $\CoeffT$ and $\WordP$ use the same fixed $N$-coordinate monomial encoding.

All number and string parameters occurring above have encodings of length polynomially bounded in $N$. Since $\Delta$ and $c$ are fixed standard constants, the functions $n(N)$, $d(N)$, $k(N)$, $a(k)$, and $D_\delta$ are in $\FNCTwo$. Hence the matrix of the final displayed statement is a $\Sigma_0^B(\calL_{\FNCTwo})$ formula in $\overline{\VNCTwo}$, and by conservativity the corresponding statement is provable in $\VNCTwo$.

\subsubsection{Lower bounds for permanent}

Now, we show that $\VNCTwo$ is able to prove superpolynomial lower bounds for the Permanent polynomial against constant-depth algebraic circuits. We show that each word polynomial is a p-projection of a Permanent polynomial and that the projection can be generated uniformly and efficiently.

First, we recall the construction from~\cite{LST25}: for every $b$-unbiased word $w\in A^d$, an algebraic branching program (ABP) of width $2^b$ computing $P_w$ can be constructed explicitly and efficiently. Here, we give the proof sketch in~\cite{LST25} to keep the exposition self-contained.

\begin{lemma}[\cite{LST25}]
    \label{def: P_w < IMM }
    Let $w \in A^d$ be any word that is $b$-unbiased. Then there is a set-multilinear ABP of width $2^b$ computing $P_w\in\F_{sm}[\overline{X}(w)]$. Moreover,
    \[
        \relrk_w(P_w)\geq2^{-b/2}.
    \]
\end{lemma}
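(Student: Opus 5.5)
The plan is to build the ABP directly from the defining condition $\sigma(m_+)\sim\sigma(m_-)$, reading the sets $X(w_1),\ldots,X(w_d)$ in order, with each layer of nodes recording the unmatched suffix of the longer string.

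\textbf{Invariant.} After processing $X(w_1),\ldots,X(w_t)$, consider a partial monomial. Its positive string $\sigma_+$ has length $p_t=\sum_{i\le t,\,w_i\ge0}w_i$. Its negative string $\sigma_-$ has length $q_t=\sum_{i\le t,\,w_i<0}|w_i|$. The prefix condition requires the shorter of these two strings to be a prefix of the longer. Whenever this holds, the only information needed to continue is the unmatched excess of the longer string, together with which side it lies on. That excess has length $|p_t-q_t|_1=|w_{[t]}|_1\le b$, by $b$-unbiasedness. The side is determined by the sign of $w_{[t]}$, which is fixed by $w$ and does not depend on the monomial.

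\textbf{Construction.} Layer $L_t$ consists of nodes indexed by binary strings of length $|w_{[t]}|_1$, so the width is at most $2^b$. Layers $L_0$ and $L_d$ are padded to a single source and sink; for $L_d$ this means $|w_{[d]}|_1$ can be positive, so the sink collects all nodes of the last layer. Edges from $L_{t-1}$ to $L_t$ correspond to reading one variable $x_\tau\in X(w_t)$ with $\tau\in\{0,1\}^{|w_t|_1}$. Suppose the current excess $e$ lies on the same side as $w_t$. Then the new excess is $e\tau$, possibly after a flip of side if $w_t$ crosses zero; in the same-side case no flip occurs. Suppose instead $e$ lies on the opposite side. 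Then $\tau$ is compared against $e$: if the shorter of $e,\tau$ is a prefix of the longer, the new excess is the leftover suffix, on whichever side was longer; otherwise there is no edge. Each edge is labelled by the single variable $x_\tau$, summed when parallel. Since every path chooses exactly one variable from each $X(w_t)$, the ABP is set-multilinear.

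\textbf{Correctness.} By induction on $t$, the polynomial at node $e\in L_t$ is exactly the sum of set-multilinear monomials over $X(w_1),\ldots,X(w_t)$ satisfying two conditions: their strings are prefix-compatible, and their excess equals $e$. The key fact is a bijection: for prefix-compatible $\sigma_\pm$, extending one side by $\tau$ keeps compatibility if and only if $\tau$ is compatible with the excess. Given that, the monomials reaching the sink are exactly those with $\sigma(m_+)\sim\sigma(m_-)$, each with coefficient $1$, which is $P_w$.

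\textbf{Rank bound.} The bound $\relrk_w(P_w)\ge 2^{-b/2}$ is the full-rank computation already given in the subsection on word polynomials: $\Rank(M_w(P_w))=2^{(L(w)-|s(w)|_1)/2}$, combined with $|s(w)|_1\le b$.

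The main obstacle is the case analysis in the transition, where the excess changes sides partway through reading $\tau$. Handling it carefully is essentially the bijection lemma. Everything else is routine bookkeeping; the same $\FNCTwo$ description of the edges lets the construction be carried out in $\VNCTwo$.
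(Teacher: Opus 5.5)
Your proposal is correct and is essentially the paper's proof. Layer $t$ is indexed by the unmatched excess in $\{0,1\}^{|w_{[t]}|_1}$, on the side given by the sign of $w_{[t]}$. Your transitions are the same three cases as the paper's labels $L_{\rho,\tau}$: append on the same side, consume a matching prefix of the excess, or cross to the other side with label $x_{\rho\tau}$. You also merge the last layer into the sink, and you take $\relrk_w(P_w)\geq 2^{-b/2}$ from the full-rank computation of $M_w(P_w)$ together with $|s(w)|_1\leq b$.

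One step in your induction is left implicit, as it is in the paper: a monomial whose partial strings are incompatible at stage $t-1$ can never become compatible later, because prefixes of two prefix-compatible strings are again prefix-compatible. This is what justifies discarding such monomials.
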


\begin{proof}
    At each layer $i\in\{0,\ldots,d\}$, before identifying the vertices in the last layer, the ABP has exactly $2^{|w_{[i]}|_1}\leq2^b$ vertices. Assume that $w_{[i]}\geq0$. Then, for any monomial $m$ in variable sets $X_1,\ldots,X_i$, the string $\sigma(m_+)$ has length exactly $w_{[i]}$ more than that of $\sigma(m_-)$. Assuming that $\sigma(m_+)$ agrees with $\sigma(m_-)$ on all but its last $w_{[i]}$ bits, i.e., $\sigma(m_+)=\sigma(m_-)\tau$ for $\tau\in\{0,1\}^{w_{[i]}}$, the vertex of the ABP keeps track of the string $\tau$. More formally, for each $\tau\in\{0,1\}^{|w_{[i]}|}$ we have a vertex $v_\tau$ in the $i$th layer of the ABP, where the polynomial $P_{v_\tau}$ computed from the source node to $v_\tau$ is the sum over all monomials $m$ over $X_1,\ldots,X_i$ such that $\sigma(m_+)=\sigma(m_-)\tau$ if $w_{[i]}\geq0$, and $\sigma(m_-)=\sigma(m_+)\tau$ if $w_{[i]}<0$.

    Given a vertex $u_\tau$ on layer $i+1$, we have
    \[
        P_{u_\tau} = \sum_{v_\rho}P_{v_\rho}\cdot L_{\rho,\tau}
    \]
    for a suitable linear polynomial $L_{\rho,\tau}\in\F[X_{i+1}]$, where the sum runs over all vertices $v_\rho$ in the $i$th layer. More precisely,
    \begin{align*}
        L_{\rho,\tau} =
        \begin{cases}
            0 & \parbox[t]{0.70\linewidth}{if $\sgn(w_{[i+1]})=\sgn(w_{[i]})$, $|w_{[i+1]}|\geq|w_{[i]}|$, and $\rho$ is not a prefix of $\tau$,}\\[1mm]
            x_{\rho'} & \parbox[t]{0.70\linewidth}{if $\sgn(w_{[i+1]})=\sgn(w_{[i]})$, $|w_{[i+1]}|\geq|w_{[i]}|$, and $\tau=\rho\rho'$,}\\[1mm]
            0 & \parbox[t]{0.70\linewidth}{if $\sgn(w_{[i+1]})=\sgn(w_{[i]})$, $|w_{[i+1]}|<|w_{[i]}|$, and $\tau$ is not a suffix of $\rho$,}\\[1mm]
            x_{\tau'} & \parbox[t]{0.70\linewidth}{if $\sgn(w_{[i+1]})=\sgn(w_{[i]})$, $|w_{[i+1]}|<|w_{[i]}|$, and $\rho=\tau'\tau$,}\\[1mm]
            x_{\rho\tau} & \parbox[t]{0.70\linewidth}{if $\sgn(w_{[i+1]})\neq\sgn(w_{[i]})$.}
        \end{cases}
    \end{align*}
    Here each $x_\gamma$ denotes the variable of $X_{i+1}$ labelled by the Boolean string $\gamma$.

    Finally, we identify all the vertices in layer $d$ into a single sink. The polynomial computed from the source to this sink is exactly $P_w$.

    The high-relative-rank property
    \[
        \relrk_w(P_w)\geq2^{-b/2}
    \]
    was proved in the preceding subsection.
\end{proof}

In the parameter regime used below, where $b=k(N)$ and $2^b$ is polynomially bounded in $N$, the ABP and its edge labels can be generated uniformly by an $\FNCTwo$ function from $w,d$, and $N$.

Let
\[
    m(w)  \coloneqq 2+\sum_{i=1}^{d-1}2^{|w_{[i]}|_1}.
\]
After identifying the last layer, the ABP constructed above has exactly $m(w)$ vertices. Since $w$ is $b$-unbiased,
\[
    m(w)\leq2+(d-1)2^b.
\]

Let $G=(V,E)$ be an edge-weighted graph. A cycle cover $C$ of $G$ is a set of node-disjoint directed cycles which contains every vertex of $G$. The weight $w(C)$ of $C$ is the product of the weights of the edges in $C$. Then the Permanent of the weighted adjacency matrix $M_G$ is equal to the sum of the weights of the cycle covers of $G$, i.e.,
\[
    \perm(M_G) = \sum_{\text{cycle cover }C\text{ of }G}w(C).
\]

An ABP with source $s$ and sink $t$ can be converted into a graph $G$ such that
\[
    \perm(M_G) = \sum_{\pi:s\leadsto t} \prod_{e\in\pi}L_e,
\]
where $L_e$ is the label of the edge $e$ in the directed path $\pi$ from $s$ to $t$. We add an edge from $t$ to $s$ and label it by $1$. Also, for each node $v$ except $s$, we add a self-loop $v\to v$ and label it by $1$. The adjacency matrix $M_G$ is therefore
\begin{align*}
    M_G(i,j)
    \coloneqq
    \begin{cases}
        \ell(v_i,v_j)
        &\text{if $v_i\to v_j$ is an edge of the ABP constructed above},\\
        1
        &\text{if $i=j$ and $v_i\neq s$},\\
        1
        &\text{if $(v_i,v_j)=(t,s)$},\\
        0
        &\text{otherwise.}
    \end{cases}
\end{align*}

Indeed, since the ABP is acyclic, every cycle cover of $G$ must contain the edge $t\to s$. The remaining non-loop edges in the cycle cover therefore form a unique directed path from $s$ to $t$, while every vertex outside this path is covered by its self-loop. Thus the cycle covers of $G$ are in bijection with the source-to-sink paths in the ABP.

Let $Y=(y_{i,j})_{i,j\in[m(w)]}$ be the variables of the $m(w)\times m(w)$ Permanent. Define the projection $\rho_w$ by replacing $y_{i,j}$ with the corresponding entry $M_G(i,j)$. Every entry of $M_G$ is either a variable of $\overline{X}(w)$ or one of the constants $0,1$. Hence $\rho_w$ is a p-projection, and
\[
    \rho_w\!\left(\perm_{m(w)}(Y)\right) = P_w.
\]

The projection $\rho_w$ can be generated by an $\FNCTwo$ function. We also define an $\FNCTwo$ function
$$
    \ProjCirc(V,G,I,E,\rho_w)
$$
which, given an algebraic circuit $C$ in the variables
$(y_{i,j})_{i,j\in[m]}$, replaces every input variable according to
$\rho_w$. This substitution does not increase the size or the depth of the circuit. If $C$ computes $\perm_m$, then $\ProjCirc(C,\rho_w)$ computes $P_w$.

Recall that
\[
    n = n(N) \coloneqq \lfloor\log_2N\rfloor,\qquad
    d = d(N) \coloneqq \lfloor c\log_2n\rfloor,\qquad
    k = k(N) \coloneqq10d,
\]
where we choose the fixed constant $c$ so that $0<c<1/20$. Let
\[
    m(N)\coloneqq 2+d(N)2^{k(N)}.
\]

For every $k(N)$-unbiased word $w \in A(k(N))^{d(N)}$, the ABP constructed above has at most $m(N)$ vertices. By adding isolated vertices with self-loops of weight $1$, we may therefore regard $P_w$ as a p-projection of the Permanent polynomial $\perm_{m(N)}$.

Since $d(N)=O(\log n(N))$ and $2^{k(N)}\leq n(N)^{10c}$, we have
\[
    m(N)^2=O\bigl(n(N)^{20c}(\log n(N))^2\bigr).
\]

Thus the ABP, the matrix $M_G$, and the projection $\rho_w$ all have encodings of length polynomially bounded in $N$. Moreover, for every circuit of size polynomially bounded in $m(N)$, the projected circuit also has encoding length polynomially bounded in $N$. The stronger assumption $20c<1$ will be used below to ensure that the relevant coefficient spaces fit within the ambient parameter $N$.

Moreover, we define a predicate
\[
    \SetMultOutput(V,G,I,E,m,N)
\]
which expresses that the polynomial computed by the circuit encoded by $(V,G,I,E)$ is set-multilinear with respect to the row partition
\[
    Y_i=\{y_{i,1},\ldots,y_{i,m}\},\qquad i\in[m].
\]
For the constant-depth, polynomial-size circuits considered below, we can compute a polynomial upper bound $D_C$ on the syntactic degree of the output polynomial. Since the circuit has $m^2$ variables and $m=m(N)=O(n(N)^{10c}\log n(N))$, the assumption $20c<1$ implies that, for sufficiently large $N$,
\[
    \binom{m^2+D_C}{D_C}\leq N.
\]

Hence $\CoeffT(V,G,I,E,D_C,N)$ computes the full coefficient vector of the output polynomial. The predicate $\SetMultOutput$ checks that every coefficient corresponding to a monomial which is not set-multilinear with respect to the row partition is zero. Thus $\SetMultOutput$ is a $\Sigma_0^B(\calL_{\FNCTwo})$ predicate.

Consequently,
\[
    \SetMultOutput(V,G,I,E,m(N),N)
    \land
    \CoeffT(V,G,I,E,m(N),N)=\PermT(m(N),N)
\]
implies that the polynomial computed by $(V,G,I,E)$ is exactly $\perm_{m(N)}$. Indeed, if the output polynomial is set-multilinear with respect to the $m(N)$ row blocks, then every monomial occurring in it has total degree exactly $m(N)$. Hence $\CoeffT(V,G,I,E,m(N),N)$ is its full coefficient vector, and equality with $\PermT(m(N),N)$ implies equality of the two polynomials.

Now suppose, for contradiction, that both
\[
    \SetMultOutput(V,G,I,E,m(N),N)
\]
and
\[
    \CoeffT(V,G,I,E,m(N),N)=\PermT(m(N),N)
\]
hold. By the preceding observation, the circuit $C$ encoded by $(V,G,I,E)$ computes $\perm_{m(N)}$. Therefore $\ProjCirc(C,\rho_w)$ has the same size and depth as $C$ and computes
\[
    \rho_w(\perm_{m(N)})=P_w,
\]
contradicting the lower bound for $P_w$.

\begin{theorem}[Provability of superpolynomial lower bound for the Permanent polynomial against constant-depth algebraic circuits]
    \label{theorem: Provability of superpolynomial lower bound for the Permanent polynomial against constant-depth algebraic circuits}
    Let $s$ be a polynomial and let $\Delta$ be a constant. There is a fixed standard constant $N_1$ such that the following formula is provable in $\VNCTwo$:
    \begin{align*}
        \forall N\geq N_1\; \forall V,G,I,E\;\Bigl(
        &\Circuit_{\perm}(V,G,I,E,m(N)) \\
        &\land \SetMultOutput(V,G,I,E,m(N),N) \\
        &\land \Size(V,G,I,E)<s(m(N)) \\
        &\land \Depth_{2\Delta}(V,G,I,E) \\
        &\longrightarrow
        \CoeffT(V,G,I,E,m(N),N)
        \neq
        \PermT(m(N),N)
        \Bigr).
    \end{align*}
\end{theorem}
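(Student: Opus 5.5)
We argue in the universal conservative extension $\overline{\VNCTwo}$ and transfer the result back to $\VNCTwo$ by conservativity. The hypothesis becomes a $\Sigma_0^B(\calL_{\FNCTwo})$ formula once $\CoeffT$, $\PermT$, $\SetMultOutput$ and $\ProjCirc$ are treated as function symbols. We argue by contradiction: fix $N\geq N_1$ and a circuit $C=(V,G,I,E)$ over the variables $(y_{i,j})_{i,j\in[m(N)]}$. Assume $C$ has size below $s(m(N))$ and depth $2\Delta$, so its product-depth is at most $\Delta$ after the usual alternation normalisation. Assume also that $\SetMultOutput$ holds and that $\CoeffT(C,m(N),N)=\PermT(m(N),N)$. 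By the observation preceding the theorem, these two facts give equality of full coefficient vectors. This uses $20c<1$, which ensures $\binom{m^2+D_C}{D_C}\leq N$, so the truncation at degree $m(N)$ loses nothing.

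Next, fix a concrete $k(N)$-unbiased word $w\in A(k(N))^{d(N)}$, for instance one chosen greedily: take $a(k)$ while the partial sum is $\leq 0$, and $-k$ otherwise. Its partial sums stay in $[-k,k]$, and this is provable by $\Sigma_0^B(\calL_{\FNCTwo})$-induction on the prefix length. The word is produced by an $\FNCTwo$ function. We then form $\rho_w$ and $C''=\ProjCirc(C,\rho_w)$. The key formal step is to show inside the theory that $\CoeffT(C'',d(N),N)=\WordP(w,d(N),N)$. To do this, we prove a general substitution lemma: for a projection $\rho$ sending variables to variables or constants in $\{0,1\}$, the coefficient vector of $\ProjCirc(C,\rho)$ is obtained from that of $C$ by the corresponding linear regrouping map on coefficient vectors. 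This lemma is proved by $\Sigma_0^B(\calL_{\FNCTwo})$-induction on the layers of the circuit, following the gate-by-gate definition of $\CoeffT$. Applied to $\perm_{m(N)}$, it reduces the claim to the cycle-cover identity $\rho_w(\perm_{m(N)})=P_w$. We verify that identity coefficientwise: the bijection between cycle covers and $s$-$t$ paths of the layered ABP, together with the bijection between paths and monomials $m$ with $\sigma(m_+)\sim\sigma(m_-)$, are both $\FNCTwo$-definable maps, and bijectivity is a $\Sigma_0^B(\calL_{\FNCTwo})$ statement.

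Once $\CoeffT(C'',d(N),N)=\WordP(w,d(N),N)$ is established, we invoke the final displayed statement of the previous subsection, which is provable in $\overline{\VNCTwo}$. Since $C''$ has the same size and depth as $C$, and its size $s(m(N))$ is a polynomial in $n(N)$ because $m(N)=n^{O(c)}$, that statement yields $\CoeffT(C'',d(N),N)\neq\WordP(w,d(N),N)$, which is a contradiction. We take $N_1$ to be the maximum of $N_2$ from that statement, the thresholds needed for the coefficient-space bounds, and the threshold for the size comparison. We also rename the size polynomial as $s\circ m$, viewed as a polynomial in $n$.

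The main obstacle is the substitution/coefficient step. Here $C$ has degree up to $m(N)$, while $C''$ is analysed only up to degree $d(N)$, and the constants $1$ collapse many $y$-monomials onto a single $X$-monomial. The regrouping map therefore sums over exponentially many preimages. My first attempt would avoid this by working only with the full coefficient vector of $C$, which fits in $N$ coordinates, and reading off $\coeff_M(C'')$ as an $\FNCTwo$ iterated sum over those preimages that lie among the $N$ monomials. Failing that, I would prove the projection statement directly, by induction on gates, for $\ProjCirc(C,\rho)$ against the projected coefficient vectors.
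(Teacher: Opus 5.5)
Your proposal is correct and is essentially the paper's own argument: use $\SetMultOutput$ and $20c<1$ to conclude that $C$ computes $\perm_{m(N)}$, project by $\rho_w$ for a $k(N)$-unbiased $w$ to get a circuit of the same size and depth computing $P_w$, and contradict the $\overline{\VNCTwo}$-provable lower bound for $P_w$, where you spell out steps the paper only asserts (an explicit unbiased word, the coefficient-level substitution lemma, the cycle-cover identity, and the product-depth bookkeeping). One small repair is inherited from the paper's construction: identifying all of layer $d$ into a single sink creates parallel edges, so some entries of $M_G$ are sums of variables of the last block rather than single variables, contrary to your assumption that $\rho_w$ sends each $y_{i,j}$ to a variable or a constant in $\{0,1\}$; keeping layer $d$ and adding a fresh sink joined to it by weight-$1$ edges restores this, and the graph still has at most $2+d(N)2^{k(N)}=m(N)$ vertices.
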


The matrix of the displayed formula is a $\Sigma_0^B(\calL_{\FNCTwo})$ formula. Every string occurring in the proof has length polynomially bounded in $N$. Hence the Cook--Nguyen translation applies. By conservativity, the corresponding statement is provable in $\VNCTwo$.

Using the translation between bounded arithmetic theories and propositional proofs as shown in~\cite{CN10}, we obtain the following theorem.

\begin{theorem}
    \label{theorem: NC^2-Frege upper bound}
    For every fixed polynomial $s$ and every fixed constant $\Delta$, the propositional translations asserting that the Permanent polynomial does not admit algebraic circuits of size at most $s(m)$ and depth at most $2\Delta$ have polynomial-size $\NCTwo$-Frege proofs over the fixed coefficient field.
\end{theorem}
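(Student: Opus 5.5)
The plan is to derive \cref{theorem: NC^2-Frege upper bound} from \cref{theorem: Provability of superpolynomial lower bound for the Permanent polynomial against constant-depth algebraic circuits} via the Cook--Nguyen propositional translation for $\VNCTwo$~\cite{CN10}. The translation sends a $\forall\Sigma_0^B$-theorem of $\VNCTwo$ to a family of tautologies with polynomial-size $\NCTwo$-Frege proofs. It needs three inputs: the statement must be put in that syntactic form, the instantiation of the free parameters must be fixed, and the resulting tautologies must be identified with the intended propositional lower-bound formulas.

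\textbf{Step 1: reduce to a $\Sigma_0^B(\calL_A^2)$ matrix.} I will start from the formula proved in \cref{theorem: Provability of superpolynomial lower bound for the Permanent polynomial against constant-depth algebraic circuits}. It is universally quantified over $N$ and over the strings $V,G,I,E$, and its matrix is a $\Sigma_0^B(\calL_{\FNCTwo})$ formula. Two results let me eliminate the function symbols $\CoeffT$, $\PermT$, $\SetMultOutput$, $\Size$ and $\Depth_{2\Delta}$:
\begin{itemize}
\item conservativity of $\overline{\VNCTwo}$ over $\VNCTwo$;
\item the $\Sigma_0^B$-transformation lemma, noting that $\FNCTwo$ functions are $\Sigma_1^B$-definable in $\VNCTwo$.
\end{itemize}
Applying them, I obtain a $\VNCTwo$-provable statement of the form $\forall N\,\forall V,G,I,E\,\exists \vec Z\le t\,\varphi$ with $\varphi\in\Sigma_0^B$, where the strings $\vec Z$ are witnesses for the $\FNCTwo$ computations. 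The cleaner route is to treat $\vec Z$ as extra universally quantified strings that satisfy their defining ($\delta_{LMCV}$-style) conditions, placed as an antecedent. This keeps the statement $\forall\Sigma_0^B$, which is exactly the form the translation accepts.

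\textbf{Step 2: translate and instantiate.} I will apply the translation theorem for $\VNCTwo$ and $\NCTwo$-Frege (see~\cite{CN10} and~\cite{HT15}). For each standard $N\ge N_1$, substituting $N$ and string lengths polynomial in $N$ gives a propositional formula of size $\poly(N)$ with an $\NCTwo$-Frege proof of size $\poly(N)$. The variables of this formula encode the circuit $(V,G,I,E)$ and the witness strings. The circuits under consideration have $\Size<s(m(N))$ and $m(N)=\poly(\log N)$. The propositional statement should be indexed by $m$, so I will choose $N$ as a function of $m$ with $m(N)=m$, taking $N=2^{\Theta(m^{1/(10c)})}$ up to log factors. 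The proof is polynomial in $N$, which is polynomial in the size of the encoded statement, since the statement itself contains the length-$N$ coefficient vectors. So the proof size is polynomial in the formula size, which is the meaning of ``polynomial-size'' here.

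\textbf{Step 3 and main obstacle: matching the intended formula.} The remaining work, and the step I expect to be hardest, is checking that the translated formula really asserts what \cref{theorem: NC^2-Frege upper bound} claims. Concretely:
\begin{itemize}
\item the propositional encoding of ``$(V,G,I,E)$ is a circuit of size $<s(m)$ and depth $\le2\Delta$ computing $\perm_m$'' must imply, via short $\NCTwo$-Frege derivations, the translated antecedents;
\item these antecedents include $\SetMultOutput$ and the coefficient-vector equality.
\end{itemize}
The paper already signals that the propositional formula is a slightly different formulation. I will therefore take the translation of the antecedent itself as the definition of the propositional lower-bound formula, and note that matching $\lb(\perm_m,s,\ell,\Delta)$ exactly would need an additional $\VNCTwo$-provable equivalence between encodings. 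Over a fixed finite field, field elements are constant-size, so all bits are Boolean. Over $\Qbb$, I will need the usual binary encoding and the $\FNCTwo$ arithmetic already assumed.
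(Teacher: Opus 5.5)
Your proposal is correct and follows the paper's route. The paper's proof is exactly the one-line application of conservativity of $\overline{\VNCTwo}$ over \VNCTwo\ and the Cook--Nguyen translation to the $\forall\Sigma_0^B(\calL_{\FNCTwo})$ statement of \cref{theorem: Provability of superpolynomial lower bound for the Permanent polynomial against constant-depth algebraic circuits}, and your Steps 1--3 spell out what it leaves implicit: moving the witness strings for the $\FNCTwo$ symbols into a universal antecedent, choosing $N$ with $m(N)=m$, and reading ``polynomial-size'' as polynomial in the $\poly(N)$-size tautologies (one small precision: the $\Sigma_0^B$-transformation lemma as stated needs $\Sigma_0^B(\calL_A^2)$ defining axioms, which $\FNCTwo$ functions lack, so it is your universal-antecedent trick, valid because \VNCTwo\ proves existence and uniqueness of the witnesses, and not that lemma, that yields the translatable form).
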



\let\etalchar\relax 



\section*{Acknowledgements}
We are indebted to the anonymous reviewers of this paper for many important comments.

\section*{AI notice}
AI tools were not used in this work, except to improve the exposition of some parts of Section~\ref{sec: VNCtwo} and \Cref{tab:three-level-picture}, and to suggest local fixes once the manuscript had been completed.

\small

\bibliographystyle{alpha}
\bibliography{reference}




\end{document}